\newtheorem{theorem}{Theorem}[section]
\newtheorem{lemma}{Lemma}
\newtheorem{corollary}{Corollary}[lemma]
\newcommand*{\addFileDependency}[1]{
  \typeout{(#1)}
  \@addtofilelist{#1}
  \IfFileExists{#1}{}{\typeout{No file #1.}}
}
\begin{document}

\preprint{APS/123-QED}

\title{Imaginary components of out-of-time correlators and information scrambling for navigating the learning landscape of a quantum machine learning model}

\author{Manas Sajjan$^\dagger$}
\affiliation{Department of Chemistry, Purdue University, West Lafayette, IN 47907}

\author{Vinit Singh$^\dagger$}
\affiliation{Department of Chemistry, Purdue University, West Lafayette, IN 47907}

\author{Raja Selvarajan}
\affiliation{Department of Physics and Astronomy, Purdue University, West Lafayette, IN 47907}

\author{Sabre Kais}
\email{kais@purdue.edu}
\affiliation{Department of Chemistry, Purdue University, West Lafayette, IN 47907}

\affiliation{Department of Physics and Astronomy, Purdue University, West Lafayette, IN 47907}

\affiliation{Department of Electrical and Computer Engineering, Purdue University, West Lafayette, IN 47907}


\begin{abstract}
We introduce and analytically illustrate that hitherto unexplored imaginary components of out-of-time correlators can provide unprecedented insight into the information scrambling capacity of a graph neural network. Furthermore, we demonstrate that it can be related to conventional measures of correlation like quantum mutual information and rigorously establish the inherent mathematical bounds (both upper and lower bound) jointly shared by such seemingly disparate quantities. To consolidate the  geometrical ramifications of such bounds during the dynamical evolution of training we thereafter construct an emergent convex space. This newly designed space offers much surprising information including the saturation of lower bound by the trained network even for physical systems of large sizes, transference, and quantitative mirroring  of spin correlation from the simulated physical system across phase boundaries as desirable features within the latent sub-units of the network (even though the latent units are directly oblivious to the simulated physical system) and the ability of the network to distinguish exotic spin connectivity(volume-law vs area law). Such an analysis demystifies the training of quantum machine learning models by unraveling how quantum information is scrambled through such a network introducing correlation surreptitiously among its constituent sub-systems and open a window into the underlying physical mechanism behind the emulative ability of the model. 
\end{abstract}

\maketitle

\def\thefootnote{$\dagger$}\footnotetext{\text{These authors contributed equally to this work}}\def\thefootnote{\arabic{footnote}}

\section{Introduction}

Heralding machine learning algorithms to be the most disruptive technological advancement of the present era would not be an overstatement \cite{burkov2019hundred,sarker2021machine, 10.1007/978-3-030-29407-6_5,tsihrintzis2020machine}. Despite successful inroads of the former to enable scientific applications on both classical and quantum hardware \cite{carleo2019machine,mehta2019high,li2022machine,wu2021application,havlivcek2019supervised,PhysRevLett.122.040504,liu2021rigorous}, a pervasive reluctance prevails in making such algorithms mainstream as indicated by a recent survey\cite{keith2021combining}. A part of the culpability is in the very nature of training of the associated paradigmatic models which often seems agnostic to physical principles or human-acquired domain intuition. Attempting to address this lacuna, the primary objective of our thesis is to gain physical insight into the learning mechanism of a machine learning model (to be called the learner) assigned to simulate the eigenstates of any user-defined system (to be called the driver), a task central to the core of many physico-chemical applications \cite{sajjan2022quantum}.

The major contributions of this work are many-fold. Following a description of the learner, we explicate the role of the information transport and scrambling between the internal sub-units of the learner during the course of its training. To this end, the hitherto unexplored imaginary component of out-of-time correlators (OTOCs)\cite{swingle2018unscrambling} of the learner is defined and analytically characterized using invariants of motion generated from the underlying Lie Algebra \cite{Hall2015}. It is then subsequently employed to act as a compass in navigating the parameter landscape during learning. In recent years OTOCs has been used as a quintessential measure of how fast information propagation away from the source of initiation happens in the real-time post any local excitation in atomic systems \cite{PhysRevA.94.040302, PhysRevLett.126.200603,PhysRevB.97.144304,PhysRevLett.126.070601}, in statistical physics to probe thermalization \cite{PhysRevE.104.034120,FAN2017707,PhysRevA.104.022405}, in quantum-information theory\cite{sharma2021quantum, touil2020quantum}, as a diagnostic tool for quantum chaos \cite{PhysRevLett.124.160603} and even in models mimicking aspects of quantum gravity \cite{PhysRevB.94.035135,PhysRevA.102.022402,PhysRevA.97.042330,PhysRevX.5.041025}. Such correlators have also been measured using quantum circuits\cite{mi2021information, PhysRevLett.128.160502,PhysRevX.11.021010,landsman2019verified,PhysRevLett.129.050602}. We thereafter connect such a quantifier with known measures of quantum correlation and illustrate analytically the relative bounds shared by the two quantities, which are stricter than conventionally known bounds. Equipped with these aforesaid probes, we provide a map of navigating the parameter landscape during training of the network in an emergent space and demonstrate with
with appropriate case studies features of the trained learner like saturation of lower bound in the above inter-relationship and how the footprints of correlation in the driver get imprinted onto a trained learner, thereby empowering the latter to be used as a concrete diagnostic tool in investigating physical phenomena like phase transitions as well as in differentiating between drivers with exotic connectivity/interactions, etc by simply accessing properties of the learner alone.

In the following section (Section \ref{Theory}) we shall describe the generative neural network we use for this work. In Section \ref{OTOC_defn_section} we define OTOCs in a general setting. In Section  \ref{imag_OTOC_defn_section} we introduce and prove that several invariants of motion associated with the phase-space description of OTOC for the aforesaid neural network exists with particular emphasis on the hitherto unexplored imaginary part which as we shall see which form a key player in our analysis. In Section \ref{Lie-algebra_OTOC_section} we describe the generators associated with the said invariants which highlights an underlying Lie Algebra. In Section \ref{I_vs_eta_section} we prove how the imaginary part of OTOC for our network is related to conventional measures of correlation as described previously including the relative bounds which they share and construct a new emergent convex space to understand the training mechanism of the network and the role of its latent sub-units. In Section \ref{training_algo_section} we describe the polynomially scaling algorithm for training the network and subsequent construction of the new space. In Section \ref{Result_Disc} we describe our primary inferences from numerical studies in the said space and conclude in Section \ref{concluding_section}.

\begin{figure}[!htb]
    \centering   \includegraphics[width=0.45\textwidth]{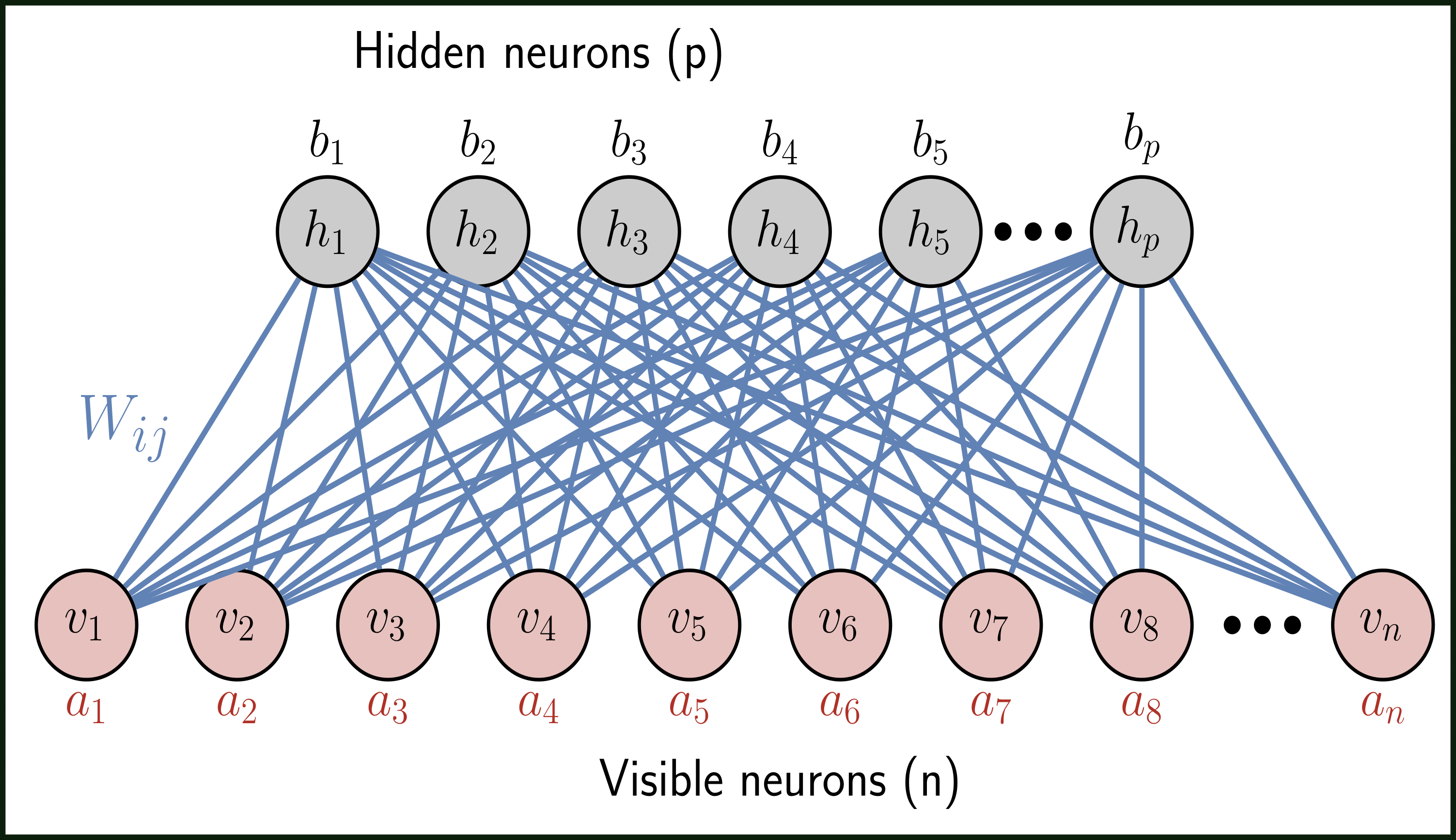}
    \caption{{\color{black}The description of the learner network $G=(V,E)$ as defined in text (also known as Restricted Boltzmann Machine (RBM) in literature\cite{Melko2019a,Hinton504}.) The set of hidden neurons $\{h_j\}_{j=1}^{p}$ are shown in grey with the corresponding bias vector $(\vec{b})$. Similarly, set of visible neurons $\{v_i\}_{i=1}^{n}$ are shown in red with the corresponding bias vector $(\vec{a})$. The interconnecting weight matrix with elements $W_{ij}$ is shown in blue. The parameter set $\vec{X}=(\vec{a}, \vec{b}, \vec{W})$ are tuned during the training of the network}}
    \label{Fig:fig_scheme_QRBM}
\end{figure}

\section{THEORETICAL BACKGROUND}\label{Theory}

\subsection{Description of the Graph Neural Network $G=(V,E)$} \label{Graph_description}

The specific description of the generative network used as the learner in this work is illustrated in Fig.\ref{Fig:fig_scheme_QRBM}. Formally the learner is a connected bipartite graph $G=(V,E)$ (also known as Restricted Boltzmann Machine(RBM)\cite{PhysRevB.94.165134,Melko2019a,Hinton504}). The set $V$ consists of $(p+n)$ neurons with $(p,n) \in \mathcal{Z}_{+}$ and is further classified into two subsets as $V = \{v_i\}_{i=1}^{n} \bigcup  \{h_j\}_{j=1}^p$ (see Fig.\ref{Fig:fig_scheme_QRBM}). Both the sub-sets are endowed with a locally accessible $\sigma^z$ (Pauli-z) operator and their corresponding $bias$- vectors are $\vec{a} \in \mathcal{R}^n$ and $\vec{b} \in \mathcal{R}^p$. The edge set $E$ can be characterized with the adjacency matrix $Adj(G) \in \{0,1\}^{(n+p) \times (n+p)}$ of the graph $G$ defined as follows:
\begin{align}
    Adj(G)_{ij} =
    \begin{cases}
     1, & \text{if}\:\: \nu_i \in \{v_i\}_{i=1}^{n}, \:\:\nu_j \in \{h_i\}_{i=1}^{p}\\ &\forall \:\: (\nu_i, \nu_j) \in V \\
     0 & \text{otherwise}
    \end{cases}
\end{align}
$\forall (i,j) \in \mathcal{Z}^{p+n}$. Corresponding to each non-zero entry in $Adj(G)$ we define an edge $e_{ij} \in E$. This would mean that $|E| = p*n$. Associated with $e_{ij} \in E$ we define a $weight$ matrix $\vec{W} \in \mathcal{R}^{n \times p}$ (shown in blue in Fig.\ref{Fig:fig_scheme_QRBM})
each element of which quantifies the strength of the shared connection between any of the neurons from the subset $\{v_i\}_{i=1}^{n}$ (visible-node register) to every neuron in the subset $\{h_j\}_{j=1}^p$ (hidden-node register). Collectively the tunable parameters $\vec{X}=(\vec{a}, \vec{b}, \vec{W}) \in \mathcal{R}^{n+p+np}$ enables us to define the learner's Hamiltonian $\mathcal{H} : G \mapsto \mathcal{R}^{2^{p+n} \times 2^{p+n}}$ similar to that of a classical Ising model \cite{10.1088/978-0-7503-3843-1ch2, RevModPhys.39.883} as 

\begin{align}
    \mathcal{H}(\vec{X}, \vec{v}, \vec{h}) &= \sum_{i=1}^{n} a_i\sigma^z (v_i) + \hspace{-0.07in}\sum_{j=1}^{p} b_j \sigma^z (h_j) + \hspace{-0.13in}\sum_{i=1,j=1}^{n,p} W^i_j\sigma^z (v_i)\sigma^z (h_j)
    \label{eq: Ising_energy}
\end{align} 
where $\sigma^z(\chi_i)$ represents operator $\sigma^z$ acting at neuron $\chi_i$. The learner is trained to encode a probability distribution that corresponds to the diagonal elements of a thermal state $\rho_{th}$, of the Hamiltonian in Eq.\ref{eq: Ising_energy}, and is defined as follows \cite{Hinton504,Torlai2017,PhysRevB.94.165134}:

\begin{align}
\rho_{th}(\Vec{X}, \vec{v},\vec{h}) &= \frac{e^{-\mathcal{H}(\Vec{X}, \vec{v}, \vec{h})}}
{Tr_{\{v,h\}}e^{-\mathcal{H}(\vec{X}, \vec{v}, \vec{h})}} \label{eq:rbm_dist}
\end{align} 

Any instance of spin configuration $(\vec{v}, \vec{h})$ of the combined registers of $(p + n)$ spins are samples drawn from the said thermal distribution in Eq.\ref{eq:rbm_dist}. Equipped with this, the primary objective of the learner network $G$ is to mimic the amplitude field of the target state $\psi(\vec{X})$ of the driver hamiltonian $H \in \mathcal{C}^{d \times d}$ following the prescription 
\begin{align}
\psi(\vec{v},\vec{X}) &= \sum_{\vec{h}} \text{diag}(\rho_{th}(\vec{X}, \vec{v}, \vec{h})) \nonumber \\
& = \frac{e^{-\sum_{i=1}^n a_i\sigma^z (v_i)}\times \Pi_{j=1}^{p} 2 cosh(b_j + \sum_{i}^n W^i_j\sigma^z (v_i))}{Tr_{\{v,h\}}e^{-\mathcal{H}(\vec{X}, \vec{v}, \vec{h})}}
\label{psi_X}
\end{align}
Whether the training happens on a classical processor or a NISQ device, the flow of the algorithm, in either case, involves randomly initializing $\vec{X}$ to construct $\psi(\vec{v},\vec{X})$ as in Eq.\ref{psi_X} and then evaluating an appropriate merit-function $J(\vec{X}) = \langle \hat{J} \rangle$ (usually $\hat{J}=H$ but other properties can be optimized too). The variational parameters $X=(\vec{a}, \vec{b}, \vec{W})$ are thereafter subsequently tuned using gradient-based updates of the merit function $\partial_{\vec{X}} J (\vec{X})$ until a desired convergence threshold is reached. The algorithmic details of such a training process can be found elsewhere \cite{ChNg2017,PhysRevLett.127.158303,Carrasquilla_2017,sajjan2021quantum}. Formally such a training exploits the isomorphism of the vector space $\mathcal{C}^d$ of the driver and the space of spin configurations of the visible node $\vec{v}$ of the learner of dimension $S=dim(2^n)$ with $n=\lceil log_2 d \rceil$. As $\vec{X} \in \mathcal{R}^{n+p+np}$, we specifically focus on drivers with non-negative coefficients for the target state. Extension to account for the phase of the target wavefunction is straight-forward \cite{Xia_2018, Kanno2021}.

The number of neurons $p$ in $\{h_j\}_{j=1}^p$ is chosen arbitrarily by the user (usually $p \sim n$). It is clear from Eq.\ref{psi_X} that the variational form of the ansatz is independent of $\sigma^z(h_j)$ i.e. the spins of the latent neurons and it is the configurations of $\{\sigma^z(v_i)\}_{i=1}^n$ which forms the requisite basis for the eigenstate of the driver. While from the optimization point of view, the role of the hidden set of neurons ($\{h\}_{j=1}^p$) is thus to enhance the expressive capability of the network by increasing the number of tunable parameters $(\vec{b}, \vec{W})$, from a more physical perspective $\{h\}_{j=1}^p$ induces higher-order correlation between the neurons of the visible layer by relaying the information between a given $(v_i, v_k) \in \vec{v}$ as the latter is devoid of any direct interaction. This relay is sensitive to the connections ($e_{km} \in E$) which defines the $weight$ matrix elements $W_{km}$ of the $k$-th neuron in $\vec{v}$ and $m$-th neuron in $\vec{h}$. Central goal of our manuscript is thus probing how $\{h_j\}_{j=1}^p$ physically fosters correlation between configurations of $\{v_i\}_{i=1}^n$ by analyzing how an initial excitation on a given visible neuron is shared with a given hidden neuron in real-time. This will be a direct neuron-resolved picture of the dynamical evolution of the network during training and provide valuable insight into its learning mechanism.

\subsection{Out of Time Correlators (OTOC)}
\label{OTOC_defn_section}
To attain the aforesaid objective of probing information exchange between the active (visible) and latent (hidden) units of the learner, we shall employ an OTOC which we shall define in detail in this section. An OTOC is primarily composed of two unitary operators 
$U_1(0)$ and $U_2(0)$ wherein $U_i(0)$ is a local operator for a specific site $i$ evolving in time $t$ through Heisenberg prescription \cite{Heisen_qm}. The quantity is sensitive to the extent of information scrambling between sites $\{1,2\}$ \cite{Hashimoto_2017,PhysRevB.96.054503,Braumuller2021ProbingQI,Sundar_2022} and is defined as follows
\begin{align}
C_{U_1,U_2}(t) &= \langle U^\dagger_1 (0) U^\dagger_2(t) U_1(0) U_2(t) \rangle \label{eq:C_t_gen}
\end{align}

Eq.\ref{eq:C_t_gen} is an estimator of growth of the operator $U_2(t) = e^{i H_{\textit{otoc}} t} U_2(0) e^{-i H_{\textit{otoc}} t}$ under the effect of the generator $H_{\textit{otoc}}$ assuming the latter possesses interaction within the different sites of the system.
If the sites $(1,2)$ are far apart, the supporting bases of the operators $U_1(0)$ and $U_2(0)$ are sparsely-overlapping and hence $C_{U_1,U_2}(0)$ will be initially 1. With time, the operator $U_2(t) = e^{i H_{\textit{otoc}} t} U_2(0) e^{-i H_{\textit{otoc}} t}$ will start to extend its support thereby culminating in an eventual overlap between the probe operator $U_1(0)$ and $U_2(t)$ which ultimately lead to changing values of $C_{U_1,U_2}(t)$. The quantity $C_{U_1,U_2}(t)$ thus directly hints at how fast the excitation has traveled from the initially localized point at $2$ to site $1$. The reason for the nomenclature of ``out-of-time correlator" is due to the fact that the expression $C_{U_1,U_2}(t)$ has a time-ordering which is non-monotonic as opposed to forward time correlators like $\langle U_2(t) U_1(0) \rangle$ wherein the operators $U_i$ are sequentially arranged in ascending order of time.
Also, unlike two-point correlators which are known to decay in $O(1)$ time irrespective of the length ($L$) of the system employed, OTOCs like Eq.~\ref{eq:C_t_gen} decay in time proportional to the difference in location of the two sites (hence $\sim L$)\cite{d2016quantum}.

\subsection{Imaginary Component of OTOC of $G=(V,E)$ - Geometrical characterization in phase space} \label{imag_OTOC_defn_section}
Unlike in most reports wherein the real part of Eq.\ref{eq:C_t_gen} is used, we introduce the imaginary part of Eq.\ref{eq:C_t_gen} i.e. $Im(C_{U_1,U_2}(t))$ and shall see that it is also  an important player in our analysis. To this end we offer in Section \ref{Gen_OTOC_form} in Appendix a general formulation for obtaining both the $Re(C_{U_1,U_2}(t)$ and $Im(C_{U_1,U_2}(t))$ through positive semi-definite construction of other appropriate operators. To understand how $\{h_j\}_{j=1}^p$ and $\{v_i\}_{i=1}^n$ scrambles information internally 
we now construct a specific OTOC and establish the contents of the following theorem.

\begin{theorem}\label{lem1}
For a given parameter vector $\vec{X}$, one can define $\mathcal{H}(\vec{X}, \vec{v}, \vec{h})$ (see Eq.\ref{eq: Ising_energy}) and a thermal state $\rho_{th}(\vec{X}, \vec{v}, \vec{h})$. Let us thereafter define the following OTOC with $U_1(0) = \tilde{\sigma_\alpha}= \sigma^\alpha(v_k,0)-\kappa_1\mathcal{I}$, and operator $U_2(0) = \tilde{\sigma_\beta}=\sigma^\beta(h_m,0) -\kappa_2\mathcal{I}$ and the generator $H_{\textit{otoc}} = \mathcal{H}(\vec{X}, \vec{v}, \vec{h})$ in Eq.\ref{eq:C_t_gen}
$\:\:\forall \:\: \{\alpha, \beta \} \in \{x,y\}$. 

\begin{align}
C_{\sigma^\alpha, \sigma^\beta}(\kappa_1, \kappa_2,\vec{X}, t) &= \langle \tilde{\sigma}^\alpha(v_k,0) \tilde{\sigma}^\beta(h_m,t)\tilde{\sigma}^\alpha(v_k,0)\tilde{\sigma}^\beta(h_m,t)\rangle\label{eq:C_t_rbm}
\end{align}
Note that $\{\kappa_1, \kappa_2\} \in \mathcal{C}^2$ are arbitrary user-defined mean translations. Also $\langle \cdot \rangle$ indicates averaging over the thermal state $\rho_{th}(\vec{X}, \vec{v}, \vec{h})$ defined in Eq.\ref{eq:rbm_dist} which activates the $\vec{X}$ dependence. Using \ref{eq:C_t_rbm}, one can then make the following statements:
\begin{enumerate}
{\item For $(\kappa_1, \kappa_2)$ $\in$ $\mathcal{C}^2$ \\
    $\begin{aligned}[t]
        C_{\sigma^\alpha, \sigma^\beta}(\kappa_1, \kappa_2, \vec{X}, t) &= C_{\sigma^\alpha, \sigma^\beta}(0, 0, \vec{X}, t) 
        + |\kappa_1|^2|\kappa_2|^2 \nonumber \\ &+ |\kappa_2|^2 +|\kappa_1|^2
    \end{aligned}$\\
 }  
{\item The following invariants of motion exists for $C_{\sigma^\alpha, \sigma^\beta}(0,0,\vec{X}, t)$ :
\begin{enumerate}
{\item
     $\begin{aligned}[t]  I_1=&-2\dot{\xi}_{\sigma^\alpha,\sigma^\beta}(\vec{X},\tau)Cos(\tau) - 2\xi_{\sigma^\alpha,\sigma^\beta}(\vec{X},\tau)Sin(\tau)
     \end{aligned}$
     }
     {\item 
     $\begin{aligned}[t]
     I_2= & -2\dot{\xi}_{\sigma^\alpha,\sigma^\beta}(\vec{X},\tau)Sin(\tau) + 2\xi_{\sigma^\alpha,\sigma^\beta}(\vec{X},\tau)Cos(\tau)
     \end{aligned}$
     }
\end{enumerate}}
where $\xi_{\sigma^\alpha,\sigma^\beta}(\vec{X},\tau)$ can either be the real or the imaginary part of $(C_{\sigma^\alpha,\sigma^\beta}(0,0,\vec{X},\tau))$ and  $\dot{\dottedsquare}$ is $\frac{\partial \dottedsquare}{\partial \tau} $ with $\tau=4W^k_m t$

\end{enumerate}
\end{theorem}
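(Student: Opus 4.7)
The argument splits naturally into the two claims and rests on one structural fact: $\mathcal{H}(\vec{X},\vec{v},\vec{h})$ is diagonal in the computational ($\sigma^z$) basis, so $\rho_{th}$ commutes with every $\sigma^z(v_i)$ and $\sigma^z(h_j)$, a property preserved under the Heisenberg evolution generated by $\mathcal{H}$ since $[\mathcal{H},\sigma^z(\cdot)]=0$.

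For Claim 1 my plan is to expand the product $\tilde\sigma^\alpha(v_k,0)\tilde\sigma^\beta(h_m,t)\tilde\sigma^\alpha(v_k,0)\tilde\sigma^\beta(h_m,t)$ (making the Hermitian conjugates from Eq.~\ref{eq:C_t_gen} on the first two factors explicit, so that the left-hand shifts enter as $\kappa^*$) into its $16$ monomial contributions. Any monomial carrying an odd number of $\sigma^\alpha(v_k)$ factors cancels in expectation by the parity symmetry $O\mapsto\sigma^z(v_k)O\sigma^z(v_k)$, which preserves $\langle\cdot\rangle_{\rho_{th}}$ but flips the sign of every transverse Pauli on $v_k$; the analogous conjugation by $\sigma^z(h_m)$ kills every monomial with an odd number of $\sigma^\beta(h_m,t)$ factors. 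The four survivors are the bare OTOC $C_{\sigma^\alpha,\sigma^\beta}(0,0,\vec{X},t)$, two self-energy pieces that reduce to $|\kappa_1|^2$ and $|\kappa_2|^2$ after applying $(\sigma^\alpha(v_k))^2=I$ and $(\sigma^\beta(h_m,t))^2=I$, and the scalar $|\kappa_1|^2|\kappa_2|^2$, whose sum reproduces the stated identity.

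For Claim 2 I would first derive a closed form for $C_{\sigma^\alpha,\sigma^\beta}(0,0,\vec{X},t)$. Only the local-field piece $\lambda\,\sigma^z(h_m)$ of $\mathcal{H}$, with $\lambda\equiv b_m+\sum_i W^i_m\sigma^z(v_i)$, fails to commute with $\sigma^\beta(h_m)$, so the Heisenberg evolution collapses to a transverse rotation on $h_m$: $\sigma^\beta(h_m,t)=\cos(2\lambda t)\sigma^\beta(h_m)\mp\sin(2\lambda t)\sigma^{\beta'}(h_m)$, with $\beta'$ the complementary transverse direction. Sandwiching by $\sigma^\alpha(v_k)$ sends $\lambda\mapsto\lambda'=\lambda-2W^k_m\sigma^z(v_k)$. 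Forming $\sigma^\alpha(v_k)\sigma^\beta(h_m,t)\sigma^\alpha(v_k)\cdot\sigma^\beta(h_m,t)$ and reducing Pauli products on $h_m$, the like pairs sum to $\cos(2(\lambda-\lambda')t)I$ while the mixed pairs combine into $-i\sin(2(\lambda-\lambda')t)\sigma^z(h_m)$; the would-be $\lambda+\lambda'$ contributions cancel because of the sandwich order, which is the essential mechanism. Using $(\sigma^z(v_k))^2=I$ to reduce $\cos(\tau\sigma^z(v_k))=\cos\tau\,I$ and $\sin(\tau\sigma^z(v_k))=\sin\tau\,\sigma^z(v_k)$ with $\tau=4W^k_m t$, and verifying that the same expression emerges for all four choices $(\alpha,\beta)\in\{x,y\}^2$, one arrives at the operator identity
\begin{equation*}
\sigma^\alpha(v_k)\sigma^\beta(h_m,t)\sigma^\alpha(v_k)\sigma^\beta(h_m,t)=\cos\tau\,I-i\sin\tau\,\sigma^z(h_m)\sigma^z(v_k),
\end{equation*}
so that $C_{\sigma^\alpha,\sigma^\beta}(0,0,\vec{X},\tau)=\cos\tau-i\sin\tau\,\langle\sigma^z(h_m)\sigma^z(v_k)\rangle_{\rho_{th}}$. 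Hence both $\mathrm{Re}\,C$ and $\mathrm{Im}\,C$ take the form $\xi(\tau)=A\cos\tau+B\sin\tau$ with $A,B$ independent of $\tau$, and therefore satisfy $\ddot\xi+\xi=0$. Differentiating the stated definitions gives $\dot I_1=-2\cos\tau(\ddot\xi+\xi)$ and $\dot I_2=-2\sin\tau(\ddot\xi+\xi)$, each identically zero; equivalently direct substitution yields $I_1=-2B$ and $I_2=2A$, manifestly $\tau$-independent.

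The genuinely subtle step is the cancellation that leaves $C$ dependent on the couplings only through $W^k_m$ rather than through the full operator $\lambda$, which is specific to the $\sigma^\alpha\sigma^\beta\sigma^\alpha\sigma^\beta$ sandwich order; the main computational obstacle is bookkeeping signs across the four $(\alpha,\beta)$ pairs, though the generic Pauli rotation identity $e^{i\theta\sigma^z}(a\sigma^x+b\sigma^y)e^{-i\theta\sigma^z}$ applied once unifies the treatment.
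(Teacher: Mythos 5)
Your proof is correct and reaches the same conclusions, but the route for Claim~2 is genuinely different from the paper's, and in fact streamlines it.

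For Claim~1 you and the paper are essentially doing the same thing with different packaging. You expand the $16$ monomials and kill the $12$ cross-terms by the $\mathbb{Z}_2$ parity symmetries $O\mapsto\sigma^z(v_k)O\sigma^z(v_k)$ and $O\mapsto\sigma^z(h_m)O\sigma^z(h_m)$, both of which preserve $\langle\cdot\rangle_{\rho_{th}}$ because $\rho_{th}$ is diagonal. The paper kills the same terms by a trace argument: each cross-term reduces (after exploiting cyclicity and the diagonality of $\rho_{th}$ and $e^{i\mathcal{H}t}$) to a trace of a diagonal operator against a purely off-diagonal one (its Lemma~\ref{pauli_inv} and Lemma~\ref{tr_non_diag}). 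These are two phrasings of the same diagonal-basis fact, and both are sound. Your version is slightly more symmetric in treating the $v_k$ and $h_m$ parities on the same footing.

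For Claim~2 the approaches diverge in a meaningful way. You first reduce the OTOC string to the explicit operator identity
\begin{equation*}
\sigma^\alpha(v_k,0)\,\sigma^\beta(h_m,t)\,\sigma^\alpha(v_k,0)\,\sigma^\beta(h_m,t)=e^{-i\tau\,\sigma^z(v_k)\sigma^z(h_m)}=\cos\tau\,\mathcal{I}-i\sin\tau\,\sigma^z(v_k)\sigma^z(h_m),
\end{equation*}
which immediately gives $C(\tau)=A\cos\tau+B\sin\tau$ with $\tau$-independent $A,B$, hence $\ddot C+C=0$ and $\dot I_1=\dot I_2=0$ by one line of calculus. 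The paper instead goes through the commutator $\hat\Theta_{\alpha,\beta}$ and anticommutator $\hat\Phi_{\alpha,\beta}$, proves second-order operator ODEs for the bilinears $\hat\Theta^\dagger\hat\Theta$ and $\hat\Phi^\dagger\hat\Theta$ (its Lemma~\ref{lemma_Diff_eqn}), verifies $\dot I_1=\dot I_2=0$ from those ODEs directly, and only afterwards recovers the closed form of $C$ as a corollary (Corollary~\ref{soln_C_t}). Logically these are equivalent — the ODE $\ddot C+C=0$ is what both arguments hinge on — but your ordering is more economical and makes it transparent why only $W^k_m$ survives: the sandwich order flips the sign of the $v_k$-coupled part of the effective field $\lambda$, so the forward and backward propagators cancel everything except the $W^k_m\sigma^z(v_k)\sigma^z(h_m)$ piece. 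What the paper's longer route buys is the intermediate machinery (the $\hat\Theta$/$\hat\Phi$ operator ODEs) that it then reuses in Section~\ref{Gen_OTOC_form} to present the real and imaginary parts as expectations of manifestly positive semi-definite constructions; your approach skips that and would need to import those identities separately if they were required downstream.

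One small caution: your closed form reads $C=\cos\tau-i\sin\tau\,\langle\sigma^z(v_k)\sigma^z(h_m)\rangle$, whereas Eq.~\ref{eq:C_t_form} of the paper carries the opposite sign on the imaginary part. Tracing the paper's own appendix ($\hat\Phi^\dagger\hat\Theta=-2i\,\sigma^z\sigma^z\sin\tau$ together with $C_i=-\tfrac{i}{2}\langle\hat\Phi^\dagger\hat\Theta\rangle$) actually reproduces your sign, so the discrepancy is internal to the paper's displayed equations rather than an error in your derivation; in any case the sign is immaterial to the existence of the invariants.
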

\begin{proof}
See Section (\ref{thm_1_proof}) in Appendix
\end{proof}

\begin{figure}[!htb]
    \centering   \includegraphics[width=0.45\textwidth]{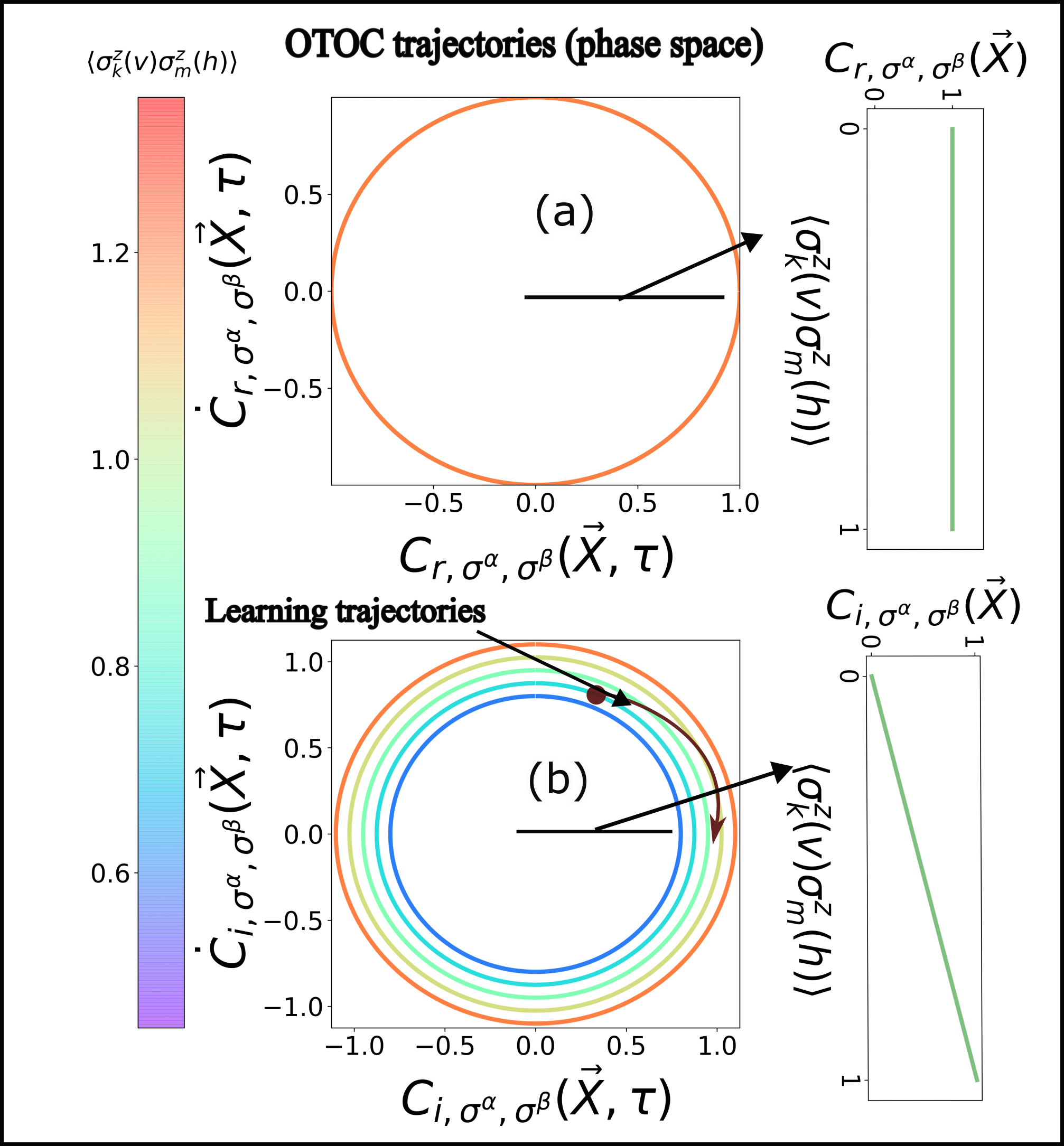}
    \caption{{\color{black} (a) The profile for the 
    real part of the compound invariant $\frac{I_1^2+I_2^2}{4}$ (see Section \ref{thm_1_proof} in Appendix, Corollary \ref{Corr_2}) obtained from the simple invariants of Theorem \ref{lem1} where $\alpha = \beta = x$ and $C_{r, \sigma^\alpha, \sigma^\beta}(\vec{X}, \tau) = Re(C_{\sigma^\alpha, \sigma^\beta}(0,0,\vec{X}, \tau))$ has been substituted for $\xi_{\sigma^\alpha, \sigma^\beta}(\vec{X},\tau)$  in Theorem \ref{lem1}(2) $\:$ (b) Similar to (a) but for  $C_{i, \sigma^\alpha, \sigma^\beta}(\vec{X}, \tau) = Im(C_{\sigma^\alpha, \sigma^\beta}(0,0,\vec{X}, \tau))$. Note the subscript `r'('i') denotes the real(imaginary) part of Eq.\ref{eq:C_t_rbm} in the plots.
    Both quantities are evaluated at $(\kappa_1=0,\kappa_2=0)$, hence the  explicit dependence on $(\kappa_1,\kappa_2)$ has been dropped for notational brevity. 
    The trajectory in the phase space for both cases is a circle (Corollary \ref{Corr_2}) whose radius remains conserved during training in (a) but changes in (b). This is further corroborated by the two green subplots alongside which for (b) shows that the radius is exactly equal to $\langle\sigma^x(v_k,0),\sigma^x(h_m,0)\rangle_{\rho_{th}(\vec{X},\vec{v},\vec{h})}$ and is sensitive to $\Vec{X}$ unlike in (a). This indicates why the $Im(C_{\sigma^\alpha, \sigma^\beta}(0,0,\vec{X}, \tau))$ of the OTOC string in Eq.\ref{eq:C_t_rbm} can yield important insight about the training process of $G$.
    }}
    \label{Fig:Inv_OTOC}
\end{figure}

{\color{black} A combination of the invariants from 
Theorem \ref{lem1}(2) is plotted in Fig.\ref{Fig:Inv_OTOC}(a-b) for the $Re(C_{\sigma^\alpha, \sigma^\beta}(0, 0, \vec{X}, t))$ and 
$Im(C_{\sigma^\alpha, \sigma^\beta}(0, 0, \vec{X}, t))$ and shows decisively the importance of the imaginary part. Similar results for other invariants are illustrated in Section \ref{thm_1_proof} in Appendix}. Certain assertions are apparent from Theorem \ref{lem1}. Firstly, as a direct corollary (proven in Section \ref{thm_1_proof} in Appendix as Corollary Eq.\ref{soln_C_t}), one can deduce analytical expressions for 
$C_{\sigma^\alpha,\sigma^\beta}(0,0,\vec{X},\tau)$ as
\begin{align}
C_{\sigma^\alpha,\sigma^\beta}(0,0,\vec{X},\tau) \hspace{-0.03in} &= Cos(\tau) + i \langle\sigma^z(v_k,0)\sigma^z(h_m,0)\rangle_{\rho_{th}(\vec{X})} Sin(\tau)\label{eq:C_t_form}
\end{align}
Eq.\ref{eq:C_t_form} guarantees that temporal behavior of the OTOC in Eq.\ref{eq:C_t_rbm} is oscillatory in nature  due to the $SU(2)$ algebra associated with the unitary rotation of $\sigma^\beta(h_m,t)$ $(\beta \in \{x,y\})$ around the z-axis induced by the generator $H_{otoc}=\mathcal{H}$ in Eq.\ref{eq: Ising_energy}. 
The frequency associated with the rotation for both terms in Eq. \ref{eq:C_t_form} is expectedly dictated by the strength of the interaction ($W^k_m$) shared by the $k$-th visible neuron and $m$-th hidden neuron. 
Note that training of $G$ amounts to hopping between OTOC trajectories (each of which is generated with a frozen incumbent instance of $\vec{X}$) in phase space as we traverse the parameter space by varying $\vec{X}$ (see Fig.\ref{Fig:Inv_OTOC}(b)). We see that the amplitude of the $Im(C_{\sigma^\alpha, \sigma^\beta}(0, 0, \vec{X}, t))$ yields directly a two-body correlation function $\langle\sigma^z(v_k,0)\sigma^z(h_m,0)\rangle_{\rho_{th}(\vec{X}, \vec{v}, \vec{h})}$ of spins in the visible and hidden register of the learner (see Fig.\ref{Fig:Inv_OTOC}(b))
which is sensitive to $\vec{X}$ and thus probes the changing correlation content between $h_m$ and $v_k$ during training. 

\subsection{Lie-Algebraic generators associated with invariants of OTOC} \label{Lie-algebra_OTOC_section}
It is possible to define generators associated with the invariants defined in Theorem \ref{lem1} possessing an underlying Lie Algebraic structure. For example for the invariant type given in Theorem\ref{lem1}(1) with the following expression :
\begin{align}   &I_1=-2\dot{\xi}_{\sigma^\alpha,\sigma^\beta}(\vec{X},\tau)Cos(\tau) - 2\xi_{\sigma^\alpha,\sigma^\beta}(\vec{X},\tau)Sin(\tau) \label{I_v_one}
\end{align}
where $\xi_{\sigma^\alpha, \sigma^\beta}(\vec{X},\tau)$ can be either $C_{r\sigma^\alpha, \sigma^\beta}(0,0,\vec{X},\tau)$ or $C_{i\sigma^\alpha, \sigma^\beta}(0,0,\vec{X},\tau)$ one can deduce the following transformation $\hat{A}(\phi, \tau)$:
\begin{align}
&\hat{A}(\phi, \tau)(\xi_{\sigma^\alpha, \sigma^\beta}(\vec{X},\tau))  \nonumber \\
& = e^{\phi Cos(\tau) \frac{\partial}{\partial \xi_{\sigma^\alpha, \sigma^\beta}(\vec{X},\tau)}} (\xi_{\sigma^\alpha, \sigma^\beta}(\vec{X},\tau)) \nonumber \\ &=\xi^\prime_{\sigma^\alpha, \sigma^\beta}(\vec{X},\tau)
\end{align}
Note that the derivative in the exponent is with respect to $\xi_{\sigma^\alpha, \sigma^\beta}(\vec{X},\tau)$ itself. Such a transformation $\hat{A}(\phi, \tau): \{\xi\}_{I_1} \mapsto \{\xi\}_{I_1}$  where $\{\xi\}_{I_1}$ is a solution space marked by  the given value of the invariant $I_1$. In other words, a given solution $\xi_{\sigma^\alpha, \sigma^\beta}(\vec{X},\tau)$ with a specific value of the invariant $I_1$, the transformation changes it to another solution $\xi^\prime_{\sigma^\alpha, \sigma^\beta}(\vec{X},\tau)$ which possesses the same value for the invariant. This can be verified by explicit computation too. For instance if $\xi_{\sigma^\alpha, \sigma^\beta}(\vec{X},\tau) = C_{r\sigma^\alpha, \sigma^\beta}(0,0,\vec{X},\tau)$, then from the Corollary \ref{soln_C_t} it is clear that $\xi_{\sigma^\alpha, \sigma^\beta}(\vec{X},\tau)=C_{r\sigma^\alpha, \sigma^\beta}(0,0,\vec{X},\tau) = Cos(W^k_m t)=Cos(\tau)$. Substituting this in Eq.\ref{I_v_one}, one gets the value of $I_1$ as 0. If the transformation $A(\phi, \tau)$ is applied on $\xi_{\sigma^\alpha, \sigma^\beta}(\vec{X},\tau)=Cos(\tau)$, the new solution is $(1+\phi)Cos(\tau)$ which has the same value of the invariant as before. Similarly one can also deduce an invariant-preserving transformation for $I_2$ type invariants in Theorem\ref{lem1}(2) as $e^{\phi Sin(\tau) \frac{\partial}{\partial \xi_{\sigma^\alpha, \sigma^\beta}}}$. It can be shown that the generators of the two transformations commutes. For other invariants in in Corollary \ref{Corr_2} one can similarly deduce other transformations like this. The generators of a full-set of such transformations forms a closed single parameter Lie-group (with respect to parameter $\phi$) which can be shown using their commutation algebra. Such a structure is characteristic of systems with harmonic degrees of freedom, but discovering it within the abstract phase space of OTOC string for the learner is interesting and worth further investigation. The ramifications of such generators on the full phase-space of the OTOC strings and its effect on the training of the learner will be explored in the future. We thus see that instead of a direct evaluation of the OTOC, evaluation through the invariants described in Theorem 1 offers a fuller characterization of the phase space with richer insight into the geometry of the manifold.

\subsection{Inter-relationship with Covariance and Quantum Mutual Information ($\mathcal{I}(v_k,h_m)$) - Mapping training trajectories to $\mathcal{I}-\eta(\vec{X})$ space } \label{I_vs_eta_section}

We further consolidate the importance of $Im(C_{\sigma^\alpha, \sigma^\beta}(0, 0, \vec{X}, t))$ concretely in this section by showing that the quantity can be related to quantum mutual information. To do that it is imperative to first establish a direct relationship of the said quantity with $Cov(\sigma^z(v_k,0),\sigma^z(h_m, 0))_{\rho_{th}}$ by using the results of Theorem \ref{lem1}(1-2)  as follows
\begin{align}
\hspace{0.01in}\eta(\vec{X}) 
&= Cov(\sigma^z(v_k,0),\sigma^z(h_m, 0))_{\rho_{th}}\nonumber \\
&=C_{\sigma^\alpha, \sigma^\beta}(\kappa_1, 0,\vec{X}, t_1) 
+C_{\sigma^\alpha, \sigma^\beta}(0, \kappa_2,\vec{X}, t_1)\nonumber \\ &-C_{\sigma^\alpha, \sigma^\beta}(\kappa_1,\kappa_2,\vec{X}, t_1)
\label{eq:Cov_imag_part}
\end{align}
with $\kappa_1 = \sqrt{i\langle \sigma^z(v_k,0)\rangle}$, $\kappa_2 = \sqrt{i\langle \sigma^z(h_m,0)\rangle}$ and  $t_1 = \frac{\pi}{8W^k_m}$.
$\eta(\vec{X})$ in Eq.\ref{eq:Cov_imag_part} can only attain a value of zero if the neurons  $v_k$ and $h_m$ are uncorrelated.
We shall now connect $\eta(\vec{X})$ to other well-known correlation measures like Von-Neumann entropies of and incipient mutual information ($\mathcal{I}(v_k,h_m)$) \cite{nielsen_chuang_2010,Watrous_book, doi:10.1073/pnas.0806782106, PhysRevLett.118.016804} of subsystems of $G$ using the following theorem.

\begin{theorem}\label{lem2}
The two particle-density matrices $^2\rho(v_k, h_m)$ and one particle-density matrices $^1\rho(v_k)$ and $^1\rho(h_m)$ of the learner $G$ can be computed for a given $\vec{X}$ from Eq.\ref{eq:rbm_dist} (see Section \ref{thm_2_proof} in Appendix). Using these one can construct $\mathcal{I}(v_k, h_m) = S(^1\rho(v_k)) + S(^1\rho(h_m)) - S(^2\rho(v_k, h_m))$ where $S(Y)=-Tr(YlnY)$. The following statement is true
\begin{align}
LB \leq \mathcal{I}(v_k,h_m) \leq UB 
\end{align}

wherein the lower bound (LB) is
\begin{align}
LB &= 2 -  2\mathcal{L}\big( \frac{1+\eta(\vec{X})}{4} \big) - 2 \mathcal{L}\big( \frac{1-\eta(\vec{X})}{4} \big) \label{eq:LB}
\end{align}
and the corresponding upper bound (UB) is 
\begin{align}
UB &= \mathcal{L}\bigg(\frac{1+\sqrt{1+(-1)^\gamma\eta(\vec{X})}}{2} \bigg) + \mathcal{L}\bigg(\frac{1-\sqrt{1+(-1)^\gamma\eta(\vec{X})}}{2} \bigg) \label{eq:UB}
\end{align}
with $-1 \le \eta(\vec{X}) \le 1$ as defined in Eq.\ref{eq:Cov_imag_part} and $\mathcal{L}(x)=-xln(x)\:\: \forall\: x \in \mathcal{R}_+$ and $\gamma = 0$ \:if \:$\eta(\vec{X}) < 0$ or $\gamma = 1$ \:if \:$\eta(\vec{X}) \ge 0$

\end{theorem}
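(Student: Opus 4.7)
The plan is to reduce both inequalities to a three-parameter extremization and then identify the extremizing configurations by hand. Since $\mathcal{H}(\vec X,\vec v,\vec h)$ in Eq.\ref{eq: Ising_energy} is built entirely from mutually commuting $\sigma^z$ operators, $\rho_{th}$ and all of its partial traces are diagonal in the computational basis. Tracing out the remaining $n+p-2$ neurons therefore leaves ${}^1\rho(v_k)$, ${}^1\rho(h_m)$ and ${}^2\rho(v_k,h_m)$ whose eigenvalues are fully determined by only three scalars: $m_v=\langle\sigma^z(v_k)\rangle_{\rho_{th}}$, $m_h=\langle\sigma^z(h_m)\rangle_{\rho_{th}}$ and the covariance $\eta(\vec X)$ of Eq.\ref{eq:Cov_imag_part}. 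Concretely, the one-body reductions have spectra $(1\pm m_v)/2$ and $(1\pm m_h)/2$, while the two-body spectrum admits the parameterization $p_{s_vs_h}=\big[(1+s_vm_v)(1+s_hm_h)+s_vs_h\,\eta(\vec X)\big]/4$ with $s_v,s_h\in\{+,-\}$. The mutual information therefore becomes an explicit function $\mathcal{I}(m_v,m_h,\eta)$, and each bound is the statement of an extremization of this function over $(m_v,m_h)$ at fixed $\eta$ on the feasibility polytope $\{p_{s_vs_h}\geq 0\}$.

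For the \emph{lower bound}, I would show that $\mathcal{I}(m_v,m_h,\eta)$ at fixed $\eta$ is minimized at the interior point $m_v=m_h=0$. Stationarity there is immediate because the joint reflection $(m_v,m_h)\to(-m_v,-m_h)$ merely permutes the four eigenvalues of ${}^2\rho$ (and swaps the two eigenvalues of each marginal), so $\mathcal{I}$ is an even function of $(m_v,m_h)$; that this critical point is actually a global minimum follows from the joint convexity of the quantum relative entropy, writing $\mathcal{I}(v_k,h_m)=S({}^2\rho\,\|\,{}^1\rho(v_k)\otimes{}^1\rho(h_m))$ and noting that the product-marginal reference is affine in $(m_v,m_h)$. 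At the minimizer the two-body spectrum collapses to the doubly-degenerate pair $(1\pm\eta)/4$, the marginal entropies saturate, and a line of algebra reproduces Eq.\ref{eq:LB}. For the \emph{upper bound}, I would show that the maximum of $\mathcal{I}$ at fixed $\eta$ lives on the boundary of the feasibility polytope, where one pair of eigenvalues of ${}^2\rho$ is degenerately zero: for $\eta\geq 0$ (the $\gamma=1$ branch) this is $p_{+-}=p_{-+}=0$ with $m_v=m_h=\sqrt{1-\eta}$, and for $\eta<0$ (the $\gamma=0$ branch) it is $p_{++}=p_{--}=0$ with $m_v=-m_h=\sqrt{1+\eta}$. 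On any such boundary the value of one spin deterministically fixes the other, so $S({}^2\rho)=S({}^1\rho(v_k))=S({}^1\rho(h_m))$, whence $\mathcal{I}$ collapses to a single binary entropy with argument $(1+\sqrt{1+(-1)^\gamma\eta})/2$ and recovers Eq.\ref{eq:UB}. Maximality itself is obtained by noting that the Pearson correlation $\rho_P=\eta/\sqrt{(1-m_v^2)(1-m_h^2)}$ saturates $|\rho_P|=1$ precisely on this boundary, and that for a $2\times 2$ classical joint $\mathcal{I}$ is monotone in $|\rho_P|$.

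The principal obstacle is the global-extremality step for both bounds: $\mathcal{I}$ is a strongly nonlinear function of three coupled parameters and the positivity constraints $p_{s_vs_h}\geq 0$ entangle $(m_v,m_h)$ with $\eta$ in a nontrivial way, so ruling out spurious interior critical points of the constrained optimization requires either an explicit Hessian sign analysis with the $\eta$-preserving constraint or, more cleanly, the divergence reformulation $\mathcal{I}=S({}^2\rho\,\|\,{}^1\rho(v_k)\otimes{}^1\rho(h_m))$ from which joint convexity of relative entropy immediately delivers the lower bound and a boundary-attainment argument delivers the upper bound. Once this analytic hurdle is cleared, substitution of the extremal configurations into $\mathcal{I}$ is purely algebraic and produces Eqs.\ref{eq:LB} and \ref{eq:UB} in closed form.
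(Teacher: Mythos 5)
Your proposal follows the same skeleton as the paper's proof: both reduce the two-site reduced state to a small parameter set, fix $\eta(\vec{X})$, and pick out extremal configurations of $\mathcal{I}$ on the resulting feasibility polytope. Your $(m_v,m_h,\eta)$ coordinates are just a linear change of variables from the paper's four eigenvalues $\{\lambda_i({}^2\rho)\}$ subject to $\sum_i\lambda_i=1$ and the covariance constraint, and your extremizers match the paper's (the paper writes the degeneracy patterns in eigenvalue language: equal aligned/anti-aligned pairs for the LB, two vanishing eigenvalues on the perfectly-(anti-)correlated boundary for the UB, giving the $\gamma$-branches). So the route is essentially identical.

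Where you part ways from the paper is that you try to \emph{justify} global extremality, which the paper simply asserts after setting up the constrained optimization of Eq.~\ref{I_expr}. That is the right instinct, but the specific argument you lean on for the lower bound is flawed: joint convexity of $S(\rho\,\|\,\sigma)$ requires $\sigma$ to be an affine function of the variable being convexified, and here $\sigma={}^1\rho(v_k)\otimes{}^1\rho(h_m)$ is \emph{bilinear} in $(m_v,m_h)$ (the tensor product introduces $m_vm_h$ cross terms), which is precisely why mutual information is in general neither convex nor concave in the joint distribution. In particular, the midpoint of ${}^2\rho(m_v,m_h,\eta)$ and its reflection ${}^2\rho(-m_v,-m_h,\eta)$ is a state with covariance $\eta+m_vm_h\neq\eta$, so the symmetry-plus-convexity shortcut does not stay on the constant-$\eta$ slice in the eigenvalue simplex. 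Your reflection argument does establish \emph{stationarity} at $m_v=m_h=0$, but promotion to a global minimum needs the Hessian/boundary analysis you mention as the alternative, or some other device. The paper, for its part, does not supply this step either, so this is a shared gap rather than a point where your proposal is weaker; but the convexity argument as written should not be presented as "immediately" delivering the bound.
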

\begin{proof}
See Section \ref{thm_2_proof} in Appendix
\end{proof}

The implication of Theorem \ref{lem2} is profound and is illustrated diagrammatically in Fig.\ref{Fig:MI_eta}. It introduces an newly emergent $\mathcal{I}-\eta$ space to probe learning trajectories. In this space, the bounds together define a convex set (see Fig \ref{Fig:MI_eta}) within which resides the acceptable values of $\mathcal{I}(v_k,h_m)$ and $\eta(\vec{X})$, that the learner $G$ can access during the course of its training. One must note that LB in Eq.\ref{eq:LB} is more stringent compared to previously known bound of $\frac{\eta(\vec{X})^2}{2}$ \cite{PhysRevLett.100.070502} as seen in Fig.\ref{Fig:MI_eta}. 

\begin{figure}[!htb]
     \centering   \includegraphics[width=0.46\textwidth]{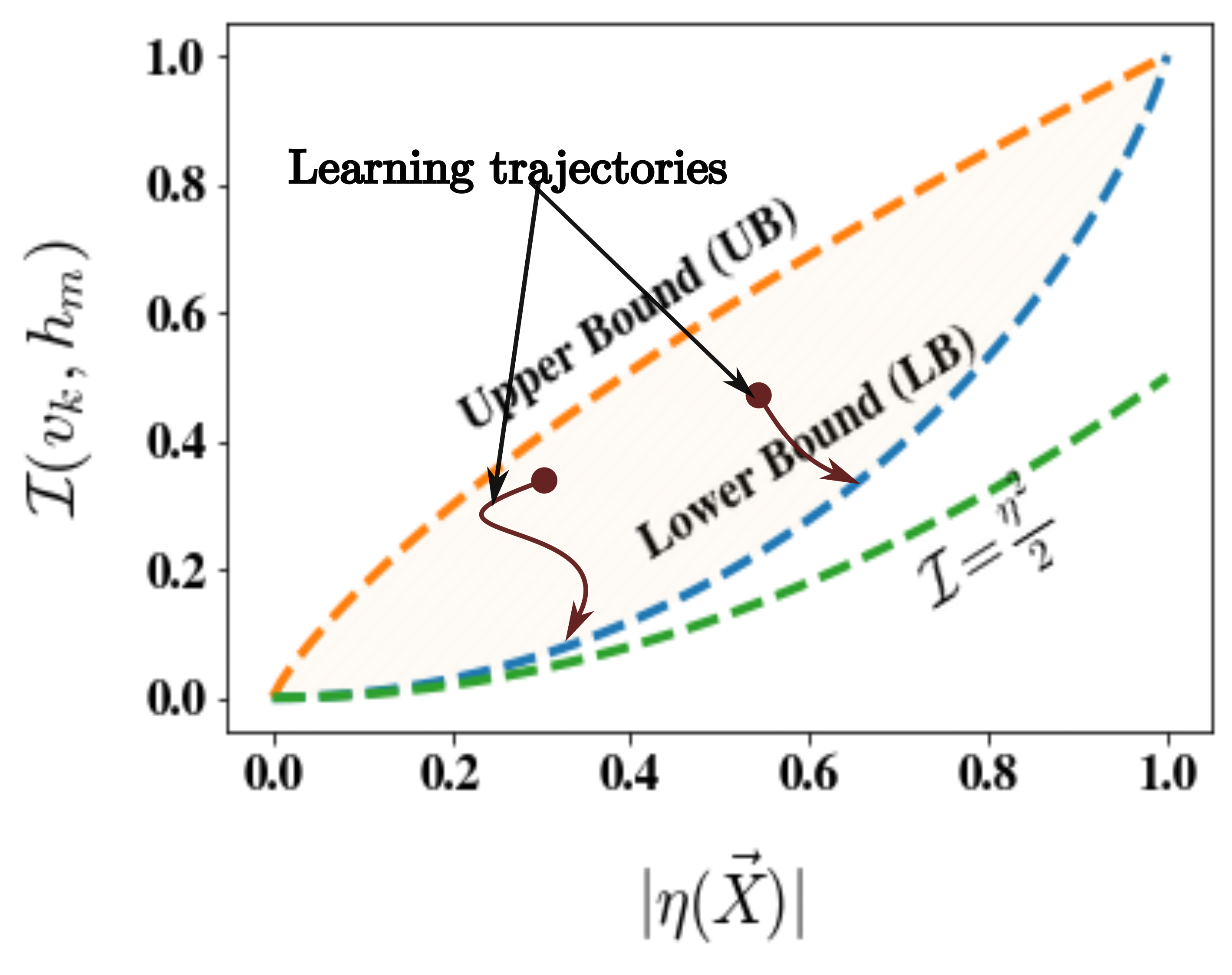}
    \caption{The upper bound(UB) and lower bound (LB) in the convex $\mathcal{I}-\eta$ space for network $G$ as described in Theorem\ref{lem2} in Eq. \ref{eq:LB},\ref{eq:UB} respectively. In all plots in this panel, we shall use $\alpha = \beta = x$ in Eq.\ref{eq:Cov_imag_part}. Provided alongside in green is the conventionally known lower bound for any general bi-partition of an arbitrary system \cite{PhysRevLett.100.070502}. The lower bound (LB) for $G$ is thus stricter than the known general bound. Representative learning trajectories of the network $G$ are shown}
     \label{Fig:MI_eta}
 \end{figure}

\section{Computational details}\label{training_algo_section}

\subsection {Training Algorithm and efficient polynomially scaling construction of $\mathcal{I}- \eta(\vec{X})$ space}

In this section we discuss in detail the computational algorithm used for training the network $G=(V,E)$ and for subsequent estimation of $\mathcal{I}$ and $\eta(\vec{X})$ through sampling to extract the features of the learner in this emergent space.

For a given driver hamiltonian $H \in \mathcal{C}^{d\times d}$, the variational form of the target state which is used to train the learner as ansatz is defined in Eq.\ref{psi_X}. Since we are interested in ground state of the respective drivers, training as explained before is done minimizing the cost function $J(\vec{X})$ (in this case the energy) with respect to the parameters $\vec{X}$ of the trial ansatz $\psi(\vec{v},\vec{X})$ as follows:
\begin{align}
J(\vec{X}) & = \frac{\langle \psi(\vec{v},\vec{X})|H|\psi(\vec{v},\vec{X})\rangle}{\langle \psi(\vec{v},\vec{X})|\psi(\vec{v},\vec{X})\rangle} \label{energy_cost}
\end{align}

For updating the parameters of Eq.\ref{energy_cost} we follow the algorithm in Ref \cite{carleo2017solving} closely which is based on Sorella's Stochastic Reconfiguration technique \cite{doi:10.1063/1.2746035}. In this technique the parameters of the cost function are updated using a pre-conditioner $F \in \mathcal{C}^{p*n\times p*n}$ defined as follows
\begin{align} 
\vec{X} &\rightarrow \vec{X} - l*F^{-1}(\vec{X})S(\vec{X}) \label{train_algo1}
\end{align}
where
\begin{align}
F^{i}_{j} &= \langle D_i^\dagger D_j \rangle - \langle D_i^\dagger \rangle \langle D_j \rangle  \label{train_algo2}\\
S_{i} &= \langle E_{loc} D_i^\dagger \rangle - \langle E_{loc} \rangle \langle D_i^\dagger \rangle \label{train_algo3}\\
D_i|\vec{v}\rangle &= \frac{\partial_{\vec{X_i}} \psi(\vec{v},\vec{X})}{\psi(\vec{v},\vec{X})}|\vec{v}\rangle\:\:\:\:\:\:\: \forall \:\:\vec{X_i} \in (\vec{a}, \vec{b}, \vec{W})  \label{train_algo4}\\
 E_{loc} &= \frac{\langle \vec{v}|H|\psi(\vec{v},\vec{X})\rangle}{\psi(\vec{v},\vec{X})} \label{train_algo5}
\end{align}
The indices $(i,j)$ in Eq.\ref{train_algo1}- \ref{train_algo5} run over all $p*n$ parameters in $\vec{X}$. Also $l \in \mathcal{R}_{+}$ is known as the learning rate and $|\vec{v}\rangle$ is the computational basis of configuration of the visible node register. The above expressions are true for any variational ansatz. For the particular graph neural network encoding $G=(V,E)$, the components of $D_i$ can be analytically expressed as diagonal matrix elements in the computational basis as follows:
\begin{align}
\langle\vec{v}^\prime|D_{a_i}|\vec{v}\rangle &= \sigma^z(v_i)\delta_{\vec{v}\vec{v}^\prime} \\
\langle\vec{v}^\prime|D_{b_j}|\vec{v}\rangle &= tanh(b_j + \sum_{i}W^i_j \sigma^z(v_i))\delta_{\vec{v}\vec{v}^\prime} \\
\langle\vec{v}^\prime|D_{W^i_j}|\vec{v}\rangle &= \sigma^z(v_i)tanh(b_j + \sum_{i}W^i_j \sigma^z(v_i))\delta_{\vec{v}\vec{v}^\prime}
\end{align}
The averaging $\langle ... \rangle$ defined in Eq.\ref{train_algo2}-\ref{train_algo3} is over the distribution $\frac{|\psi(\vec{v},\vec{X})|^2}{\langle \psi(\vec{v},\vec{X})| \psi(\vec{v},\vec{X}) \rangle}$. This distribution in accordance to Eq.\ref{psi_X} is dependant on configurations of $\{\sigma(v_i)\}_{i=1}^n$. Exactly computing all such $2^n$ configurations would yield an exponentially scaling protocol. Hence the algorithm in Ref \cite{carleo2017solving} uses a Monte-Carlo based sampling protocol using Metropolis-Hastings procedure. The primary workhorse of the protocol relies on starting with one of the arbitrary (say $\{\sigma(v_i)\}_{i=1}^n$) $2^n$ configuration and then randomly choosing one of the spins $\sigma(v_i)$ and mutating/flipping it to generate a new configuration (say $\{\sigma(v_i)^{\prime}\}_{i=1}^n$). The new configuration is accepted or rejected using the following rule
\begin{align}
 P(\{\sigma(v_i)\}_{i=1}^n \rightarrow
 \{\sigma(v_i)^{\prime}\}_{i=1}^n) &= min (1, \Bigg|\frac{\psi(\vec{v}^\prime,\vec{X})}{\psi(\vec{v},\vec{X})}\Bigg|^2)
 \end{align}
 
Ref\cite{carleo2017solving} establishes that the computational cost of each such Monte Carlo sweep is $O(n*p)$. If the number of sweeps used is $N_S$ then the total cost of computing the gradient updates by computing $F$ would be $O(n*p*N_S)$ as there are $O(n*p)$ parameters over which the matrix $F$ and vector $S$ needs to be computed. We use Netket library \cite{10.21468/SciPostPhysCodeb.7} to perform computations using this algorithm. If the condition number of $F$ is large, invertibility may be a problem which is obviated using a infinitesimal shift which has been fixed at 0.01 in our calculations. The total number of independent Markov chains is set to 1000 and 60 sweeps are used at each step along a chain. All parameters of $\vec{X}$ is randomly initialized from a normal distribution of zero mean and 0.01 standard deviation. The maximum number of iterations kept were 200 and learning rate $(l)$ for training $G=(V,E)$ is set to 0.1. With these set of parameters we were able to achieve an error threshold of $\le$ 0.1\% for convergence. 

Once the training of $G=(V,E)$ has commenced to the desired accuracy threshold, we have now obtained the trained parameter vector (say $\vec{X}^*=(\vec{a}^*, \vec{b}^*, \vec{W}^*)$ where $*$ is not complex conjugation as all parameters are real but denotes a specific instance of the trained $\vec{X}$ procured after training)  of the learner $G$. Using these one can easily construct the eigenvalues of the two $^2\rho(v_k,h_m)$ and one-particle reduced density matrices ($^1\rho(v_k)$ or $^1\rho(h_m)$) as obtained from the contraction of Eq.\ref{eq:rbm_dist}. 
The four eigenvalues $\{\lambda_i(^2\rho(v_k,h_m))\}_{i=1}^4$ of the two-particle density matrix $^2\rho(v_k,h_m)$ for the learner $G$ between a specific pair of visible and hidden spins (say $(k,m)$) as deduced in Section \ref{eig_2rdm_deduction} are expressed as follows:

\begin{widetext}
\begin{align}
\lambda_1(^2\rho(v_k,h_m)) &= \lambda(^2\rho(v_k=1,h_m=1)) = \frac{N_{1}}{Z} e^{-a_k^* - b_m^* - W^{k*}_m}\times \langle \Pi_{j\ne m}^p 2cosh (b_j^* + \sum_{i\ne k}^n W^{i*}_j v_i + W^{k*}_j)\rangle_{P(\{v_i\}_{i\ne k}^n, h_m =1)} \label{lambda_2RDM_1}\\
\lambda_2(^2\rho(v_k,h_m)) &= \lambda(^2\rho(v_k=1,h_m=-1)) = \frac{N_{-1}}{Z} e^{-a_k^* + b_m^* + W^{k*}_m} \times \langle \Pi_{j\ne m}^p 2cosh (b_j^* + \sum_{i\ne k}^n W^{i*}_j v_i + W^{k*}_j)\rangle_{P(\{v_i\}_{i\ne k}^n, h_m =-1)} \label{lambda_2RDM_2}\\
\lambda_3(^2\rho(v_k,h_m)) &= \lambda(^2\rho(v_k=-1,h_m=1)) = \frac{N_{1}}{Z} e^{a_k^* - b_m^* + W^{k*}_m}\times \langle \Pi_{j\ne m}^p 2cosh (b_j^* + \sum_{i\ne k}^n W^{i*}_j v_i - W^{k*}_j)\rangle_{P(\{v_i\}_{i\ne k}^n, h_m =1)} \label{lambda_2RDM_3}\\
\lambda_4(^2\rho(v_k,h_m)) &= \lambda(^2\rho(v_k=-1,h_m=-1)) = \frac{N_{-1}}{Z} e^{a_k^* + b_m^* - W^{k*}_m}\times \langle \Pi_{j\ne m}^p 2cosh (b_j^* + \sum_{i\ne k}^n W^{i*}_j v_i - W^{k*}_j)\rangle_{P(\{v_i\}_{i\ne k}^n, h_m =-1)} \label{lambda_2RDM_4}
\end{align}
\end{widetext}

where each of the averages in Eq.\ref{lambda_2RDM_1}- \ref{lambda_2RDM_4} are computed over the distribution $P(\{v_i\}_{i\ne k}^n, h_m)$ and $N_{h_m}$ is the associated normalization constant. These are defined as 
\begin{align}
P(\{v_i\}_{i\ne k}^n, h_m) &= \frac{e^{-(\sum_{i \ne k}^n a_i^* v_i + W^{i*}_m v_ih_m))}}{N_{h_m}} \\
N_{h_m} & = \Pi_{i\ne k}^n 2cosh(a_i^* + \sum_{i\ne k}^n W^{i*}_m h_m) \label{P_2RDM}
\end{align}
The corresponding eigenvectors of the two particle density matrix for the eigenvalues in Eq.\ref{lambda_2RDM_1}-\ref{lambda_2RDM_4} are $|0(v_k)0(h_m)\rangle$, $|0(v_k)1(h_m)\rangle$ and $|1(v_k)0(h_m)\rangle$, $|1(v_k)1(h_m)\rangle$ respectively for the four eigenvalues Eq.\ref{lambda_2RDM_1}-Eq.\ref{lambda_2RDM_4} where $(0,1)$ is notationally equivalent to $(1,-1)$ for each spins $(v_k, h_m)$

The quantity $Z$ is the partition function defined as 
\begin{align}
Z&=\sum_{(\vec{v}, \vec{h})}\hspace*{-0.01in} e^{(-\sum_{i}^n -a_i^* v_i - \sum_{i}^m b_j^* h_j -\sum_{i,j}^{n,m} W^{i*}_j v_i h_j)}
\end{align}
However $Z$ need not be explicitly computed as it can be eliminated using the unit normalization condition of the eigenvalues. 
The corresponding eigenvalues for one-particle density matrix $^1\rho(\xi_i,0)$ for a neuron $\xi_i$ in the learner $G$, can thereafter be constructed by contraction from 
Eq.\ref{lambda_2RDM_1}, \ref{lambda_2RDM_2}, \ref{lambda_2RDM_3}, \ref{lambda_2RDM_4} are
\begin{align}
\lambda_1(^1\rho(\xi_i)) &= \lambda_i(^2\rho(v_k,h_m)) + \lambda_j(^2\rho(v_k,h_m)) \label{1RDM_eig_1}\\&\:\:\: (\text{if} \:\: \xi_i=v_k, \:\: (i,j)=(1,3)) \nonumber \\&\:\:\:
(\text{if} \:\: \xi_i=h_m, \:\: (i,j)=(1,4)) \nonumber \\
\lambda_2(^1\rho(\xi_i)) &= \lambda_i(^2\rho(v_k,h_m)) + \lambda_j(^2\rho(v_k,h_m)) \label{1RDM_eig_2}\\&\:\:\: (\text{if} \:\: \xi_i=v_k, \:\: (i,j)=(2,4)) \nonumber \\&\:\:\:
\text{if} \:\: (\xi_i=h_m, \:\: (i,j)=(2,3)) \nonumber 
\end{align}
with respective eigenvectors are $|0(\xi_i)\rangle$ and $|1(\xi_i)\rangle$
where $\xi_i \in (v_k, h_m)$.

Using the eigenvalues of $^2\rho(v_k,h_m)$ and one-particle reduced density matrices ($^1\rho(v_k)$ or $^1\rho(h_m)$) one can compute $I(v_k, h_m)$ vs $|\eta(\vec{X})|$ as illustrated in Section \ref{thm_2_proof} in Appendix and construct the entire space. From Eq.\ref{lambda_2RDM_1}-\ref{lambda_2RDM_4} and Eq.\ref{P_2RDM} it is clear that the underlying probability distribution $P(\{v_i\}_{i=1, i\ne k}^n, h_m)$ from which the eigenvalues are computed is defined over the $2^{n-1}$ configurations $\{\sigma(v_i)\}_{i\ne k}^n$ of the visible node register (also independent of $\{h_i\}_{i\ne m}$ but is dependant on $h_m=\pm 1$ which is kept fixed for a given eigenvalue)
. Accurate estimation of the entire distribution would require exponentially scaling resources. However we shall now demonstrate a polynomially scaling algorithm based on Gibbs sampling technique. The distribution can be marginalized easily as it is completely factorizable over individual $\{v_i\}_{i\ne k}$ spins due to lack of connectivity among the visible spins in the network $G$ which is the key feature required in our sampling. We estimate the eigenvalues using Gibbs sampling from this distribution \ref{P_2RDM}. For each pair $(k,m)$ of visible and hidden neurons, the sampling technique for a given drawn/sampled configuration of visible neurons (say $\vec{v_1}$) performs a sum over all visible neurons to construct each of the primitive operators $2cosh (b_j^* + \sum_{i\ne k}^n W^{i*}_j v_i + W^{k*}_j)$. Each such primitive operator is indexed by $j$. The primitive operators are then constructed for every $j \in \{1,2...p\}$ and then multiplied together to yield the compound operator $\Pi_{j\ne m} 2cosh (b_j^* + \sum_{i\ne k}^n W^{i*}_j v_i + W^{k*}_j)$. This step thus incurs a cost of $O(np)$ alone. This step is repeated for many drawn samples of visible node configurations (say $\vec{v_2}$ now) with the total number of samples drawn be $N_E$ thereby introducing a total cost of $O(np*N_E)$ for this step. This yields the four eigenvalues as given in Eq.\ref{lambda_2RDM_1}-\ref{lambda_2RDM_4} and completes the computation of $I(v_k, h_m)$ vs $|\eta(\vec{X})|$ for one pair of $(k,m)$ neurons. The entire process is repeated for every pair of $(k,m)$ neurons. Since there are $O(np)$ neurons and each of which gives a two particle density matrix, the total computation of all $pn$ two-particle density matrices incurs a cost of $O(n^2p^2*N_E)$. Thus from start to finish our entire protocol of training the network $G=(V,E)$ and the subsequent construction of the $I(v_k, h_m)$ vs $|\eta(\vec{X})|$ space is $O(poly(n,p))$.

\subsection{Hamiltonian of the Drivers}\label{Ham_drivers_section}
To exemplify the consequences further, we now use two drivers namely the {\color{black} Transverse Field Ising model (TFIM) and the  concentric-TFIM \cite{vitagliano2010volume} (c-TFIM) for a system of $N=4,6,8,10,12,14,16,18,20,24$ spins}. The generic hamiltonian for the drivers can be written as 

\begin{align}
H &= -B\sum_{i_d}^N \sigma^x(i_d) - \sum_{{i_d}{j_d}} J_{{i_d}{j_d}} \sigma^z(i_d) \sigma^z(j_d) \label{gen_Ham}
\end{align}
For TFIM the matrix elements of $J_{{i_d}{j_d}}$ are 
\begin{align}
J_{{i_d}{j_d}} &= 
\begin{cases}
J_0 , & \textit{if}\:\: if j_d= i_d \pm 1 \:\:\: \forall \:\: i_d \\
0 & \text{otherwise} 
\end{cases}
\label{eq:TFIM_Ham}
\end{align}

For c-TFIM, the elements $J_{{i_d}{j_d}}$ are
\begin{align}
J_{{i_d}{j_d}} &= 
\begin{cases}
J_0 , & \textit{if}\:\: i_d = \frac{N}{2}-(q-1), \nonumber \\  & j_d= \frac{N}{2}+q \\
0 & \text{otherwise} 
\end{cases}\\
\text{with} \: &\forall\:\: q \in [1, \frac{N}{2}]
\label{eq:conc_TFIM_Ham}
\end{align}
While Eq.\ref{eq:TFIM_Ham} due to nearest-neighbor interactions (see Fig.\ref{Fig:LB_saturation}(a)) admits an area-law scaling ground state which can only be augmented to a logarithmic correction \cite{koffel2012entanglement, kuwahara2020area}, the connectivity graph of Eq.\ref{eq:conc_TFIM_Ham} (see Fig.\ref{Fig:LB_saturation}(b)) necessitates a volume-law scaling (refer to Section \ref{VNE_plots} in Appendix for direct corroboration). Since we choose $(B \ge 0,J_0 \ge 0)$, the ground state of both drivers have non-negative coefficients due to Perron-Frobenius theorem \cite{PhysRevB.100.195125,horn_johnson_2012} and undergoes a phase transition from an ordered ferromagnet to the disordered phase owing to spontaneous breaking of $\mathcal{Z}_2$ symmetry ($\sigma^z(i_d) \to -\sigma^z(i_d)$ or $\pi$ rotation around $\sigma^x(i_d)$) 
as $g=\frac{B}{J_0}$ is enhanced. 
\begin{figure*}[!htb]
     \centering   \includegraphics[width=0.99\textwidth]{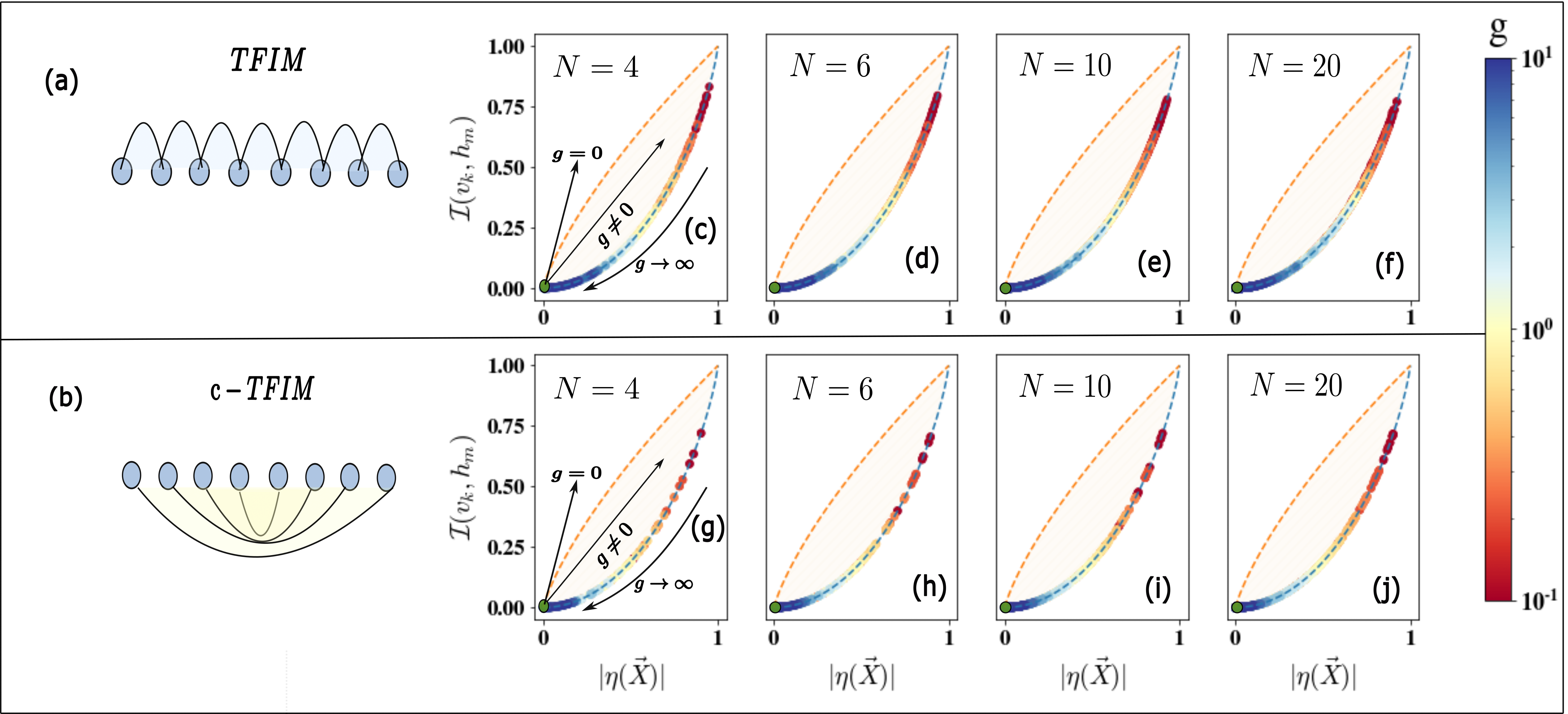}
    \caption{(a) The schematics of the interactions among the spin registers in the area-law entangled ground state of the driver Hamiltonian defined in Eq.\ref{eq:TFIM_Ham}(TFIM) \cite{PFEUTY197079,pfeuty1971ising,stinchcombe1973ising} (b) The schematics of the interactions among the spin registers in the volume-law entangled ground state of another driver system defined in Eq.\ref{eq:conc_TFIM_Ham}(c-TFIM) \cite{Vitagliano_2010} (see text for more details and Section \ref{VNE_plots} in Appendix). The $\mathcal{I}(v_k, h_m)$ and $\eta(\vec{X})$ values corresponding to the finally converged parameter $\vec{X}^*$ of the trained network $G=(V,E)$ for the ground state of Eq.\ref{eq:TFIM_Ham} for (c) N=4, (d) N=6, (e) N=10, (f) N=20 spins in the driver in Eq.\ref{eq:TFIM_Ham}(TFIM). The $\mathcal{I}(v_k, h_m)$ and $\eta(\vec{X})$ values corresponding to the finally converged parameter $\vec{X}$ of the trained network $G$ for the ground state of Eq.\ref{eq:conc_TFIM_Ham}(c-TFIM) for (g) N=4, (h) N=6, (i) N=10, (j) N=20 spins in the driver. In all eight plots (c-j) we see that the representation chosen by the trained network  in the emergent $\mathcal{I}-\eta$ space always saturates the LB (Eq. \ref{eq:LB}) for all $g \in (0,\infty]$. The behavior remains intact irrespective of the changing size of the driver system and has been observed for a wider class of drivers too arising in different physical problems (See also Section \ref{Sat_LB_Section} in Appendix) }
     \label{Fig:LB_saturation}
\end{figure*}

\begin{figure*}[!htb]
     \centering   \includegraphics[width=0.99\textwidth]{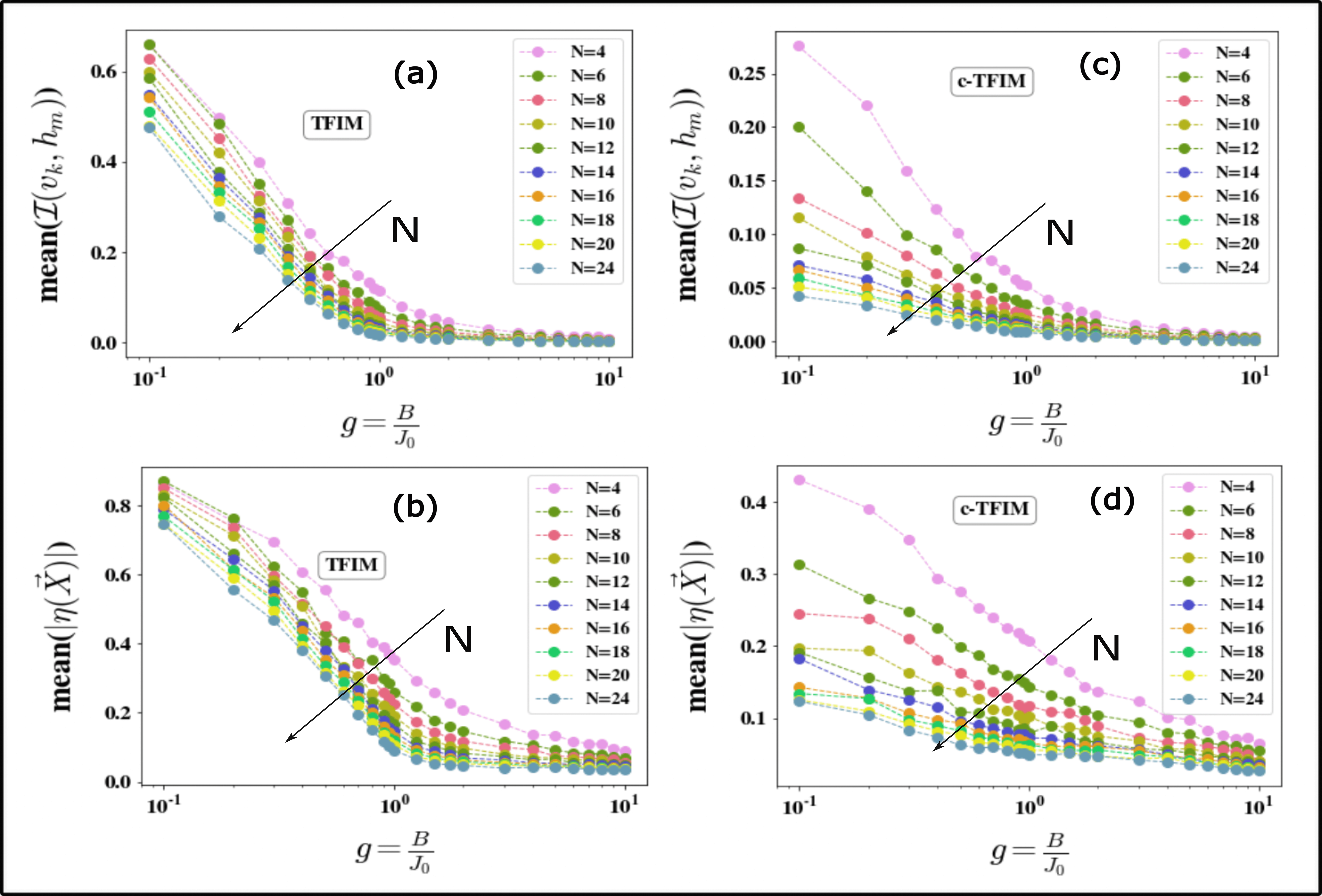}
    \caption{(a,c) The plot of the mean of $\mathcal{I}(v_k, h_m)$ vs $g$ from the likes of Fig.\ref{Fig:LB_saturation}(c-j) where the averaging is done over all pairs of $(k,m)$ and over many initialization at a given $g$ for various values of driver spins $N=4,6,8,10,12,14,16,18,20,24$ for both models defined by Eq.\ref{eq:TFIM_Ham} in (a) and Eq.\ref{eq:conc_TFIM_Ham} in (c) (b,d) Similar plot as in Fig.\ref{Fig:MI_eta_spin_transfer}(a,c) but for $\eta(\vec{X})$ as a function of $g$ for various sizes of driver spins in both the models. In all plots Fig.\ref{Fig:MI_eta_spin_transfer}(a,d) transferrence of spin correlation from both drivers to the visible and latent units of the learner across phase boundaries for drivers of even large sizes is apparent (see also Fig.\ref{Fig:TFIM_cTFIM_all_N_LB_sat}) and so is the higher range of variability of the mean in volume law connected ground states in Fig.\ref{Fig:MI_eta_spin_transfer}(c,d). It must be emphasized that the mean $\mathcal{I}(v_k, h_m)$or mean $\eta(\vec{X})$ (see Eq.\ref{eq:Cov_imag_part}) defined is between the visible and the latent/hidden spins. The hidden spins are oblivious to the driver and are only known to the learner network. The configurations of the visible spins of the driver provides a convenient basis to the spins of the driver in both the models. The hidden spins on the other hand are just responsible for supporting the visible spins in adequate feature extraction. Yet we see in these plots that the spin correlation that exists between the spins/sub-units of the driver across a phase transition is directly transferred or mirrored between the visible and latent spins of the trained learner as desirable features. Such an observation affords a direct numerical quantification to the assistance provided by the hidden spins in training the network $G=(V,E)$. This kind of emergent insight into the training of the graph-neural network is central to our work and as evidentiated can easily be procured through the newly constructed $I(v_k, h_m)$ vs $|\eta(\vec{X})|$ space }
     \label{Fig:MI_eta_spin_transfer}
\end{figure*}

\section{Results and Discussion}\label{Result_Disc}
To simulate the drivers discussed in Section \ref{Ham_drivers_section} for $N=4,6,8,10,12,14,16,18,20,24$ spins, we use $n=p=N$ in the learner $G$ and run several numerical experiments with different randomly chosen initial parameters with the error threshold for convergence set to $\le 0.1 \%$ (see Section \ref{training_algo_section} for the  training algorithm and Section \ref{Train_plots} in Appendix for convergence plots of training) for each. To see the effect of changing hidden node density $p/n$ see Section \ref{Hid_node_density_Section} in Appendix. We use the finally converged $\vec{X^*}$ obtained from the training to construct the eigenvalues of $^2\rho(v_k,h_m)$ and one-particle reduced density matrices ($^1\rho(v_k)$ or $^1\rho(h_m)$) and eventually compute $I(v_k, h_m)$ and $|\eta(\vec{X})|$ as illustrated in Section \ref{training_algo_section}. We do this for each pair $(k,m)$ choosing one from the set of visible and the other from the set of hidden neurons in the learner $G=(V,E)$.
We display the results of our computation in the $\mathcal{I}-\eta$ space (illustrated in Fig.\ref{Fig:MI_eta}) in Fig.\ref{Fig:LB_saturation}(c-j) for all such pairs at various $g$ values of the drivers (see Section \ref{Ham_drivers_section} for a definition of $g$). For $N=4,6,10,20$ in the respective models, we see surprisingly that the representation chosen by the trained learner in the $\mathcal{I}-\eta$ space always saturates LB in Fig.\ref{Fig:LB_saturation}(c-f)and slides along it monotonically for $g \in (0,\infty]$. For $g=0$, the representation of $G$ is devoid of any correlation between sub-systems of the visible and hidden neurons $\forall (k,m)$
with a cluster of points near the green dot (0,0) (especially marked in Fig.\ref{Fig:LB_saturation}(c,g) but is true for all plots in Fig.\ref{Fig:LB_saturation}(c-j)) corresponding to the uncorrelated, two-fold degenerate ferromagnetic ground state ($|0000\rangle$ or $|1111\rangle$)\cite{PhysRevB.100.195125}.
For a direct corroboration for all other sizes and in a wide variety of other driver spin models even beyond the drivers discussed in Section \ref{Ham_drivers_section}, see Section \ref{Sat_LB_Section} in Appendix. From such numerical evidences, we infer a newly discovered learning principle that has never been discussed or investigated before. For a wide variety of drivers, we see that the trained network $G=(V,E)$ when entrusted with learning a probability density function to mimic the amplitude field of a desired quantum state, invariably chooses a representation that minimizes mutual information ($\mathcal{I}(v_k, h_m)$) between the visible and the hidden sub-units for a given covariance ($\eta(\vec{X})$). This further highlights the importance of the $\mathcal{I}(v_k, h_m)$ vs $\eta(\vec{X})$ space we investigate here. The pursuance of universality of this result through a formal proof of the statement or under what conditions it fails if at all may be undertaken in the future.

While lower bound saturation remains true at all values of $g$, as is evident from Fig.\ref{Fig:LB_saturation}(c,j) that for 
$0 \le g \le 1$ the density of points associated with the representation chosen by $G$ shifts dramatically away from the green dot towards the red points and then eventually returns back towards the uncorrelated state (blue points) ($\forall\: (k \in \lceil n \rceil, m \in \lceil p \rceil$ where $n=p=N$ i.e. $n*p$ pairs of $(k,m)$ values) . This is explicitly marked in Fig.\ref{Fig:LB_saturation}(c,g) but is true for all plots in Fig.\ref{Fig:LB_saturation}(c-j).
To study the details of such an occurrence and consolidate the observation, we plot in Fig.\ref{Fig:MI_eta_spin_transfer}(a-d) the sample means of $I(v_k, h_m)$ and $|\eta(\vec{X})|$ as constructed from the finally converged $\vec{X^*}$ by training $G$ averaged not only over all pairs $(k,m)$ for a given experiment but also over converged runs arising from different initialization (i.e. the averages of points plotted along LB in Fig.\ref{Fig:LB_saturation}(c-j) over all pairs of $(k,m)$ and over several initialization such that for a given $g$ there is a single representative $I(v_k, h_m)$ and $|\eta(\vec{X})|$ value). To get better idea about the statistic, the standard deviations associated with the averaging process and the standard error of the mean is displayed in Section \ref{Sample_SE_TFIM_cTFIM} in Appendix. For $g \to 0_{+}$ the ground state of the driver exhibits a superposition of several bit-strings/spin configurations which the network $G$ now mirrors by choosing a representation that has a significantly higher $\mathcal{I}(v_k, h_m)$ and $\eta(\vec{X})$ (and hence correlation) among the bi-partitions between visible and hidden neurons (red dots in Fig.\ref{Fig:LB_saturation}(c-j)). This is further reflected in the higher mean $\mathcal{I}(v_k,h_m)$ in Fig.\ref{Fig:MI_eta_spin_transfer}(a,c)
and higher mean $\eta(\vec{X})$ in Fig.\ref{Fig:MI_eta_spin_transfer}(b,d) for each non-zero value of $g$ than in the $g=0$ case. 

For $g\to \infty$, the drivers once again display an uncorrelated state with each spin in state $\frac{|0\rangle + |1\rangle}{\sqrt{2}}$. To ape this limit, the representation chosen by $G$ thus gradually slides towards the $(0,0)$ point in $\mathcal{I}-\eta$ space (blue dots in Fig.\ref{Fig:LB_saturation}(c-j)) with a concomitant decline in the respective averages in Fig.\ref{Fig:MI_eta_spin_transfer}(a-d) as $g$ is enhanced.
For a given finite $g$, the crucial difference between the drivers in Eq.\ref{eq:TFIM_Ham} and Eq.\ref{eq:conc_TFIM_Ham} is captured in the higher variability in the respective means (see Fig.\ref{Fig:MI_eta_spin_transfer}(c-d)) indicating many compatible/equivalent representations chosen by the network for a correlated volume-law entangled state for all sizes. This is true for each individual size of the driver model used i.e. for each $N$ and is best illustrated from the plot of the standard deviations (associated with the averages in Fig.\ref{Fig:MI_eta_spin_transfer}) displayed in Section \ref{Sample_SE_TFIM_cTFIM} in Appendix which shows that in c-TFIM at a given $g$ (especially $g \to 0_{+}$ regime), the trained learner consists of several different $(k,m)$ pairs with widely varying correlation properties ($I(v_k, h_m)$ and $|\eta(\vec{X})|$ values) arising from compatible chosen for the same learned state. This is much more than the case for TFIM even at a given size $N$ and given $g$ value indicating the ability of the learner to distinguish area-law vs volume-law connectivity. However with increasing $g$ unanimity sets in as both models displays uncorrelated ground state with unique configurations (see also Section \ref{Sample_SE_TFIM_cTFIM} in Appendix). 

Thus in summary the observations indicate three important inferences. Apart from the saturation of LB in $\mathcal{I}(v_k, h_m)$ vs $|\eta(\vec{X})|$ space (corroborated for all sizes and many other spin models in Section \ref{Sat_LB_Section} in Appendix) as illustrated in Fig.\ref{Fig:LB_saturation}(c-j), we also see mirroring of the spin correlation behavior across phase transition between spins of the driver in the correlation introduced between the spins of the visible and latent neurons of the trained state of the learner $G$ in Fig.\ref{Fig:MI_eta_spin_transfer}(a-d). We also see many equivalent representations of the network for differentiating exotic volume-law connectivity in the driver (see Fig.\ref{Fig:MI_eta_spin_transfer} and Section \ref{Sample_SE_TFIM_cTFIM} in Appendix). It must be emphasized that the last two assertions are true even though the latent neurons are directly oblivious and unrelated to the driver (see the description of the network and how it acts an a variational ansatz for the neural-network encoding of the quantum state as illustrated in Section \ref{Graph_description}).
Only the configurations of the visible node register are directly related to the spins of the driver and forms a basis for the eigenspace of the driver. The conventional wisdom is that latent neurons with their respective additional parameters provide support by enhancing the expressibility of the network $G$. Our observations thus collectively can serve as a stepping stone towards formalizing and quantifying the important role performed by the latent spins of the learner $G$ from a newly obtained perspective of the $\mathcal{I}(v_k, h_m)$ vs $|\eta(\vec{X})|$ where physical behavior of such neurons are investigated from the lens of correlation exchange which happens surreptitiously under the hood during training of $G$.

\section{Conclusion}\label{concluding_section}

In this work, we established a number of key physical insights about the training of the learner network $G$. The choice of this specific network is attributed to its astonishing success in simulating a wide variety of quantum systems in condensed matter physics like strongly correlated fermionic assemblies \cite{Strong_corr_RBM,choo2020fermionic,ChNg2017}, topologically non-trivial phases\cite{Topo_state_RBM,doi:10.1063/5.0128283}, anyonic symmetries\cite{RBM_anyons_symm}, in quantum dynamical evolution\cite{PhysRevResearch.3.023095,carleo2017solving}, in chemistry like 2D-materials\cite{sajjan2021quantum,sureshbabu2021implementation,sajjan2022quantum} and molecules with multi-reference correlation\cite{RBM_molecule} under geometric distortion\cite{Xia_2018}, and even in classification tasks\cite{Ciliberto2017,Carrasquilla_2017} with quantum or classical data. In fact, $G$ has been proven to act as a universal approximator for any probability density \cite{Melko2019a, Roux_RBM} thereby providing a guarantee to the range of its representational capacity. Prior work has also established that the network is capable of mimicking the amplitude of a $2^n$ dimensional volume-law entangled quantum state even with a sparse representation\cite{PhysRevX.7.021021} i.e. using $O(n)$ parameters as opposed to $O(np)$ in the usual case. Ref\cite{Long2010} has established that the task of retrieving the full distribution encoded within $G$ would always entail exponential classical resources unless the polynomial hierarchy collapses. However, quantum circuits to efficiently sample from the same with quadratic qubit and gate costs $(O(np))$ already exist. Another feature enjoyed by the network is its easy extension to $d$-dimensional spins \cite{Topo_state_RBM} which makes encoding higher dimensional quantum states of a driver tractable.  

For such a widely recognized network, we have illustrated how changing communication within the sub-units of the learner can be understood by introducing imaginary components of OTOC and have analytically established its relationships with bipartite mutual information. Use of the real part of OTOCs are gaining attention in learning algorithms \cite{shen2020information,wu2021scrambling,garcia2022quantifying} including how quantum learning may be advantageous with such scrambling measures \cite{schuster2022learning}, experimental measurement of OTOC on quantum circuits \cite{garttner2017measuring,mi2021information}, bounds on loss function by OTOCs \cite{garcia2022quantifying}, presence of barren plateaus for simulating large scrambling unitaries using parameterized circuits \cite{holmes2021barren}. We see from our analytical deduction that the real part of OTOC between the $k$-th visible neuron and $m$-th hidden neuron even though sensitive to $W^k_m$ through an oscillatory temporal dependance, is completely insensitive to $(\vec{a},\vec{b})$ unlike the imaginary part. It must be emphasized that the imaginary-part of OTOC is hitherto unexplored and unreported and is not only new for this network but even for other problems wherein OTOC has been used.

A direct by-product of our mathematical approach are the discovery of the several conservation theorems/invariants of motion (see Theorem \ref{lem1}) each of which is equipped with a Lie Algebraic generator that preserves the phase space of OTOC by mapping the the invariants onto itself. We plot both the invariants in Section \ref{inv_training_plot_section} of the Appendix for a prototypical example of TFIM with $N=10$ spins. It is clear from the analysis that only the ones constructed using imaginary components actually change during training. The invariants from real components remain fixed and is insensitive to training epoch. This decisively shows that the importance of imaginary components emphasizing the fact that to procure any information about the learning landscape from these invariants, these components are invincible and the only options available. Such insight into the phase space of OTOC trajectories of network $G$ and the connection to a hidden Lie-Algebraic framework was not procured before and thus provides a richer characterization of the problem which has escaped prior attention. The mathematical machinery so developed could be used for making similar deductions of OTOC strings in other physical systems and other neural network architectures too.

A direct inter-relationship between the imaginary part of such a four-point correlation function and a two-point correlation function and eventually with mutual information was thereafter deduced. Using the newly framed lens of $\mathcal{I}-\eta$ space, we have established three different conclusions associated with training the network - saturation of lower bound for a wide variety of spin models, quantifiable transferrence of spin correlation from the driver to the hidden and visible units of the learner (this is further corroborated using another property computed solely from the learner as demonstrated in Section \ref{Fischer_plots} in Appendix) and compatible representations chosen by the learner for distinguishing complex connectivity. Such a study begins to probe into the rich underlying universe of the training mechanism and shows that the representations chosen by the learner during the training epoch are quantifiably tuned to re-adjust the correlation content among the pairwise bi-partitions of visible and hidden neurons commensurate with the changing quantum correlation in the actual driver system. This is despite the fact that the latent spins of the learner are not directly involved in emulating the spins of the driver, they act merely as supportive instruments to accessorize the spins in the visible-node register by enhancing their expressivity. 


Applications of such observations may be beneficial to
physics-inspired learning \cite{bellinger2020reinforcement,musil2021physics,karniadakis2021physics} and also can be leveraged to expedite training through apriori informed initialization near lower bound (LB). Extension to explore the ramifications of the findings in classical ML tasks like collaborative filtering and to other important networks like dreaming neural networks \cite{FACHECHI201924,agliari2019dreaming,aquaro2022recurrent}, single-layer auto-encoders \cite{Autoencode_review} may be undertaken. 
Although all results are analytically established, since direct measurement of OTOCs is a possibility even on a quantum circuit\cite{mi2021information, PhysRevLett.128.160502}, experimental measurement of the imaginary component of the OTOC string for our learner can be undertaken and its relationship with $\mathcal{I}$ be exploited. With the present-day promise of machine/deep learning, the authors hope that more such studies will be initiated which in spirit `humanizes' already-established black-box models by probing into their learning universe, removes the shroud of mystery behind their training, and will hopefully lead to enhanced cross-pollination with physical sciences and ameliorated standards in model-development. The benefits of such a pipeline can surely be harvested to achieve unprecedented feats in simulating the natural world.


\section{ACKNOWLEDGEMENTS}
The authors would like to acknowledge the financial support of the U.S. Department of Energy, Office of Science, National Quantum Information Science Research Centers, Quantum Science Center, and the National Science Foundation under Award Number 1955907.
\bibliography{ref,Refs_Exc_st_RBM}

\begin{thebibliography}{94}%
\makeatletter
\providecommand \@ifxundefined [1]{%
 \@ifx{#1\undefined}
}%
\providecommand \@ifnum [1]{%
 \ifnum #1\expandafter \@firstoftwo
 \else \expandafter \@secondoftwo
 \fi
}%
\providecommand \@ifx [1]{%
 \ifx #1\expandafter \@firstoftwo
 \else \expandafter \@secondoftwo
 \fi
}%
\providecommand \natexlab [1]{#1}%
\providecommand \enquote  [1]{``#1''}%
\providecommand \bibnamefont  [1]{#1}%
\providecommand \bibfnamefont [1]{#1}%
\providecommand \citenamefont [1]{#1}%
\providecommand \href@noop [0]{\@secondoftwo}%
\providecommand \href [0]{\begingroup \@sanitize@url \@href}%
\providecommand \@href[1]{\@@startlink{#1}\@@href}%
\providecommand \@@href[1]{\endgroup#1\@@endlink}%
\providecommand \@sanitize@url [0]{\catcode `\\12\catcode `\$12\catcode
  `\&12\catcode `\#12\catcode `\^12\catcode `\_12\catcode `\%12\relax}%
\providecommand \@@startlink[1]{}%
\providecommand \@@endlink[0]{}%
\providecommand \url  [0]{\begingroup\@sanitize@url \@url }%
\providecommand \@url [1]{\endgroup\@href {#1}{\urlprefix }}%
\providecommand \urlprefix  [0]{URL }%
\providecommand \Eprint [0]{\href }%
\providecommand \doibase [0]{https://doi.org/}%
\providecommand \selectlanguage [0]{\@gobble}%
\providecommand \bibinfo  [0]{\@secondoftwo}%
\providecommand \bibfield  [0]{\@secondoftwo}%
\providecommand \translation [1]{[#1]}%
\providecommand \BibitemOpen [0]{}%
\providecommand \bibitemStop [0]{}%
\providecommand \bibitemNoStop [0]{.\EOS\space}%
\providecommand \EOS [0]{\spacefactor3000\relax}%
\providecommand \BibitemShut  [1]{\csname bibitem#1\endcsname}%
\let\auto@bib@innerbib\@empty
\bibitem [{\citenamefont {Burkov}(2019)}]{burkov2019hundred}%
  \BibitemOpen
  \bibfield  {author} {\bibinfo {author} {\bibfnamefont {A.}~\bibnamefont
  {Burkov}},\ }\href@noop {} {\emph {\bibinfo {title} {The hundred-page machine
  learning book}}},\ Vol.~\bibinfo {volume} {1}\ (\bibinfo  {publisher} {Andriy
  Burkov Quebec City, QC, Canada},\ \bibinfo {year} {2019})\BibitemShut
  {NoStop}%
\bibitem [{\citenamefont {Sarker}(2021)}]{sarker2021machine}%
  \BibitemOpen
  \bibfield  {author} {\bibinfo {author} {\bibfnamefont {I.~H.}\ \bibnamefont
  {Sarker}},\ }\bibfield  {title} {\bibinfo {title} {Machine learning:
  Algorithms, real-world applications and research directions},\ }\href@noop {}
  {\bibfield  {journal} {\bibinfo  {journal} {SN Computer Science}\ }\textbf
  {\bibinfo {volume} {2}},\ \bibinfo {pages} {1} (\bibinfo {year}
  {2021})}\BibitemShut {NoStop}%
\bibitem [{\citenamefont {Dhall}\ \emph {et~al.}(2020)\citenamefont {Dhall},
  \citenamefont {Kaur},\ and\ \citenamefont
  {Juneja}}]{10.1007/978-3-030-29407-6_5}%
  \BibitemOpen
  \bibfield  {author} {\bibinfo {author} {\bibfnamefont {D.}~\bibnamefont
  {Dhall}}, \bibinfo {author} {\bibfnamefont {R.}~\bibnamefont {Kaur}},\ and\
  \bibinfo {author} {\bibfnamefont {M.}~\bibnamefont {Juneja}},\ }\bibfield
  {title} {\bibinfo {title} {Machine learning: A review of the algorithms and
  its applications},\ }in\ \href@noop {} {\emph {\bibinfo {booktitle}
  {Proceedings of ICRIC 2019}}},\ \bibinfo {editor} {edited by\ \bibinfo
  {editor} {\bibfnamefont {P.~K.}\ \bibnamefont {Singh}}, \bibinfo {editor}
  {\bibfnamefont {A.~K.}\ \bibnamefont {Kar}}, \bibinfo {editor} {\bibfnamefont
  {Y.}~\bibnamefont {Singh}}, \bibinfo {editor} {\bibfnamefont {M.~H.}\
  \bibnamefont {Kolekar}},\ and\ \bibinfo {editor} {\bibfnamefont
  {S.}~\bibnamefont {Tanwar}}}\ (\bibinfo  {publisher} {Springer International
  Publishing},\ \bibinfo {address} {Cham},\ \bibinfo {year} {2020})\ pp.\
  \bibinfo {pages} {47--63}\BibitemShut {NoStop}%
\bibitem [{\citenamefont {Tsihrintzis}\ and\ \citenamefont
  {Jain}(2020)}]{tsihrintzis2020machine}%
  \BibitemOpen
  \bibfield  {author} {\bibinfo {author} {\bibfnamefont {G.~A.}\ \bibnamefont
  {Tsihrintzis}}\ and\ \bibinfo {author} {\bibfnamefont {L.~C.}\ \bibnamefont
  {Jain}},\ }\href@noop {} {\emph {\bibinfo {title} {Machine learning
  paradigms: advances in deep learning-based technological applications}}},\
  Vol.~\bibinfo {volume} {18}\ (\bibinfo  {publisher} {Springer Nature},\
  \bibinfo {year} {2020})\BibitemShut {NoStop}%
\bibitem [{\citenamefont {Carleo}\ \emph {et~al.}(2019)\citenamefont {Carleo},
  \citenamefont {Cirac}, \citenamefont {Cranmer}, \citenamefont {Daudet},
  \citenamefont {Schuld}, \citenamefont {Tishby}, \citenamefont
  {Vogt-Maranto},\ and\ \citenamefont {Zdeborov{\'a}}}]{carleo2019machine}%
  \BibitemOpen
  \bibfield  {author} {\bibinfo {author} {\bibfnamefont {G.}~\bibnamefont
  {Carleo}}, \bibinfo {author} {\bibfnamefont {I.}~\bibnamefont {Cirac}},
  \bibinfo {author} {\bibfnamefont {K.}~\bibnamefont {Cranmer}}, \bibinfo
  {author} {\bibfnamefont {L.}~\bibnamefont {Daudet}}, \bibinfo {author}
  {\bibfnamefont {M.}~\bibnamefont {Schuld}}, \bibinfo {author} {\bibfnamefont
  {N.}~\bibnamefont {Tishby}}, \bibinfo {author} {\bibfnamefont
  {L.}~\bibnamefont {Vogt-Maranto}},\ and\ \bibinfo {author} {\bibfnamefont
  {L.}~\bibnamefont {Zdeborov{\'a}}},\ }\bibfield  {title} {\bibinfo {title}
  {Machine learning and the physical sciences},\ }\href@noop {} {\bibfield
  {journal} {\bibinfo  {journal} {Reviews of Modern Physics}\ }\textbf
  {\bibinfo {volume} {91}},\ \bibinfo {pages} {045002} (\bibinfo {year}
  {2019})}\BibitemShut {NoStop}%
\bibitem [{\citenamefont {Mehta}\ \emph {et~al.}(2019)\citenamefont {Mehta},
  \citenamefont {Bukov}, \citenamefont {Wang}, \citenamefont {Day},
  \citenamefont {Richardson}, \citenamefont {Fisher},\ and\ \citenamefont
  {Schwab}}]{mehta2019high}%
  \BibitemOpen
  \bibfield  {author} {\bibinfo {author} {\bibfnamefont {P.}~\bibnamefont
  {Mehta}}, \bibinfo {author} {\bibfnamefont {M.}~\bibnamefont {Bukov}},
  \bibinfo {author} {\bibfnamefont {C.-H.}\ \bibnamefont {Wang}}, \bibinfo
  {author} {\bibfnamefont {A.~G.}\ \bibnamefont {Day}}, \bibinfo {author}
  {\bibfnamefont {C.}~\bibnamefont {Richardson}}, \bibinfo {author}
  {\bibfnamefont {C.~K.}\ \bibnamefont {Fisher}},\ and\ \bibinfo {author}
  {\bibfnamefont {D.~J.}\ \bibnamefont {Schwab}},\ }\bibfield  {title}
  {\bibinfo {title} {A high-bias, low-variance introduction to machine learning
  for physicists},\ }\href@noop {} {\bibfield  {journal} {\bibinfo  {journal}
  {Physics reports}\ }\textbf {\bibinfo {volume} {810}},\ \bibinfo {pages} {1}
  (\bibinfo {year} {2019})}\BibitemShut {NoStop}%
\bibitem [{\citenamefont {Li}\ \emph {et~al.}(2022)\citenamefont {Li},
  \citenamefont {Yoon}, \citenamefont {Zhang}, \citenamefont {Rajabipour},
  \citenamefont {Srubar~III}, \citenamefont {Dabo},\ and\ \citenamefont
  {Radli{\'n}ska}}]{li2022machine}%
  \BibitemOpen
  \bibfield  {author} {\bibinfo {author} {\bibfnamefont {Z.}~\bibnamefont
  {Li}}, \bibinfo {author} {\bibfnamefont {J.}~\bibnamefont {Yoon}}, \bibinfo
  {author} {\bibfnamefont {R.}~\bibnamefont {Zhang}}, \bibinfo {author}
  {\bibfnamefont {F.}~\bibnamefont {Rajabipour}}, \bibinfo {author}
  {\bibfnamefont {W.~V.}\ \bibnamefont {Srubar~III}}, \bibinfo {author}
  {\bibfnamefont {I.}~\bibnamefont {Dabo}},\ and\ \bibinfo {author}
  {\bibfnamefont {A.}~\bibnamefont {Radli{\'n}ska}},\ }\bibfield  {title}
  {\bibinfo {title} {Machine learning in concrete science: applications,
  challenges, and best practices},\ }\href@noop {} {\bibfield  {journal}
  {\bibinfo  {journal} {npj Computational Materials}\ }\textbf {\bibinfo
  {volume} {8}},\ \bibinfo {pages} {1} (\bibinfo {year} {2022})}\BibitemShut
  {NoStop}%
\bibitem [{\citenamefont {Wu}\ \emph {et~al.}(2021{\natexlab{a}})\citenamefont
  {Wu}, \citenamefont {Chan}, \citenamefont {Guan}, \citenamefont {Sun},
  \citenamefont {Wang}, \citenamefont {Zhou}, \citenamefont {Livny},
  \citenamefont {Carminati}, \citenamefont {Di~Meglio}, \citenamefont {Li}
  \emph {et~al.}}]{wu2021application}%
  \BibitemOpen
  \bibfield  {author} {\bibinfo {author} {\bibfnamefont {S.~L.}\ \bibnamefont
  {Wu}}, \bibinfo {author} {\bibfnamefont {J.}~\bibnamefont {Chan}}, \bibinfo
  {author} {\bibfnamefont {W.}~\bibnamefont {Guan}}, \bibinfo {author}
  {\bibfnamefont {S.}~\bibnamefont {Sun}}, \bibinfo {author} {\bibfnamefont
  {A.}~\bibnamefont {Wang}}, \bibinfo {author} {\bibfnamefont {C.}~\bibnamefont
  {Zhou}}, \bibinfo {author} {\bibfnamefont {M.}~\bibnamefont {Livny}},
  \bibinfo {author} {\bibfnamefont {F.}~\bibnamefont {Carminati}}, \bibinfo
  {author} {\bibfnamefont {A.}~\bibnamefont {Di~Meglio}}, \bibinfo {author}
  {\bibfnamefont {A.~C.}\ \bibnamefont {Li}}, \emph {et~al.},\ }\bibfield
  {title} {\bibinfo {title} {Application of quantum machine learning using the
  quantum variational classifier method to high energy physics analysis at the
  lhc on ibm quantum computer simulator and hardware with 10 qubits},\
  }\href@noop {} {\bibfield  {journal} {\bibinfo  {journal} {Journal of Physics
  G: Nuclear and Particle Physics}\ }\textbf {\bibinfo {volume} {48}},\
  \bibinfo {pages} {125003} (\bibinfo {year} {2021}{\natexlab{a}})}\BibitemShut
  {NoStop}%
\bibitem [{\citenamefont {Havl{\'\i}{\v{c}}ek}\ \emph
  {et~al.}(2019)\citenamefont {Havl{\'\i}{\v{c}}ek}, \citenamefont
  {C{\'o}rcoles}, \citenamefont {Temme}, \citenamefont {Harrow}, \citenamefont
  {Kandala}, \citenamefont {Chow},\ and\ \citenamefont
  {Gambetta}}]{havlivcek2019supervised}%
  \BibitemOpen
  \bibfield  {author} {\bibinfo {author} {\bibfnamefont {V.}~\bibnamefont
  {Havl{\'\i}{\v{c}}ek}}, \bibinfo {author} {\bibfnamefont {A.~D.}\
  \bibnamefont {C{\'o}rcoles}}, \bibinfo {author} {\bibfnamefont
  {K.}~\bibnamefont {Temme}}, \bibinfo {author} {\bibfnamefont {A.~W.}\
  \bibnamefont {Harrow}}, \bibinfo {author} {\bibfnamefont {A.}~\bibnamefont
  {Kandala}}, \bibinfo {author} {\bibfnamefont {J.~M.}\ \bibnamefont {Chow}},\
  and\ \bibinfo {author} {\bibfnamefont {J.~M.}\ \bibnamefont {Gambetta}},\
  }\bibfield  {title} {\bibinfo {title} {Supervised learning with
  quantum-enhanced feature spaces},\ }\href@noop {} {\bibfield  {journal}
  {\bibinfo  {journal} {Nature}\ }\textbf {\bibinfo {volume} {567}},\ \bibinfo
  {pages} {209} (\bibinfo {year} {2019})}\BibitemShut {NoStop}%
\bibitem [{\citenamefont {Schuld}\ and\ \citenamefont
  {Killoran}(2019)}]{PhysRevLett.122.040504}%
  \BibitemOpen
  \bibfield  {author} {\bibinfo {author} {\bibfnamefont {M.}~\bibnamefont
  {Schuld}}\ and\ \bibinfo {author} {\bibfnamefont {N.}~\bibnamefont
  {Killoran}},\ }\bibfield  {title} {\bibinfo {title} {Quantum machine learning
  in feature hilbert spaces},\ }\href
  {https://doi.org/10.1103/PhysRevLett.122.040504} {\bibfield  {journal}
  {\bibinfo  {journal} {Phys. Rev. Lett.}\ }\textbf {\bibinfo {volume} {122}},\
  \bibinfo {pages} {040504} (\bibinfo {year} {2019})}\BibitemShut {NoStop}%
\bibitem [{\citenamefont {Liu}\ \emph {et~al.}(2021)\citenamefont {Liu},
  \citenamefont {Arunachalam},\ and\ \citenamefont {Temme}}]{liu2021rigorous}%
  \BibitemOpen
  \bibfield  {author} {\bibinfo {author} {\bibfnamefont {Y.}~\bibnamefont
  {Liu}}, \bibinfo {author} {\bibfnamefont {S.}~\bibnamefont {Arunachalam}},\
  and\ \bibinfo {author} {\bibfnamefont {K.}~\bibnamefont {Temme}},\ }\bibfield
   {title} {\bibinfo {title} {A rigorous and robust quantum speed-up in
  supervised machine learning},\ }\href@noop {} {\bibfield  {journal} {\bibinfo
   {journal} {Nature Physics}\ }\textbf {\bibinfo {volume} {17}},\ \bibinfo
  {pages} {1013} (\bibinfo {year} {2021})}\BibitemShut {NoStop}%
\bibitem [{\citenamefont {Keith}\ \emph {et~al.}(2021)\citenamefont {Keith},
  \citenamefont {Vassilev-Galindo}, \citenamefont {Cheng}, \citenamefont
  {Chmiela}, \citenamefont {Gastegger}, \citenamefont {M{\"u}ller},\ and\
  \citenamefont {Tkatchenko}}]{keith2021combining}%
  \BibitemOpen
  \bibfield  {author} {\bibinfo {author} {\bibfnamefont {J.~A.}\ \bibnamefont
  {Keith}}, \bibinfo {author} {\bibfnamefont {V.}~\bibnamefont
  {Vassilev-Galindo}}, \bibinfo {author} {\bibfnamefont {B.}~\bibnamefont
  {Cheng}}, \bibinfo {author} {\bibfnamefont {S.}~\bibnamefont {Chmiela}},
  \bibinfo {author} {\bibfnamefont {M.}~\bibnamefont {Gastegger}}, \bibinfo
  {author} {\bibfnamefont {K.-R.}\ \bibnamefont {M{\"u}ller}},\ and\ \bibinfo
  {author} {\bibfnamefont {A.}~\bibnamefont {Tkatchenko}},\ }\bibfield  {title}
  {\bibinfo {title} {Combining machine learning and computational chemistry for
  predictive insights into chemical systems},\ }\href@noop {} {\bibfield
  {journal} {\bibinfo  {journal} {arXiv preprint arXiv:2102.06321}\ } (\bibinfo
  {year} {2021})}\BibitemShut {NoStop}%
\bibitem [{\citenamefont {Sajjan}\ \emph
  {et~al.}(2022{\natexlab{a}})\citenamefont {Sajjan}, \citenamefont {Li},
  \citenamefont {Selvarajan}, \citenamefont {Sureshbabu}, \citenamefont {Kale},
  \citenamefont {Gupta}, \citenamefont {Singh},\ and\ \citenamefont
  {Kais}}]{sajjan2022quantum}%
  \BibitemOpen
  \bibfield  {author} {\bibinfo {author} {\bibfnamefont {M.}~\bibnamefont
  {Sajjan}}, \bibinfo {author} {\bibfnamefont {J.}~\bibnamefont {Li}}, \bibinfo
  {author} {\bibfnamefont {R.}~\bibnamefont {Selvarajan}}, \bibinfo {author}
  {\bibfnamefont {S.~H.}\ \bibnamefont {Sureshbabu}}, \bibinfo {author}
  {\bibfnamefont {S.~S.}\ \bibnamefont {Kale}}, \bibinfo {author}
  {\bibfnamefont {R.}~\bibnamefont {Gupta}}, \bibinfo {author} {\bibfnamefont
  {V.}~\bibnamefont {Singh}},\ and\ \bibinfo {author} {\bibfnamefont
  {S.}~\bibnamefont {Kais}},\ }\bibfield  {title} {\bibinfo {title} {Quantum
  machine learning for chemistry and physics},\ }\href@noop {} {\bibfield
  {journal} {\bibinfo  {journal} {Chemical Society Reviews}\ } (\bibinfo {year}
  {2022}{\natexlab{a}})}\BibitemShut {NoStop}%
\bibitem [{\citenamefont {Swingle}(2018)}]{swingle2018unscrambling}%
  \BibitemOpen
  \bibfield  {author} {\bibinfo {author} {\bibfnamefont {B.}~\bibnamefont
  {Swingle}},\ }\bibfield  {title} {\bibinfo {title} {Unscrambling the physics
  of out-of-time-order correlators},\ }\href@noop {} {\bibfield  {journal}
  {\bibinfo  {journal} {Nature Physics}\ }\textbf {\bibinfo {volume} {14}},\
  \bibinfo {pages} {988} (\bibinfo {year} {2018})}\BibitemShut {NoStop}%
\bibitem [{\citenamefont {Hall}(2015)}]{Hall2015}%
  \BibitemOpen
  \bibfield  {author} {\bibinfo {author} {\bibfnamefont {B.}~\bibnamefont
  {Hall}},\ }\bibinfo {title} {Matrix lie groups},\ in\ \href
  {https://doi.org/10.1007/978-3-319-13467-3_1} {\emph {\bibinfo {booktitle}
  {Lie Groups, Lie Algebras, and Representations: An Elementary
  Introduction}}}\ (\bibinfo  {publisher} {Springer International Publishing},\
  \bibinfo {address} {Cham},\ \bibinfo {year} {2015})\ pp.\ \bibinfo {pages}
  {3--30}\BibitemShut {NoStop}%
\bibitem [{\citenamefont {Swingle}\ \emph {et~al.}(2016)\citenamefont
  {Swingle}, \citenamefont {Bentsen}, \citenamefont {Schleier-Smith},\ and\
  \citenamefont {Hayden}}]{PhysRevA.94.040302}%
  \BibitemOpen
  \bibfield  {author} {\bibinfo {author} {\bibfnamefont {B.}~\bibnamefont
  {Swingle}}, \bibinfo {author} {\bibfnamefont {G.}~\bibnamefont {Bentsen}},
  \bibinfo {author} {\bibfnamefont {M.}~\bibnamefont {Schleier-Smith}},\ and\
  \bibinfo {author} {\bibfnamefont {P.}~\bibnamefont {Hayden}},\ }\bibfield
  {title} {\bibinfo {title} {Measuring the scrambling of quantum information},\
  }\href {https://doi.org/10.1103/PhysRevA.94.040302} {\bibfield  {journal}
  {\bibinfo  {journal} {Phys. Rev. A}\ }\textbf {\bibinfo {volume} {94}},\
  \bibinfo {pages} {040302} (\bibinfo {year} {2016})}\BibitemShut {NoStop}%
\bibitem [{\citenamefont {Hashizume}\ \emph {et~al.}(2021)\citenamefont
  {Hashizume}, \citenamefont {Bentsen}, \citenamefont {Weber},\ and\
  \citenamefont {Daley}}]{PhysRevLett.126.200603}%
  \BibitemOpen
  \bibfield  {author} {\bibinfo {author} {\bibfnamefont {T.}~\bibnamefont
  {Hashizume}}, \bibinfo {author} {\bibfnamefont {G.~S.}\ \bibnamefont
  {Bentsen}}, \bibinfo {author} {\bibfnamefont {S.}~\bibnamefont {Weber}},\
  and\ \bibinfo {author} {\bibfnamefont {A.~J.}\ \bibnamefont {Daley}},\
  }\bibfield  {title} {\bibinfo {title} {Deterministic fast scrambling with
  neutral atom arrays},\ }\href
  {https://doi.org/10.1103/PhysRevLett.126.200603} {\bibfield  {journal}
  {\bibinfo  {journal} {Phys. Rev. Lett.}\ }\textbf {\bibinfo {volume} {126}},\
  \bibinfo {pages} {200603} (\bibinfo {year} {2021})}\BibitemShut {NoStop}%
\bibitem [{\citenamefont {Lin}\ and\ \citenamefont
  {Motrunich}(2018)}]{PhysRevB.97.144304}%
  \BibitemOpen
  \bibfield  {author} {\bibinfo {author} {\bibfnamefont {C.-J.}\ \bibnamefont
  {Lin}}\ and\ \bibinfo {author} {\bibfnamefont {O.~I.}\ \bibnamefont
  {Motrunich}},\ }\bibfield  {title} {\bibinfo {title} {Out-of-time-ordered
  correlators in a quantum ising chain},\ }\href
  {https://doi.org/10.1103/PhysRevB.97.144304} {\bibfield  {journal} {\bibinfo
  {journal} {Phys. Rev. B}\ }\textbf {\bibinfo {volume} {97}},\ \bibinfo
  {pages} {144304} (\bibinfo {year} {2018})}\BibitemShut {NoStop}%
\bibitem [{\citenamefont {Pegahan}\ \emph {et~al.}(2021)\citenamefont
  {Pegahan}, \citenamefont {Arakelyan},\ and\ \citenamefont
  {Thomas}}]{PhysRevLett.126.070601}%
  \BibitemOpen
  \bibfield  {author} {\bibinfo {author} {\bibfnamefont {S.}~\bibnamefont
  {Pegahan}}, \bibinfo {author} {\bibfnamefont {I.}~\bibnamefont {Arakelyan}},\
  and\ \bibinfo {author} {\bibfnamefont {J.~E.}\ \bibnamefont {Thomas}},\
  }\bibfield  {title} {\bibinfo {title} {Energy-resolved information scrambling
  in energy-space lattices},\ }\href
  {https://doi.org/10.1103/PhysRevLett.126.070601} {\bibfield  {journal}
  {\bibinfo  {journal} {Phys. Rev. Lett.}\ }\textbf {\bibinfo {volume} {126}},\
  \bibinfo {pages} {070601} (\bibinfo {year} {2021})}\BibitemShut {NoStop}%
\bibitem [{\citenamefont {Brenes}\ \emph {et~al.}(2021)\citenamefont {Brenes},
  \citenamefont {Pappalardi}, \citenamefont {Mitchison}, \citenamefont
  {Goold},\ and\ \citenamefont {Silva}}]{PhysRevE.104.034120}%
  \BibitemOpen
  \bibfield  {author} {\bibinfo {author} {\bibfnamefont {M.}~\bibnamefont
  {Brenes}}, \bibinfo {author} {\bibfnamefont {S.}~\bibnamefont {Pappalardi}},
  \bibinfo {author} {\bibfnamefont {M.~T.}\ \bibnamefont {Mitchison}}, \bibinfo
  {author} {\bibfnamefont {J.}~\bibnamefont {Goold}},\ and\ \bibinfo {author}
  {\bibfnamefont {A.}~\bibnamefont {Silva}},\ }\bibfield  {title} {\bibinfo
  {title} {Out-of-time-order correlations and the fine structure of eigenstate
  thermalization},\ }\href {https://doi.org/10.1103/PhysRevE.104.034120}
  {\bibfield  {journal} {\bibinfo  {journal} {Phys. Rev. E}\ }\textbf {\bibinfo
  {volume} {104}},\ \bibinfo {pages} {034120} (\bibinfo {year}
  {2021})}\BibitemShut {NoStop}%
\bibitem [{\citenamefont {Fan}\ \emph {et~al.}(2017)\citenamefont {Fan},
  \citenamefont {Zhang}, \citenamefont {Shen},\ and\ \citenamefont
  {Zhai}}]{FAN2017707}%
  \BibitemOpen
  \bibfield  {author} {\bibinfo {author} {\bibfnamefont {R.}~\bibnamefont
  {Fan}}, \bibinfo {author} {\bibfnamefont {P.}~\bibnamefont {Zhang}}, \bibinfo
  {author} {\bibfnamefont {H.}~\bibnamefont {Shen}},\ and\ \bibinfo {author}
  {\bibfnamefont {H.}~\bibnamefont {Zhai}},\ }\bibfield  {title} {\bibinfo
  {title} {Out-of-time-order correlation for many-body localization},\ }\href
  {https://doi.org/https://doi.org/10.1016/j.scib.2017.04.011} {\bibfield
  {journal} {\bibinfo  {journal} {Science Bulletin}\ }\textbf {\bibinfo
  {volume} {62}},\ \bibinfo {pages} {707} (\bibinfo {year} {2017})}\BibitemShut
  {NoStop}%
\bibitem [{\citenamefont {Sun}\ \emph {et~al.}(2021)\citenamefont {Sun},
  \citenamefont {Cui},\ and\ \citenamefont {Fan}}]{PhysRevA.104.022405}%
  \BibitemOpen
  \bibfield  {author} {\bibinfo {author} {\bibfnamefont {Z.-H.}\ \bibnamefont
  {Sun}}, \bibinfo {author} {\bibfnamefont {J.}~\bibnamefont {Cui}},\ and\
  \bibinfo {author} {\bibfnamefont {H.}~\bibnamefont {Fan}},\ }\bibfield
  {title} {\bibinfo {title} {Quantum information scrambling in the presence of
  weak and strong thermalization},\ }\href
  {https://doi.org/10.1103/PhysRevA.104.022405} {\bibfield  {journal} {\bibinfo
   {journal} {Phys. Rev. A}\ }\textbf {\bibinfo {volume} {104}},\ \bibinfo
  {pages} {022405} (\bibinfo {year} {2021})}\BibitemShut {NoStop}%
\bibitem [{\citenamefont {Sharma}\ and\ \citenamefont
  {Gerdt}(2021)}]{sharma2021quantum}%
  \BibitemOpen
  \bibfield  {author} {\bibinfo {author} {\bibfnamefont {K.~K.}\ \bibnamefont
  {Sharma}}\ and\ \bibinfo {author} {\bibfnamefont {V.~P.}\ \bibnamefont
  {Gerdt}},\ }\bibfield  {title} {\bibinfo {title} {Quantum information
  scrambling and entanglement in bipartite quantum states},\ }\href@noop {}
  {\bibfield  {journal} {\bibinfo  {journal} {Quantum Information Processing}\
  }\textbf {\bibinfo {volume} {20}},\ \bibinfo {pages} {1} (\bibinfo {year}
  {2021})}\BibitemShut {NoStop}%
\bibitem [{\citenamefont {Touil}\ and\ \citenamefont
  {Deffner}(2020)}]{touil2020quantum}%
  \BibitemOpen
  \bibfield  {author} {\bibinfo {author} {\bibfnamefont {A.}~\bibnamefont
  {Touil}}\ and\ \bibinfo {author} {\bibfnamefont {S.}~\bibnamefont
  {Deffner}},\ }\bibfield  {title} {\bibinfo {title} {Quantum scrambling and
  the growth of mutual information},\ }\href@noop {} {\bibfield  {journal}
  {\bibinfo  {journal} {Quantum Science and Technology}\ }\textbf {\bibinfo
  {volume} {5}},\ \bibinfo {pages} {035005} (\bibinfo {year}
  {2020})}\BibitemShut {NoStop}%
\bibitem [{\citenamefont {Yan}\ \emph {et~al.}(2020)\citenamefont {Yan},
  \citenamefont {Cincio},\ and\ \citenamefont
  {Zurek}}]{PhysRevLett.124.160603}%
  \BibitemOpen
  \bibfield  {author} {\bibinfo {author} {\bibfnamefont {B.}~\bibnamefont
  {Yan}}, \bibinfo {author} {\bibfnamefont {L.}~\bibnamefont {Cincio}},\ and\
  \bibinfo {author} {\bibfnamefont {W.~H.}\ \bibnamefont {Zurek}},\ }\bibfield
  {title} {\bibinfo {title} {Information scrambling and loschmidt echo},\
  }\href {https://doi.org/10.1103/PhysRevLett.124.160603} {\bibfield  {journal}
  {\bibinfo  {journal} {Phys. Rev. Lett.}\ }\textbf {\bibinfo {volume} {124}},\
  \bibinfo {pages} {160603} (\bibinfo {year} {2020})}\BibitemShut {NoStop}%
\bibitem [{\citenamefont {Fu}\ and\ \citenamefont
  {Sachdev}(2016)}]{PhysRevB.94.035135}%
  \BibitemOpen
  \bibfield  {author} {\bibinfo {author} {\bibfnamefont {W.}~\bibnamefont
  {Fu}}\ and\ \bibinfo {author} {\bibfnamefont {S.}~\bibnamefont {Sachdev}},\
  }\bibfield  {title} {\bibinfo {title} {Numerical study of fermion and boson
  models with infinite-range random interactions},\ }\href
  {https://doi.org/10.1103/PhysRevB.94.035135} {\bibfield  {journal} {\bibinfo
  {journal} {Phys. Rev. B}\ }\textbf {\bibinfo {volume} {94}},\ \bibinfo
  {pages} {035135} (\bibinfo {year} {2016})}\BibitemShut {NoStop}%
\bibitem [{\citenamefont {Yin}\ and\ \citenamefont
  {Lucas}(2020)}]{PhysRevA.102.022402}%
  \BibitemOpen
  \bibfield  {author} {\bibinfo {author} {\bibfnamefont {C.}~\bibnamefont
  {Yin}}\ and\ \bibinfo {author} {\bibfnamefont {A.}~\bibnamefont {Lucas}},\
  }\bibfield  {title} {\bibinfo {title} {Bound on quantum scrambling with
  all-to-all interactions},\ }\href
  {https://doi.org/10.1103/PhysRevA.102.022402} {\bibfield  {journal} {\bibinfo
   {journal} {Phys. Rev. A}\ }\textbf {\bibinfo {volume} {102}},\ \bibinfo
  {pages} {022402} (\bibinfo {year} {2020})}\BibitemShut {NoStop}%
\bibitem [{\citenamefont {Iyoda}\ and\ \citenamefont
  {Sagawa}(2018)}]{PhysRevA.97.042330}%
  \BibitemOpen
  \bibfield  {author} {\bibinfo {author} {\bibfnamefont {E.}~\bibnamefont
  {Iyoda}}\ and\ \bibinfo {author} {\bibfnamefont {T.}~\bibnamefont {Sagawa}},\
  }\bibfield  {title} {\bibinfo {title} {Scrambling of quantum information in
  quantum many-body systems},\ }\href
  {https://doi.org/10.1103/PhysRevA.97.042330} {\bibfield  {journal} {\bibinfo
  {journal} {Phys. Rev. A}\ }\textbf {\bibinfo {volume} {97}},\ \bibinfo
  {pages} {042330} (\bibinfo {year} {2018})}\BibitemShut {NoStop}%
\bibitem [{\citenamefont {Sachdev}(2015)}]{PhysRevX.5.041025}%
  \BibitemOpen
  \bibfield  {author} {\bibinfo {author} {\bibfnamefont {S.}~\bibnamefont
  {Sachdev}},\ }\bibfield  {title} {\bibinfo {title} {Bekenstein-hawking
  entropy and strange metals},\ }\href
  {https://doi.org/10.1103/PhysRevX.5.041025} {\bibfield  {journal} {\bibinfo
  {journal} {Phys. Rev. X}\ }\textbf {\bibinfo {volume} {5}},\ \bibinfo {pages}
  {041025} (\bibinfo {year} {2015})}\BibitemShut {NoStop}%
\bibitem [{\citenamefont {Mi}\ \emph {et~al.}(2021)\citenamefont {Mi},
  \citenamefont {Roushan}, \citenamefont {Quintana}, \citenamefont {Mandra},
  \citenamefont {Marshall}, \citenamefont {Neill}, \citenamefont {Arute},
  \citenamefont {Arya}, \citenamefont {Atalaya}, \citenamefont {Babbush} \emph
  {et~al.}}]{mi2021information}%
  \BibitemOpen
  \bibfield  {author} {\bibinfo {author} {\bibfnamefont {X.}~\bibnamefont
  {Mi}}, \bibinfo {author} {\bibfnamefont {P.}~\bibnamefont {Roushan}},
  \bibinfo {author} {\bibfnamefont {C.}~\bibnamefont {Quintana}}, \bibinfo
  {author} {\bibfnamefont {S.}~\bibnamefont {Mandra}}, \bibinfo {author}
  {\bibfnamefont {J.}~\bibnamefont {Marshall}}, \bibinfo {author}
  {\bibfnamefont {C.}~\bibnamefont {Neill}}, \bibinfo {author} {\bibfnamefont
  {F.}~\bibnamefont {Arute}}, \bibinfo {author} {\bibfnamefont
  {K.}~\bibnamefont {Arya}}, \bibinfo {author} {\bibfnamefont {J.}~\bibnamefont
  {Atalaya}}, \bibinfo {author} {\bibfnamefont {R.}~\bibnamefont {Babbush}},
  \emph {et~al.},\ }\bibfield  {title} {\bibinfo {title} {Information
  scrambling in quantum circuits},\ }\href@noop {} {\bibfield  {journal}
  {\bibinfo  {journal} {Science}\ }\textbf {\bibinfo {volume} {374}},\ \bibinfo
  {pages} {1479} (\bibinfo {year} {2021})}\BibitemShut {NoStop}%
\bibitem [{\citenamefont {Zhu}\ \emph {et~al.}(2022)\citenamefont {Zhu},
  \citenamefont {Sun}, \citenamefont {Gong}, \citenamefont {Chen},
  \citenamefont {Zhang}, \citenamefont {Wu}, \citenamefont {Ye}, \citenamefont
  {Zha}, \citenamefont {Li}, \citenamefont {Guo}, \citenamefont {Qian},
  \citenamefont {Huang}, \citenamefont {Yu}, \citenamefont {Deng},
  \citenamefont {Rong}, \citenamefont {Lin}, \citenamefont {Xu}, \citenamefont
  {Sun}, \citenamefont {Guo}, \citenamefont {Li}, \citenamefont {Liang},
  \citenamefont {Peng}, \citenamefont {Fan}, \citenamefont {Zhu},\ and\
  \citenamefont {Pan}}]{PhysRevLett.128.160502}%
  \BibitemOpen
  \bibfield  {author} {\bibinfo {author} {\bibfnamefont {Q.}~\bibnamefont
  {Zhu}}, \bibinfo {author} {\bibfnamefont {Z.-H.}\ \bibnamefont {Sun}},
  \bibinfo {author} {\bibfnamefont {M.}~\bibnamefont {Gong}}, \bibinfo {author}
  {\bibfnamefont {F.}~\bibnamefont {Chen}}, \bibinfo {author} {\bibfnamefont
  {Y.-R.}\ \bibnamefont {Zhang}}, \bibinfo {author} {\bibfnamefont
  {Y.}~\bibnamefont {Wu}}, \bibinfo {author} {\bibfnamefont {Y.}~\bibnamefont
  {Ye}}, \bibinfo {author} {\bibfnamefont {C.}~\bibnamefont {Zha}}, \bibinfo
  {author} {\bibfnamefont {S.}~\bibnamefont {Li}}, \bibinfo {author}
  {\bibfnamefont {S.}~\bibnamefont {Guo}}, \bibinfo {author} {\bibfnamefont
  {H.}~\bibnamefont {Qian}}, \bibinfo {author} {\bibfnamefont {H.-L.}\
  \bibnamefont {Huang}}, \bibinfo {author} {\bibfnamefont {J.}~\bibnamefont
  {Yu}}, \bibinfo {author} {\bibfnamefont {H.}~\bibnamefont {Deng}}, \bibinfo
  {author} {\bibfnamefont {H.}~\bibnamefont {Rong}}, \bibinfo {author}
  {\bibfnamefont {J.}~\bibnamefont {Lin}}, \bibinfo {author} {\bibfnamefont
  {Y.}~\bibnamefont {Xu}}, \bibinfo {author} {\bibfnamefont {L.}~\bibnamefont
  {Sun}}, \bibinfo {author} {\bibfnamefont {C.}~\bibnamefont {Guo}}, \bibinfo
  {author} {\bibfnamefont {N.}~\bibnamefont {Li}}, \bibinfo {author}
  {\bibfnamefont {F.}~\bibnamefont {Liang}}, \bibinfo {author} {\bibfnamefont
  {C.-Z.}\ \bibnamefont {Peng}}, \bibinfo {author} {\bibfnamefont
  {H.}~\bibnamefont {Fan}}, \bibinfo {author} {\bibfnamefont {X.}~\bibnamefont
  {Zhu}},\ and\ \bibinfo {author} {\bibfnamefont {J.-W.}\ \bibnamefont {Pan}},\
  }\bibfield  {title} {\bibinfo {title} {Observation of thermalization and
  information scrambling in a superconducting quantum processor},\ }\href
  {https://doi.org/10.1103/PhysRevLett.128.160502} {\bibfield  {journal}
  {\bibinfo  {journal} {Phys. Rev. Lett.}\ }\textbf {\bibinfo {volume} {128}},\
  \bibinfo {pages} {160502} (\bibinfo {year} {2022})}\BibitemShut {NoStop}%
\bibitem [{\citenamefont {Blok}\ \emph {et~al.}(2021)\citenamefont {Blok},
  \citenamefont {Ramasesh}, \citenamefont {Schuster}, \citenamefont {O'Brien},
  \citenamefont {Kreikebaum}, \citenamefont {Dahlen}, \citenamefont {Morvan},
  \citenamefont {Yoshida}, \citenamefont {Yao},\ and\ \citenamefont
  {Siddiqi}}]{PhysRevX.11.021010}%
  \BibitemOpen
  \bibfield  {author} {\bibinfo {author} {\bibfnamefont {M.~S.}\ \bibnamefont
  {Blok}}, \bibinfo {author} {\bibfnamefont {V.~V.}\ \bibnamefont {Ramasesh}},
  \bibinfo {author} {\bibfnamefont {T.}~\bibnamefont {Schuster}}, \bibinfo
  {author} {\bibfnamefont {K.}~\bibnamefont {O'Brien}}, \bibinfo {author}
  {\bibfnamefont {J.~M.}\ \bibnamefont {Kreikebaum}}, \bibinfo {author}
  {\bibfnamefont {D.}~\bibnamefont {Dahlen}}, \bibinfo {author} {\bibfnamefont
  {A.}~\bibnamefont {Morvan}}, \bibinfo {author} {\bibfnamefont
  {B.}~\bibnamefont {Yoshida}}, \bibinfo {author} {\bibfnamefont {N.~Y.}\
  \bibnamefont {Yao}},\ and\ \bibinfo {author} {\bibfnamefont {I.}~\bibnamefont
  {Siddiqi}},\ }\bibfield  {title} {\bibinfo {title} {Quantum information
  scrambling on a superconducting qutrit processor},\ }\href
  {https://doi.org/10.1103/PhysRevX.11.021010} {\bibfield  {journal} {\bibinfo
  {journal} {Phys. Rev. X}\ }\textbf {\bibinfo {volume} {11}},\ \bibinfo
  {pages} {021010} (\bibinfo {year} {2021})}\BibitemShut {NoStop}%
\bibitem [{\citenamefont {Landsman}\ \emph {et~al.}(2019)\citenamefont
  {Landsman}, \citenamefont {Figgatt}, \citenamefont {Schuster}, \citenamefont
  {Linke}, \citenamefont {Yoshida}, \citenamefont {Yao},\ and\ \citenamefont
  {Monroe}}]{landsman2019verified}%
  \BibitemOpen
  \bibfield  {author} {\bibinfo {author} {\bibfnamefont {K.~A.}\ \bibnamefont
  {Landsman}}, \bibinfo {author} {\bibfnamefont {C.}~\bibnamefont {Figgatt}},
  \bibinfo {author} {\bibfnamefont {T.}~\bibnamefont {Schuster}}, \bibinfo
  {author} {\bibfnamefont {N.~M.}\ \bibnamefont {Linke}}, \bibinfo {author}
  {\bibfnamefont {B.}~\bibnamefont {Yoshida}}, \bibinfo {author} {\bibfnamefont
  {N.~Y.}\ \bibnamefont {Yao}},\ and\ \bibinfo {author} {\bibfnamefont
  {C.}~\bibnamefont {Monroe}},\ }\bibfield  {title} {\bibinfo {title} {Verified
  quantum information scrambling},\ }\href@noop {} {\bibfield  {journal}
  {\bibinfo  {journal} {Nature}\ }\textbf {\bibinfo {volume} {567}},\ \bibinfo
  {pages} {61} (\bibinfo {year} {2019})}\BibitemShut {NoStop}%
\bibitem [{\citenamefont {Harris}\ \emph {et~al.}(2022)\citenamefont {Harris},
  \citenamefont {Yan},\ and\ \citenamefont
  {Sinitsyn}}]{PhysRevLett.129.050602}%
  \BibitemOpen
  \bibfield  {author} {\bibinfo {author} {\bibfnamefont {J.}~\bibnamefont
  {Harris}}, \bibinfo {author} {\bibfnamefont {B.}~\bibnamefont {Yan}},\ and\
  \bibinfo {author} {\bibfnamefont {N.~A.}\ \bibnamefont {Sinitsyn}},\
  }\bibfield  {title} {\bibinfo {title} {Benchmarking information scrambling},\
  }\href {https://doi.org/10.1103/PhysRevLett.129.050602} {\bibfield  {journal}
  {\bibinfo  {journal} {Phys. Rev. Lett.}\ }\textbf {\bibinfo {volume} {129}},\
  \bibinfo {pages} {050602} (\bibinfo {year} {2022})}\BibitemShut {NoStop}%
\bibitem [{\citenamefont {Melko}\ \emph {et~al.}(2019)\citenamefont {Melko},
  \citenamefont {Carleo}, \citenamefont {Carrasquilla},\ and\ \citenamefont
  {Cirac}}]{Melko2019a}%
  \BibitemOpen
  \bibfield  {author} {\bibinfo {author} {\bibfnamefont {R.~G.}\ \bibnamefont
  {Melko}}, \bibinfo {author} {\bibfnamefont {G.}~\bibnamefont {Carleo}},
  \bibinfo {author} {\bibfnamefont {J.}~\bibnamefont {Carrasquilla}},\ and\
  \bibinfo {author} {\bibfnamefont {J.~I.}\ \bibnamefont {Cirac}},\ }\bibfield
  {title} {\bibinfo {title} {{Restricted Boltzmann machines in quantum
  physics}},\ }\href {https://doi.org/10.1038/s41567-019-0545-1} {\bibfield
  {journal} {\bibinfo  {journal} {Nature Physics}\ }\textbf {\bibinfo {volume}
  {15}},\ \bibinfo {pages} {887} (\bibinfo {year} {2019})}\BibitemShut
  {NoStop}%
\bibitem [{\citenamefont {Hinton}\ and\ \citenamefont
  {Salakhutdinov}(2006)}]{Hinton504}%
  \BibitemOpen
  \bibfield  {author} {\bibinfo {author} {\bibfnamefont {G.~E.}\ \bibnamefont
  {Hinton}}\ and\ \bibinfo {author} {\bibfnamefont {R.~R.}\ \bibnamefont
  {Salakhutdinov}},\ }\bibfield  {title} {\bibinfo {title} {{Reducing the
  Dimensionality of Data with Neural Networks}},\ }\href
  {https://doi.org/10.1126/science.1127647} {\bibfield  {journal} {\bibinfo
  {journal} {Science}\ }\textbf {\bibinfo {volume} {313}},\ \bibinfo {pages}
  {504} (\bibinfo {year} {2006})}\BibitemShut {NoStop}%
\bibitem [{\citenamefont {Torlai}\ and\ \citenamefont
  {Melko}(2016)}]{PhysRevB.94.165134}%
  \BibitemOpen
  \bibfield  {author} {\bibinfo {author} {\bibfnamefont {G.}~\bibnamefont
  {Torlai}}\ and\ \bibinfo {author} {\bibfnamefont {R.~G.}\ \bibnamefont
  {Melko}},\ }\bibfield  {title} {\bibinfo {title} {Learning thermodynamics
  with boltzmann machines},\ }\href
  {https://doi.org/10.1103/PhysRevB.94.165134} {\bibfield  {journal} {\bibinfo
  {journal} {Phys. Rev. B}\ }\textbf {\bibinfo {volume} {94}},\ \bibinfo
  {pages} {165134} (\bibinfo {year} {2016})}\BibitemShut {NoStop}%
\bibitem [{\citenamefont {Argun}\ \emph {et~al.}(2021)\citenamefont {Argun},
  \citenamefont {Callegari},\ and\ \citenamefont
  {Volpe}}]{10.1088/978-0-7503-3843-1ch2}%
  \BibitemOpen
  \bibfield  {author} {\bibinfo {author} {\bibfnamefont {A.}~\bibnamefont
  {Argun}}, \bibinfo {author} {\bibfnamefont {A.}~\bibnamefont {Callegari}},\
  and\ \bibinfo {author} {\bibfnamefont {G.}~\bibnamefont {Volpe}},\ }\bibfield
   {title} {\bibinfo {title} {Ising model},\ }in\ \href
  {https://doi.org/10.1088/978-0-7503-3843-1ch2} {\emph {\bibinfo {booktitle}
  {Simulation of Complex Systems}}},\ \bibinfo {series and number} {2053-2563}\
  (\bibinfo  {publisher} {IOP Publishing},\ \bibinfo {year} {2021})\ pp.\
  \bibinfo {pages} {2--1 to 2--12}\BibitemShut {NoStop}%
\bibitem [{\citenamefont {BRUSH}(1967)}]{RevModPhys.39.883}%
  \BibitemOpen
  \bibfield  {author} {\bibinfo {author} {\bibfnamefont {S.~G.}\ \bibnamefont
  {BRUSH}},\ }\bibfield  {title} {\bibinfo {title} {History of the lenz-ising
  model},\ }\href {https://doi.org/10.1103/RevModPhys.39.883} {\bibfield
  {journal} {\bibinfo  {journal} {Rev. Mod. Phys.}\ }\textbf {\bibinfo {volume}
  {39}},\ \bibinfo {pages} {883} (\bibinfo {year} {1967})}\BibitemShut
  {NoStop}%
\bibitem [{\citenamefont {Torlai}\ \emph {et~al.}(2017)\citenamefont {Torlai},
  \citenamefont {Mazzola}, \citenamefont {Carrasquilla}, \citenamefont
  {Troyer}, \citenamefont {Melko},\ and\ \citenamefont {Carleo}}]{Torlai2017}%
  \BibitemOpen
  \bibfield  {author} {\bibinfo {author} {\bibfnamefont {G.}~\bibnamefont
  {Torlai}}, \bibinfo {author} {\bibfnamefont {G.}~\bibnamefont {Mazzola}},
  \bibinfo {author} {\bibfnamefont {J.}~\bibnamefont {Carrasquilla}}, \bibinfo
  {author} {\bibfnamefont {M.}~\bibnamefont {Troyer}}, \bibinfo {author}
  {\bibfnamefont {R.}~\bibnamefont {Melko}},\ and\ \bibinfo {author}
  {\bibfnamefont {G.}~\bibnamefont {Carleo}},\ }\bibfield  {title} {\bibinfo
  {title} {Many-body quantum state tomography with neural networks},\
  }\href@noop {} {\bibfield  {journal} {\bibinfo  {journal} {arXiv preprint
  arXiv:1703.05334}\ } (\bibinfo {year} {2017})}\BibitemShut {NoStop}%
\bibitem [{\citenamefont {Ch'Ng}\ \emph {et~al.}(2017)\citenamefont {Ch'Ng},
  \citenamefont {Carrasquilla}, \citenamefont {Melko},\ and\ \citenamefont
  {Khatami}}]{ChNg2017}%
  \BibitemOpen
  \bibfield  {author} {\bibinfo {author} {\bibfnamefont {K.}~\bibnamefont
  {Ch'Ng}}, \bibinfo {author} {\bibfnamefont {J.}~\bibnamefont {Carrasquilla}},
  \bibinfo {author} {\bibfnamefont {R.~G.}\ \bibnamefont {Melko}},\ and\
  \bibinfo {author} {\bibfnamefont {E.}~\bibnamefont {Khatami}},\ }\bibfield
  {title} {\bibinfo {title} {{Machine learning phases of strongly correlated
  fermions}},\ }\href {https://doi.org/10.1103/PhysRevX.7.031038} {\bibfield
  {journal} {\bibinfo  {journal} {Physical Review X}\ }\textbf {\bibinfo
  {volume} {7}},\ \bibinfo {pages} {1} (\bibinfo {year} {2017})},\ \Eprint
  {https://arxiv.org/abs/1609.02552} {arXiv:1609.02552} \BibitemShut {NoStop}%
\bibitem [{\citenamefont {Decelle}\ and\ \citenamefont
  {Furtlehner}(2021)}]{PhysRevLett.127.158303}%
  \BibitemOpen
  \bibfield  {author} {\bibinfo {author} {\bibfnamefont {A.}~\bibnamefont
  {Decelle}}\ and\ \bibinfo {author} {\bibfnamefont {C.}~\bibnamefont
  {Furtlehner}},\ }\bibfield  {title} {\bibinfo {title} {Exact training of
  restricted boltzmann machines on intrinsically low dimensional data},\ }\href
  {https://doi.org/10.1103/PhysRevLett.127.158303} {\bibfield  {journal}
  {\bibinfo  {journal} {Phys. Rev. Lett.}\ }\textbf {\bibinfo {volume} {127}},\
  \bibinfo {pages} {158303} (\bibinfo {year} {2021})}\BibitemShut {NoStop}%
\bibitem [{\citenamefont {Carrasquilla}\ and\ \citenamefont
  {Melko}(2017)}]{Carrasquilla_2017}%
  \BibitemOpen
  \bibfield  {author} {\bibinfo {author} {\bibfnamefont {J.}~\bibnamefont
  {Carrasquilla}}\ and\ \bibinfo {author} {\bibfnamefont {R.~G.}\ \bibnamefont
  {Melko}},\ }\bibfield  {title} {\bibinfo {title} {{Machine learning phases of
  matter}},\ }\href {https://doi.org/10.1038/nphys4035} {\bibfield  {journal}
  {\bibinfo  {journal} {Nature Physics}\ }\textbf {\bibinfo {volume} {13}},\
  \bibinfo {pages} {431} (\bibinfo {year} {2017})}\BibitemShut {NoStop}%
\bibitem [{\citenamefont {Sajjan}\ \emph {et~al.}(2021)\citenamefont {Sajjan},
  \citenamefont {Sureshbabu},\ and\ \citenamefont {Kais}}]{sajjan2021quantum}%
  \BibitemOpen
  \bibfield  {author} {\bibinfo {author} {\bibfnamefont {M.}~\bibnamefont
  {Sajjan}}, \bibinfo {author} {\bibfnamefont {S.~H.}\ \bibnamefont
  {Sureshbabu}},\ and\ \bibinfo {author} {\bibfnamefont {S.}~\bibnamefont
  {Kais}},\ }\bibfield  {title} {\bibinfo {title} {Quantum machine-learning for
  eigenstate filtration in two-dimensional materials},\ }\bibfield  {journal}
  {\bibinfo  {journal} {J. Am. Chem. Soc}\ }\href
  {https://doi.org/10.1021/jacs.1c06246} {10.1021/jacs.1c06246} (\bibinfo
  {year} {2021})\BibitemShut {NoStop}%
\bibitem [{\citenamefont {Xia}\ and\ \citenamefont {Kais}(2018)}]{Xia_2018}%
  \BibitemOpen
  \bibfield  {author} {\bibinfo {author} {\bibfnamefont {R.}~\bibnamefont
  {Xia}}\ and\ \bibinfo {author} {\bibfnamefont {S.}~\bibnamefont {Kais}},\
  }\bibfield  {title} {\bibinfo {title} {Quantum machine learning for
  electronic structure calculations},\ }\href@noop {} {\bibfield  {journal}
  {\bibinfo  {journal} {Nature communications}\ }\textbf {\bibinfo {volume}
  {9}},\ \bibinfo {pages} {1} (\bibinfo {year} {2018})}\BibitemShut {NoStop}%
\bibitem [{\citenamefont {Kanno}\ and\ \citenamefont {Tada}(2021)}]{Kanno2021}%
  \BibitemOpen
  \bibfield  {author} {\bibinfo {author} {\bibfnamefont {S.}~\bibnamefont
  {Kanno}}\ and\ \bibinfo {author} {\bibfnamefont {T.}~\bibnamefont {Tada}},\
  }\bibfield  {title} {\bibinfo {title} {{Many-body calculations for periodic
  materials via restricted Boltzmann machine-based VQE}},\ }\bibfield
  {journal} {\bibinfo  {journal} {Quantum Science and Technology}\ }\textbf
  {\bibinfo {volume} {6}},\ \href {https://doi.org/10.1088/2058-9565/abe139}
  {10.1088/2058-9565/abe139} (\bibinfo {year} {2021})\BibitemShut {NoStop}%
\bibitem [{\citenamefont {Razavy}(2011)}]{Heisen_qm}%
  \BibitemOpen
  \bibfield  {author} {\bibinfo {author} {\bibfnamefont {M.}~\bibnamefont
  {Razavy}},\ }\href {https://doi.org/10.1142/7702} {\emph {\bibinfo {title}
  {Heisenberg's Quantum Mechanics}}}\ (\bibinfo  {publisher} {WORLD
  SCIENTIFIC},\ \bibinfo {year} {2011})\ \Eprint
  {https://arxiv.org/abs/https://www.worldscientific.com/doi/pdf/10.1142/7702}
  {https://www.worldscientific.com/doi/pdf/10.1142/7702} \BibitemShut {NoStop}%
\bibitem [{\citenamefont {Hashimoto}\ \emph {et~al.}(2017)\citenamefont
  {Hashimoto}, \citenamefont {Murata},\ and\ \citenamefont
  {Yoshii}}]{Hashimoto_2017}%
  \BibitemOpen
  \bibfield  {author} {\bibinfo {author} {\bibfnamefont {K.}~\bibnamefont
  {Hashimoto}}, \bibinfo {author} {\bibfnamefont {K.}~\bibnamefont {Murata}},\
  and\ \bibinfo {author} {\bibfnamefont {R.}~\bibnamefont {Yoshii}},\
  }\bibfield  {title} {\bibinfo {title} {Out-of-time-order correlators in
  quantum mechanics},\ }\bibfield  {journal} {\bibinfo  {journal} {Journal of
  High Energy Physics}\ }\textbf {\bibinfo {volume} {2017}},\ \href
  {https://doi.org/10.1007/jhep10(2017)138} {10.1007/jhep10(2017)138} (\bibinfo
  {year} {2017})\BibitemShut {NoStop}%
\bibitem [{\citenamefont {Shen}\ \emph {et~al.}(2017)\citenamefont {Shen},
  \citenamefont {Zhang}, \citenamefont {Fan},\ and\ \citenamefont
  {Zhai}}]{PhysRevB.96.054503}%
  \BibitemOpen
  \bibfield  {author} {\bibinfo {author} {\bibfnamefont {H.}~\bibnamefont
  {Shen}}, \bibinfo {author} {\bibfnamefont {P.}~\bibnamefont {Zhang}},
  \bibinfo {author} {\bibfnamefont {R.}~\bibnamefont {Fan}},\ and\ \bibinfo
  {author} {\bibfnamefont {H.}~\bibnamefont {Zhai}},\ }\bibfield  {title}
  {\bibinfo {title} {Out-of-time-order correlation at a quantum phase
  transition},\ }\href {https://doi.org/10.1103/PhysRevB.96.054503} {\bibfield
  {journal} {\bibinfo  {journal} {Phys. Rev. B}\ }\textbf {\bibinfo {volume}
  {96}},\ \bibinfo {pages} {054503} (\bibinfo {year} {2017})}\BibitemShut
  {NoStop}%
\bibitem [{\citenamefont {Braumuller}\ \emph {et~al.}(2021)\citenamefont
  {Braumuller}, \citenamefont {Karamlou}, \citenamefont {Yanay}, \citenamefont
  {Kannan}, \citenamefont {Kim}, \citenamefont {Kjaergaard}, \citenamefont
  {Melville}, \citenamefont {Niedzielski}, \citenamefont {Sung}, \citenamefont
  {Vepsalainen}, \citenamefont {Winik}, \citenamefont {Yoder}, \citenamefont
  {Orlando}, \citenamefont {Gustavsson}, \citenamefont {Tahan},\ and\
  \citenamefont {Oliver}}]{Braumuller2021ProbingQI}%
  \BibitemOpen
  \bibfield  {author} {\bibinfo {author} {\bibfnamefont {J.}~\bibnamefont
  {Braumuller}}, \bibinfo {author} {\bibfnamefont {A.~H.}\ \bibnamefont
  {Karamlou}}, \bibinfo {author} {\bibfnamefont {Y.}~\bibnamefont {Yanay}},
  \bibinfo {author} {\bibfnamefont {B.}~\bibnamefont {Kannan}}, \bibinfo
  {author} {\bibfnamefont {D.~K.}\ \bibnamefont {Kim}}, \bibinfo {author}
  {\bibfnamefont {M.}~\bibnamefont {Kjaergaard}}, \bibinfo {author}
  {\bibfnamefont {A.}~\bibnamefont {Melville}}, \bibinfo {author}
  {\bibfnamefont {B.~M.}\ \bibnamefont {Niedzielski}}, \bibinfo {author}
  {\bibfnamefont {Y.}~\bibnamefont {Sung}}, \bibinfo {author} {\bibfnamefont
  {A.}~\bibnamefont {Vepsalainen}}, \bibinfo {author} {\bibfnamefont
  {R.}~\bibnamefont {Winik}}, \bibinfo {author} {\bibfnamefont {J.~L.}\
  \bibnamefont {Yoder}}, \bibinfo {author} {\bibfnamefont {T.~P.}\ \bibnamefont
  {Orlando}}, \bibinfo {author} {\bibfnamefont {S.}~\bibnamefont {Gustavsson}},
  \bibinfo {author} {\bibfnamefont {C.}~\bibnamefont {Tahan}},\ and\ \bibinfo
  {author} {\bibfnamefont {W.~D.}\ \bibnamefont {Oliver}},\ }\bibfield  {title}
  {\bibinfo {title} {Probing quantum information propagation with
  out-of-time-ordered correlators},\ }\href@noop {} {\bibfield  {journal}
  {\bibinfo  {journal} {Nature Physics}\ }\textbf {\bibinfo {volume} {18}},\
  \bibinfo {pages} {172} (\bibinfo {year} {2021})}\BibitemShut {NoStop}%
\bibitem [{\citenamefont {Sundar}\ \emph {et~al.}(2022)\citenamefont {Sundar},
  \citenamefont {Elben}, \citenamefont {Joshi},\ and\ \citenamefont
  {Zache}}]{Sundar_2022}%
  \BibitemOpen
  \bibfield  {author} {\bibinfo {author} {\bibfnamefont {B.}~\bibnamefont
  {Sundar}}, \bibinfo {author} {\bibfnamefont {A.}~\bibnamefont {Elben}},
  \bibinfo {author} {\bibfnamefont {L.~K.}\ \bibnamefont {Joshi}},\ and\
  \bibinfo {author} {\bibfnamefont {T.~V.}\ \bibnamefont {Zache}},\ }\bibfield
  {title} {\bibinfo {title} {Proposal for measuring out-of-time-ordered
  correlators at finite temperature with coupled spin chains},\ }\href
  {https://doi.org/10.1088/1367-2630/ac5002} {\bibfield  {journal} {\bibinfo
  {journal} {New Journal of Physics}\ }\textbf {\bibinfo {volume} {24}},\
  \bibinfo {pages} {023037} (\bibinfo {year} {2022})}\BibitemShut {NoStop}%
\bibitem [{\citenamefont {D'Alessio}\ \emph {et~al.}(2016)\citenamefont
  {D'Alessio}, \citenamefont {Kafri}, \citenamefont {Polkovnikov},\ and\
  \citenamefont {Rigol}}]{d2016quantum}%
  \BibitemOpen
  \bibfield  {author} {\bibinfo {author} {\bibfnamefont {L.}~\bibnamefont
  {D'Alessio}}, \bibinfo {author} {\bibfnamefont {Y.}~\bibnamefont {Kafri}},
  \bibinfo {author} {\bibfnamefont {A.}~\bibnamefont {Polkovnikov}},\ and\
  \bibinfo {author} {\bibfnamefont {M.}~\bibnamefont {Rigol}},\ }\bibfield
  {title} {\bibinfo {title} {From quantum chaos and eigenstate thermalization
  to statistical mechanics and thermodynamics},\ }\href@noop {} {\bibfield
  {journal} {\bibinfo  {journal} {Advances in Physics}\ }\textbf {\bibinfo
  {volume} {65}},\ \bibinfo {pages} {239} (\bibinfo {year} {2016})}\BibitemShut
  {NoStop}%
\bibitem [{\citenamefont {Nielsen}\ and\ \citenamefont
  {Chuang}(2010)}]{nielsen_chuang_2010}%
  \BibitemOpen
  \bibfield  {author} {\bibinfo {author} {\bibfnamefont {M.~A.}\ \bibnamefont
  {Nielsen}}\ and\ \bibinfo {author} {\bibfnamefont {I.~L.}\ \bibnamefont
  {Chuang}},\ }\href {https://doi.org/10.1017/CBO9780511976667} {\emph
  {\bibinfo {title} {Quantum Computation and Quantum Information: 10th
  Anniversary Edition}}}\ (\bibinfo  {publisher} {Cambridge University Press},\
  \bibinfo {year} {2010})\BibitemShut {NoStop}%
\bibitem [{\citenamefont {Watrous}(2018)}]{Watrous_book}%
  \BibitemOpen
  \bibfield  {author} {\bibinfo {author} {\bibfnamefont {J.}~\bibnamefont
  {Watrous}},\ }\href {https://doi.org/10.1017/9781316848142} {\emph {\bibinfo
  {title} {The Theory of Quantum Information}}}\ (\bibinfo {year}
  {2018})\BibitemShut {NoStop}%
\bibitem [{\citenamefont {Globerson}\ \emph {et~al.}(2009)\citenamefont
  {Globerson}, \citenamefont {Stark}, \citenamefont {Vaadia},\ and\
  \citenamefont {Tishby}}]{doi:10.1073/pnas.0806782106}%
  \BibitemOpen
  \bibfield  {author} {\bibinfo {author} {\bibfnamefont {A.}~\bibnamefont
  {Globerson}}, \bibinfo {author} {\bibfnamefont {E.}~\bibnamefont {Stark}},
  \bibinfo {author} {\bibfnamefont {E.}~\bibnamefont {Vaadia}},\ and\ \bibinfo
  {author} {\bibfnamefont {N.}~\bibnamefont {Tishby}},\ }\bibfield  {title}
  {\bibinfo {title} {The minimum information principle and its application to
  neural code analysis},\ }\href {https://doi.org/10.1073/pnas.0806782106}
  {\bibfield  {journal} {\bibinfo  {journal} {Proceedings of the National
  Academy of Sciences}\ }\textbf {\bibinfo {volume} {106}},\ \bibinfo {pages}
  {3490} (\bibinfo {year} {2009})},\ \Eprint
  {https://arxiv.org/abs/https://www.pnas.org/doi/pdf/10.1073/pnas.0806782106}
  {https://www.pnas.org/doi/pdf/10.1073/pnas.0806782106} \BibitemShut {NoStop}%
\bibitem [{\citenamefont {De~Tomasi}\ \emph {et~al.}(2017)\citenamefont
  {De~Tomasi}, \citenamefont {Bera}, \citenamefont {Bardarson},\ and\
  \citenamefont {Pollmann}}]{PhysRevLett.118.016804}%
  \BibitemOpen
  \bibfield  {author} {\bibinfo {author} {\bibfnamefont {G.}~\bibnamefont
  {De~Tomasi}}, \bibinfo {author} {\bibfnamefont {S.}~\bibnamefont {Bera}},
  \bibinfo {author} {\bibfnamefont {J.~H.}\ \bibnamefont {Bardarson}},\ and\
  \bibinfo {author} {\bibfnamefont {F.}~\bibnamefont {Pollmann}},\ }\bibfield
  {title} {\bibinfo {title} {Quantum mutual information as a probe for
  many-body localization},\ }\href
  {https://doi.org/10.1103/PhysRevLett.118.016804} {\bibfield  {journal}
  {\bibinfo  {journal} {Phys. Rev. Lett.}\ }\textbf {\bibinfo {volume} {118}},\
  \bibinfo {pages} {016804} (\bibinfo {year} {2017})}\BibitemShut {NoStop}%
\bibitem [{\citenamefont {Wolf}\ \emph {et~al.}(2008)\citenamefont {Wolf},
  \citenamefont {Verstraete}, \citenamefont {Hastings},\ and\ \citenamefont
  {Cirac}}]{PhysRevLett.100.070502}%
  \BibitemOpen
  \bibfield  {author} {\bibinfo {author} {\bibfnamefont {M.~M.}\ \bibnamefont
  {Wolf}}, \bibinfo {author} {\bibfnamefont {F.}~\bibnamefont {Verstraete}},
  \bibinfo {author} {\bibfnamefont {M.~B.}\ \bibnamefont {Hastings}},\ and\
  \bibinfo {author} {\bibfnamefont {J.~I.}\ \bibnamefont {Cirac}},\ }\bibfield
  {title} {\bibinfo {title} {Area laws in quantum systems: Mutual information
  and correlations},\ }\href {https://doi.org/10.1103/PhysRevLett.100.070502}
  {\bibfield  {journal} {\bibinfo  {journal} {Phys. Rev. Lett.}\ }\textbf
  {\bibinfo {volume} {100}},\ \bibinfo {pages} {070502} (\bibinfo {year}
  {2008})}\BibitemShut {NoStop}%
\bibitem [{\citenamefont {Carleo}\ and\ \citenamefont
  {Troyer}(2017)}]{carleo2017solving}%
  \BibitemOpen
  \bibfield  {author} {\bibinfo {author} {\bibfnamefont {G.}~\bibnamefont
  {Carleo}}\ and\ \bibinfo {author} {\bibfnamefont {M.}~\bibnamefont
  {Troyer}},\ }\bibfield  {title} {\bibinfo {title} {Solving the quantum
  many-body problem with artificial neural networks},\ }\href@noop {}
  {\bibfield  {journal} {\bibinfo  {journal} {Science}\ }\textbf {\bibinfo
  {volume} {355}},\ \bibinfo {pages} {602} (\bibinfo {year}
  {2017})}\BibitemShut {NoStop}%
\bibitem [{\citenamefont {Sorella}\ \emph {et~al.}(2007)\citenamefont
  {Sorella}, \citenamefont {Casula},\ and\ \citenamefont
  {Rocca}}]{doi:10.1063/1.2746035}%
  \BibitemOpen
  \bibfield  {author} {\bibinfo {author} {\bibfnamefont {S.}~\bibnamefont
  {Sorella}}, \bibinfo {author} {\bibfnamefont {M.}~\bibnamefont {Casula}},\
  and\ \bibinfo {author} {\bibfnamefont {D.}~\bibnamefont {Rocca}},\ }\bibfield
   {title} {\bibinfo {title} {Weak binding between two aromatic rings: Feeling
  the van der waals attraction by quantum monte carlo methods},\ }\href
  {https://doi.org/10.1063/1.2746035} {\bibfield  {journal} {\bibinfo
  {journal} {The Journal of Chemical Physics}\ }\textbf {\bibinfo {volume}
  {127}},\ \bibinfo {pages} {014105} (\bibinfo {year} {2007})}\BibitemShut
  {NoStop}%
\bibitem [{\citenamefont {Vicentini}\ \emph {et~al.}(2022)\citenamefont
  {Vicentini}, \citenamefont {Hofmann}, \citenamefont {Szabó}, \citenamefont
  {Wu}, \citenamefont {Roth}, \citenamefont {Giuliani}, \citenamefont {Pescia},
  \citenamefont {Nys}, \citenamefont {Vargas-Calderón}, \citenamefont
  {Astrakhantsev},\ and\ \citenamefont {Carleo}}]{10.21468/SciPostPhysCodeb.7}%
  \BibitemOpen
  \bibfield  {author} {\bibinfo {author} {\bibfnamefont {F.}~\bibnamefont
  {Vicentini}}, \bibinfo {author} {\bibfnamefont {D.}~\bibnamefont {Hofmann}},
  \bibinfo {author} {\bibfnamefont {A.}~\bibnamefont {Szabó}}, \bibinfo
  {author} {\bibfnamefont {D.}~\bibnamefont {Wu}}, \bibinfo {author}
  {\bibfnamefont {C.}~\bibnamefont {Roth}}, \bibinfo {author} {\bibfnamefont
  {C.}~\bibnamefont {Giuliani}}, \bibinfo {author} {\bibfnamefont
  {G.}~\bibnamefont {Pescia}}, \bibinfo {author} {\bibfnamefont
  {J.}~\bibnamefont {Nys}}, \bibinfo {author} {\bibfnamefont {V.}~\bibnamefont
  {Vargas-Calderón}}, \bibinfo {author} {\bibfnamefont {N.}~\bibnamefont
  {Astrakhantsev}},\ and\ \bibinfo {author} {\bibfnamefont {G.}~\bibnamefont
  {Carleo}},\ }\bibfield  {title} {\bibinfo {title} {{NetKet 3: Machine
  Learning Toolbox for Many-Body Quantum Systems}},\ }\href
  {https://doi.org/10.21468/SciPostPhysCodeb.7} {\bibfield  {journal} {\bibinfo
   {journal} {SciPost Phys. Codebases}\ ,\ \bibinfo {pages} {7}} (\bibinfo
  {year} {2022})}\BibitemShut {NoStop}%
\bibitem [{\citenamefont {Vitagliano}\ \emph
  {et~al.}(2010{\natexlab{a}})\citenamefont {Vitagliano}, \citenamefont
  {Riera},\ and\ \citenamefont {Latorre}}]{vitagliano2010volume}%
  \BibitemOpen
  \bibfield  {author} {\bibinfo {author} {\bibfnamefont {G.}~\bibnamefont
  {Vitagliano}}, \bibinfo {author} {\bibfnamefont {A.}~\bibnamefont {Riera}},\
  and\ \bibinfo {author} {\bibfnamefont {J.~I.}\ \bibnamefont {Latorre}},\
  }\bibfield  {title} {\bibinfo {title} {Volume-law scaling for the
  entanglement entropy in spin-1/2 chains},\ }\href@noop {} {\bibfield
  {journal} {\bibinfo  {journal} {New Journal of Physics}\ }\textbf {\bibinfo
  {volume} {12}},\ \bibinfo {pages} {113049} (\bibinfo {year}
  {2010}{\natexlab{a}})}\BibitemShut {NoStop}%
\bibitem [{\citenamefont {Koffel}\ \emph {et~al.}(2012)\citenamefont {Koffel},
  \citenamefont {Lewenstein},\ and\ \citenamefont
  {Tagliacozzo}}]{koffel2012entanglement}%
  \BibitemOpen
  \bibfield  {author} {\bibinfo {author} {\bibfnamefont {T.}~\bibnamefont
  {Koffel}}, \bibinfo {author} {\bibfnamefont {M.}~\bibnamefont {Lewenstein}},\
  and\ \bibinfo {author} {\bibfnamefont {L.}~\bibnamefont {Tagliacozzo}},\
  }\bibfield  {title} {\bibinfo {title} {Entanglement entropy for the
  long-range ising chain in a transverse field},\ }\href@noop {} {\bibfield
  {journal} {\bibinfo  {journal} {Physical review letters}\ }\textbf {\bibinfo
  {volume} {109}},\ \bibinfo {pages} {267203} (\bibinfo {year}
  {2012})}\BibitemShut {NoStop}%
\bibitem [{\citenamefont {Kuwahara}\ and\ \citenamefont
  {Saito}(2020)}]{kuwahara2020area}%
  \BibitemOpen
  \bibfield  {author} {\bibinfo {author} {\bibfnamefont {T.}~\bibnamefont
  {Kuwahara}}\ and\ \bibinfo {author} {\bibfnamefont {K.}~\bibnamefont
  {Saito}},\ }\bibfield  {title} {\bibinfo {title} {Area law of noncritical
  ground states in 1d long-range interacting systems},\ }\href@noop {}
  {\bibfield  {journal} {\bibinfo  {journal} {Nature communications}\ }\textbf
  {\bibinfo {volume} {11}},\ \bibinfo {pages} {1} (\bibinfo {year}
  {2020})}\BibitemShut {NoStop}%
\bibitem [{\citenamefont {Sehayek}\ \emph {et~al.}(2019)\citenamefont
  {Sehayek}, \citenamefont {Golubeva}, \citenamefont {Albergo}, \citenamefont
  {Kulchytskyy}, \citenamefont {Torlai},\ and\ \citenamefont
  {Melko}}]{PhysRevB.100.195125}%
  \BibitemOpen
  \bibfield  {author} {\bibinfo {author} {\bibfnamefont {D.}~\bibnamefont
  {Sehayek}}, \bibinfo {author} {\bibfnamefont {A.}~\bibnamefont {Golubeva}},
  \bibinfo {author} {\bibfnamefont {M.~S.}\ \bibnamefont {Albergo}}, \bibinfo
  {author} {\bibfnamefont {B.}~\bibnamefont {Kulchytskyy}}, \bibinfo {author}
  {\bibfnamefont {G.}~\bibnamefont {Torlai}},\ and\ \bibinfo {author}
  {\bibfnamefont {R.~G.}\ \bibnamefont {Melko}},\ }\bibfield  {title} {\bibinfo
  {title} {Learnability scaling of quantum states: Restricted boltzmann
  machines},\ }\href {https://doi.org/10.1103/PhysRevB.100.195125} {\bibfield
  {journal} {\bibinfo  {journal} {Phys. Rev. B}\ }\textbf {\bibinfo {volume}
  {100}},\ \bibinfo {pages} {195125} (\bibinfo {year} {2019})}\BibitemShut
  {NoStop}%
\bibitem [{\citenamefont {Horn}\ and\ \citenamefont
  {Johnson}(2012)}]{horn_johnson_2012}%
  \BibitemOpen
  \bibfield  {author} {\bibinfo {author} {\bibfnamefont {R.~A.}\ \bibnamefont
  {Horn}}\ and\ \bibinfo {author} {\bibfnamefont {C.~R.}\ \bibnamefont
  {Johnson}},\ }\href {https://doi.org/10.1017/9781139020411} {\emph {\bibinfo
  {title} {Matrix Analysis}}},\ \bibinfo {edition} {2nd}\ ed.\ (\bibinfo
  {publisher} {Cambridge University Press},\ \bibinfo {year}
  {2012})\BibitemShut {NoStop}%
\bibitem [{\citenamefont {Pfeuty}(1970)}]{PFEUTY197079}%
  \BibitemOpen
  \bibfield  {author} {\bibinfo {author} {\bibfnamefont {P.}~\bibnamefont
  {Pfeuty}},\ }\bibfield  {title} {\bibinfo {title} {The one-dimensional ising
  model with a transverse field},\ }\href
  {https://doi.org/https://doi.org/10.1016/0003-4916(70)90270-8} {\bibfield
  {journal} {\bibinfo  {journal} {Annals of Physics}\ }\textbf {\bibinfo
  {volume} {57}},\ \bibinfo {pages} {79} (\bibinfo {year} {1970})}\BibitemShut
  {NoStop}%
\bibitem [{\citenamefont {Pfeuty}\ and\ \citenamefont
  {Elliott}(1971)}]{pfeuty1971ising}%
  \BibitemOpen
  \bibfield  {author} {\bibinfo {author} {\bibfnamefont {P.}~\bibnamefont
  {Pfeuty}}\ and\ \bibinfo {author} {\bibfnamefont {R.}~\bibnamefont
  {Elliott}},\ }\bibfield  {title} {\bibinfo {title} {The ising model with a
  transverse field. ii. ground state properties},\ }\href@noop {} {\bibfield
  {journal} {\bibinfo  {journal} {Journal of Physics C: Solid State Physics}\
  }\textbf {\bibinfo {volume} {4}},\ \bibinfo {pages} {2370} (\bibinfo {year}
  {1971})}\BibitemShut {NoStop}%
\bibitem [{\citenamefont {Stinchcombe}(1973)}]{stinchcombe1973ising}%
  \BibitemOpen
  \bibfield  {author} {\bibinfo {author} {\bibfnamefont {R.}~\bibnamefont
  {Stinchcombe}},\ }\bibfield  {title} {\bibinfo {title} {Ising model in a
  transverse field. i. basic theory},\ }\href@noop {} {\bibfield  {journal}
  {\bibinfo  {journal} {Journal of Physics C: Solid State Physics}\ }\textbf
  {\bibinfo {volume} {6}},\ \bibinfo {pages} {2459} (\bibinfo {year}
  {1973})}\BibitemShut {NoStop}%
\bibitem [{\citenamefont {Vitagliano}\ \emph
  {et~al.}(2010{\natexlab{b}})\citenamefont {Vitagliano}, \citenamefont
  {Riera},\ and\ \citenamefont {Latorre}}]{Vitagliano_2010}%
  \BibitemOpen
  \bibfield  {author} {\bibinfo {author} {\bibfnamefont {G.}~\bibnamefont
  {Vitagliano}}, \bibinfo {author} {\bibfnamefont {A.}~\bibnamefont {Riera}},\
  and\ \bibinfo {author} {\bibfnamefont {J.~I.}\ \bibnamefont {Latorre}},\
  }\bibfield  {title} {\bibinfo {title} {Volume-law scaling for the
  entanglement entropy in spin-1/2 chains},\ }\href
  {https://doi.org/10.1088/1367-2630/12/11/113049} {\bibfield  {journal}
  {\bibinfo  {journal} {New Journal of Physics}\ }\textbf {\bibinfo {volume}
  {12}},\ \bibinfo {pages} {113049} (\bibinfo {year}
  {2010}{\natexlab{b}})}\BibitemShut {NoStop}%
\bibitem [{\citenamefont {Nomura}\ \emph {et~al.}(2017)\citenamefont {Nomura},
  \citenamefont {Darmawan}, \citenamefont {Yamaji},\ and\ \citenamefont
  {Imada}}]{Strong_corr_RBM}%
  \BibitemOpen
  \bibfield  {author} {\bibinfo {author} {\bibfnamefont {Y.}~\bibnamefont
  {Nomura}}, \bibinfo {author} {\bibfnamefont {A.~S.}\ \bibnamefont
  {Darmawan}}, \bibinfo {author} {\bibfnamefont {Y.}~\bibnamefont {Yamaji}},\
  and\ \bibinfo {author} {\bibfnamefont {M.}~\bibnamefont {Imada}},\ }\bibfield
   {title} {\bibinfo {title} {Restricted boltzmann machine learning for solving
  strongly correlated quantum systems},\ }\href
  {https://doi.org/10.1103/PhysRevB.96.205152} {\bibfield  {journal} {\bibinfo
  {journal} {Phys. Rev. B}\ }\textbf {\bibinfo {volume} {96}},\ \bibinfo
  {pages} {205152} (\bibinfo {year} {2017})}\BibitemShut {NoStop}%
\bibitem [{\citenamefont {Choo}\ \emph {et~al.}(2020)\citenamefont {Choo},
  \citenamefont {Mezzacapo},\ and\ \citenamefont {Carleo}}]{choo2020fermionic}%
  \BibitemOpen
  \bibfield  {author} {\bibinfo {author} {\bibfnamefont {K.}~\bibnamefont
  {Choo}}, \bibinfo {author} {\bibfnamefont {A.}~\bibnamefont {Mezzacapo}},\
  and\ \bibinfo {author} {\bibfnamefont {G.}~\bibnamefont {Carleo}},\
  }\bibfield  {title} {\bibinfo {title} {Fermionic neural-network states for
  ab-initio electronic structure},\ }\href@noop {} {\bibfield  {journal}
  {\bibinfo  {journal} {Nature communications}\ }\textbf {\bibinfo {volume}
  {11}},\ \bibinfo {pages} {1} (\bibinfo {year} {2020})}\BibitemShut {NoStop}%
\bibitem [{\citenamefont {Lu}\ \emph {et~al.}(2019)\citenamefont {Lu},
  \citenamefont {Gao},\ and\ \citenamefont {Duan}}]{Topo_state_RBM}%
  \BibitemOpen
  \bibfield  {author} {\bibinfo {author} {\bibfnamefont {S.}~\bibnamefont
  {Lu}}, \bibinfo {author} {\bibfnamefont {X.}~\bibnamefont {Gao}},\ and\
  \bibinfo {author} {\bibfnamefont {L.-M.}\ \bibnamefont {Duan}},\ }\bibfield
  {title} {\bibinfo {title} {Efficient representation of topologically ordered
  states with restricted boltzmann machines},\ }\href
  {https://doi.org/10.1103/PhysRevB.99.155136} {\bibfield  {journal} {\bibinfo
  {journal} {Phys. Rev. B}\ }\textbf {\bibinfo {volume} {99}},\ \bibinfo
  {pages} {155136} (\bibinfo {year} {2019})}\BibitemShut {NoStop}%
\bibitem [{\citenamefont {Sajjan}\ \emph
  {et~al.}(2022{\natexlab{b}})\citenamefont {Sajjan}, \citenamefont {Alaeian},\
  and\ \citenamefont {Kais}}]{doi:10.1063/5.0128283}%
  \BibitemOpen
  \bibfield  {author} {\bibinfo {author} {\bibfnamefont {M.}~\bibnamefont
  {Sajjan}}, \bibinfo {author} {\bibfnamefont {H.}~\bibnamefont {Alaeian}},\
  and\ \bibinfo {author} {\bibfnamefont {S.}~\bibnamefont {Kais}},\ }\bibfield
  {title} {\bibinfo {title} {Magnetic phases of spatially modulated spin-1
  chains in rydberg excitons: Classical and quantum simulations},\ }\href
  {https://doi.org/10.1063/5.0128283} {\bibfield  {journal} {\bibinfo
  {journal} {The Journal of Chemical Physics}\ }\textbf {\bibinfo {volume}
  {157}},\ \bibinfo {pages} {224111} (\bibinfo {year}
  {2022}{\natexlab{b}})}\BibitemShut {NoStop}%
\bibitem [{\citenamefont {Vieijra}\ \emph {et~al.}(2020)\citenamefont
  {Vieijra}, \citenamefont {Casert}, \citenamefont {Nys}, \citenamefont
  {De~Neve}, \citenamefont {Haegeman}, \citenamefont {Ryckebusch},\ and\
  \citenamefont {Verstraete}}]{RBM_anyons_symm}%
  \BibitemOpen
  \bibfield  {author} {\bibinfo {author} {\bibfnamefont {T.}~\bibnamefont
  {Vieijra}}, \bibinfo {author} {\bibfnamefont {C.}~\bibnamefont {Casert}},
  \bibinfo {author} {\bibfnamefont {J.}~\bibnamefont {Nys}}, \bibinfo {author}
  {\bibfnamefont {W.}~\bibnamefont {De~Neve}}, \bibinfo {author} {\bibfnamefont
  {J.}~\bibnamefont {Haegeman}}, \bibinfo {author} {\bibfnamefont
  {J.}~\bibnamefont {Ryckebusch}},\ and\ \bibinfo {author} {\bibfnamefont
  {F.}~\bibnamefont {Verstraete}},\ }\bibfield  {title} {\bibinfo {title}
  {Restricted boltzmann machines for quantum states with non-abelian or anyonic
  symmetries},\ }\href {https://doi.org/10.1103/PhysRevLett.124.097201}
  {\bibfield  {journal} {\bibinfo  {journal} {Phys. Rev. Lett.}\ }\textbf
  {\bibinfo {volume} {124}},\ \bibinfo {pages} {097201} (\bibinfo {year}
  {2020})}\BibitemShut {NoStop}%
\bibitem [{\citenamefont {Lee}\ \emph {et~al.}(2021)\citenamefont {Lee},
  \citenamefont {Patil}, \citenamefont {Zhang},\ and\ \citenamefont
  {Hsieh}}]{PhysRevResearch.3.023095}%
  \BibitemOpen
  \bibfield  {author} {\bibinfo {author} {\bibfnamefont {C.~K.}\ \bibnamefont
  {Lee}}, \bibinfo {author} {\bibfnamefont {P.}~\bibnamefont {Patil}}, \bibinfo
  {author} {\bibfnamefont {S.}~\bibnamefont {Zhang}},\ and\ \bibinfo {author}
  {\bibfnamefont {C.~Y.}\ \bibnamefont {Hsieh}},\ }\bibfield  {title} {\bibinfo
  {title} {Neural-network variational quantum algorithm for simulating
  many-body dynamics},\ }\href
  {https://doi.org/10.1103/PhysRevResearch.3.023095} {\bibfield  {journal}
  {\bibinfo  {journal} {Phys. Rev. Research}\ }\textbf {\bibinfo {volume}
  {3}},\ \bibinfo {pages} {023095} (\bibinfo {year} {2021})}\BibitemShut
  {NoStop}%
\bibitem [{\citenamefont {Sureshbabu}\ \emph {et~al.}(2021)\citenamefont
  {Sureshbabu}, \citenamefont {Sajjan}, \citenamefont {Oh},\ and\ \citenamefont
  {Kais}}]{sureshbabu2021implementation}%
  \BibitemOpen
  \bibfield  {author} {\bibinfo {author} {\bibfnamefont {S.~H.}\ \bibnamefont
  {Sureshbabu}}, \bibinfo {author} {\bibfnamefont {M.}~\bibnamefont {Sajjan}},
  \bibinfo {author} {\bibfnamefont {S.}~\bibnamefont {Oh}},\ and\ \bibinfo
  {author} {\bibfnamefont {S.}~\bibnamefont {Kais}},\ }\bibfield  {title}
  {\bibinfo {title} {Implementation of quantum machine learning for electronic
  structure calculations of periodic systems on quantum computing devices},\
  }\href@noop {} {\bibfield  {journal} {\bibinfo  {journal} {Journal of
  Chemical Information and Modeling}\ } (\bibinfo {year} {2021})}\BibitemShut
  {NoStop}%
\bibitem [{\citenamefont {Yang}\ \emph {et~al.}(2020)\citenamefont {Yang},
  \citenamefont {Sugiyama}, \citenamefont {Tsuda},\ and\ \citenamefont
  {Yanai}}]{RBM_molecule}%
  \BibitemOpen
  \bibfield  {author} {\bibinfo {author} {\bibfnamefont {P.-J.}\ \bibnamefont
  {Yang}}, \bibinfo {author} {\bibfnamefont {M.}~\bibnamefont {Sugiyama}},
  \bibinfo {author} {\bibfnamefont {K.}~\bibnamefont {Tsuda}},\ and\ \bibinfo
  {author} {\bibfnamefont {T.}~\bibnamefont {Yanai}},\ }\bibfield  {title}
  {\bibinfo {title} {Artificial neural networks applied as molecular wave
  function solvers},\ }\href@noop {} {\bibfield  {journal} {\bibinfo  {journal}
  {Journal of Chemical Theory and Computation}\ }\textbf {\bibinfo {volume}
  {16}},\ \bibinfo {pages} {3513} (\bibinfo {year} {2020})}\BibitemShut
  {NoStop}%
\bibitem [{\citenamefont {Ciliberto}\ \emph {et~al.}(2018)\citenamefont
  {Ciliberto}, \citenamefont {Herbster}, \citenamefont {Ialongo}, \citenamefont
  {Pontil}, \citenamefont {Rocchetto}, \citenamefont {Severini},\ and\
  \citenamefont {Wossnig}}]{Ciliberto2017}%
  \BibitemOpen
  \bibfield  {author} {\bibinfo {author} {\bibfnamefont {C.}~\bibnamefont
  {Ciliberto}}, \bibinfo {author} {\bibfnamefont {M.}~\bibnamefont {Herbster}},
  \bibinfo {author} {\bibfnamefont {A.~D.}\ \bibnamefont {Ialongo}}, \bibinfo
  {author} {\bibfnamefont {M.}~\bibnamefont {Pontil}}, \bibinfo {author}
  {\bibfnamefont {A.}~\bibnamefont {Rocchetto}}, \bibinfo {author}
  {\bibfnamefont {S.}~\bibnamefont {Severini}},\ and\ \bibinfo {author}
  {\bibfnamefont {L.}~\bibnamefont {Wossnig}},\ }\bibfield  {title} {\bibinfo
  {title} {{Quantum machine learning: a classical perspective}},\ }\href
  {https://doi.org/10.1098/rspa.2017.0551} {\bibfield  {journal} {\bibinfo
  {journal} {Proceedings of the Royal Society A: Mathematical, Physical and
  Engineering Sciences}\ }\textbf {\bibinfo {volume} {474}},\ \bibinfo {pages}
  {20170551} (\bibinfo {year} {2018})}\BibitemShut {NoStop}%
\bibitem [{\citenamefont {{Le Roux}}\ and\ \citenamefont
  {Bengio}(2008)}]{Roux_RBM}%
  \BibitemOpen
  \bibfield  {author} {\bibinfo {author} {\bibfnamefont {N.}~\bibnamefont {{Le
  Roux}}}\ and\ \bibinfo {author} {\bibfnamefont {Y.}~\bibnamefont {Bengio}},\
  }\bibfield  {title} {\bibinfo {title} {{Representational Power of Restricted
  Boltzmann Machines and Deep Belief Networks}},\ }\href
  {https://doi.org/10.1162/neco.2008.04-07-510} {\bibfield  {journal} {\bibinfo
   {journal} {Neural Computation}\ }\textbf {\bibinfo {volume} {20}},\ \bibinfo
  {pages} {1631} (\bibinfo {year} {2008})}\BibitemShut {NoStop}%
\bibitem [{\citenamefont {Deng}\ \emph {et~al.}(2017)\citenamefont {Deng},
  \citenamefont {Li},\ and\ \citenamefont {Das~Sarma}}]{PhysRevX.7.021021}%
  \BibitemOpen
  \bibfield  {author} {\bibinfo {author} {\bibfnamefont {D.-L.}\ \bibnamefont
  {Deng}}, \bibinfo {author} {\bibfnamefont {X.}~\bibnamefont {Li}},\ and\
  \bibinfo {author} {\bibfnamefont {S.}~\bibnamefont {Das~Sarma}},\ }\bibfield
  {title} {\bibinfo {title} {Quantum entanglement in neural network states},\
  }\href {https://doi.org/10.1103/PhysRevX.7.021021} {\bibfield  {journal}
  {\bibinfo  {journal} {Phys. Rev. X}\ }\textbf {\bibinfo {volume} {7}},\
  \bibinfo {pages} {021021} (\bibinfo {year} {2017})}\BibitemShut {NoStop}%
\bibitem [{\citenamefont {Long}\ and\ \citenamefont
  {Servedio}(2010)}]{Long2010}%
  \BibitemOpen
  \bibfield  {author} {\bibinfo {author} {\bibfnamefont {P.~M.}\ \bibnamefont
  {Long}}\ and\ \bibinfo {author} {\bibfnamefont {R.~A.}\ \bibnamefont
  {Servedio}},\ }\bibfield  {title} {\bibinfo {title} {{Restricted Boltzmann
  Machines are hard to approximately evaluate or simulate}},\ }\href@noop {}
  {\bibfield  {journal} {\bibinfo  {journal} {ICML 2010 - Proceedings, 27th
  International Conference on Machine Learning}\ ,\ \bibinfo {pages} {703}}
  (\bibinfo {year} {2010})}\BibitemShut {NoStop}%
\bibitem [{\citenamefont {Shen}\ \emph {et~al.}(2020)\citenamefont {Shen},
  \citenamefont {Zhang}, \citenamefont {You},\ and\ \citenamefont
  {Zhai}}]{shen2020information}%
  \BibitemOpen
  \bibfield  {author} {\bibinfo {author} {\bibfnamefont {H.}~\bibnamefont
  {Shen}}, \bibinfo {author} {\bibfnamefont {P.}~\bibnamefont {Zhang}},
  \bibinfo {author} {\bibfnamefont {Y.-Z.}\ \bibnamefont {You}},\ and\ \bibinfo
  {author} {\bibfnamefont {H.}~\bibnamefont {Zhai}},\ }\bibfield  {title}
  {\bibinfo {title} {Information scrambling in quantum neural networks},\
  }\href@noop {} {\bibfield  {journal} {\bibinfo  {journal} {Physical Review
  Letters}\ }\textbf {\bibinfo {volume} {124}},\ \bibinfo {pages} {200504}
  (\bibinfo {year} {2020})}\BibitemShut {NoStop}%
\bibitem [{\citenamefont {Wu}\ \emph {et~al.}(2021{\natexlab{b}})\citenamefont
  {Wu}, \citenamefont {Zhang},\ and\ \citenamefont {Zhai}}]{wu2021scrambling}%
  \BibitemOpen
  \bibfield  {author} {\bibinfo {author} {\bibfnamefont {Y.}~\bibnamefont
  {Wu}}, \bibinfo {author} {\bibfnamefont {P.}~\bibnamefont {Zhang}},\ and\
  \bibinfo {author} {\bibfnamefont {H.}~\bibnamefont {Zhai}},\ }\bibfield
  {title} {\bibinfo {title} {Scrambling ability of quantum neural network
  architectures},\ }\href@noop {} {\bibfield  {journal} {\bibinfo  {journal}
  {Physical Review Research}\ }\textbf {\bibinfo {volume} {3}},\ \bibinfo
  {pages} {L032057} (\bibinfo {year} {2021}{\natexlab{b}})}\BibitemShut
  {NoStop}%
\bibitem [{\citenamefont {Garcia}\ \emph {et~al.}(2022)\citenamefont {Garcia},
  \citenamefont {Bu},\ and\ \citenamefont {Jaffe}}]{garcia2022quantifying}%
  \BibitemOpen
  \bibfield  {author} {\bibinfo {author} {\bibfnamefont {R.~J.}\ \bibnamefont
  {Garcia}}, \bibinfo {author} {\bibfnamefont {K.}~\bibnamefont {Bu}},\ and\
  \bibinfo {author} {\bibfnamefont {A.}~\bibnamefont {Jaffe}},\ }\bibfield
  {title} {\bibinfo {title} {Quantifying scrambling in quantum neural
  networks},\ }\href@noop {} {\bibfield  {journal} {\bibinfo  {journal}
  {Journal of High Energy Physics}\ }\textbf {\bibinfo {volume} {2022}},\
  \bibinfo {pages} {1} (\bibinfo {year} {2022})}\BibitemShut {NoStop}%
\bibitem [{\citenamefont {Schuster}\ \emph {et~al.}(2022)\citenamefont
  {Schuster}, \citenamefont {Niu}, \citenamefont {Cotler}, \citenamefont
  {O'Brien}, \citenamefont {McClean},\ and\ \citenamefont
  {Mohseni}}]{schuster2022learning}%
  \BibitemOpen
  \bibfield  {author} {\bibinfo {author} {\bibfnamefont {T.}~\bibnamefont
  {Schuster}}, \bibinfo {author} {\bibfnamefont {M.}~\bibnamefont {Niu}},
  \bibinfo {author} {\bibfnamefont {J.}~\bibnamefont {Cotler}}, \bibinfo
  {author} {\bibfnamefont {T.}~\bibnamefont {O'Brien}}, \bibinfo {author}
  {\bibfnamefont {J.~R.}\ \bibnamefont {McClean}},\ and\ \bibinfo {author}
  {\bibfnamefont {M.}~\bibnamefont {Mohseni}},\ }\bibfield  {title} {\bibinfo
  {title} {Learning quantum systems via out-of-time-order correlators},\
  }\href@noop {} {\bibfield  {journal} {\bibinfo  {journal} {arXiv preprint
  arXiv:2208.02254}\ } (\bibinfo {year} {2022})}\BibitemShut {NoStop}%
\bibitem [{\citenamefont {G{\"a}rttner}\ \emph {et~al.}(2017)\citenamefont
  {G{\"a}rttner}, \citenamefont {Bohnet}, \citenamefont {Safavi-Naini},
  \citenamefont {Wall}, \citenamefont {Bollinger},\ and\ \citenamefont
  {Rey}}]{garttner2017measuring}%
  \BibitemOpen
  \bibfield  {author} {\bibinfo {author} {\bibfnamefont {M.}~\bibnamefont
  {G{\"a}rttner}}, \bibinfo {author} {\bibfnamefont {J.~G.}\ \bibnamefont
  {Bohnet}}, \bibinfo {author} {\bibfnamefont {A.}~\bibnamefont
  {Safavi-Naini}}, \bibinfo {author} {\bibfnamefont {M.~L.}\ \bibnamefont
  {Wall}}, \bibinfo {author} {\bibfnamefont {J.~J.}\ \bibnamefont
  {Bollinger}},\ and\ \bibinfo {author} {\bibfnamefont {A.~M.}\ \bibnamefont
  {Rey}},\ }\bibfield  {title} {\bibinfo {title} {Measuring out-of-time-order
  correlations and multiple quantum spectra in a trapped-ion quantum magnet},\
  }\href@noop {} {\bibfield  {journal} {\bibinfo  {journal} {Nature Physics}\
  }\textbf {\bibinfo {volume} {13}},\ \bibinfo {pages} {781} (\bibinfo {year}
  {2017})}\BibitemShut {NoStop}%
\bibitem [{\citenamefont {Holmes}\ \emph {et~al.}(2021)\citenamefont {Holmes},
  \citenamefont {Arrasmith}, \citenamefont {Yan}, \citenamefont {Coles},
  \citenamefont {Albrecht},\ and\ \citenamefont
  {Sornborger}}]{holmes2021barren}%
  \BibitemOpen
  \bibfield  {author} {\bibinfo {author} {\bibfnamefont {Z.}~\bibnamefont
  {Holmes}}, \bibinfo {author} {\bibfnamefont {A.}~\bibnamefont {Arrasmith}},
  \bibinfo {author} {\bibfnamefont {B.}~\bibnamefont {Yan}}, \bibinfo {author}
  {\bibfnamefont {P.~J.}\ \bibnamefont {Coles}}, \bibinfo {author}
  {\bibfnamefont {A.}~\bibnamefont {Albrecht}},\ and\ \bibinfo {author}
  {\bibfnamefont {A.~T.}\ \bibnamefont {Sornborger}},\ }\bibfield  {title}
  {\bibinfo {title} {Barren plateaus preclude learning scramblers},\
  }\href@noop {} {\bibfield  {journal} {\bibinfo  {journal} {Physical Review
  Letters}\ }\textbf {\bibinfo {volume} {126}},\ \bibinfo {pages} {190501}
  (\bibinfo {year} {2021})}\BibitemShut {NoStop}%
\bibitem [{\citenamefont {Bellinger}\ \emph {et~al.}(2020)\citenamefont
  {Bellinger}, \citenamefont {Coles}, \citenamefont {Crowley},\ and\
  \citenamefont {Tamblyn}}]{bellinger2020reinforcement}%
  \BibitemOpen
  \bibfield  {author} {\bibinfo {author} {\bibfnamefont {C.}~\bibnamefont
  {Bellinger}}, \bibinfo {author} {\bibfnamefont {R.}~\bibnamefont {Coles}},
  \bibinfo {author} {\bibfnamefont {M.}~\bibnamefont {Crowley}},\ and\ \bibinfo
  {author} {\bibfnamefont {I.}~\bibnamefont {Tamblyn}},\ }\href@noop {}
  {\bibinfo {title} {Reinforcement learning in a physics-inspired semi-markov
  environment}} (\bibinfo {year} {2020}),\ \Eprint
  {https://arxiv.org/abs/2004.07333} {arXiv:2004.07333 [cs.LG]} \BibitemShut
  {NoStop}%
\bibitem [{\citenamefont {Musil}\ \emph {et~al.}(2021)\citenamefont {Musil},
  \citenamefont {Grisafi}, \citenamefont {Bart{\'o}k}, \citenamefont {Ortner},
  \citenamefont {Cs{\'a}nyi},\ and\ \citenamefont
  {Ceriotti}}]{musil2021physics}%
  \BibitemOpen
  \bibfield  {author} {\bibinfo {author} {\bibfnamefont {F.}~\bibnamefont
  {Musil}}, \bibinfo {author} {\bibfnamefont {A.}~\bibnamefont {Grisafi}},
  \bibinfo {author} {\bibfnamefont {A.~P.}\ \bibnamefont {Bart{\'o}k}},
  \bibinfo {author} {\bibfnamefont {C.}~\bibnamefont {Ortner}}, \bibinfo
  {author} {\bibfnamefont {G.}~\bibnamefont {Cs{\'a}nyi}},\ and\ \bibinfo
  {author} {\bibfnamefont {M.}~\bibnamefont {Ceriotti}},\ }\bibfield  {title}
  {\bibinfo {title} {Physics-inspired structural representations for molecules
  and materials},\ }\href@noop {} {\bibfield  {journal} {\bibinfo  {journal}
  {Chemical Reviews}\ }\textbf {\bibinfo {volume} {121}},\ \bibinfo {pages}
  {9759} (\bibinfo {year} {2021})}\BibitemShut {NoStop}%
\bibitem [{\citenamefont {Karniadakis}\ \emph {et~al.}(2021)\citenamefont
  {Karniadakis}, \citenamefont {Kevrekidis}, \citenamefont {Lu}, \citenamefont
  {Perdikaris}, \citenamefont {Wang},\ and\ \citenamefont
  {Yang}}]{karniadakis2021physics}%
  \BibitemOpen
  \bibfield  {author} {\bibinfo {author} {\bibfnamefont {G.~E.}\ \bibnamefont
  {Karniadakis}}, \bibinfo {author} {\bibfnamefont {I.~G.}\ \bibnamefont
  {Kevrekidis}}, \bibinfo {author} {\bibfnamefont {L.}~\bibnamefont {Lu}},
  \bibinfo {author} {\bibfnamefont {P.}~\bibnamefont {Perdikaris}}, \bibinfo
  {author} {\bibfnamefont {S.}~\bibnamefont {Wang}},\ and\ \bibinfo {author}
  {\bibfnamefont {L.}~\bibnamefont {Yang}},\ }\bibfield  {title} {\bibinfo
  {title} {Physics-informed machine learning},\ }\href@noop {} {\bibfield
  {journal} {\bibinfo  {journal} {Nature Reviews Physics}\ }\textbf {\bibinfo
  {volume} {3}},\ \bibinfo {pages} {422} (\bibinfo {year} {2021})}\BibitemShut
  {NoStop}%
\bibitem [{\citenamefont {Fachechi}\ \emph {et~al.}(2019)\citenamefont
  {Fachechi}, \citenamefont {Agliari},\ and\ \citenamefont
  {Barra}}]{FACHECHI201924}%
  \BibitemOpen
  \bibfield  {author} {\bibinfo {author} {\bibfnamefont {A.}~\bibnamefont
  {Fachechi}}, \bibinfo {author} {\bibfnamefont {E.}~\bibnamefont {Agliari}},\
  and\ \bibinfo {author} {\bibfnamefont {A.}~\bibnamefont {Barra}},\ }\bibfield
   {title} {\bibinfo {title} {Dreaming neural networks: Forgetting spurious
  memories and reinforcing pure ones},\ }\href
  {https://doi.org/https://doi.org/10.1016/j.neunet.2019.01.006} {\bibfield
  {journal} {\bibinfo  {journal} {Neural Networks}\ }\textbf {\bibinfo {volume}
  {112}},\ \bibinfo {pages} {24} (\bibinfo {year} {2019})}\BibitemShut
  {NoStop}%
\bibitem [{\citenamefont {Agliari}\ \emph {et~al.}(2019)\citenamefont
  {Agliari}, \citenamefont {Alemanno}, \citenamefont {Barra},\ and\
  \citenamefont {Fachechi}}]{agliari2019dreaming}%
  \BibitemOpen
  \bibfield  {author} {\bibinfo {author} {\bibfnamefont {E.}~\bibnamefont
  {Agliari}}, \bibinfo {author} {\bibfnamefont {F.}~\bibnamefont {Alemanno}},
  \bibinfo {author} {\bibfnamefont {A.}~\bibnamefont {Barra}},\ and\ \bibinfo
  {author} {\bibfnamefont {A.}~\bibnamefont {Fachechi}},\ }\bibfield  {title}
  {\bibinfo {title} {Dreaming neural networks: rigorous results},\ }\href@noop
  {} {\bibfield  {journal} {\bibinfo  {journal} {Journal of Statistical
  Mechanics: Theory and Experiment}\ }\textbf {\bibinfo {volume} {2019}},\
  \bibinfo {pages} {083503} (\bibinfo {year} {2019})}\BibitemShut {NoStop}%
\bibitem [{\citenamefont {Aquaro}\ \emph {et~al.}(2022)\citenamefont {Aquaro},
  \citenamefont {Alemanno}, \citenamefont {Kanter}, \citenamefont {Durante},
  \citenamefont {Agliari},\ and\ \citenamefont {Barra}}]{aquaro2022recurrent}%
  \BibitemOpen
  \bibfield  {author} {\bibinfo {author} {\bibfnamefont {M.}~\bibnamefont
  {Aquaro}}, \bibinfo {author} {\bibfnamefont {F.}~\bibnamefont {Alemanno}},
  \bibinfo {author} {\bibfnamefont {I.}~\bibnamefont {Kanter}}, \bibinfo
  {author} {\bibfnamefont {F.}~\bibnamefont {Durante}}, \bibinfo {author}
  {\bibfnamefont {E.}~\bibnamefont {Agliari}},\ and\ \bibinfo {author}
  {\bibfnamefont {A.}~\bibnamefont {Barra}},\ }\bibfield  {title} {\bibinfo
  {title} {Recurrent neural networks that generalize from examples and optimize
  by dreaming},\ }\href@noop {} {\bibfield  {journal} {\bibinfo  {journal}
  {arXiv preprint arXiv:2204.07954}\ } (\bibinfo {year} {2022})}\BibitemShut
  {NoStop}%
\bibitem [{\citenamefont {Bank}\ \emph {et~al.}(2020)\citenamefont {Bank},
  \citenamefont {Koenigstein},\ and\ \citenamefont
  {Giryes}}]{Autoencode_review}%
  \BibitemOpen
  \bibfield  {author} {\bibinfo {author} {\bibfnamefont {D.}~\bibnamefont
  {Bank}}, \bibinfo {author} {\bibfnamefont {N.}~\bibnamefont {Koenigstein}},\
  and\ \bibinfo {author} {\bibfnamefont {R.}~\bibnamefont {Giryes}},\ }\href
  {https://doi.org/10.48550/ARXIV.2003.05991} {\bibinfo {title} {Autoencoders}}
  (\bibinfo {year} {2020})\BibitemShut {NoStop}%
\end{thebibliography}%

\widetext
\pagebreak
\appendix
\AtAppendix{\counterwithin{lemma}{section}}



\def\mathunderline#1#2{\color{#1}\underline{{\color{black}#2}}\color{black}}

\section{General formulation for OTOC}\label{Gen_OTOC_form}

As indicated in the text, every OTOC comprises of two unitary operators $U_1(0), U_2(0)$ chosen usually at two non-local sites (here $\{1,2\}$ for instance) within a system. The rate at which the information propagates through the system in real-time is thereafter quantified using the
$C_{U_1,U_2}(t)=\langle U^\dagger_1 (0) U^\dagger_2(t) U_1(0) U_2(t) \rangle$, 
For measurement purposes, the $Re(C_{U_1,U_2}(t))$ is often related in literature to the commutator product given as $\langle [U_1(0), U_2(t)]^\dagger [U_1(0), U_2(t)] \rangle$ where by virtue of construction of the observable, positive semi-definiteness is ensured. The origin of such a choice is attributed to studying chaotic classical systems wherein an analogous expression for Poisson bracket $\{x(0), p(t)\}_{PB}$ is used to probe sensitivity to initial conditions.  Since in our work we show that the imaginary part  $Im(C_{U_1,U_2}(t))$ can also be informative about correlation content within the sub-units of the learner $G$, we offer herein a general formulation for OTOCs of arbitrary systems using newly constructed positive semi-definite operators involving not only commutators ($[\cdot]$) as above but anti-commutators ($\{\cdot\}_{+}$). The advantage of the formulation is the offered generality in the theory of OTOCs in the quantum domain and also the flexibility in projecting both the real and imaginary component of $C_{U_1,U_2}(t)$ 
based on user-defined preferences while still maintaining positive semi-definiteness of the observables probed. To this end, let us define a probe as 
\begin{align}
L_{U_1,U_2}(\lambda_1, \lambda_2, t) &= \lambda_1 A(U_1, U_2, t) + i \lambda_2 B(U_1, U_2, t) \label{L_defn}\\
A(U_1, U_2, t) &= \{U_1(0),U_2(t)\}_{+} \label{Anti_Comm_defn}\\
B(U_1, U_2, t) &= [U_1(0),U_2(t)] \label{Comm_defn}\\
\lambda_1, \lambda_2 &\in \mathcal{R} 
\end{align}
Using Eq.\ref{L_defn} and the definition for $C_{U_1,U_2}(0, 0, \vec{X}, t)$ , it is easy to show
\begin{align}
Re(C_{U_1,U_2}(0, 0, \vec{X}, t)) &= Re(\langle U^\dagger_1 (0) U^\dagger_2(t) U_1(0) U_2(t) \rangle) \nonumber \\
&=1 - \frac{\langle B(U_1, U_2, t)^\dagger B(U_1, U_2, t)\rangle}{2} \nonumber \:\:\:\because\:\: Eq.\ref{Comm_defn} \\
& = 1 - \frac{\langle L_{U_1,U_2}(0, 1, t)^\dagger L_{U_1,U_2}(0, 1, t)\rangle}{2} \:\:\:\because\:\:Eq.\ref{L_defn}
\end{align}
and 
\begin{align}
Im(C_{U_1,U_2}(t)) &= Im(\langle U^\dagger_1 (0) U^\dagger_2(t) U_1(0) U_2(t) \rangle) \nonumber \\
&=-\frac{1}{2}(\langle A(U_1, U_2, t)^\dagger iB(U_1, U_2, t)\rangle) \:\:\:\because\:\:Eq.\ref{Anti_Comm_defn}, \ref{Comm_defn}\nonumber \\
&=\frac{1}{4}\langle L_{U_1,U_2}(0, 1, t)^\dagger L_{U_1,U_2}(0, 1, t)\rangle + \frac{1}{4}\langle L_{U_1,U_2}(1, 0, t)^\dagger L_{U_1,U_2}(1, 0, t)\rangle \nonumber \\ &-\frac{1}{4}\langle L_{U_1,U_2}(1, 1, t)^\dagger L_{U_1,U_2}(1, 1, t)\rangle \:\:\:\because\:\:Eq.\ref{L_defn}
\end{align}

Note that all combinations of $ L_{U_1,U_2}(\lambda_1, \lambda_2, t)^\dagger L_{U_1,U_2}(\lambda_1, \lambda_2, t)$ are positive semi-definite by construction and hence is used in the same vein as the usual commutators for OTOCs are traditionally defined but nonetheless offers a much more general framework for investigating the operator string $C_{U_1,U_2}(t)$\\

\newpage
\section{Time dependence of $\{\sigma^\alpha(v_k,0),\sigma^\beta(h_m,t)\}_+$ and $[\sigma^\alpha(v_k,0),\sigma^\beta(h_m,t)]$}
\label{time_dep_paulis}

To define the time-dependence of $\{\sigma^\alpha(v_k,0),\sigma^\beta(h_m,t)\}_+$ and $[\sigma^\alpha(v_k,0),\sigma^\beta(h_m,t)]$, it is essential to establish time-dependence for $\sigma^\beta(h_m, t)$ where $\alpha, \beta \in \{0,x,y,z\}$ with $\sigma^0=\mathcal{I}_{2 \times 2}$. To this end, we prove the following lemma. 

\begin{lemma} \label{lem:sigma_t_dep}
For an operator $\sigma^\beta(h_m, t)$ satisfying the equation $\dot{\sigma}^\beta(h_m, t) = i[\mathcal{H}, \sigma^\beta(h_m, t)]$, (generator $\mathcal{H}$ defined in Eq.1 in main manuscript), the solution would be $\sigma^\beta(h_m, t) = e^{2i\mathcal{H}^\prime t}\sigma^\beta(h_m, 0)$ $\forall\:\beta \in \{0,x,y,z\}$
with
\begin{align*}
\mathcal{H}^\prime &= \delta_{\beta\in\{x, y\}}(\mathcal{H} - \sum_l a_l \sigma^z(v_l,0) - \sum_{j \neq m}b_j \sigma^z(h_j,0) - \sum_{l, j \neq m} W^l_j \sigma^z(v_l,0)\sigma^z(h_j,0))
\end{align*}
\end{lemma}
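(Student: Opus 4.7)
The plan is to integrate the Heisenberg equation $\dot{\sigma}^\beta(h_m,t)=i[\mathcal{H},\sigma^\beta(h_m,t)]$ in closed form by exploiting the fact that every summand of $\mathcal{H}$ is built from mutually commuting $\sigma^z$ operators. First I will write the formal solution $\sigma^\beta(h_m,t)=e^{i\mathcal{H} t}\,\sigma^\beta(h_m,0)\,e^{-i\mathcal{H} t}$ and split $\mathcal{H}=\mathcal{H}_c+\mathcal{H}'$, where $\mathcal{H}_c$ collects those summands that commute with $\sigma^\beta(h_m,0)$ and $\mathcal{H}'$ those that do not. Because the individual terms of $\mathcal{H}$ commute pairwise, the exponential factors as $e^{i\mathcal{H} t}=e^{i\mathcal{H}_c t}e^{i\mathcal{H}' t}$, and the $e^{\pm i\mathcal{H}_c t}$ factors cancel around $\sigma^\beta(h_m,0)$, leaving the reduced problem $\sigma^\beta(h_m,t)=e^{i\mathcal{H}' t}\sigma^\beta(h_m,0)e^{-i\mathcal{H}' t}$.

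Next I would dispose of the trivial cases. For $\beta=0$ the statement is vacuous, and for $\beta=z$ every summand of $\mathcal{H}$ commutes with $\sigma^z(h_m,0)$, so $\mathcal{H}'=0$ and $\sigma^z(h_m,t)=\sigma^z(h_m,0)$; this is consistent with the indicator $\delta_{\beta\in\{x,y\}}$ in the lemma. For $\beta\in\{x,y\}$, the only summands of $\mathcal{H}$ that act non-trivially on site $h_m$ are $b_m\,\sigma^z(h_m,0)$ and $\sum_l W^l_m\,\sigma^z(v_l,0)\sigma^z(h_m,0)$; all other terms act only on sites different from $h_m$ and hence belong to $\mathcal{H}_c$. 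Subtracting these spectator terms from $\mathcal{H}$ reproduces exactly the expression for $\mathcal{H}'$ displayed in the lemma.

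The one subtle step, and the main obstacle, is recovering the factor of two. I will write $\mathcal{H}'=\sigma^z(h_m,0)\,Q$ with $Q=b_m\mathcal{I}+\sum_l W^l_m\,\sigma^z(v_l,0)$, observe that $Q$ acts only on sites $\neq h_m$ and therefore commutes with $\sigma^\beta(h_m,0)$ for $\beta\in\{x,y\}$, and combine this with the on-site relation $\sigma^z(h_m,0)\sigma^\beta(h_m,0)=-\sigma^\beta(h_m,0)\sigma^z(h_m,0)$ to obtain the anti-commutation $\{\mathcal{H}',\sigma^\beta(h_m,0)\}_+=0$. A series expansion then gives $(\mathcal{H}')^n\sigma^\beta(h_m,0)=\sigma^\beta(h_m,0)(-\mathcal{H}')^n$, so that $\sigma^\beta(h_m,0)\,e^{-i\mathcal{H}'t}=e^{i\mathcal{H}'t}\sigma^\beta(h_m,0)$ and
\begin{equation*}
\sigma^\beta(h_m,t)=e^{i\mathcal{H}'t}\sigma^\beta(h_m,0)\,e^{-i\mathcal{H}'t}=e^{2i\mathcal{H}'t}\sigma^\beta(h_m,0),
\end{equation*}
as claimed. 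The care required is entirely in verifying that $\mathcal{H}'$ factorises as an anti-commuting on-site operator times a commuting spectator $Q$, which is the algebraic source of the doubling; once this is established the rest is bookkeeping. As a sanity check, I would also verify the initial condition $\sigma^\beta(h_m,0)$ at $t=0$ and note that the same argument, with the roles of visible and hidden swapped, gives an analogous formula for $\sigma^\alpha(v_k,t)$ that will be needed when evaluating $\{\sigma^\alpha(v_k,0),\sigma^\beta(h_m,t)\}_+$ and $[\sigma^\alpha(v_k,0),\sigma^\beta(h_m,t)]$ in Theorem~\ref{lem1}.
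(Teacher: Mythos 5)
Your proof is correct, and it takes a genuinely different route from the paper's. The paper differentiates: it writes $\dot{\sigma}^\beta(h_m,t)=ie^{i\mathcal{H}t}[\mathcal{H},\sigma^\beta(h_m,0)]e^{-i\mathcal{H}t}$, computes the commutator term by term (picking up the factor of two from $[\sigma^z,\sigma^\beta]=2\sigma^z\sigma^\beta$ for $\beta\in\{x,y\}$), slides the resulting $\sigma^z$-string $\mathcal{H}'$ through the propagators (it commutes with $\mathcal{H}$), and arrives at the scalar-like first-order ODE $\dot{\sigma}^\beta(h_m,t)=2i\mathcal{H}'\sigma^\beta(h_m,t)$ which it then integrates. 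You never form the ODE at all: you start from the exact conjugation $\sigma^\beta(h_m,t)=e^{i\mathcal{H}t}\sigma^\beta(h_m,0)e^{-i\mathcal{H}t}$, factor $e^{i\mathcal{H}t}=e^{i\mathcal{H}_ct}e^{i\mathcal{H}'t}$ using pairwise commutativity of the $\sigma^z$ summands, cancel the spectator factor, and then reflect $e^{-i\mathcal{H}'t}$ through $\sigma^\beta(h_m,0)$ by way of the anticommutation $\{\mathcal{H}',\sigma^\beta(h_m,0)\}_+=0$ (which you establish via the clean factorization $\mathcal{H}'=\sigma^z(h_m,0)Q$ with $Q$ a spectator). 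Both proofs hinge on the same two facts -- every term of $\mathcal{H}$ is diagonal so the summands commute, and $\sigma^z$ anticommutes with $\sigma^{x,y}$ on the same site -- but your purely algebraic route makes the origin of the factor of two more conceptually transparent (it is the ``mirror'' of the exponent under anticommutation), while the paper's commutator-then-integrate route is arguably closer to the standard Heisenberg-picture calculation a reader would expect. Your sanity check of the $\beta=0$ and $\beta=z$ cases, and the observation that the indicator $\delta_{\beta\in\{x,y\}}$ makes the stated formula uniformly valid, is a useful addition that the paper leaves implicit.
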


\begin{proof}
\begin{align}
\dot{\sigma}^\beta(h_m, t) &= i[\mathcal{H}, \sigma^\beta(h_m, t)] \nonumber \\
&= ie^{i\mathcal{H} t} [\mathcal{H}, \sigma^\beta(h_m, 0)] e^{-i\mathcal{H} t} \nonumber \\
&= ie^{i\mathcal{H} t} (b_m[\sigma^z(h_m, 0), \sigma^\beta(h_m, 0)] + \sum_{l} W^l_m[\sigma^z(v_l,0)\sigma^z(h_m, 0), \sigma^\beta(h_m, 0)])e^{-i\mathcal{H} t} \nonumber \\
& = 2i\delta_{\beta\in\{x, y\}} e^{i\mathcal{H} t}(b_m\sigma^z(h_m, 0)\sigma^\beta(h_m, 0) 
+ \sum_{l} W^l_m\sigma^z(v_l,0)\sigma^z(h_m, 0) \sigma^\beta(h_m, 0))e^{-i\mathcal{H} t} \nonumber \\
&= 2i\delta_{\beta\in\{x, y\}} (b_m\sigma^z(h_m, 0) + \sum_{l} W^l_m\sigma^z(v_l,0)\sigma^z(h_m, 0))\sigma^\beta(h_m, t) \nonumber \\
& = 2i\delta_{\beta\in\{x, y\}}(\mathcal{H} - \sum_l a_l \sigma^z(v_l,0) - \sum_{j \neq m}b_j \sigma^z(h_j,0) - \sum_{l, j \neq m} W^l_j \sigma^z(v_l,0)\sigma^z(h_j,0))\sigma^\beta(h_m, t) \nonumber \\
&= 2i\mathcal{H}^\prime \sigma^\beta(h_m, t) \nonumber \\
\sigma^\beta(h_m, t) &= e^{2i\mathcal{H}^\prime t}\sigma^\beta(h_m, 0) \nonumber
\end{align}
\end{proof}

\begin{lemma}\label{lem:Comm_t_dep}
The explicitly time dependant forms of $[\sigma^\alpha(v_k,0), \sigma^\beta(h_m, t)]$ and $\{\sigma^\alpha(v_k,0), \sigma^\beta(h_m, t)\}_+$ are
\begin{enumerate}
\item{
$\begin{aligned}[t]
&\hat{\Theta}_{\alpha, \beta}(t)=[\sigma^\alpha(v_k,0), \sigma^\beta(h_m, t)]\nonumber\\ 
&= -2i\: Sin(2W^k_m t)\:\det\begin{pmatrix} \delta_{\alpha z} & \delta_{\alpha \beta} & \delta_{\alpha \gamma} \\ 1 & \delta_{z \beta} & \delta_{z \gamma} \\
\delta_{\chi z} & \delta_{\chi \beta} & \delta_{\chi \gamma}\end{pmatrix}\mathcal{\hat{F}}\:\sigma^\chi(v_k,0)\sigma^\gamma(h_m,0) \nonumber \\
\end{aligned}$}

\item{
$\begin{aligned}[t]
&\hat{\Phi}_{\alpha, \beta}(t)=\{\sigma^\alpha(v_k,0), \sigma^\beta(h_m, t)\}_{+} \nonumber \\
&= -2i\: Sin(2W^k_m t)\:\det\begin{pmatrix} \delta_{\alpha z} & \delta_{\alpha \beta} & \delta_{\alpha \gamma} \\ 1 & \delta_{z \beta} & \delta_{z \gamma} \\
\delta_{\chi z} & \delta_{\chi \beta} & \delta_{\chi \gamma}\end{pmatrix}\mathcal{\hat{F}}\:\sigma^\chi(v_k,0)\sigma^\gamma(h_m,0) \nonumber +2 e^{2i\mathcal{H}^\prime t} \sigma^\alpha(v_k,0) \sigma^\beta(h_m,0) \nonumber\\
\end{aligned}$}
\end{enumerate}
with $\mathcal{\hat{F}} = e^{2ib_m\sigma^z(h_m,0)t}\Pi_{l \ne k}e^{2iW^l_m \sigma^z(v_l,0)\sigma^z(h_m,0)t}$
being a unitary operator. For definition of $\mathcal{H}^\prime$ see Lemma.\:\ref{lem:sigma_t_dep}. Here we restrict ${\alpha, \beta} \in \{x,y,z\}$ as the commutation / anti-commutation relations with $\mathcal{I}_{2\times 2}$ are trivial.

\end{lemma}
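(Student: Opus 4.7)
The plan is to build on Lemma \ref{lem:sigma_t_dep}, which already supplies $\sigma^\beta(h_m,t) = e^{2i\mathcal{H}^\prime t}\sigma^\beta(h_m,0)$, and then to isolate the single piece of $e^{2i\mathcal{H}^\prime t}$ that fails to commute with $\sigma^\alpha(v_k,0)$. Since $\mathcal{H}^\prime$ is a sum of mutually commuting $\sigma^z$-only terms, the exponential factorises; pulling out the unique term acting non-trivially on site $v_k$ gives
\begin{equation*}
e^{2i\mathcal{H}^\prime t} = e^{2iW^k_m t\,\sigma^z(v_k,0)\sigma^z(h_m,0)}\,\mathcal{\hat{F}},
\end{equation*}
with $\mathcal{\hat{F}}$ exactly the operator of the statement. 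By construction $\mathcal{\hat{F}}$ carries no Pauli on site $v_k$, so $[\sigma^\alpha(v_k,0),\mathcal{\hat{F}}]=0$; similarly $\sigma^\alpha(v_k,0)$ and $\sigma^\beta(h_m,0)$ live on disjoint sites and commute.

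Using the involution $(\sigma^z(v_k,0)\sigma^z(h_m,0))^2 = \mathcal{I}$, I would expand the surviving exponential as $\cos(2W^k_m t)\mathcal{I} + i\sin(2W^k_m t)\sigma^z(v_k,0)\sigma^z(h_m,0)$. The commutator $\hat{\Theta}_{\alpha,\beta}(t) = [\sigma^\alpha(v_k,0),e^{2i\mathcal{H}^\prime t}]\sigma^\beta(h_m,0)$ then collapses via the SU(2) identity $[\sigma^\alpha,\sigma^z] = 2i\epsilon_{\alpha z\chi}\sigma^\chi$ to
\begin{equation*}
-2\sin(2W^k_m t)\,\epsilon_{\alpha z\chi}\,\sigma^\chi(v_k,0)\,\mathcal{\hat{F}}\,\sigma^z(h_m,0)\,\sigma^\beta(h_m,0),
\end{equation*}
after using $[\sigma^z(h_m,0),\mathcal{\hat{F}}]=0$ to shuffle $\sigma^z(h_m,0)$ past $\mathcal{\hat{F}}$. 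Finally, the Pauli product rule $\sigma^z\sigma^\beta = \delta_{z\beta}\mathcal{I} + i\epsilon_{z\beta\gamma}\sigma^\gamma$ on site $h_m$ produces an $\epsilon_{\alpha z\chi}\epsilon_{z\beta\gamma}$ contraction, which the standard identity $\epsilon_{abc}\epsilon_{def} = \det[\delta_{\cdot\cdot}]$ rewrites as the $3\times 3$ determinant in the claim.

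For $\hat{\Phi}_{\alpha,\beta}(t)$, the plan is to write
\begin{equation*}
\{\sigma^\alpha(v_k,0),\sigma^\beta(h_m,t)\}_{+} = \sigma^\alpha(v_k,0)\,e^{2i\mathcal{H}^\prime t}\sigma^\beta(h_m,0) + e^{2i\mathcal{H}^\prime t}\,\sigma^\alpha(v_k,0)\sigma^\beta(h_m,0),
\end{equation*}
where I have already commuted $\sigma^\beta(h_m,0)$ past $\sigma^\alpha(v_k,0)$ in the second summand. Moving $\sigma^\alpha(v_k,0)$ through the exponential in the first summand yields exactly $2e^{2i\mathcal{H}^\prime t}\sigma^\alpha(v_k,0)\sigma^\beta(h_m,0)$ plus the commutator $\hat{\Theta}_{\alpha,\beta}(t)$ already computed, reproducing the two-term structure of the stated formula.

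The subtlety deserving the most care, rather than any algebraic obstruction, is verifying that the compact determinant form correctly absorbs the degenerate sectors. When $\alpha = z$ the top two rows of the matrix coincide, and when $\beta = z$ the first two columns coincide, so the determinant vanishes in either case. This is precisely the desired outcome: $\sigma^z(v_k,0)$ commutes with every $\sigma^z$-term in $\mathcal{H}^\prime$, and Lemma \ref{lem:sigma_t_dep} forces $\sigma^z(h_m,t) = \sigma^z(h_m,0)$ (since the indicator $\delta_{\beta\in\{x,y\}}$ kills $\mathcal{H}^\prime$), so both commutators are trivially zero. Thus the determinant packaging elegantly unifies the non-trivial $\alpha,\beta\in\{x,y\}$ sector with the commuting sectors in a single closed expression.
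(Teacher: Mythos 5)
Your proof is correct and follows essentially the same route as the paper's: both factor $e^{2i\mathcal{H}'t}$ to isolate the single $v_k$-supported exponential $e^{2iW^k_m t\,\sigma^z(v_k,0)\sigma^z(h_m,0)}$, expand it via the involution property to extract the $\sin(2W^k_m t)$ commutator term, apply the $\mathfrak{su}(2)$ relations, and package the resulting $\epsilon_{\alpha z\chi}\epsilon_{z\beta\gamma}$ contraction as a $3\times 3$ determinant of Kronecker deltas. Your explicit check that the determinant vanishes exactly in the degenerate $\alpha=z$ or $\beta=z$ sectors is a useful addition that the paper leaves implicit.
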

\begin{proof}
$\newline$
\begin{enumerate}
\item{
$\begin{aligned}[t]
\hat{\Theta}_{\alpha, \beta}(t) &= [\sigma^\alpha(v_k,0), \sigma^\beta(h_m, t)] \nonumber \\
&= [\sigma^\alpha(v_k,0), e^{2i\mathcal{H}^\prime t}]\sigma^\beta(h_m,0)\nonumber \:\:\: \because \:\:\:\:\text{Lemma}\:\ref{lem:sigma_t_dep} \\
&= e^{2ib_m\sigma^z(h_m,0)t}\Pi_{l \ne k}e^{2iW^l_m \sigma^z(v_l,0)\sigma^z(h_m,0)t}\nonumber [\sigma^\alpha(v_k,0), e^{2iW^k_m \sigma^z(v_k,0)\sigma^z(h_m,0)t}]\sigma^\beta(h_m,0)\nonumber \\
&= \mathcal{\hat{F}}[\sigma^\alpha(v_k,0), i\sigma^z(v_k,0)\sigma^z(h_m,0)]\sigma^\beta(h_m,0) \:Sin(2W^k_m t)\nonumber \\
&= -2i (\epsilon_{\alpha z \chi})(\epsilon_{z\beta\gamma})\:Sin(2W^k_m t)\mathcal{\hat{F}}\:\:\sigma^\chi(v_k,0) \sigma^\gamma(h_m,0) \nonumber \\
&= -2i\: Sin(2W^k_m t)\:\det\begin{pmatrix} \delta_{\alpha z} & \delta_{\alpha \beta} & \delta_{\alpha \gamma} \\ 1 & \delta_{z \beta} & \delta_{z \gamma} \\
\delta_{\chi z} & \delta_{\chi \beta} & \delta_{\chi \gamma}\end{pmatrix}\mathcal{\hat{F}}\:\sigma^\chi(v_k,0)\sigma^\gamma(h_m,0) \nonumber
\end{aligned}$}

\item{
$\begin{aligned}[t]
\hat{\Phi}_{\alpha, \beta}(t)&=\{\sigma^\alpha(v_k,0), \sigma^\beta(h_m, t)\}_{+} \nonumber \\ 
&= \{\sigma^\alpha(v_k,0), e^{2i\mathcal{H}^\prime t}\sigma^\beta(h_m,0)\}_{+}\nonumber \:\:\: \because \:\:\:\:\text{Lemma}\:\ref{lem:sigma_t_dep} \\ 
&= [\sigma^\alpha(v_k,0), e^{2i\mathcal{H}^\prime t}]\sigma^\beta(h_m,0)+ e^{2i\mathcal{H}^\prime t}\{\sigma^\alpha(v_k,0), \sigma^\beta(h_m, 0)\}_+\nonumber \\
&= [\sigma^\alpha(v_k,0), e^{2i\mathcal{H}^\prime t}]\sigma^\beta(h_m,0)+ 2e^{2i\mathcal{H}^\prime t}\sigma^\alpha(v_k,0)\sigma^\beta(h_m, 0) \nonumber \\
&= -2i\: Sin(2W^k_m t)\:\det\begin{pmatrix} \delta_{\alpha z} & \delta_{\alpha \beta} & \delta_{\alpha \gamma} \\ 1 & \delta_{z \beta} & \delta_{z \gamma} \\
\delta_{\chi z} & \delta_{\chi \beta} & \delta_{\chi \gamma}\end{pmatrix}\mathcal{\hat{F}}\:\sigma^\chi(v_k,0)\sigma^\gamma(h_m,0) + 2e^{2i\mathcal{H}^\prime t}\sigma^\alpha(v_k,0)\sigma^\beta(h_m, 0)\nonumber \\ 
&\:\:\:\because \:\:\:\:\text{Lemma}\:\ref{lem:Comm_t_dep}(1)
\end{aligned}$}
\end{enumerate}
\end{proof}

\begin{lemma}\label{lemma_Diff_eqn}
As defined in Lemma \ref{lem:Comm_t_dep}, if $\hat{\Theta}_{\alpha, \beta}(t)= [\sigma^\alpha(v_k,0), \sigma^\beta(h_m, t)]$ \\ and 
$\hat{\Phi}_{\alpha, \beta}(t) = \{\sigma^\alpha(v_k,0), \sigma^\beta(h_m, t)\}_{+}$, then $\forall \:\: {\alpha, \beta} \in \{0,x,y,z\}$ , $\hat{\Theta}_{\alpha, \beta}(t)$ and $\hat{\Phi}_{\alpha, \beta}(t)$ satisfies the following operator differential equations:
\begin{enumerate}
\item{
$\begin{aligned}[t]
&\frac{1}{2}(\frac{\partial^2 \hat{\Theta}_{\alpha, \beta}(t)^\dagger \hat{\Theta}_{\alpha, \beta}(t) }{\partial t^2})\nonumber = \delta_{\alpha \in \{x,y\}}\delta_{\beta \in \{x,y\}}(4W^k_m)^2 (\mathcal{I}_{2\times2}-\frac{\hat{\Theta}_{\alpha, \beta}(t)^\dagger \hat{\Theta}_{\alpha, \beta}(t)}{2})  
\end{aligned}$
}
\item{
$\begin{aligned}[t]
&\frac{\partial^2 \hat{\Phi}_{\alpha, \beta}(t)^\dagger \hat{\Theta}_{\alpha, \beta}(t)}{\partial t^2} = \delta_{\alpha \in \{x,y\}}
\delta_{\beta \in \{x,y\}}(4W^k_m)^2 \hat{\Phi}_{\alpha, \beta}(t)^\dagger \hat{\Theta}_{\alpha, \beta}(t)  
\end{aligned}$
}
\end{enumerate}
\end{lemma}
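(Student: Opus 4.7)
The plan is to reduce both identities to explicit closed-form trigonometric expressions by substituting the formulas of Lemma~\ref{lem:Comm_t_dep}, and then differentiating twice. Observe first that the Kronecker prefactor $\delta_{\alpha\in\{x,y\}}\delta_{\beta\in\{x,y\}}$ on the right-hand sides is automatic: the determinant appearing in Lemma~\ref{lem:Comm_t_dep} vanishes unless $\alpha,\beta\in\{x,y\}$, in which case $\hat{\Theta}_{\alpha,\beta}(t)\equiv 0$ and both identities collapse to $0=0$. I therefore restrict the remaining argument to $\alpha,\beta\in\{x,y\}$.

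For part~(1), I would substitute $\hat{\Theta}_{\alpha,\beta}(t)=-2i\sin(2W^k_m t)\,D_{\alpha\beta}\,\mathcal{\hat{F}}\,\sigma^\chi(v_k,0)\sigma^\gamma(h_m,0)$, with $D_{\alpha\beta}=\pm 1$ the determinant of Kronecker deltas, into $\hat{\Theta}^\dagger\hat{\Theta}$. Unitarity of $\mathcal{\hat{F}}$ collapses $\mathcal{\hat{F}}^\dagger\mathcal{\hat{F}}$ to $\mathcal{I}$, while the Pauli squares $(\sigma^\chi)^2=(\sigma^\gamma)^2=\mathcal{I}$ reduce the product to the scalar identity
\begin{equation*}
\hat{\Theta}^\dagger\hat{\Theta}=4\sin^2(2W^k_m t)\,\mathcal{I}=2\bigl(1-\cos(4W^k_m t)\bigr)\mathcal{I}.
\end{equation*}
Two differentiations in $t$ give $32(W^k_m)^2\cos(4W^k_m t)\,\mathcal{I}$, and rearranging via $\cos(4W^k_m t)=1-2\sin^2(2W^k_m t)$ reproduces the right-hand side of claim~(1) after halving.

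For part~(2), I would exploit Hermiticity $\hat{\Phi}^\dagger=\hat{\Phi}$ (the anti-commutator of two Hermitian operators is Hermitian) to rewrite $\hat{\Phi}^\dagger\hat{\Theta}$ as $\hat{\Phi}\hat{\Theta}$ and then expand. Using $(\sigma^\alpha)^2=\mathcal{I}$ together with $(\sigma^\beta(h_m,t))^2=\mathcal{I}$ (unitary conjugation preserves the Pauli square), the two cross terms $\mp\mathcal{I}$ cancel, leaving $\hat{\Phi}\hat{\Theta}=U-U^\dagger$ with $U=\sigma^\alpha(v_k,0)\,\sigma^\beta(h_m,t)\,\sigma^\alpha(v_k,0)\,\sigma^\beta(h_m,t)$. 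To evaluate $U$, I would invoke Lemma~\ref{lem:sigma_t_dep} to substitute $\sigma^\beta(h_m,t)=e^{2i\mathcal{H}'t}\sigma^\beta(h_m,0)$ and decompose $\mathcal{H}'=M+N$, where $M=(b_m+\sum_{l\ne k}W^l_m\sigma^z(v_l,0))\sigma^z(h_m,0)$ commutes with $\sigma^\alpha(v_k,0)$ while $N=W^k_m\sigma^z(v_k,0)\sigma^z(h_m,0)$ anti-commutes with it; moreover both $M$ and $N$ anti-commute with $\sigma^\beta(h_m,0)$. Commuting $\sigma^\alpha(v_k,0)$ through $e^{2i\mathcal{H}'t}$ inverts only the $N$-piece, while commuting $\sigma^\beta(h_m,0)$ through inverts the entire exponential, so the four factors telescope to $U=e^{-4iW^k_m\sigma^z(v_k,0)\sigma^z(h_m,0)t}$, and hence
\begin{equation*}
\hat{\Phi}^\dagger\hat{\Theta}=-2i\sin(4W^k_m t)\,\sigma^z(v_k,0)\sigma^z(h_m,0).
\end{equation*}
Differentiating twice then reproduces the claimed ODE at angular frequency $4W^k_m$.

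The principal technical hurdle will be the commutation bookkeeping that leads to $U=e^{-4iW^k_m\sigma^z(v_k,0)\sigma^z(h_m,0)t}$: every exponential factor in $e^{2i\mathcal{H}'t}$ and every bare Pauli must be tested for (anti)commutation with each neighboring operator, and the four surviving exponentials telescope cleanly only once the $M/N$ decomposition is in place. After that algebraic simplification, both parts reduce to elementary trigonometric differentiation.
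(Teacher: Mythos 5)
Your overall strategy is sound, and for part~(1) you do essentially what the paper does: plug the closed form of $\hat{\Theta}_{\alpha,\beta}(t)$ from Lemma~\ref{lem:Comm_t_dep} into $\hat{\Theta}^\dagger\hat{\Theta}$, cancel $\mathcal{\hat{F}}^\dagger\mathcal{\hat{F}}$, use $(\sigma^\chi)^2=(\sigma^\gamma)^2=\mathcal{I}$ to reach $4\sin^2(2W^k_m t)\,\mathcal{I}$, and differentiate. Your observation that the Levi-Civita/determinant prefactor forces $\hat{\Theta}\equiv 0$ unless $\alpha,\beta\in\{x,y\}$, so both identities become $0=0$, is a clean way to dispose of the $\delta_{\alpha\in\{x,y\}}\delta_{\beta\in\{x,y\}}$ factors at the outset.

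For part~(2) your route is genuinely different from the paper's, and simpler. The paper substitutes the two-term decomposition $\hat{\Phi}_{\alpha,\beta}(t)=[\sigma^\alpha,e^{2i\mathcal{H}'t}]\sigma^\beta+2e^{2i\mathcal{H}'t}\sigma^\alpha\sigma^\beta$ from Lemma~\ref{lem:Comm_t_dep}(2) into $\hat{\Phi}^\dagger\hat{\Theta}$, expands everything with Levi-Civita contractions, and shows that the piece parallel to $\hat{\Theta}^\dagger\hat{\Theta}$ cancels internally, leaving $-2i\sigma^z(v_k,0)\sigma^z(h_m,0)\sin(4W^k_m t)$. You instead observe $\hat{\Phi}^\dagger=\hat{\Phi}$, note $\hat{\Phi}\hat{\Theta}=\{A,B\}[A,B]=ABAB-BABA=U-U^\dagger$ once $(\sigma^\alpha)^2=(\sigma^\beta(t))^2=\mathcal{I}$ cancels the cross terms, and then evaluate $U$ directly by passing $\sigma^\alpha$ and $\sigma^\beta$ through the exponential via the $M/N$ split of $\mathcal{H}'$, telescoping to $e^{-4iW^k_m\sigma^z(v_k,0)\sigma^z(h_m,0)t}$. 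This is tighter: it avoids the Levi-Civita bookkeeping entirely and makes the origin of the frequency $4W^k_m$ and the $\sigma^z\sigma^z$ structure transparent (only the $N=W^k_m\sigma^z(v_k)\sigma^z(h_m)$ piece of $\mathcal{H}'$ survives the cancellations). Both routes land on the same scalar form $\hat{\Phi}^\dagger\hat{\Theta}=-2i\sigma^z(v_k,0)\sigma^z(h_m,0)\sin(4W^k_m t)$.

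One thing worth flagging explicitly rather than glossing over with ``reproduces the claimed ODE'': twice differentiating $-2i\sigma^z\sigma^z\sin(4W^k_m t)$ gives $\frac{\partial^2}{\partial t^2}\hat{\Phi}^\dagger\hat{\Theta}=-(4W^k_m)^2\hat{\Phi}^\dagger\hat{\Theta}$, i.e.\ with a \emph{minus} sign. That agrees with what the paper's own proof derives on its final line, so the $+(4W^k_m)^2$ appearing in the lemma statement must be a sign typo; the $y''=+\omega^2 y$ version the statement literally asserts is a different ODE (exponential, not oscillatory) and is inconsistent with both derivations. Since you get the correct intermediate expression, your argument is fine, but you should make the sign explicit so it is clear that you, like the paper's own proof, are actually establishing the $-(4W^k_m)^2$ version.
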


\begin{proof}
$\newline$

\begin{enumerate}
\item Using products like $\hat{\Theta}_{\alpha, \beta}(t)^\dagger \hat{\Theta}_{\alpha, \beta}(t)$ ensures the unitary operator $\mathcal{F}$ cancels. The satisfaction of (1) can thereafter be verified explicitly as:\\\\
$\begin{aligned}[t]
&\frac{1}{2}(\frac{\partial^2 \hat{\Theta}_{\alpha, \beta}(t)^\dagger \hat{\Theta}_{\alpha, \beta}(t) }{\partial t^2})\nonumber \\
&= 2\delta_{\alpha \in \{x,y\}}
\delta_{\beta \in \{x,y\}}\nonumber\:\:(\sigma^\chi(v_k,0)\sigma^\gamma(h_m,0))^2\:\:\frac{\partial^2 (Sin^2(2W^k_m t)\mathcal{F^\dagger F})}{\partial t^2}\:\:\mathcal{I}_{2\times2}\nonumber \:\:\:\:\:\:\because \:\:\:\: Lemma \:\:\ref{lem:Comm_t_dep}(1) \\
& = \delta_{\alpha \in \{x,y\}}
\delta_{\beta \in \{x,y\}}(4W^k_m) \frac{\partial Sin(4W^k_m t)}{\partial t}\mathcal{I}_{2\times2}\nonumber \\
& = \delta_{\alpha \in \{x,y\}}
\delta_{\beta \in \{x,y\}}(4W^k_m)^2 (1-2Sin^2(4W^k_m t))\mathcal{I}_{2\times2} \nonumber \\
& = \delta_{\alpha \in \{x,y\}}
\delta_{\beta \in \{x,y\}}(4W^k_m)^2 (\mathcal{I}_{2\times2}-\frac{\hat{\Theta}_{\alpha, \beta}(t)^\dagger \hat{\Theta}_{\alpha, \beta}(t)}{2}) \nonumber \\
\end{aligned}$
$\newline$

\item Similarly (2) above can be verified explicitly too as follows \\
\begin{align}
 &\frac{1}{2}(\frac{\partial^2 \hat{\Phi}_{\alpha, \beta}(t)^\dagger \hat{\Theta}_{\alpha, \beta}(t)}{\partial t^2}) \label{eq:der2_phi_theta_a} \\
 &= \frac{1}{2}(\frac{\partial^2}{\partial t^2}(\hat{\Theta}_{\alpha, \beta}^\dagger(t)\hat{\Theta}_{\alpha, \beta}(t) + 2\sigma^\alpha(v_k,0)\sigma^\alpha(h_m,0)e^{-2i\mathcal{H}^\prime t}\hat{\Theta}_{\alpha, \beta}(t))) \:\:\:\:\: \because Lemma  \:\:\ref{lem:Comm_t_dep}(2) \nonumber \\
 &= \frac{1}{2}(\frac{\partial^2}{\partial t^2}(\hat{\Theta}_{\alpha, \beta}^\dagger(t)\hat{\Theta}_{\alpha, \beta}(t))) + \sigma^\alpha(v_k,0)\sigma^\alpha(h_m,0) (\frac{\partial^2}{\partial t^2}e^{-2i\mathcal{H}^\prime t}\hat{\Theta}_{\alpha, \beta}(t)) \nonumber \\
 &= \frac{1}{2}(\frac{\partial^2}{\partial t^2}(\hat{\Theta}_{\alpha, \beta}^\dagger(t)\hat{\Theta}_{\alpha, \beta}(t))) -2i\epsilon_{\alpha z\chi}\epsilon_{z\beta\gamma}\sigma^\alpha(v_k,0)\sigma^\alpha(h_m,0)\nonumber \\
 &\:\:\:\:\:\:\:\:\:\:\frac{\partial^2}{\partial t^2}(e^{-2i\mathcal{H}^\prime t}\mathcal{F}\sigma^\chi(v_k,0)\sigma^\gamma(h_m,0)Sin(2W^k_m t)) \:\:\:\:\:\:\:\:\:\:\:see \:\:Lemma \:\:\ref{lem:Comm_t_dep}(1)\:\: for\:\: \mathcal{F}\nonumber \\
 &=\frac{1}{2}(\frac{\partial^2}{\partial t^2}(\hat{\Theta}_{\alpha, \beta}^\dagger(t)\hat{\Theta}_{\alpha, \beta}(t)))- 2i\epsilon_{\alpha z\chi}\epsilon_{z\beta\gamma}\sigma^\alpha(v_k,0)\sigma^\alpha(h_m,0)\nonumber \\
 &\:\:\:\:\:\:\:\:\:\:\frac{\partial^2}{\partial t^2}(\sigma^\chi(v_k,0)\sigma^\gamma(h_m,0)Cos(W^k_m t)Sin(2W^k_m t) \nonumber + \epsilon_{z\chi\omega}\epsilon_{z\gamma\eta}\sigma^\omega(v_k,0)\sigma^\eta(h_m,0)Sin^2(2W^k_m t) ) \nonumber \\
 &=\frac{1}{2}(\frac{\partial^2}{\partial t^2}(\hat{\Theta}_{\alpha, \beta}^\dagger(t)\hat{\Theta}_{\alpha, \beta}(t)))- 2i\epsilon_{\alpha z\chi}\epsilon_{z\beta\gamma}\sigma^\alpha(v_k,0)\sigma^\alpha(h_m,0)\nonumber \\
 &\:\:\:\:\:\frac{\partial^2}{\partial t^2}(\sigma^\chi(v_k,0)\sigma^\gamma(h_m,0)Cos(W^k_m t)Sin(2W^k_m t)) \nonumber - 
 \frac{1}{2}(\frac{\partial^2}{\partial t^2}(\hat{\Theta}_{\alpha, \beta}^\dagger(t)\hat{\Theta}_{\alpha, \beta}(t)))\nonumber \\
 &=- 2i\epsilon_{\alpha z\chi}\epsilon_{z\beta\gamma}\sigma^\alpha(v_k,0)\sigma^\alpha(h_m,0)\frac{\partial^2}{\partial t^2}(\sigma^\chi(v_k,0)\sigma^\gamma(h_m,0)Cos(W^k_m t)Sin(2W^k_m t)) \nonumber \\
 &=2i \epsilon_{\alpha z\chi}\epsilon_{z\beta\gamma}\epsilon_{\alpha \chi\kappa}\epsilon_{\beta\gamma\omega}\:\:\frac{\partial^2}{\partial t^2}(\sigma^\kappa(v_k,0)\sigma^\omega(h_m,0)Cos(W^k_m t)Sin(2W^k_m t)) \nonumber \\
 &=-2i\frac{\partial^2}{\partial t^2}(\sigma^z(v_k,0)\sigma^z(h_m,0)Cos(W^k_m t)Sin(2W^k_m t)) \nonumber \\
 &=\frac{1}{2}\frac{\partial^2}{\partial t^2}(-2i\sigma^z(v_k,0)\sigma^z(h_m,0)Sin(4W^k_m t)) \label{eq:phi_t_b} \\
 &= -(4W^k_m)^2 \hat{\Phi}_{\alpha, \beta}(t)^\dagger \hat{\Theta}_{\alpha, \beta}(t) \nonumber\\\nonumber
\end{align}
where the last equality follows from  Eq.\ref{eq:der2_phi_theta_a} and Eq.\ref{eq:phi_t_b} which shows $\frac{1}{2}(\frac{\partial^2 \hat{\Phi}_{\alpha, \beta}(t)^\dagger \hat{\Theta}_{\alpha, \beta}(t)}{\partial t^2}) = \frac{1}{2}\frac{\partial^2}{\partial t^2}(-2i\sigma^z(v_k,0)\sigma^z(h_m,0)Sin(4W^k_m t))$ thereby implying $\hat{\Phi}_{\alpha, \beta}(t)^\dagger \hat{\Theta}_{\alpha, \beta}(t) \\= -2i\sigma^z(v_k,0)\sigma^z(h_m,0)Sin(4W^k_m t)$
 
\end{enumerate} 
\end{proof}

\newpage
\section{Invariants of motion-Proof of Theorem 1 in main manuscript}\label{thm_1_proof}
We are now in a position to prove the assertions of Theorem 1 in the main manuscript
$\forall \alpha, \beta \in \{x,y\}$\\

1) To establish the assertions of Theorem 1 (1) we need to prove the following primitive lemmas first for completeness

\begin{lemma}
\label{pauli_inv}
$\sigma^\alpha f(\sigma^z) (\sigma^\alpha)^\dagger = f(-\sigma^z) \:\:\:\:\: \forall \alpha \in \{x,y\}$
\begin{proof}
\begin{align}
&\sigma^\alpha \sigma^z (\sigma^\alpha)^\dagger \qquad \qquad  \nonumber\\
&= i \sigma^\alpha \epsilon_{z \alpha k} \sigma^k = i^2 \epsilon_{z \alpha k} \epsilon_{\alpha k \omega} \sigma^\omega = i^2 \delta_{\omega z} \sigma^\omega = - \sigma^z \:\:\:\:\: \because(\alpha \in \{x,y\}) \nonumber
\end{align}
\begin{align}
&\sigma^\alpha f(\sigma^z) (\sigma^\alpha)^\dagger \nonumber\\
&=\sigma^\alpha [\beta_0 + \beta_1 \sigma^z + \beta_2 (\sigma^z)^2 + \cdots ] (\sigma^\alpha)^\dagger \nonumber\\
&=\beta_0 \sigma^\alpha (\sigma^\alpha)^\dagger + \beta_1 \sigma^\alpha \sigma^z (\sigma^\alpha)^\dagger + \beta_2 [\sigma^\alpha \sigma^z (\sigma^\alpha)^\dagger]^2 + \cdots \nonumber \\
&=\beta_0 -\beta_1 \sigma^z + \beta_2 (\sigma^z)^2 - \beta_3 (\sigma^z)^3 + \cdots \nonumber \\
&= f(-\sigma^z) \nonumber
\end{align}
\end{proof}
\end{lemma}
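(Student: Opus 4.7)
The plan is to prove the identity in two stages: first I would dispatch the base case $f(\sigma^z)=\sigma^z$, and then bootstrap to arbitrary $f$ admitting a power-series expansion using multiplicativity of conjugation. For the base case, the cleanest route is to invoke the anticommutation of distinct Pauli matrices, $\{\sigma^\alpha,\sigma^z\}=0$ for $\alpha\in\{x,y\}$, together with Hermiticity $(\sigma^\alpha)^\dagger=\sigma^\alpha$ and $(\sigma^\alpha)^2=\mathcal{I}$, which immediately yield $\sigma^\alpha\sigma^z(\sigma^\alpha)^\dagger=-\sigma^z(\sigma^\alpha)^2=-\sigma^z$. The Levi-Civita computation written in the statement body (using $\sigma^\alpha\sigma^z=i\epsilon_{z\alpha k}\sigma^k$ twice and contracting the two epsilons to $\delta_{\omega z}$) is a notation-compatible restatement of the same fact, so either derivation may be chosen for presentation.

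Next I would promote the base case to monomials $(\sigma^z)^n$ by inserting the identity $\mathcal{I}=(\sigma^\alpha)^\dagger\sigma^\alpha$ between consecutive factors of $\sigma^z$ inside the conjugation sandwich; this telescopes to
\[
\sigma^\alpha(\sigma^z)^n(\sigma^\alpha)^\dagger=\bigl[\sigma^\alpha\sigma^z(\sigma^\alpha)^\dagger\bigr]^{n}=(-\sigma^z)^n=(-1)^n(\sigma^z)^n,
\]
valid for every nonnegative integer $n$, with the $n=0$ case reducing to $(\sigma^\alpha)(\sigma^\alpha)^\dagger=\mathcal{I}$.

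Finally, for any $f$ admitting a (formal or convergent) power-series representation $f(x)=\sum_n\beta_n x^n$, linearity of the conjugation map $X\mapsto \sigma^\alpha X(\sigma^\alpha)^\dagger$ gives
\[
\sigma^\alpha f(\sigma^z)(\sigma^\alpha)^\dagger=\sum_n\beta_n\,\sigma^\alpha(\sigma^z)^n(\sigma^\alpha)^\dagger=\sum_n\beta_n(-1)^n(\sigma^z)^n=f(-\sigma^z),
\]
which is the claimed identity. I do not foresee any genuine obstacle; the only point deserving a line of care is justifying the extension from monomials to general $f$, but this is automatic for the exponential, hyperbolic, and trigonometric functions of $\sigma^z$ that appear downstream in the OTOC manipulations, since on the two-dimensional operator algebra generated by $\sigma^z$ every such Taylor series is absolutely convergent.
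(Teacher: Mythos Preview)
Your proposal is correct and follows essentially the same approach as the paper: establish $\sigma^\alpha\sigma^z(\sigma^\alpha)^\dagger=-\sigma^z$ (the paper via the Levi-Civita contraction you also mention, you via anticommutation), then extend to monomials by multiplicativity of conjugation and to general $f$ by power-series linearity. Your write-up is slightly more explicit about the telescoping step and the convergence caveat, but structurally it matches the paper's argument.
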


\begin{lemma}
\label{tr_non_diag}
For a diagonal matrix A and an off-diagonal matrix B, $Tr(AB) = 0$\\
Given: $B^k_i=0\quad$ $\forall$ $\;i=k$ and 
$A^k_i=a_i \delta^k_i$
\begin{proof}
\begin{align}
&Tr(AB)=\sum_i (AB)^i_i = \sum_i A^i_k B^k_i = \sum_i a_i \delta^i_k B^k_i =\sum a_i B^i_i = 0 \nonumber
\end{align}
\end{proof}
\end{lemma}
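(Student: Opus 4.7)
The plan is to expand the trace in index notation and use the two hypotheses sequentially to collapse the double sum down to zero. This is a one-line calculation in essence, so the proposal is just to lay out the order in which the two structural assumptions on $A$ and $B$ are applied.

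First I would unwind the definition of the trace, writing $Tr(AB) = \sum_i (AB)^i_i$, and then expand the matrix product as $(AB)^i_i = \sum_k A^i_k B^k_i$. This produces a double sum over indices $i$ and $k$, which is the natural object on which to impose the diagonality of $A$.

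Next I would substitute the hypothesis $A^i_k = a_i \delta^i_k$. The Kronecker delta collapses the inner sum over $k$, leaving only the term $k = i$, so the expression reduces to $\sum_i a_i B^i_i$. At this stage the structure of $A$ has been fully used, and the remaining sum is over the diagonal entries of $B$ alone.

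Finally I would invoke the off-diagonal hypothesis on $B$, namely $B^k_i = 0$ whenever $i = k$. Every surviving summand $B^i_i$ is therefore zero, and the total sum vanishes. There is no genuine obstacle here: the main care needed is simply to be consistent with index conventions (upper versus lower placement) and to apply the Kronecker delta before the vanishing condition on $B$, since reversing the order would obscure nothing but would not streamline the argument either.
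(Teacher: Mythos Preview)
Your proposal is correct and matches the paper's proof essentially line for line: expand the trace as $\sum_i \sum_k A^i_k B^k_i$, use $A^i_k = a_i \delta^i_k$ to collapse to $\sum_i a_i B^i_i$, and then invoke $B^i_i = 0$. There is no meaningful difference in approach or presentation.
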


\begin{theorem}\label{Theorem_1_proof1} (Theorem \ref{lem1}(1) in main manuscript)
For a given parameter vector $\vec{X}$, one can define $\mathcal{H}(\vec{X}, \vec{v}, \vec{h})$ (see Eq.\ref{eq: Ising_energy}) and a thermal state $\rho_{th}(\vec{X}, \vec{v}, \vec{h})$. Let us thereafter define the following OTOC with $U_1(0) = \tilde{\sigma_\alpha}= \sigma^\alpha(v_k,0)-\kappa_1\mathcal{I}$, and operator $U_2(0) = \tilde{\sigma_\beta}=\sigma^\beta(h_m,0) -\kappa_2\mathcal{I}$ and the generator $H_{\textit{otoc}} = \mathcal{H}(\vec{X}, \vec{v}, \vec{h})$ (in Eq.\ref{eq:C_t_gen} in main manuscript)
$\:\:\forall \:\: \{\alpha, \beta \} \in \{x,y\}$. 

\begin{align}
C_{\sigma^\alpha, \sigma^\beta}(\kappa_1, \kappa_2,\vec{X}, t) &= \langle \tilde{\sigma}^\alpha(v_k,0) \tilde{\sigma}^\beta(h_m,t)\tilde{\sigma}^\alpha(v_k,0)\tilde{\sigma}^\beta(h_m,t)\rangle\label{eq:C_t_rbm_s1}
\end{align}
Note that $\{\kappa_1, \kappa_2\} \in \mathcal{C}^2$ are arbitrary user-defined mean translations. Also $\langle \cdot \rangle$ indicates averaging over the thermal state $\rho_{th}(\vec{X}, \vec{v}, \vec{h})$ which activates the $\vec{X}$ dependence. Using \ref{eq:C_t_rbm}, one can then make the following statements:
For $(\kappa_1, \kappa_2)$ $\in$ $\mathcal{C}^2$ \\
$$C_{\sigma^\alpha,\sigma^\beta} (\kappa_1,\kappa_2,\vec{X},t)=
\langle \tilde{\sigma}^\alpha(v_k,0) \tilde{\sigma}^\beta(h_m,t) \tilde{\sigma}^\alpha(v_k,0) \tilde{\sigma}^\beta(h_m,t) \rangle = C_{\sigma^\alpha,\sigma^\beta} (0,0,\vec{X},t) + |\kappa_1|^2 |\kappa_2|^2 + |\kappa_2|^2 + |\kappa_1|^2$$
where $\alpha$, $\beta$ $\in \{x,y\}$

\begin{proof}
Let us define $\tilde{U}_1(0) = U_1(0) - \kappa_1 \mathcal{I}$ and $\tilde{U}_2(0)= U_2(0) - \kappa_2 \mathcal{I}$ where $U_1(0)$ and $U_2(0)$ are two unitary operators as defined in the main manuscript

\begin{align} \label{scaled_OTOC}
&\langle \tilde{U}_1(0)^\dagger \tilde{U}_2(t)^\dagger \tilde{U}_1(0) \tilde{U}_2(t) \rangle \nonumber\\
&=\langle U_1(0)^\dagger U_2(t)^\dagger U_1(0) U_2(t) \rangle + |\kappa_1|^2 |\kappa_2|^2 \nonumber \\
&\quad -\kappa_2 \langle U_1(0)^\dagger U_2(t)^\dagger U_1(0) \rangle  
-\kappa_1 \langle U_1(0)^\dagger U_2(t)^\dagger U_2(t) \rangle \nonumber \\
&\quad -\kappa_2^* \langle U_1(0)^\dagger U_1(0) U_2(t) \rangle  
-\kappa_1^* \langle U_2(t)^\dagger U_1(0) U_2(t) \rangle \nonumber \\
&\quad +|\kappa_2|^2 \langle U_1^\dagger(0) U_1(0) \rangle + |\kappa_1|^2 \langle U_2^\dagger(t) U_2(t) \rangle +\kappa_1^* \kappa_2 \langle U_2^\dagger(t) U_1(0) \rangle \nonumber \\
&\quad +\kappa_1 \kappa_2^* \langle U_1^\dagger(0) U_2(t) \rangle  +\kappa_1^* \kappa_2^* \langle U_1(0) U_2(t) \rangle +\kappa_1 \kappa_2 \langle U_1^\dagger(0) U_2^\dagger(t) \rangle \nonumber \\
&\quad -\kappa_1 |\kappa_2|^2 \langle U_1(0)^\dagger \rangle -|\kappa_1|^2 \kappa_2 \langle U_2(t)^\dagger \rangle -\kappa_1^* |\kappa_2|^2 \langle U_1(0) \rangle -|\kappa_1|^2 \kappa_2^\dagger \langle U_2(t) \rangle
\end{align}

Let us now substitute $U_1(0) = \sigma^\alpha(v_k,0)$ and $U_2(t) = \sigma^\beta(h_m,t)$ in Eq. \ref{scaled_OTOC}

By definition,
\begin{align}\label{key_term}
\langle \sigma^\alpha(v_k,0) \sigma^\beta(h_m,t)  \sigma^\alpha(v_k,0) \sigma^\beta(h_m,t) \rangle &= C_{\sigma^\alpha,\sigma^\beta} (0,0,\vec{X},t) 
\end{align}
Also one can show the following:

\begin{enumerate}[label=\alph*)]

\item 
\begin{align}\label{lemma_term1}
&\langle \sigma^\alpha(v_k,0) \sigma^\beta(h_m,t) \sigma^\alpha(v_k,0) \rangle \nonumber\\
&= Tr(\rho_{th}(\vec{X}, v_k, h_m) \sigma^\alpha(v_k,0) \sigma^\beta(h_m,t) \sigma^\alpha(v_k,0) ) \nonumber\\
&= Tr(\sigma^\alpha(v_k,0) \rho_{th}(\vec{X}, v_k, h_m) \sigma^\alpha(v_k,0) \sigma^\beta(h_m,t)) \nonumber\\
&= Tr(\rho_{th}(\vec{X}, -v_k, h_m) e^{i\mathcal{H}t} \sigma^\beta(h_m,0) e^{-i\mathcal{H}t}) \qquad \qquad See \;\; Lemma \;\; \ref{pauli_inv} \:\: and \:\:Lemma \:\:\ref{lem:sigma_t_dep}\nonumber\\
&= Tr(e^{-i\mathcal{H}t} \rho_{th}(\vec{X}, -v_k, h_m) e^{i\mathcal{H}t} \sigma^\beta(h_m,0)) \nonumber\\
&= Tr(\rho_{th}(\vec{X}, -v_k, h_m) \sigma^\beta(h_m,0)) \qquad \qquad \because \text{$e^{i\mathcal{H}t}$ \& $\rho$ are diagonal} \nonumber\\
&=0 \qquad \qquad \qquad See \;\; Lemma \;\; \ref{tr_non_diag}
\end{align}

\item 
\begin{align}\label{lemma_term2}
&\langle \sigma^\beta(h_m,t) \sigma^\alpha(v_k,0) \sigma^\beta(h_m,t) \rangle \nonumber\\
&= Tr(\rho_{th}(\vec{X}, v_k, h_m) \sigma^\beta(h_m,t) \sigma^\alpha(v_k,0) \sigma^\beta(h_m,t)) \nonumber\\
&= Tr(\sigma^\beta(h_m,t) \rho_{th}(\vec{X}, v_k, h_m) \sigma^\beta(h_m,t) \sigma^\alpha(v_k,0)) \nonumber\\
&= Tr(e^{i\mathcal{H}t} \sigma^\beta(h_m,0) e^{-i\mathcal{H}t} \rho_{th}(\vec{X}, v_k, h_m) e^{i\mathcal{H}t} \sigma^\beta(h_m,0) e^{-i\mathcal{H}t} \sigma^\alpha(v_k,0)) \nonumber\\
&= Tr(e^{i\mathcal{H}t} \sigma^\beta(h_m,0) \rho_{th}(\vec{X}, v_k, h_m) \sigma^\beta(h_m,0) e^{-i\mathcal{H}t} \sigma^\alpha(v_k,0)) \qquad \because \text{$e^{i\mathcal{H}t}$ \& $\rho$ are diagonal} \nonumber\\ 
&= Tr(e^{i\mathcal{H}t} \rho_{th}(\vec{X}, v_k, -h_m) e^{-i\mathcal{H}t} \sigma^\alpha(v_k,0)) \qquad \qquad See \;\; Lemma \;\; \ref{pauli_inv} \nonumber\\ 
&= Tr(\rho_{th}(\vec{X}, v_k, -h_m) \sigma^\alpha(v_k,0)) \qquad \qquad \because \text{$e^{i\mathcal{H}t}$ \& $\rho$ are diagonal} \nonumber\\ 
&=0 \qquad \qquad \qquad See \;\; Lemma \;\; \ref{tr_non_diag}
\end{align}

\item 
\begin{align}\label{lemma_term3}
&\langle \sigma^\alpha(v_k,0) \sigma^\beta(h_m,t) \sigma^\beta(h_m,t) \rangle \nonumber \\
&= \langle \sigma^\alpha(v_k,0) \rangle \nonumber\\
&= Tr(\rho_{th}(\vec{X}, v_k, h_m) \sigma^\alpha(v_k,0) \nonumber\\
&=0 \qquad \qquad See \;\; Lemma \;\; \ref{tr_non_diag}
\end{align}

\item
\begin{align}\label{lemma_term4}
&\langle \sigma^\alpha(v_k,0) \sigma^\alpha(v_k,0) \sigma^\beta(h_m,t) \rangle = \langle \sigma^\beta(h_m,t) \rangle \nonumber\\
&= Tr(\rho_{th}(\vec{X}, v_k, h_m) \sigma^\beta(h_m,t)) \nonumber\\
&=0 \qquad \qquad See \;\; Lemma \;\; \ref{tr_non_diag}
\end{align}

\item
\begin{align}\label{lemma_term5}
&\langle \sigma^\beta(h_m,t) \sigma^\alpha(v_k,0) \rangle = Tr(\rho_{th}(\vec{X}, v_k, h_m) \sigma^\beta(h_m,t) \sigma^\alpha(v_k,0)) \nonumber\\
&= Tr(\rho_{th}(\vec{X}, v_k, h_m) e^{2i\mathcal{H}'t}\sigma^\beta(h_m,0) \sigma^\alpha(v_k,0)) \qquad \qquad \text{See Lemma \ref{lem:sigma_t_dep}} \nonumber\\
&=0 \qquad \qquad See \;\; Lemma \;\; \ref{tr_non_diag} 
\end{align}

Here $\rho e^{2i\mathcal{H}'t}$ is diagonal, $\sigma^\beta(h_m,0) \sigma^\alpha(v_k,0)$ is off-diagonal\\

Similarly we can show $\langle \sigma^\alpha(v_k,0) \sigma^\beta(h_m,t)\rangle = 0\\$ 

$\langle \sigma^\alpha(v_k,0)\rangle = \langle \sigma^\beta(h_m,t)\rangle = 0 \qquad \qquad$ See Lemma \ref{tr_non_diag}

Substituting the results of Eq.\ref{key_term}, \ref{lemma_term1},\ref{lemma_term2},\ref{lemma_term3},\ref{lemma_term4},\ref{lemma_term5}  in Eq.\ref{scaled_OTOC} establishes the claim of the theorem
\end{enumerate}
\end{proof}
\end{theorem}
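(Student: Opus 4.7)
The plan is to establish the two parts of Theorem~\ref{lem1} sequentially, exploiting two structural facts: $\rho_{th}$ is diagonal in the $\sigma^z$ computational basis while $\sigma^\alpha, \sigma^\beta$ with $\alpha,\beta\in\{x,y\}$ are off-diagonal there, and by Lemma~\ref{lem:sigma_t_dep} one has $\sigma^\beta(h_m,t) = e^{2i\mathcal H' t}\sigma^\beta(h_m,0)$ where $\mathcal H' = (b_m + \sum_l W^l_m \sigma^z(v_l))\sigma^z(h_m)$ is a polynomial in mutually commuting $\sigma^z$'s.

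For Part (1), I would expand the shifted OTOC $\langle \tilde\sigma^{\alpha\dagger}\tilde\sigma^{\beta\dagger}(t)\tilde\sigma^\alpha\tilde\sigma^\beta(t)\rangle$ into its sixteen constituent terms by distributing $\tilde\sigma^\alpha = \sigma^\alpha(v_k,0) - \kappa_1 \mathcal I$ and $\tilde\sigma^\beta(t) = \sigma^\beta(h_m,t) - \kappa_2\mathcal I$. Two short lemmas do all the work: (i) $\sigma^\alpha f(\sigma^z)\sigma^\alpha = f(-\sigma^z)$ for $\alpha\in\{x,y\}$, so that any $\sigma^z$-diagonal operator remains diagonal after conjugation; and (ii) $\mathrm{Tr}(AB)=0$ when $A$ is diagonal and $B$ is off-diagonal in that basis. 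Since the non-trivial part of $\sigma^\beta(h_m,t)$ is the single $\sigma^\beta(h_m,0)$ factor appended to a diagonal exponential, these lemmas kill every expansion term that carries an odd total count of Pauli operators on site $v_k$ or on site $h_m$. Exactly four terms survive, and their $\kappa$-prefactors (involving $\kappa_i \kappa_i^*$ from the daggers) collapse to the advertised $|\kappa_1|^2|\kappa_2|^2 + |\kappa_1|^2 + |\kappa_2|^2$ correction. The only obstacle here is a parity audit over sixteen terms, which is mechanical.

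For Part (2), I would first derive the closed form announced in the corollary mentioned in the main text, namely that $C_{\sigma^\alpha,\sigma^\beta}(0,0,\vec X,\tau)$ is a linear combination of $\cos\tau$ and $\sin\tau$ with $\tau=4W^k_m t$. Substituting $\sigma^\beta(h_m,t) = e^{2i\mathcal H' t}\sigma^\beta(h_m,0)$ and using the two conjugation identities $\sigma^\alpha(v_k,0)\mathcal H'\sigma^\alpha(v_k,0) = \mathcal H' - 2W^k_m\sigma^z(v_k)\sigma^z(h_m)$ (only the $W^k_m$ coupling sees the sign flip) and $\sigma^\beta(h_m,0)\mathcal H'\sigma^\beta(h_m,0) = -\mathcal H'$ (every term of $\mathcal H'$ carries $\sigma^z(h_m)$), the four-operator string collapses into a single exponential $e^{-4iW^k_m\sigma^z(v_k)\sigma^z(h_m)t}$, since all intermediate exponents are mutually commuting polynomials in $\sigma^z$'s. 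Euler expanding and taking the thermal average yields $\cos\tau + i\langle\sigma^z(v_k)\sigma^z(h_m)\rangle_{\rho_{th}}\sin\tau$.

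With this closed form in hand, both the real and imaginary parts of $C$ satisfy the harmonic equation $\ddot\xi + \xi = 0$. Conservation of $I_1$ and $I_2$ then follows immediately by direct differentiation:
\begin{align*}
\dot I_1 &= -2\ddot\xi\cos\tau + 2\dot\xi\sin\tau - 2\dot\xi\sin\tau - 2\xi\cos\tau = -2(\ddot\xi + \xi)\cos\tau = 0, \\
\dot I_2 &= -2\ddot\xi\sin\tau - 2\dot\xi\cos\tau + 2\dot\xi\cos\tau - 2\xi\sin\tau = -2(\ddot\xi + \xi)\sin\tau = 0.
\end{align*}
The principal obstacle of the whole program is the operator-algebra reduction producing the closed form of the OTOC; once that is in hand, both parts of the theorem reduce to short calculations.
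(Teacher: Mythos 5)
For the theorem actually in question (Theorem~\ref{lem1}(1), the shift‐translation identity), your argument is essentially the paper's: expand $\langle\tilde U_1^\dagger \tilde U_2^\dagger(t)\tilde U_1\tilde U_2(t)\rangle$ into its sixteen terms and observe that, because $\rho_{th}$ and $e^{2i\mathcal H' t}$ are $\sigma^z$-diagonal while any odd product of $\sigma^\alpha(v_k)$ or $\sigma^\beta(h_m)$ with $\alpha,\beta\in\{x,y\}$ is off-diagonal at its site, only the four even-parity terms survive, yielding $C(0,0,\vec X,t)+|\kappa_1|^2|\kappa_2|^2+|\kappa_1|^2+|\kappa_2|^2$. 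Your ``parity audit'' is just a cleaner packaging of the paper's term-by-term applications of Lemmas~\ref{pauli_inv}, \ref{lem:sigma_t_dep}, and \ref{tr_non_diag}.

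Your proposal also sketches Part~(2) of Theorem~\ref{lem1}, and there you do depart from the paper. The paper establishes the invariance of $I_1,I_2$ first, by constructing operator differential equations for $\hat\Theta^\dagger\hat\Theta$ and $\hat\Phi^\dagger\hat\Theta$ (Lemma~\ref{lemma_Diff_eqn}) and verifying $\dot I_{r1}=\dot I_{i1}=0$ directly; the closed form $C(0,0,\vec X,\tau)$ is then obtained only as a corollary. You instead collapse the four-operator string algebraically to the single diagonal exponential $e^{-4iW^k_m\sigma^z(v_k)\sigma^z(h_m)t}$ by conjugating $\mathcal H'$ through $\sigma^\alpha(v_k)$ and $\sigma^\beta(h_m)$, read off the closed form immediately, note that $\xi$ therefore obeys $\ddot\xi+\xi=0$, and get the invariance in two lines. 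This is a more economical route and bypasses the paper's operator-level second-derivative lemmas entirely; what it loses is the intermediate operator identities that the paper later reuses. One small caution: your collapse gives $\langle e^{-4iW^k_m\sigma^z(v_k)\sigma^z(h_m)t}\rangle=\cos\tau - i\langle\sigma^z(v_k)\sigma^z(h_m)\rangle\sin\tau$, i.e. opposite sign on the imaginary part to Eq.~\ref{eq:C_t_form} as printed, though consistent with the paper's own intermediate $\hat\Phi^\dagger\hat\Theta = -2i\,\sigma^z(v_k)\sigma^z(h_m)\sin(4W^k_m t)$; that sign has no effect on the invariance statement or the later use of $|\eta(\vec X)|$, but it is worth flagging if you publish the closed form.
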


2) Now we establish the assertions of Theorem \ref{lem1}(2) in main manuscript

\begin{theorem}\label{Theorem_1_proof} (Theorem \ref{lem1}(2) in main manuscript)
For a given parameter vector $\vec{X}$, one can define $\mathcal{H}(\vec{X}, \vec{v}, \vec{h})$ (see Eq.\ref{eq: Ising_energy}) and a thermal state $\rho_{th}(\vec{X}, \vec{v}, \vec{h})$. Let us thereafter define the following OTOC with $U_1(0) = \tilde{\sigma_\alpha}= \sigma^\alpha(v_k,0)-\kappa_1\mathcal{I}$, and operator $U_2(0) = \tilde{\sigma_\beta}=\sigma^\beta(h_m,0) -\kappa_2\mathcal{I}$ and the generator $H_{\textit{otoc}} = \mathcal{H}(\vec{X}, \vec{v}, \vec{h})$ (in Eq.\ref{eq:C_t_gen} in main manuscript)
$\:\:\forall \:\: \{\alpha, \beta \} \in \{x,y\}$. 

\begin{align}
C_{\sigma^\alpha, \sigma^\beta}(\kappa_1, \kappa_2,\vec{X}, t) &= \langle \tilde{\sigma}^\alpha(v_k,0) \tilde{\sigma}^\beta(h_m,t)\tilde{\sigma}^\alpha(v_k,0)\tilde{\sigma}^\beta(h_m,t)\rangle\label{eq:C_t_rbm_s2}
\end{align}
Note that $\{\kappa_1, \kappa_2\} \in \mathcal{C}^2$ are arbitrary user-defined mean translations. Also $\langle \cdot \rangle$ indicates averaging over the thermal state $\rho_{th}(\vec{X}, \vec{v}, \vec{h})$ which activates the $\vec{X}$ dependence. Using \ref{eq:C_t_rbm}, one can then show the following 
invariants of motion exists for $C_{\sigma^\alpha, \sigma^\beta}(0,0,\vec{X}, t)$ :
\begin{enumerate}
     {\item 
     $\begin{aligned}[t]
     &I_1=-2\dot{\xi}_{\sigma^\alpha,\sigma^\beta}(\vec{X},\tau)Cos(\tau) \nonumber - 2\xi_{\sigma^\alpha,\sigma^\beta}(\vec{X},\tau)Sin(\tau)
     \end{aligned}$
     }
     {\item 
     $\begin{aligned}[t]
     &I_2= -2\dot{\xi}_{\sigma^\alpha,\sigma^\beta}(\vec{X},\tau)Sin(\tau) \nonumber + 2\xi_{\sigma^\alpha,\sigma^\beta}(\vec{X},\tau)Cos(\tau)
     \end{aligned}$
     }
\end{enumerate}
where $\xi_{\sigma^\alpha,\sigma^\beta}(\vec{X},\tau)$ can either be the real or the imaginary part of $(C_{\sigma^\alpha,\sigma^\beta}(0,0,\vec{X},\tau))$ and  $\dot{\dottedsquare}$ is $\frac{\partial \dottedsquare}{\partial \tau} $ with $\tau=4W^k_m t$\\\\
\end{theorem}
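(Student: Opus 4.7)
The plan is to first pin down the explicit $\tau$-dependence of $C_{\sigma^\alpha,\sigma^\beta}(0,0,\vec{X},\tau)$ (this is the Corollary referred to as Eq.~\ref{eq:C_t_form} in the main text), and then treat the claimed invariants as conserved charges of a harmonic oscillator. Concretely, the first step is to evaluate the four-point string by inserting the Heisenberg form $\sigma^\beta(h_m,t)=e^{2i\mathcal{H}'t}\sigma^\beta(h_m,0)$ from Lemma~\ref{lem:sigma_t_dep}, commuting the exponentials past $\sigma^\alpha(v_k,0)$ via the relation in Lemma~\ref{pauli_inv} (which flips the sign of the $\sigma^z(v_k,0)$-dependent pieces of $\mathcal{H}'$), and exploiting the diagonal-versus-off-diagonal trace vanishing of Lemma~\ref{tr_non_diag} to discard every term that does not reduce to a function of $\sigma^z(v_k,0)\sigma^z(h_m,0)$ under the thermal average. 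The only surviving contributions collapse to $\cos\tau$ and $i\,\langle\sigma^z(v_k,0)\sigma^z(h_m,0)\rangle_{\rho_{th}(\vec{X})}\sin\tau$ with $\tau=4W^k_m t$, giving Eq.~\ref{eq:C_t_form}.

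Once that identity is in hand, the second step is almost automatic. Both $C_{r\sigma^\alpha,\sigma^\beta}=\cos\tau$ and $C_{i\sigma^\alpha,\sigma^\beta}=\langle\sigma^z(v_k,0)\sigma^z(h_m,0)\rangle_{\rho_{th}(\vec{X})}\sin\tau$ individually satisfy the same linear ODE $\ddot{\xi}_{\sigma^\alpha,\sigma^\beta}+\xi_{\sigma^\alpha,\sigma^\beta}=0$, since $\vec{X}$ is held fixed during the dynamics and therefore the prefactor $\langle\sigma^z(v_k,0)\sigma^z(h_m,0)\rangle_{\rho_{th}(\vec{X})}$ is $\tau$-independent. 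Thus either choice of $\xi_{\sigma^\alpha,\sigma^\beta}(\vec{X},\tau)$ admits an expansion $\xi=A(\vec{X})\cos\tau+B(\vec{X})\sin\tau$ in which $A$ and $B$ are conserved functionals of the solution.

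The third step is to verify by direct differentiation that $I_1$ and $I_2$ reproduce precisely these two conserved amplitudes. Substituting the general harmonic solution into
\begin{align*}
I_1 &= -2\dot{\xi}\cos\tau - 2\xi\sin\tau, \\
I_2 &= -2\dot{\xi}\sin\tau + 2\xi\cos\tau,
\end{align*}
and using $\cos^2\tau+\sin^2\tau=1$, one obtains $I_1=-2B(\vec{X})$ and $I_2=2A(\vec{X})$, neither of which depends on $\tau$. Equivalently, one can differentiate $I_1$ and $I_2$ with respect to $\tau$ and insert $\ddot{\xi}=-\xi$ to see that the time derivative vanishes identically, which suffices as a proof of invariance without committing to the explicit expansion.

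The only genuinely delicate step is the first one, namely extracting the closed form of $C_{\sigma^\alpha,\sigma^\beta}(0,0,\vec{X},\tau)$: one must keep careful track of how the z-rotations generated by $\mathcal{H}'$ conjugate the transverse Pauli $\sigma^\alpha(v_k,0)$, cancel the contributions from the $W^l_m$ with $l\neq k$ and from $b_m$, and verify that the angular frequency in $\tau$ is exactly unity after the rescaling $\tau=4W^k_m t$. Once that bookkeeping is complete the invariants follow from the elementary structure of the harmonic oscillator, and the generators that realize these invariants as Lie algebraic symmetries of the solution manifold (Section~\ref{Lie-algebra_OTOC_section}) become a direct consequence of the same one-parameter family of shifts in $A(\vec{X})$ and $B(\vec{X})$.
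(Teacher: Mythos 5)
Your proof is correct but reorders the paper's logic. You propose to derive the closed form of $C_{\sigma^\alpha,\sigma^\beta}(0,0,\vec{X},\tau)$ first (the statement the paper postpones to Corollary~\ref{soln_C_t}), observe that both its real and imaginary parts obey $\ddot{\xi}+\xi=0$, and then get $\dot{I}_1=\dot{I}_2=0$ essentially for free. The paper proceeds in the opposite direction: it establishes at the operator level that $\hat{\Theta}^\dagger_{\alpha,\beta}\hat{\Theta}_{\alpha,\beta}$ and $\hat{\Phi}^\dagger_{\alpha,\beta}\hat{\Theta}_{\alpha,\beta}$ satisfy harmonic operator ODEs (Lemma~\ref{lemma_Diff_eqn}), proves $\dot{I}_1=\dot{I}_2=0$ directly by inserting the representations $C_r=\langle \mathcal{I}-\tfrac{1}{2}\hat{\Theta}^\dagger\hat{\Theta}\rangle$ and $C_i=-\tfrac{i}{2}\langle\hat{\Phi}^\dagger\hat{\Theta}\rangle$ into those ODEs, and only afterward uses the invariants together with the initial conditions Eqs.~\ref{eq:C_r_t0}--\ref{eq:Cdot_i_t0} to recover the closed form in Corollary~\ref{soln_C_t}. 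Both routes are sound and ultimately rest on the same algebraic bookkeeping contained in Lemmas~\ref{lem:sigma_t_dep}--\ref{lemma_Diff_eqn}: indeed $\hat{\Phi}^\dagger\hat{\Theta}=-2i\,\sigma^z(v_k,0)\sigma^z(h_m,0)\sin(4W^k_m t)$ and $\mathcal{I}-\tfrac{1}{2}\hat{\Theta}^\dagger\hat{\Theta}=\cos(4W^k_m t)\,\mathcal{I}$, which gives you the closed form you need without any appeal to the invariants. Your approach is more elementary once that closed form is displayed; the paper's buys the conceptual advantage of extracting conserved quantities from the differential structure alone, in a Noether-like fashion, without first solving the equation of motion. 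The only thing to tighten is that your ``delicate step'' of obtaining the closed form is sketched rather than executed; if you wish to avoid the operator ODE route entirely, you should explicitly display the reduction of $\hat{\Theta}^\dagger\hat{\Theta}$ and $\hat{\Phi}^\dagger\hat{\Theta}$ to scalars times $\mathcal{I}$ and $\sigma^z\sigma^z$ so that the $\cos\tau$ and $\sin\tau$ coefficients are unambiguous.
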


\underline{\textbf{Real part}} \\
For the real part of $C_{\sigma^\alpha,\sigma^\beta}(0,0,\vec{X},\tau)$ we substitute in Theorem \ref{Theorem_1_proof} (1)

\begin{align}
\xi_{\sigma^\alpha,\sigma^\beta}(\vec{X},\tau) = Re(C_{\sigma^\alpha,\sigma^\beta}(0,0,\vec{X},\tau)) =C_{r,\sigma^\alpha,\sigma^\beta}(0,0,\vec{X},\tau)
\end{align}

Thus we have
\begin{enumerate}[label=\alph*)]
\item
{
\begin{align}
\mathunderline{black} {I_{r1} = -2\dot{C}_{r,\sigma^\alpha,\sigma^\beta}(0,0 \vec{X},\tau)Cos(\tau)- 2C_{r,\sigma^\alpha,\sigma^\beta}(0,0,\vec{X},\tau)Sin(\tau)} \label{eq:IV1} 
\end{align}

\begin{proof}
$\newline$

To prove $I_{r1}$ as an invariant, we have to show that $\dot{I_{r1}} = 0$. Using the definition of $C_{r,\sigma^\alpha, \sigma^\beta}(0,0,\vec{X},\tau)$ as 
\begin{align}
C_{r\sigma^\alpha, \sigma^\beta}(0,0,\vec{X},\tau) = \langle \mathcal{I}_{2 \times 2}- \frac{1}{2}\hat{\Theta}^\dagger_{\alpha, \beta}(\tau)\hat{\Theta}_{\alpha, \beta}(\tau) \rangle_{\rho_{th}} 
\end{align}
where $\hat{\Theta}_{\alpha, \beta}(t)=[\sigma^\alpha(v_k,0), \sigma^\beta(h_m, t)]$ (see Lemma \ref{lem:Comm_t_dep}) one can show the following\\

$\begin{aligned}[t]
\dot{I_{r1}} &= \langle\frac{\partial^2 \hat{\Theta}^\dagger_{\alpha, \beta}(\tau)\hat{\Theta}_{\alpha, \beta}(\tau)}{\partial \tau^2}Cos(\tau) + \frac{\partial \hat{\Theta}^\dagger_{\alpha, \beta}(\tau)\hat{\Theta}_{\alpha, \beta}(\tau)}{\partial \tau}Sin(\tau) - \frac{\partial \hat{\Theta}^\dagger_{\alpha, \beta}(\tau)\hat{\Theta}_{\alpha, \beta}(\tau)}{\partial t}Sin(\tau) - 2(\mathcal{I}_{2\times 2}- \frac{1}{2}\hat{\Theta}^\dagger_{\alpha, \beta}(\tau)\hat{\Theta}_{\alpha, \beta}(\tau))Cos(\tau)\nonumber\rangle_{\rho_{th}} \\
& = \langle\frac{\partial^2 \hat{\Theta}^\dagger_{\alpha, \beta}(\tau)\hat{\Theta}_{\alpha, \beta}(\tau)}{\partial \tau^2}Cos(\tau) \nonumber - 2(\mathcal{I}_{2 \times 2}- \frac{1}{2}\hat{\Theta}^\dagger_{\alpha, \beta}(t)\hat{\Theta}_{\alpha, \beta}(\tau))Cos(\tau)\rangle_{\rho_{th}} \nonumber \\
& = 2\langle\frac{1}{2}\frac{\partial^2 \hat{\Theta}^\dagger_{\alpha, \beta}(\tau)\hat{\Theta}_{\alpha, \beta}(\tau)}{\partial \tau^2}- (\mathcal{I}_{2 \times 2}- \frac{1}{2}\hat{\Theta}^\dagger_{\alpha, \beta}(t)\hat{\Theta}_{\alpha, \beta}(\tau))\rangle_{\rho_{th}} Cos(\tau) \nonumber \\
& = \frac{2}{(4W^k_m)^2}\langle\frac{1}{2}\frac{\partial^2 \hat{\Theta}^\dagger_{\alpha, \beta}(t)\hat{\Theta}_{\alpha, \beta}(t)}{\partial t^2} - (4W^k_m)^2(\mathcal{I}_{2 \times 2}- \frac{1}{2}\hat{\Theta}^\dagger_{\alpha, \beta}(t)\hat{\Theta}_{\alpha, \beta}(t))\rangle_{\rho_{th}} Cos(4W^k_m t) \nonumber \:\:\:\:\:\:\:\:\:\because \:\:\tau\: = \:4W^k_m t\\
&= 0  \:\:\:\:\:\:\:\:\:\: \because \:\: Lemma\:\:\:\: \ref{lemma_Diff_eqn}(1) \:\: \forall \:\: \alpha, \beta \in \{x,y\}
\end{aligned}$
\end{proof}}
$\newline$
\item
{For this one we have to substitute 
in in Theorem \ref{Theorem_1_proof} (2)
\begin{align}
\xi_{\sigma^\alpha,\sigma^\beta}(\vec{X},\tau) = Re(C_{\sigma^\alpha,\sigma^\beta}(0,0,\vec{X},\tau)) =C_{r,\sigma^\alpha,\sigma^\beta}(0,0,\vec{X},\tau)
\end{align}
After substitution, we have 
\begin{align}
\mathunderline{black} {I_{r2} = -2\dot{C}_{r,\sigma^\alpha,\sigma^\beta}(0,0 \vec{X},\tau)Sin(\tau) + 2C_{r,\sigma^\alpha,\sigma^\beta}(0,0,\vec{X},\tau)Cos(\tau)} \label{eq:IV2}
\end{align}

\begin{proof}
$\newline$
As before to prove invariance one has to show $\dot{I_{r2}}=0$. Similar to Eq.\ref{eq:IV1}, Eq.\ref{eq:IV2} can either be verified explicitly or by noting $I_{r2}=I_{r1}(\tau \to \tau - \frac{\pi}{2})$. The invariance of $I_{r2}$ thereafter follows from the invariance of $I_{r1}$ proven above 
\end{proof}}
\end{enumerate}
$\newline$
$\newline$

\newpage
\underline{\textbf{Imag part}} \\
For the imaginary part of $C_{\sigma^\alpha,\sigma^\beta}(0,0,\vec{X},\tau)$ we substitute in Theorem \ref{Theorem_1_proof} (1)
\begin{align}
\xi_{\sigma^\alpha,\sigma^\beta}(\vec{X},\tau) = Im(C_{\sigma^\alpha,\sigma^\beta}(0,0,\vec{X},\tau)) =C_{i,\sigma^\alpha,\sigma^\beta}(0,0,\vec{X},\tau)
\end{align}

Thus we have now \\
\begin{enumerate}[label=\alph*)]
\item{
\begin{align}
\mathunderline{black} {I_{i1}= -2\dot{C}_{i,\sigma^\alpha,\sigma^\beta}(0,0 \vec{X},\tau)Cos(\tau) - 2C_{i,\sigma^\alpha,\sigma^\beta}(0,0,\vec{X},\tau)Sin(\tau)}\label{eq:IV3}
\end{align}

\begin{proof}
To prove $I_{i1}$ as an invariant, we have to show that $\dot{I_{i1}} = 0$. Using the definition of $C_{i\sigma^\alpha, \sigma^\beta}(0,0,\vec{X},\tau)$ as 
\begin{align}
C_{i\sigma^\alpha, \sigma^\beta}(0,0,\vec{X},\tau) = -\frac{i}{2}\langle \hat{\Phi}^\dagger_{\alpha, \beta}(t)\hat{\Theta}_{\alpha, \beta}(\tau) \rangle_{\rho_{th}} 
\end{align}
where $\hat{\Phi}_{\alpha, \beta}(t)=\{\sigma^\alpha(v_k,0), \sigma^\beta(h_m, t)\}_+$ (see Lemma \ref{lem:Comm_t_dep}), one can show the following\\

$\begin{aligned}[t]
\dot{I_{i1}} &= i\langle\frac{\partial^2 \hat{\Phi}^\dagger_{\alpha, \beta}(\tau)\hat{\Theta}_{\alpha, \beta}(\tau)}{\partial \tau^2}Cos(\tau) - \frac{\partial \hat{\Phi}^\dagger_{\alpha, \beta}(\tau)\hat{\Theta}_{\alpha, \beta}(\tau)}{\partial \tau}Sin(\tau) + \frac{\partial \hat{\Phi}^\dagger_{\alpha, \beta}(\tau)\hat{\Theta}_{\alpha, \beta}(t)}{\partial \tau}Sin(\tau) +\hat{\Phi}^\dagger_{\alpha, \beta}(\tau)\hat{\Theta}_{\alpha, \beta}(\tau)Cos(\tau)\nonumber\rangle_{\rho_{th}} \\
& = i\langle\frac{\partial^2 \hat{\Phi}^\dagger_{\alpha, \beta}(\tau)\hat{\Theta}_{\alpha, \beta}(\tau)}{\partial \tau^2}Cos(\tau) + \hat{\Phi}^\dagger_{\alpha, \beta}(\tau)\hat{\Theta}_{\alpha, \beta}(\tau)Cos(\tau)\nonumber\rangle_{\rho_{th}}\nonumber \\
& = \frac{i}{(4W^k_m)^2}\langle\frac{\partial^2 \hat{\Phi}^\dagger_{\alpha, \beta}(t)\hat{\Theta}_{\alpha, \beta}(t)}{\partial t^2}Cos(4W^k_m t) \nonumber +(4W^k_m)^2\hat{\Phi}^\dagger_{\alpha, \beta}(t)\hat{\Theta}_{\alpha, \beta}(t)Cos(4W^k_mt)\nonumber\rangle_{\rho_{th}} \:\:\:\:\:\because \:\:\tau\: = \:4W^k_m t \nonumber \\
&= 0  \:\:\:\:\:\:\: \because \:\: Lemma \:\:\:\:\ref{lemma_Diff_eqn}(2) \:\: \forall \:\: \alpha, \beta\in \{x,y\}
\end{aligned}$
\end{proof}}

\item{ For this one we substitute in Theorem \ref{Theorem_1_proof} (2)
\begin{align}
\xi_{\sigma^\alpha,\sigma^\beta}(\vec{X},\tau) = Im(C_{\sigma^\alpha,\sigma^\beta}(0,0,\vec{X},\tau)) =C_{i,\sigma^\alpha,\sigma^\beta}(0,0,\vec{X},\tau)
\end{align}
Thus after substitution we have 
\begin{align}
\mathunderline{black} {I_{i2} = -2\dot{C}_{i,\sigma^\alpha,\sigma^\beta}(0,0 \vec{X},\tau)Sin(\tau) + 2C_{i,\sigma^\alpha,\sigma^\beta}(0,0,\vec{X},\tau)Cos(\tau)}  \label{eq:IV4}
\end{align}
\begin{proof}
$\newline$
Similar to Eq.\ref{eq:IV3}, Eq.\ref{eq:IV4} can either be verified explicitly or by noting $I_{i2}=I_{i1}(\tau \to \tau - \frac{\pi}{2})$. The invariance of $I_{i2}$ thereafter follows from the invariance of $I_{i1}$ proven above 
\end{proof}}

\end{enumerate}
\newpage

Using the definitions for $\Theta_{\alpha, \beta, \tau} = [\sigma^\alpha(v_k, 0), \sigma^\beta(h_m,t)]$ and  $\Phi_{\alpha, \beta, \tau} = \{\sigma^\alpha(v_k, 0), \sigma^\beta(h_m,t)\}_+$ as given in Lemma \ref{lem:Comm_t_dep}, one can enlist the initial conditions satisfied by $C_{r\sigma^\alpha, \sigma^\beta}(0,0,\vec{X},\tau)$ and $C_{r\sigma^\alpha, \sigma^\beta}(0,0,\vec{X},\tau)$ as follows:

\begin{align}
C_{r\sigma^\alpha, \sigma^\beta}(0,0,\vec{X},0) &= \langle \mathcal{I}_{2 \times 2}- \frac{1}{2}\hat{\Theta}^\dagger_{\alpha, \beta}(0)\hat{\Theta}_{\alpha, \beta}(0) \rangle_{\rho_{th}} = 1 \label{eq:C_r_t0}\\
&(\because \:\:\:\hat{\Theta}_{\alpha, \beta}(0) \:\:\:= 0)\nonumber\\
\dot{C}_{r\sigma^\alpha, \sigma^\beta}(0,0,\vec{X},0) &= -\frac{1}{2}\langle\frac{\partial}{\partial \tau}\hat{\Theta}^\dagger_{\alpha, \beta}(\tau)\hat{\Theta}_{\alpha, \beta}(\tau) \rangle_{\rho_{th}}|_0 = 0 \label{eq:Cdot_r_t0}\\
&(\because \:\:\:\hat{\Theta}_{\alpha, \beta}(0) \:\:\:= \:\hat{\Theta}^\dagger_{\alpha, \beta}(0)=0)\nonumber\\
C_{i\sigma^\alpha, \sigma^\beta}(0,0,\vec{X},0) &= -\frac{i}{2}\langle \hat{\Phi}^\dagger_{\alpha, \beta}(0)\hat{\Theta}_{\alpha, \beta}(0) \rangle_{\rho_{th}} = 0 \label{eq:C_i_t0}\\
&(\because \:\:\:\hat{\Theta}_{\alpha, \beta}(0) \:\:\:= 0)\nonumber\\
\dot{C}_{i\sigma^\alpha, \sigma^\beta}(0,0,\vec{X},0) &= -\frac{i}{2}\langle\frac{\partial}{\partial \tau}\hat{\Phi}^\dagger_{\alpha, \beta}(\tau)\hat{\Theta}_{\alpha, \beta}(\tau) \rangle_{\rho_{th}}|_0 \nonumber \\
&= \langle\frac{\partial}{\partial \tau} \sigma^z(v_k,0)\sigma^z(h_m,0)Sin(4W^k_m t) \rangle_{\rho_{th}}|_0 \nonumber \\
&= \langle \sigma^z(v_k,0)\sigma^z(h_m,0) \rangle_{\rho_{th}} \label{eq:Cdot_i_t0}\\
&(\:\:see \:\:\: Lemma \:\:\ref{lem:Comm_t_dep}(2)\:) \nonumber
\end{align}
Using Eq.\ref{eq:C_r_t0}, \ref{eq:Cdot_r_t0}, \ref{eq:C_i_t0}, \ref{eq:Cdot_i_t0} one can establish the following corollary to Theorem 1 in the main manuscript

\begin{corollary}\label{soln_C_t}
The following statements are true
\begin{itemize}
\item {
$\begin{aligned}[t]
C_{r\sigma^\alpha, \sigma^\beta}(0,0,\vec{X},\tau) &= Cos(4W^k_m t) 
\end{aligned}$
$\newline$
\begin{proof}
$\newline$
$\begin{aligned}[t]
C_{r\sigma^\alpha, \sigma^\beta}(0,0,\vec{X},\tau) &= \frac{I_{r2}}{2}Cos(\tau) + \frac{I_{r1}}{2} Sin(\tau)\:\:\:\:\: (\:\:\because\:\: Eq.(\ref{eq:IV1}\times Sin(\tau) - Eq.(\ref{eq:IV2}\times Cos(\tau))\nonumber \\
&= Cos(\tau)   \:\:\:\:\:\:\:\:\:  (see \: Eq.\ref{eq:C_r_t0}, \ref{eq:Cdot_r_t0})\nonumber \\
&= Cos(4W^k_m t)  \:\:\:\:\:\because\:\: \tau = 4W^k_m t
\end{aligned}$
\end{proof}}

\item {
$\begin{aligned}[t]
C_{i\sigma^\alpha, \sigma^\beta}(0,0,\vec{X},\tau) &= \langle \sigma^z(v_k,0) \sigma^z(h_m,0) \rangle Sin(4W^k_m t) 
\end{aligned}$
$\newline$
\begin{proof}
$\newline$
$\begin{aligned}[t]
C_{i\sigma^\alpha, \sigma^\beta}(0,0,\vec{X},\tau) &= \frac{I_{i2}}{2}Cos(\tau) + \frac{I_{i1}}{2} Sin(\tau) \:\:\:\:\:(\:\:\because\:\: Eq.(\ref{eq:IV3}\times Sin(\tau) - Eq.(\ref{eq:IV4}\times Cos(\tau))\nonumber \\
&= \langle \sigma^z(v_k,0) \sigma^z(h_m,0) \rangle Sin(\tau)  \:\:\:\:\:\:\:\:\:  (see \:Eq.\ref{eq:C_i_t0}, \ref{eq:Cdot_i_t0})\nonumber \\
&= \langle \sigma^z(v_k,0) \sigma^z(h_m,0) \rangle Sin(4W^k_m t) \:\:\:\:\:(\because\:\: \tau = 4W^k_m t)
\end{aligned}$
\end{proof}}
\end{itemize}
\end{corollary}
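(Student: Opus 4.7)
The plan is to treat the two invariant equations in Theorem 1(2) as a linear system in the unknowns $\xi(\tau)$ and $\dot{\xi}(\tau)$, where $\xi$ is either $C_{r\sigma^\alpha,\sigma^\beta}(0,0,\vec{X},\tau)$ or $C_{i\sigma^\alpha,\sigma^\beta}(0,0,\vec{X},\tau)$, and invert it. Because $I_1$ and $I_2$ are conserved along $\tau$, their values are fixed once and for all by the initial data at $\tau=0$, which I would compute directly from the operator representations of $\hat{\Theta}$ and $\hat{\Phi}$ given in Lemma \ref{lem:Comm_t_dep}.

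First I would extract the four initial values. Since $\hat{\Theta}_{\alpha,\beta}(0)=[\sigma^\alpha(v_k,0),\sigma^\beta(h_m,0)]=0$ on disjoint sites, both $C_{r}(0)=\langle\mathcal{I}-\tfrac{1}{2}\hat{\Theta}^\dagger\hat{\Theta}\rangle_{\rho_{th}}|_{0}=1$ and $C_i(0)=-\tfrac{i}{2}\langle\hat{\Phi}^\dagger\hat{\Theta}\rangle_{\rho_{th}}|_{0}=0$ follow immediately, and $\dot{C}_r(0)=0$ by the product rule applied to a vanishing $\hat{\Theta}^\dagger\hat{\Theta}$. The only non-trivial datum is $\dot{C}_i(0)$; for this I would lean on the reduction $\hat{\Phi}^\dagger_{\alpha,\beta}\hat{\Theta}_{\alpha,\beta}=-2i\,\sigma^z(v_k,0)\sigma^z(h_m,0)\sin(4W^k_m t)$ that falls out of the Lemma \ref{lemma_Diff_eqn}(2) computation. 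Differentiating in $\tau=4W^k_m t$ and evaluating at the origin delivers $\dot{C}_i(0)=\langle\sigma^z(v_k,0)\sigma^z(h_m,0)\rangle_{\rho_{th}}$, up to the sign convention implicit in Eq.\ref{eq:Cdot_i_t0}.

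With those four numbers in hand, Step 2 is simply to substitute into the definitions of $I_1$ and $I_2$ at $\tau=0$. This gives $I_{r1}=0$, $I_{r2}=2$, $I_{i2}=0$, and $I_{i1}=-2\langle\sigma^z(v_k,0)\sigma^z(h_m,0)\rangle_{\rho_{th}}$, and by conservation these hold at every $\tau$. Step 3 is the algebraic inversion: forming $I_2\cos\tau-I_1\sin\tau$ eliminates the $\dot{\xi}$ terms and leaves $\xi(\tau)=\tfrac{1}{2}(I_2\cos\tau-I_1\sin\tau)$, which is the same relation the author writes (modulo a sign convention) as $\xi=\tfrac{I_2}{2}\cos\tau+\tfrac{I_1}{2}\sin\tau$. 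Plugging in the two sets of invariants produces $C_r(\tau)=\cos\tau$ and $C_i(\tau)=\langle\sigma^z(v_k,0)\sigma^z(h_m,0)\rangle_{\rho_{th}}\sin\tau$, and restoring $\tau=4W^k_m t$ recovers the corollary.

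The main obstacle I foresee is the clean identification of $\hat{\Phi}^\dagger\hat{\Theta}$ needed for $\dot{C}_i(0)$. A brute-force product rule at the origin on the two operators separately is fussier than it looks, because the only surviving contribution comes from the piece of $\hat{\Phi}(0)=2\sigma^\alpha(v_k,0)\sigma^\beta(h_m,0)$ paired with $\partial_\tau\hat{\Theta}|_0$; keeping track of the $\epsilon$-tensor contractions and the fact that only the $\sigma^z\sigma^z$ component survives averaging against the diagonal $\rho_{th}$ is where sign and prefactor slips are easy. Once this input is fixed, however, everything else is a two-by-two linear inversion against conserved quantities, and no further dynamical information is needed — the whole temporal profile of the OTOC is reconstructed from data at a single instant.
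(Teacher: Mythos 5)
Your proposal is correct and follows essentially the same route as the paper: you recognize that the two invariant relations from Theorem~\ref{lem1}(2) form a $2\times 2$ linear system in $(\xi,\dot\xi)$, eliminate $\dot\xi$ by the combination $I_{2}\cos\tau - I_{1}\sin\tau = 2\xi$, fix the constants $I_1, I_2$ from the initial data at $\tau=0$ (via Eqs.~\ref{eq:C_r_t0}--\ref{eq:Cdot_i_t0}), and then read off the full time profile from conservation. You also correctly flag that the paper's written inversion $\xi = \tfrac{I_2}{2}\cos\tau + \tfrac{I_1}{2}\sin\tau$ carries a sign slip relative to its own parenthetical ``$I_1\sin\tau - I_2\cos\tau$'' manipulation; the consistent form is $\xi = \tfrac{I_2}{2}\cos\tau - \tfrac{I_1}{2}\sin\tau$, which (with $I_{i1} = -2\langle\sigma^z\sigma^z\rangle$, $I_{i2}=0$) reproduces the stated $C_i(\tau) = \langle\sigma^z(v_k,0)\sigma^z(h_m,0)\rangle\sin\tau$. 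The one place you call out as delicate, namely $\dot C_i(0)$, is handled exactly as the paper does — by invoking the closed-form product $\hat\Phi^\dagger_{\alpha,\beta}\hat\Theta_{\alpha,\beta}$ obtained inside Lemma~\ref{lemma_Diff_eqn}(2), so no new ideas or gaps arise.
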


\begin{corollary}\label{Corr_2}
In addition to the above invariants, following are also the invariants of motion for $C_{r\sigma^\alpha, \sigma^\beta}(0,0,\vec{X},\tau)$ and $C_{i\sigma^\alpha, \sigma^\beta}(0,0,\vec{X},\tau)$

\begin{itemize}
\item {
$\begin{aligned}[t]
 &(\dot{C}_{r\sigma^\alpha, \sigma^\beta}(0,0,\vec{X},\tau)^2 - C_{r\sigma^\alpha, \sigma^\beta}(0,0,\vec{X},\tau)^2)Sin(2\tau) -2\dot{C}_{r\sigma^\alpha, \sigma^\beta}(0,0,\vec{X},\tau)C_{r\sigma^\alpha, \sigma^\beta}(0,0,\vec{X},\tau)Cos(2\tau)
\end{aligned}$
$\newline$
\begin{proof}
$\newline$
Can be verified through explicit evaluation using the solution $C_{r\sigma^\alpha, \sigma^\beta}(0,0,\vec{X},\tau)$ in Corollary \ref{soln_C_t}, or by combining the invariants $I_{r1}, I_{r1}$ in Eq.\ref{eq:IV1} and Eq.\ref{eq:IV2} as
$\frac{1}{2} I_{r1}I_{r2}$. This invariant is plotted  in Fig. \ref{Fig:Inv_OTOC_real} for the real part. Exactly similar profile for the invariant exists for the imaginary part too.
\end{proof}}

\item {
$\begin{aligned}[t]
 &(\dot{C}_{r\sigma^\alpha, \sigma^\beta}(0,0,\vec{X},\tau)^2 + C_{r\sigma^\alpha, \sigma^\beta}(0,0,\vec{X},\tau)^2) \nonumber  
\end{aligned}$
\begin{proof}
$\newline$
Can be verified through explicit evaluation using the solution $C_{r\sigma^\alpha, \sigma^\beta}(0,0,\vec{X},\tau)$ in Corollary \ref{soln_C_t}, or by combining the invariants $I_{r1}, I_{r1}$ in Eq.\ref{eq:IV1} and Eq.\ref{eq:IV2} as
$\frac{1}{4}(I_{r1}^2+I_{r2}^2)$. This invariant is plotted in Fig.\ref{Fig:Inv_OTOC}(a) in the main manuscript for the real and imaginary part. The profile for the invariant is different in two cases with the imaginary part being sensitive to the training process of the network $G$ unlike the real part and hence can be used to deliver meaningful insight about the learning dynamics. 
\end{proof}}
\end{itemize}
Exactly similar invariants can be obtained for $C_{r\sigma^\alpha, \sigma^\beta}(0,0,\vec{X},\tau)$ by substituting $C_{r\sigma^\alpha, \sigma^\beta}(0,0,\vec{X},\tau) \to C_{i\sigma^\alpha, \sigma^\beta}(0,0,\vec{X},\tau)$ in the above expressions.
\end{corollary}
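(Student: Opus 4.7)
\textbf{Proof proposal for Corollary \ref{Corr_2}.} My plan is to deduce both claimed invariants as simple algebraic combinations of the two elementary invariants $I_{r1}$ and $I_{r2}$ already established in Theorem \ref{Theorem_1_proof} (equivalently, Theorem \ref{lem1}(2) of the main text). Since any polynomial in conserved quantities is itself conserved, it follows immediately from $\dot{I}_{r1} = \dot{I}_{r2} = 0$ that $\tfrac{1}{2}I_{r1}I_{r2}$ and $\tfrac{1}{4}(I_{r1}^{2}+I_{r2}^{2})$ are also invariants of motion. The only remaining task is an algebraic identification of these combinations with the trigonometric expressions stated in the corollary.

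To carry out this identification, I would substitute the definitions
\begin{align*}
I_{r1} &= -2\dot{C}_{r}\cos\tau - 2 C_{r}\sin\tau, \\
I_{r2} &= -2\dot{C}_{r}\sin\tau + 2 C_{r}\cos\tau,
\end{align*}
where $C_{r}\equiv C_{r\sigma^{\alpha},\sigma^{\beta}}(0,0,\vec{X},\tau)$. For the product, forming $\tfrac{1}{2}I_{r1}I_{r2}$, grouping like terms, and invoking the double-angle identities $2\sin\tau\cos\tau=\sin(2\tau)$ and $\cos^{2}\tau-\sin^{2}\tau=\cos(2\tau)$ produces exactly $(\dot{C}_{r}^{2}-C_{r}^{2})\sin(2\tau) - 2\dot{C}_{r}C_{r}\cos(2\tau)$. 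For the sum of squares, expanding $\tfrac{1}{4}(I_{r1}^{2}+I_{r2}^{2})$ the $\sin\tau\cos\tau$ cross terms cancel and the Pythagorean identity collapses the remainder to $\dot{C}_{r}^{2}+C_{r}^{2}$, matching the statement.

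As a consistency check, I would substitute the closed form $C_{r}(\tau)=\cos\tau$ furnished by Corollary \ref{soln_C_t}. One finds $\dot{C}_{r}^{2}+C_{r}^{2}=\sin^{2}\tau+\cos^{2}\tau=1$, confirming both conservation and the geometric picture of a unit circle in the $(C_{r},\dot{C}_{r})$ phase plane depicted in Fig.\ \ref{Fig:Inv_OTOC}(a). The companion invariant evaluates trivially to zero on the same trajectory after the same trigonometric simplification, consistent with its being a quadratic combination that vanishes identically for the exact sinusoidal solution. The analogous statements for the imaginary part are obtained by the parallel calculation with $C_{r}\mapsto C_{i}$, using $C_{i}(\tau)=\langle\sigma^{z}(v_{k},0)\sigma^{z}(h_{m},0)\rangle_{\rho_{th}}\sin\tau$; the thermal prefactor is $\tau$-independent and hence factors out of every step.

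I do not foresee any real obstacle: the heavy lifting—establishing that $I_{r1},I_{r2}$ (and their imaginary counterparts) are genuinely conserved—was already completed in Theorem \ref{Theorem_1_proof} via Lemma \ref{lemma_Diff_eqn}. The one thing to keep an eye on is that the thermal average $\langle\cdot\rangle_{\rho_{th}(\vec{X})}$ commutes with $\partial_{\tau}$ and with the trigonometric regrouping (it depends only on $\vec{X}$, not on $\tau$), so differentiation and averaging may be interchanged freely. With that observation in hand, both invariants reduce to bookkeeping with elementary trig identities.
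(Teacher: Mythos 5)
Your proposal is correct and follows essentially the same route as the paper, which likewise identifies the two quantities as the algebraic combinations $\tfrac{1}{2}I_{r1}I_{r2}$ and $\tfrac{1}{4}(I_{r1}^{2}+I_{r2}^{2})$ and notes they can alternatively be checked by direct substitution of the closed-form solution from Corollary~\ref{soln_C_t}. You simply spell out the trigonometric bookkeeping and the sanity check in slightly more detail than the paper's terse statement.
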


\begin{figure}[!htb]
\centering \includegraphics[width=0.40\textwidth]{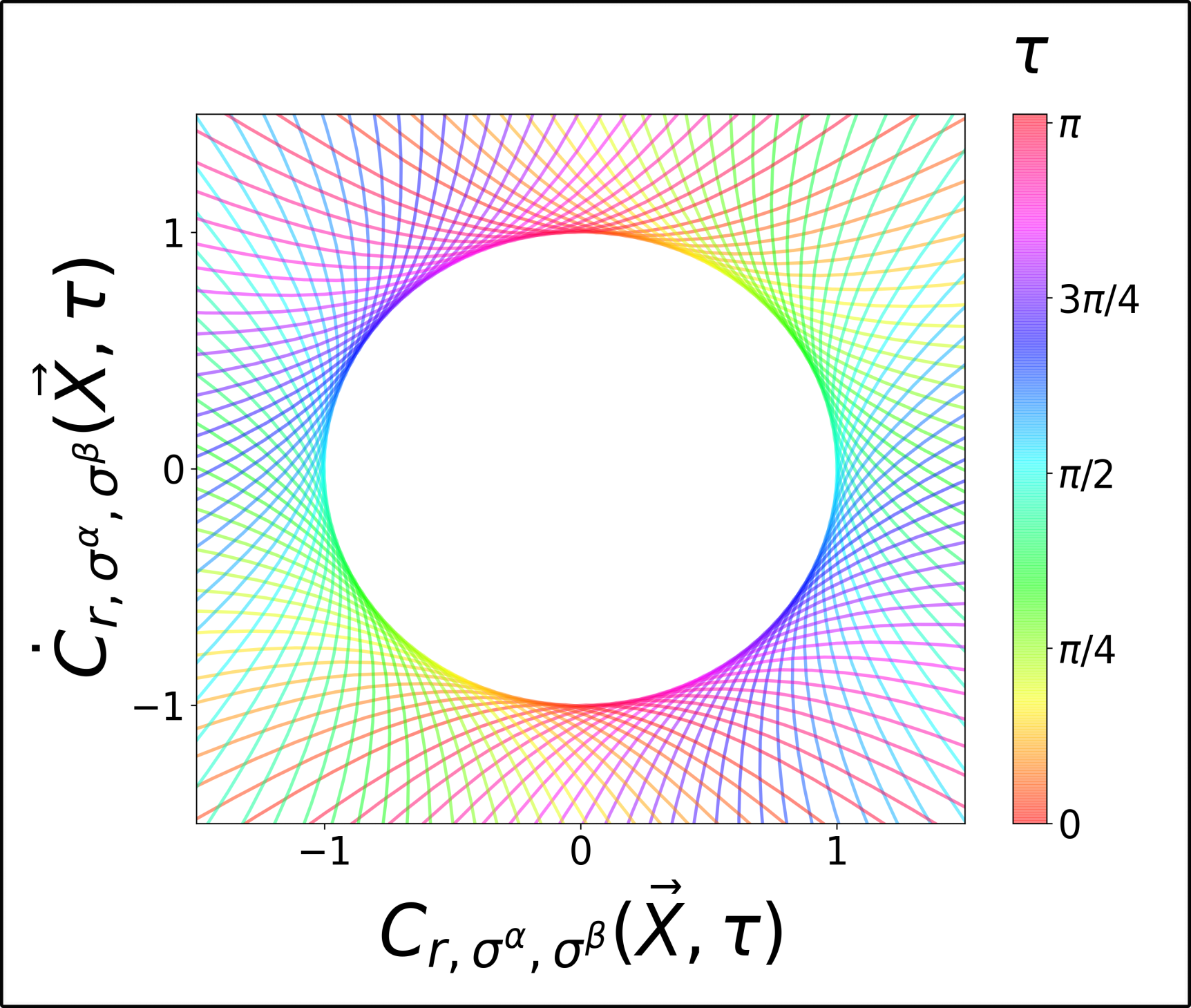}
\caption{The profile for the real part of the invariant $\frac{I_1I_2}{2}$ (see Section \ref{thm_1_proof}, Corollary \ref{Corr_2}), where  $C_{r, \sigma^\alpha, \sigma^\beta}(\vec{X}, \tau) = Re(C_{\sigma^\alpha, \sigma^\beta}(0,0,\vec{X}, \tau)) = \xi_{\sigma^\alpha, \sigma^\beta}(\vec{X},\tau)$ and $\alpha = \beta = x$ is substituted in Theorem 1(2) in main manuscript for a specific vis-hid neuron pair $(k,m)$ (See Eq.\ref{eq:C_t_rbm} in main manuscript).  Subscript `r' denotes the real part in the plot.
The quantity has been evaluated at $(\kappa_1=0,\kappa_2=0)$, hence the  explicit dependence on $(\kappa_1,\kappa_2)$ has been dropped for notational brevity. 
Each hyperbolic curve is the loci of points with the same fixed value of the invariant and a particular fixed value of $\tau = 4W^k_m t$ as the dependant variable (see colorbar). The loci of all points touching several hyperbolic curves with a fixed value of the invariant alone but wherein $\tau$ is continuously changed $\in$  $[0, \pi]$  forms a circle at the center. Since $\tau = 4 W^k_m t$, this change can be administered by changing real-time $(t)$ or the parameter vector $\vec{X}$ during training which affects $W^k_m$. Exactly similar profile exists for the imaginary part too where $C_{i, \sigma^\alpha, \sigma^\beta}(\vec{X}, \tau) = Im(C_{\sigma^\alpha, \sigma^\beta}(0,0,\vec{X}, \tau)) = \xi_{\sigma^\alpha, \sigma^\beta}(\vec{X},\tau)$ is substituted}
\label{Fig:Inv_OTOC_real}
\end{figure}

\begin{figure}[!htb]
\centering \includegraphics[width=0.40\textwidth]{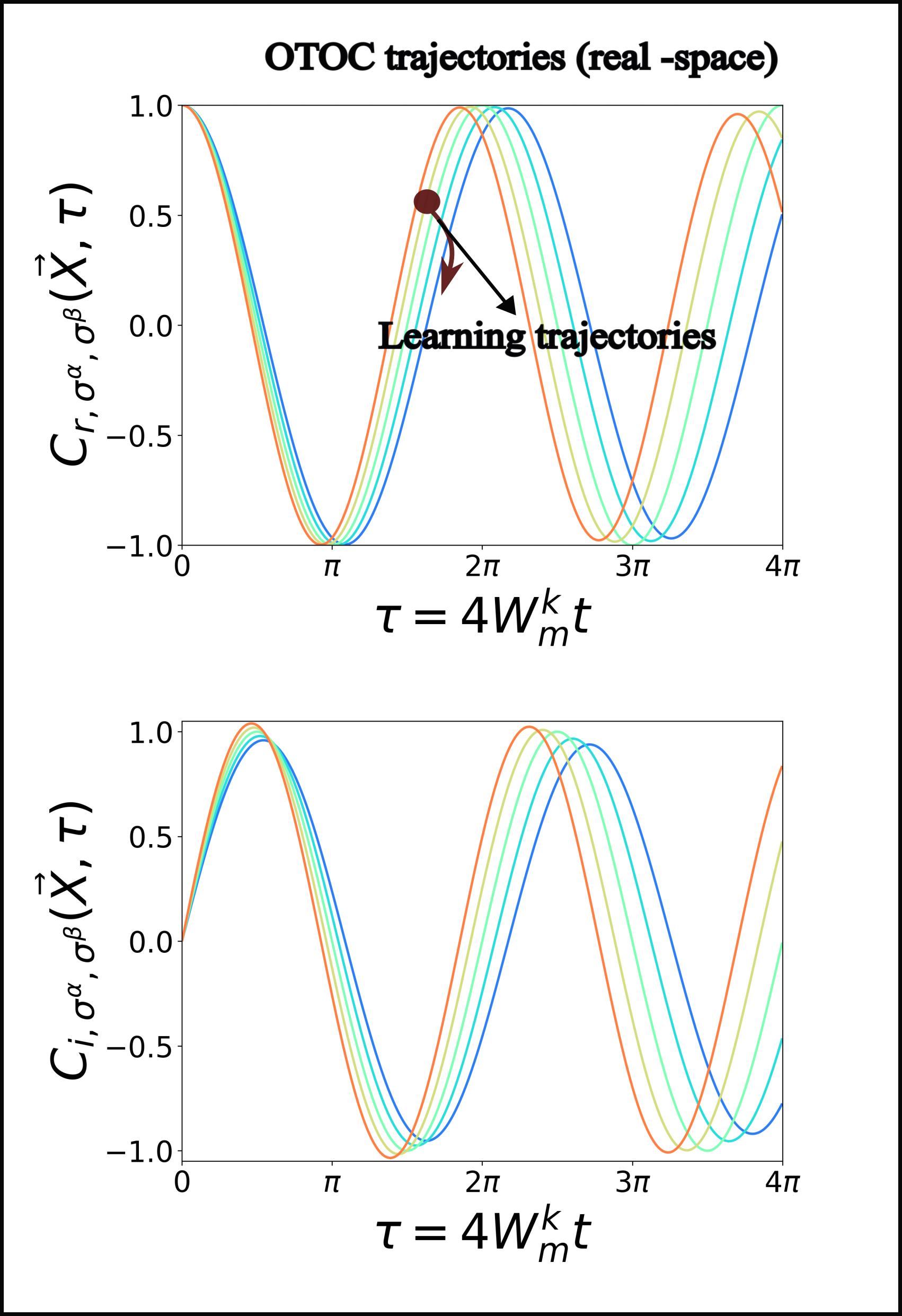}
\caption{(a) The harmonic behavior of $C_{r, \sigma^\alpha, \sigma^\beta}(\vec{X}, \tau)$ in real-time with the frequency equal to $4W^k_m$. Note while training $G$, a typical learning trajectory in parameter space $\vec{X}$ amounts to hopping from one such curve to another as indicated. (b) The harmonic behavior of $C_{i,\sigma^\alpha, \sigma^\beta}(\vec{X}, \tau)$ in real-time with the same frequency as in Fig.\ref{Fig:Time_traj_OTOC}(a) i.e. $4W^k_m$ but phase-shifted from Fig.\ref{Fig:Time_traj_OTOC}(a) by $\frac{\pi}{2}$ (refer to Section \ref{thm_1_proof}, Corollary \ref{soln_C_t} }
\label{Fig:Time_traj_OTOC}
\end{figure}

\subsection{Profile of Invariants during Training}\label{inv_training_plot_section}

For demonstration as how these invariants of motion changes during training of the network $G$ to learn the ground state of a given driver in epoch time , we have used the TFIM model (see Eq.\ref{gen_Ham}, Eq.\ref{eq:TFIM_Ham}) as the driver with $N=10$ spins. We have discussed the training algorithm in Section \ref{Train_plots}. We use $n=m=N=10$  spins in the network $G$ and plot the primitive invariants $I_{r1}$, $I_{r2}$, $I_{i1}$, $I_{i2}$ which are discussed and proven in Theorem \ref{lem1} in main manuscript and proven in this section before in Theorem \ref{Theorem_1_proof} (2). We also plot the compound invariants $\frac{I_{i2}^2 + I_{i1}^2}{4}$ for each $k,m$ pair where $k$ belongs to visible neurons and $m$ to hidden neurons and
$\frac{I_{r2}^2 + I_{r1}^2}{4}$ which is displayed in Fig.\ref{Fig:Inv_OTOC}(a-b) and proven in this section in Corollary \ref{Corr_2}. We use $100$ epochs for comparison of all invariants. For a given epoch with the incumbent instance of the parameter vector $\vec{X}$ one can plot each invariant in the space of $\xi_{\sigma^\alpha,\sigma^\beta}(\vec{X},\tau)$- 
$\dot{\xi}_{\sigma^\alpha,\sigma^\beta}(\vec{X},\tau)$ space where $\xi_{\sigma^\alpha,\sigma^\beta}(\vec{X},\tau)$ can either be the real or the imaginary part of $(C_{\sigma^\alpha,\sigma^\beta}(0,0,\vec{X},\tau))$ as has been done in the main manuscript. We see that certain invariants computed using the imaginary parts of OTOC string (see Eq.\ref{eq:C_t_rbm} in main manuscript) have non-trivial evolution (for example see Fig. \ref{Fig:Inv_plot_train_TFIM}(a,e)) unlike those computed using the real parts. This gives us a direct evidence why imaginary part of OTOC strings can be of use to gain insight into the learning mechanism which may not be obtainable from the real part - a claim central to the thesis of our manuscript. 
One functional importance that may stem from analyzing such invariants using the imaginary part is identifying certain pair of neurons $(k,m)$ which reports an invariant value that remains nearly conserved and close to $0$ during the entire course of training. Such neurons can be considered to not undergo information exchange (and hence remains nearly uncorrelated). A common set of neurons of the hidden node which shares such a property with any of the neurons of the visible set may be considered redundant neurons and hence can be discarded for more compact subsequent trial which can reduce the cost of the training. Designing markers through which identification of such markers can be enabled through estimation of the invariants may be a fruitful future direction that can benefit from a thorough investigation.

\begin{figure}[!htb]
\centering \includegraphics[width=0.75\textwidth]{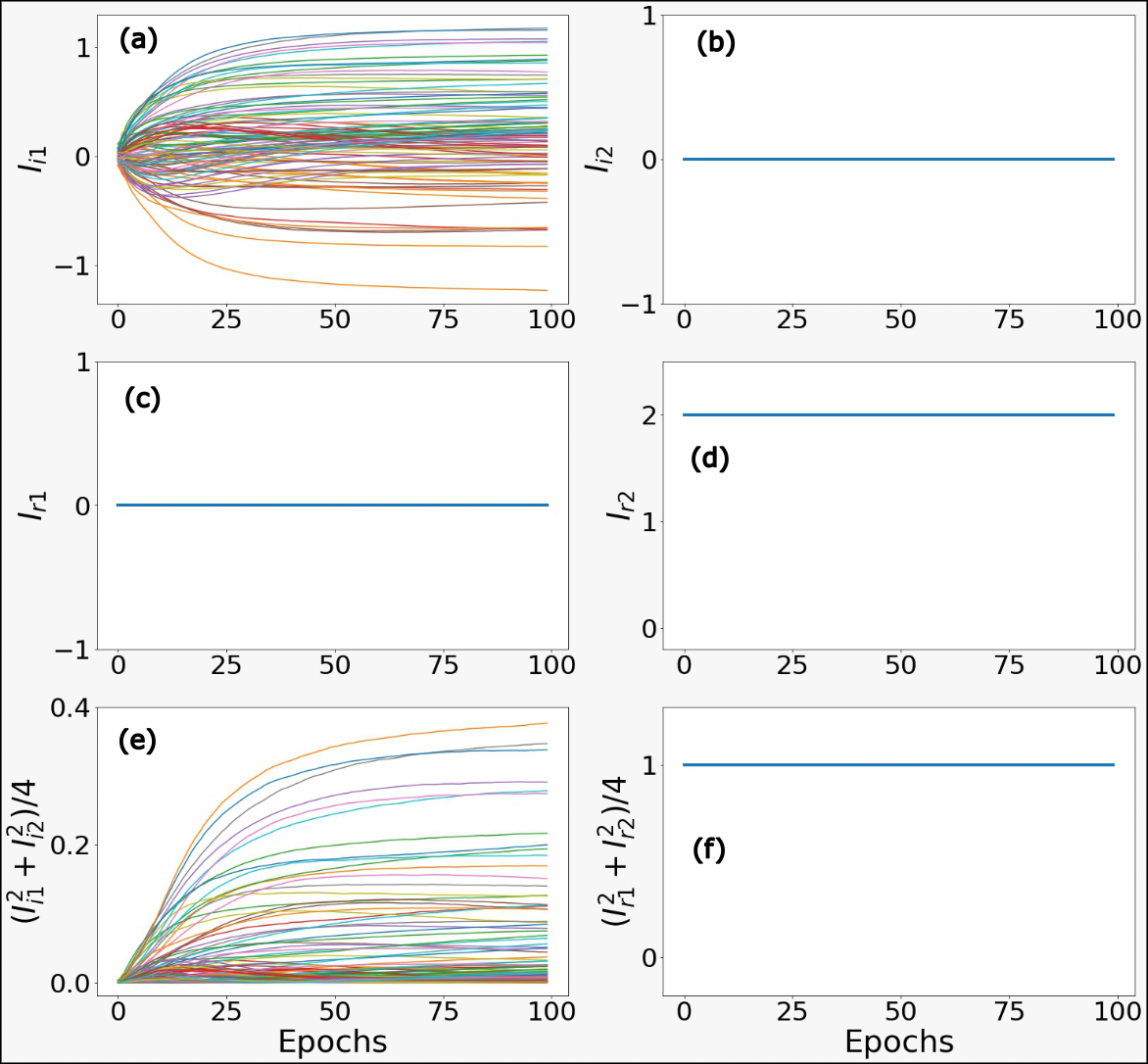}
\caption{\color{black} (a) The invariant $I_{i1}$ computed from Eq.\ref{eq:IV3} using the imaginary part of the OTOC string (defined in Eq.\ref{eq:C_t_rbm}) as a function of training epoch. This invariant changes as the training vector $\vec{X}$ varies during training and hence can provide information about the learning landscape (b) The invariant $I_{i2}$ computed from Eq.\ref{eq:IV4} using the imaginary part of the OTOC string (defined in Eq.\ref{eq:C_t_rbm}) as a function of training epoch. (c) Same invariant as in (a) but computed with the real part Eq.\ref{eq:IV1} of the OTOC string (defined in Eq.\ref{eq:C_t_rbm}). This remains constant during training
(d) Same invariant as in (c) but computed with the real part Eq.\ref{eq:IV2} of the OTOC string (defined in Eq.\ref{eq:C_t_rbm}). This remains constant during training
(e) The compound invariant $\frac{I_{i1}^2 + I_{i2}^2}{4}$ obtained from (a) and (b) vs training epoch. Such compound invariants are plotted in Fig.\ref{Fig:Inv_OTOC}(b) in main manuscript in $(C_{i\sigma^\alpha,\sigma^\beta}(\vec{X},\tau))$ vs 
$(\dot{C}_{i\sigma^\alpha,\sigma^\beta}(\vec{X},\tau))$ space. Since (a),(e) is sensitive to $\vec{X}$ and hence changes during training it is evident why studying such invariants generated from the imaginary part of OTOC strings can be useful compared to the real part alone.}
\label{Fig:Inv_plot_train_TFIM}
\end{figure}

\newpage
\section{Training of TFIM and c-TFIM and deduction of eigenvalues of $^2\rho(v_k,h_m)$}\label{Train_plots}

The figure Fig.\ref{Fig:Training_TFIM} and  Fig.\ref{Fig:Training_cTFIM} depicts the training of RBM network $G$ for TFIM and c-TFIM model respectively. It shows the variation of Energy accuracy with epochs for training $G$. For each training process parameters of RBM are initialized randomly. For training the network we use Variational Monte Carlo technique with Stochastic Reconfiguration based gradient updates as illustrated in Ref[74] of main manuscript. 
Learning rate used is 0.05 and $(n,p) = (4,4)$ in the network $G$. The convergence threshold set is $\le 10^{-2}$. The relative error in the converged state is less than 0.1\%. Energy accuracy = $\langle \psi(\vec{X})| H |\psi(\vec{X}) \rangle $ - $\lambda_0$ where Hamiltonian $H$ is that of the driver (TFIM or c-TFIM) and the state $\psi(\vec{X})$ is the ansatz for the corresponding ground state. $\lambda_0$ is the smallest eigenvalue (true ground state energy) of the Hamiltonian. 

\begin{figure}[!htb]
\centering \includegraphics[width=0.65\textwidth]{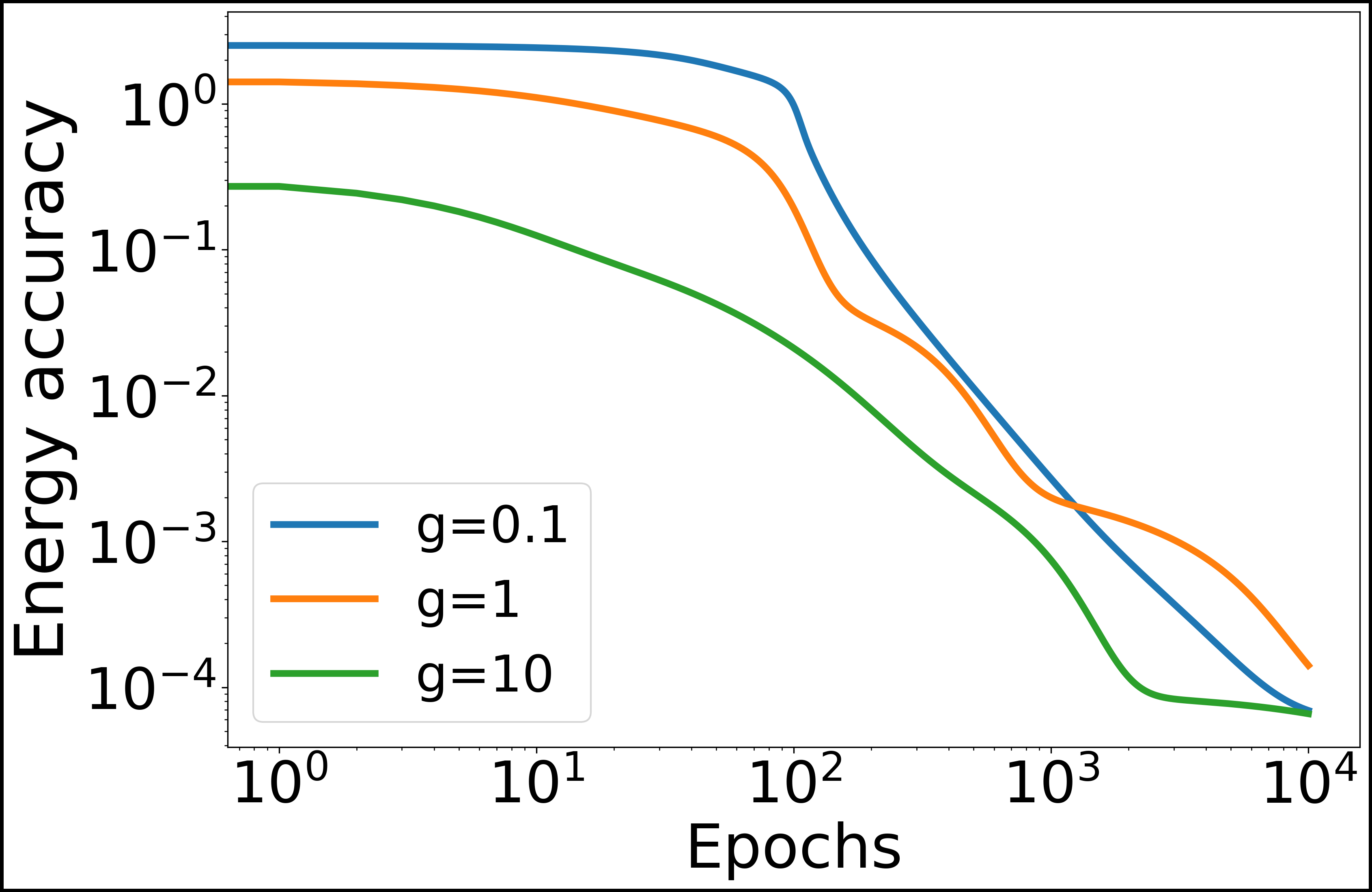}
\caption{\color{black} TFIM }
\label{Fig:Training_TFIM}

\centering \includegraphics[width=0.65\textwidth]{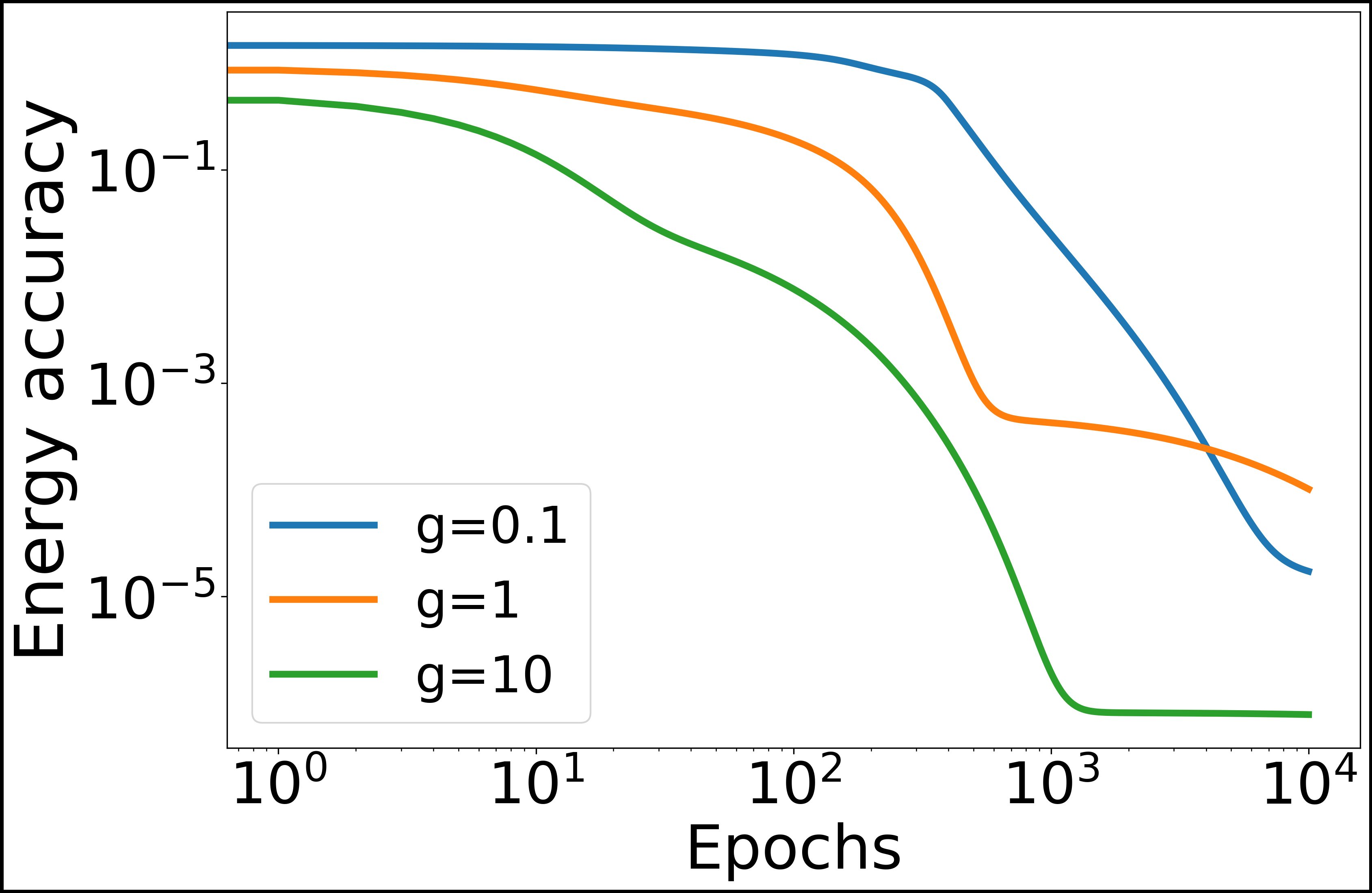}
\caption{\color{black} c-TFIM }
\label{Fig:Training_cTFIM}
\end{figure}

For Fig. 3 - (d, e, g, h) in the main manuscript, to obtain each point, we average over all pairs of visible and hidden indices for $\mathcal{I}(v_k,h_m)$ and $\eta(\vec{X})$ (i.e. $k=\{1,2,3,4\}$, $m=\{1,2,3,4\}$ implying 16 pairs) and another averaging over a sample of 100 best-converged points obtained from the training of 1000 randomly initialized networks.

\newpage

\subsection{Deduction of the eigenvalues of $^2\rho(v_k,h_m)$} \label{eig_2rdm_deduction}

In this sub-section we explictly deduce the eigenvalues of $^2\rho(v_k,h_m)$ in Eq.\ref{lambda_2RDM_1}-\ref{lambda_2RDM_4} in the main manuscript. To do that let us compute a specific matrix element (say $(v_k, h_m,v_k^\prime, h_m^\prime)$ th element) of $^2\rho(v_k,h_m)$ by direct contraction of $\rho_{th}(\vec{X},\vec{v}, \vec{h})$ in Eq.\ref{eq:rbm_dist} as follows:

\begin{align}
^2\rho(\sigma^z(v_k),\sigma^z(h_m))^{v_k, h_p}_{v_k^\prime, h_m^\prime} &= \frac{\delta_{v_k,v_k^\prime} \delta_{h_m,h_m^\prime}}{Tr_{\{v,h\}}e^{-\mathcal{H}(\vec{X}, \vec{v}, \vec{h})}}\sum_{v_i\ne v_k} e^{-\beta\sum_i^n a_i v_i}\sum_{h_j\ne h_m} e^{-\beta\sum_j^p b_j h_j - \beta\sum_{ij}^{n,p}W^i_j v_i h_j} \nonumber \\
&= \frac{\delta_{v_k,v_k^\prime} \delta_{h_m,h_m^\prime}}{Tr_{\{v,h\}}e^{-\mathcal{H}(\vec{X}, \vec{v}, \vec{h})}}\sum_{v_i\ne v_k} e^{-\sum_i^n a_i v_i}e^{-b_m h_m -\sum_{i}^{n}W^i_m v_i h_m} 
\sum_{h_j\ne h_m} \Pi_{j\ne m}^p e^{-b_j h_j -\sum_{i}^{n}W^i_j v_i h_j}\nonumber \\
&= \frac{\delta_{v_k,v_k^\prime} \delta_{h_m,h_m^\prime}}{Z}\sum_{v_i\ne v_k} e^{-a_k v_k -b_m h_m - W^k_m v_k h_m}e^{-\sum_{i \ne k}^n a_i v_i - W^i_m v_i h_m} 
\sum_{h_j\ne h_m} \Pi_{j\ne m}^p e^{- b_j h_j - \sum_{i}^{n}W^i_j v_i h_j}\nonumber \\
&= \frac{\delta_{v_k,v_k^\prime} \delta_{h_m,h_m^\prime}e^{-a_k v_k -b_m h_m - W^k_m v_k h_m}}{Z}\sum_{v_i\ne v_k} e^{-\sum_{i \ne k}^n a_i v_i - W^i_m v_i h_m} 
\sum_{h_j\ne h_m} \Pi_{j\ne m}^p e^{-b_j h_j - \sum_{i}^{n}W^i_j v_i h_j}\nonumber \\
&= \frac{\delta_{v_k,v_k^\prime} \delta_{h_m,h_m^\prime}e^{-a_k v_k -b_m h_m - W^k_m v_k h_m}}{Z}\sum_{v_i\ne v_k} e^{-\sum_{i \ne k}^n a_i v_i - W^i_m v_i h_m} 
\Pi_{j\ne m}^p 2cosh(-b_j - \sum_{i}^{n}W^i_j v_i)\nonumber \\
&= \frac{\delta_{v_k,v_k^\prime} \delta_{h_m,h_m^\prime}N_{h_m} e^{-a_k v_k -b_m h_m - W^k_m v_k h_m}}{Z}
\langle \Pi_{j\ne m}^p 2cosh(b_j + \sum_{i}^{n}W^i_j v_i) \rangle _{P(\{v_i\}_{i \ne k}^n, h_m)}\label{final_2rdm_matrix_elem}
\end{align} 
where in Eq.\ref{final_2rdm_matrix_elem} the following definitions are used
\begin{align}
cosh(x) &= cosh(-x) \\
P(\{v_i\}_{i\ne k}^n, h_m) &= \frac{e^{-(\sum_{i \ne k}^n a_i v_i + W^i_m v_ih_m))}}{N_{h_m}} \\
N_{h_m} & = \Pi_{i\ne k}^n 2Cosh(a_i + \sum_{i\ne k}^n W^i_m h_m) \\
Z&=Tr_{\{v,h\}}e^{-\mathcal{H}(\vec{X}, \vec{v}, \vec{h})}
=\sum_{(\vec{v}, \vec{h})}\hspace*{-0.01in} e^{(-\sum_{i}^n -a_i v_i - \sum_{i}^p b_j h_j -\sum_{i,j}^{n,p} W^i_j v_i h_j)}
\end{align}
Eq.\ref{final_2rdm_matrix_elem} indicates due to the presence of $\delta_{v_k,v_k^\prime} \delta_{h_m,h_m^\prime}$ factors that $^2\rho(v_k,h_m)$ is an entirely diagonal matrix. Thus the 4 eigenvalues of $^2\rho(v_k,h_m)$ can be written as
\begin{align}
\lambda_i(^2\rho(v_k,h_m)) &= \lambda(^2\rho(v_k=x,h_m=y)) \nonumber \\ &= \frac{N_{1}}{Z} e^{-a_kx - b_my - W^k_mxy} \langle \Pi_{j\ne m}^p 2Cosh (b_j + \sum_{i\ne k}^n W^i_j v_i + W^k_jx)\rangle_{P(\{v_i\}_{i\ne k}^n, h_m =y)}   \:\:\:\:\:\forall (x,y) \in \{-1,1\}^2 \label{eig_deduction_final}
\end{align}
Eq.\ref{eig_deduction_final} yields the four eigenvalues displayed in Eq.\ref{lambda_2RDM_1}-Eq.\ref{lambda_2RDM_4} for various values of $(x,y)$

\newpage

\section{Relationship between $\eta(\vec{X})$ (see Eq.9 in main manuscript) and $\mathcal{I}(v_k,h_m)$ - Proof of Theorem 2 in main manuscript} \label{thm_2_proof}

The four eigenvalues $\{\lambda_i(^2\rho(v_k,h_m))\}_{i=1}^4$ of the two-particle density matrix $^2\rho(v_k,h_m)$ for the learner $G$ between a specific pair of visible and hidden spins (say $(k,m)$) as deduced in the previous section can be readily expressed as 
\begin{align}
\lambda_1(^2\rho(v_k,h_m)) &= \lambda(^2\rho(v_k=1,h_m=1)) \nonumber \\ &= \frac{N_{1}}{Z} e^{-a_k - b_m - W^k_m} \langle \Pi_{j\ne m}^p 2Cosh (b_j + \sum_{i\ne k}^n W^i_j v_i + W^k_j)\rangle_{P(\{v_i\}_{i\ne k}^n, h_m =1)} \label{lambda_2RDM_A1}\\
\lambda_2(^2\rho(v_k,h_m)) &= \lambda(^2\rho(v_k=1,h_m=-1)) \nonumber \\ &= \frac{N_{-1}}{Z} e^{-a_k + b_m + W^k_m} \langle \Pi_{j\ne m}^p 2Cosh (b_j + \sum_{i\ne k}^n W^i_j v_i + W^k_j)\rangle_{P(\{v_i\}_{i\ne k}^n, h_m =-1)} \label{lambda_2RDM_A2}\\
\lambda_3(^2\rho(v_k,h_m)) &= \lambda(^2\rho(v_k=-1,h_m=1)) \nonumber \\ &= \frac{N_{1}}{Z} e^{a_k - b_m + W^k_m} \langle \Pi_{j\ne m}^p 2Cosh (b_j + \sum_{i\ne k}^n W^i_j v_i - W^k_j)\rangle_{P(\{v_i\}_{i\ne k}^n, h_m =1)} \label{lambda_2RDM_A3}\\
\lambda_4(^2\rho(v_k,h_m)) &= \lambda(^2\rho(v_k=-1,h_m=-1)) \nonumber \\ &= \frac{N_{-1}}{Z} e^{a_k + b_m - W^k_m} \langle \Pi_{j\ne m}^p 2Cosh (b_j + \sum_{i\ne k}^n W^i_j v_i - W^k_j)\rangle_{P(\{v_i\}_{i\ne k}^n, h_m =-1)} \label{lambda_2RDM_A4}
\end{align}

where each of the averages are computed over the distribution $P(\{v_i\}_{i\ne k}^n, h_m)$ and $N_{h_m}$ is the associated normalization constant. These are defined as 
\begin{align}
P(\{v_i\}_{i\ne k}^n, h_m) &= \frac{e^{-(\sum_{i \ne k}^n a_i v_i + W^i_m v_ih_m))}}{N_{h_m}} \\
N_{h_m} & = \Pi_{i\ne k}^n 2Cosh(a_i + \sum_{i\ne k}^n W^i_m h_m) \label{P_2RDM_A}
\end{align}

The corresponding eigenvectors of the two particle density matrix for the eigenvalues in Eq.\ref{lambda_2RDM_A1}-\ref{lambda_2RDM_A4} are $|0(v_k)0(h_m)\rangle$, $|0(v_k)1(h_m)\rangle$ and $|1(v_k)0(h_m)\rangle$, $|1(v_k)1(h_m)\rangle$ respectively for the four eigenvalues Eq.\ref{lambda_2RDM_A1}-Eq.\ref{lambda_2RDM_A4} where $(0,1)$ is notationally equivalent to $(1,-1)$ for each spins $(v_k, h_m)$

The quantity $Z$ is the partition function defined as 
\begin{align}
Z&=\sum_{(\vec{v}, \vec{h})}\hspace*{-0.01in} e^{(-\sum_{i}^n -a_i v_i - \sum_{i}^p b_j h_j -\sum_{i,j}^{n,p} W^i_j v_i h_j)}
\end{align}
However $Z$ need not be explicitly computed as it can be eliminated using the normalization condition of the eigenvalues. 
The eigenvalues for one-particle density matrix $^1\rho(\xi_i,0)$ for a neuron $\xi_i$ in the learner $G$, by contraction from 
Eq.\ref{lambda_2RDM_A1}, \ref{lambda_2RDM_A2}, \ref{lambda_2RDM_A3}, \ref{lambda_2RDM_A4} are
\begin{align}
\lambda_1(^1\rho(\xi_i)) &= \lambda_i(^2\rho(v_k,h_m)) + \lambda_j(^2\rho(v_k,h_m)) \label{1RDM_eig_A1}\\&\:\:\: (\text{if} \:\: \xi_i=v_k, \:\: (i,j)=(1,3)) \nonumber \\&\:\:\:
(\text{if} \:\: \xi_i=h_m, \:\: (i,j)=(1,4)) \nonumber \\
\lambda_2(^1\rho(\xi_i)) &= \lambda_i(^2\rho(v_k,h_m)) + \lambda_j(^2\rho(v_k,h_m)) \label{1RDM_eig_A2}\\&\:\:\: (\text{if} \:\: \xi_i=v_k, \:\: (i,j)=(2,4)) \nonumber \\&\:\:\:
\text{if} \:\: (\xi_i=h_m, \:\: (i,j)=(2,3)) \nonumber 
\end{align}
with respective eigenvectors are $|0(\xi_i)\rangle$ and $|1(\xi_i)\rangle$
where $\xi_i \in (v_k, h_m)$.

Now using these information, one can deduce expressions for $S(^2\rho(v_k,h_m))$ and $S(^1\rho(v_k))$, $S(^1\rho(h_m))$
and hence of $\mathcal{I}(v_k,h_m)$ as follows
\begin{align}
\mathcal{I}(v_k,h_m) &= S(^1\rho(v_k)) +S(^1\rho(h_m))-S(^2\rho(v_k,h_m)) \nonumber \\
&=-\lambda_1(^1\rho(v_k))log_2(\lambda_1(^1\rho(v_k))) -\lambda_2(^1\rho(v_k))log_2(\lambda_2(^1\rho(v_k)))- \lambda_1(^1\rho(h_m))log_2(\lambda_1(^1\rho(h_m)))
\nonumber \\ 
&-\lambda_2(^1\rho(h_m))log_2(\lambda_2(^1\rho(h_m)))+\lambda_1(^2\rho(v_k,h_m))log_2(\lambda_1(^2\rho(v_k,h_m)))+\lambda_2(^2\rho(v_k,h_m))log_2(\lambda_2(^2\rho(v_k,h_m))) \nonumber \\
&+\lambda_3(^2\rho(v_k,h_m))log_2(\lambda_3(^2\rho(v_k,h_m))) +\lambda_4(^2\rho(v_k,h_m))log_2(\lambda_4(^2\rho(v_k,h_m))) 
\nonumber \\ &-(\lambda_1(^2\rho(v_k,h_m) + \lambda_3(^2\rho(v_k,h_m))\nonumber log_2(\lambda_1(^2\rho(v_k,h_m)) + \lambda_3(^2\rho(v_k,h_m)))\nonumber \\
&\:\:\:\:-(\lambda_2(^2\rho(v_k,h_m) + \lambda_4(^2\rho(v_k,h_m))log_2(\lambda_2(^2\rho(v_k,h_m)) + \lambda_4(^2\rho(v_k,h_m)))\nonumber \\
&\:\:\:\:-(\lambda_1(^2\rho(v_k,h_m) + \lambda_4(^2\rho(v_k,h_m)) log_2(\lambda_1(^2\rho(v_k,h_m)) + \lambda_4(^2\rho(v_k,h_m)))\nonumber \\
&\:\:\:\:\: -(\lambda_2(^2\rho(v_k,h_m) + \lambda_3(^2\rho(v_k,h_m))\nonumber log_2(\lambda_2(^2\rho(v_k,h_m)) + \lambda_3(^2\rho(v_k,h_m)))\nonumber\\
&\:\:\:\:+\lambda_1(^2\rho(v_k,h_m))log_2(\lambda_1(^2\rho(v_k,h_m))) +\lambda_2(^2\rho(v_k,h_m))log_2(\lambda_2(^2\rho(v_k,h_m))) \nonumber \\
&\:\:\:\:+\lambda_3(^2\rho(v_k,h_m))log_2(\lambda_3(^2\rho(v_k,h_m))) +\lambda_4(^2\rho(v_k,h_m))log_2(\lambda_4(^2\rho(v_k,h_m))) \label{I_expr}
\end{align}\\
wherein in the last equality in Eq.\ref{I_expr}, contraction schemes from Eq.\ref{1RDM_eig_A1},Eq.\ref{1RDM_eig_A2} were used. Eq.\ref{I_expr} thus expresses $\mathcal{I}(v_k, h_m)$ in terms of the eigenvalues of $^2\rho(v_k,h_m))$. Apart from this the normalization condition also inter-relates the eigenvalues as 
\begin{align}
&\lambda_1(^2\rho(v_k,h_m)) + \lambda_2(^2\rho(v_k,h_m)) + \lambda_3(^2\rho(v_k,h_m)) + \lambda_4(^2\rho(v_k,h_m))
= 1 \label{Eq_2rdm_tr}
\end{align}
The expression for the $\eta(\vec{X})$ as follows
\begin{align}
\eta(\vec{X}) &= Cov(\sigma^z(v_k,0), \sigma^z(h_m,0)) \nonumber \\
&= Tr(\sigma^z(v_k,0)\sigma^z(h_m,0)\:\:^2\rho(v_k,h_m))- \langle \sigma^z(v_k,0)\rangle_{^1\rho_{v_k}}\langle \sigma^z(h_m,0)\rangle_{^1\rho_{h_m}} \nonumber \\
&= \lambda_1(^2\rho(v_k,h_m)) - \lambda_2(^2\rho(v_k,h_m)) - \lambda_3(^2\rho(v_k,h_m)) +\lambda_4(^2\rho(v_k,h_m)) - (\sum_{i=1,3}\lambda_i(^2\rho(v_k,h_m)) - \sum_{i=2,4}\lambda_i(^2\rho(v_k,h_m)))\nonumber \\
&\:\:\:\:\:\:\:\:\:\:\:(\sum_{i=1,4}\lambda_i(^2\rho(v_k,h_m))- \sum_{i=2,3}\lambda_i(^2\rho(v_k,h_m)))
\label{eta_x_expr}
\end{align}
Also to ensure positive-semi-definiteness of $^2\rho(v_k,h_m)$, we have 
\begin{align}
\lambda_i(^2\rho(v_k,h_m)) \geq 0  \:\:\:\: \forall i \in \{1,2,3,4\} \label{Eq_2rdm_pos}
\end{align}
Thus we see that Eq.\ref{I_expr} combined with \ref{Eq_2rdm_tr}, Eq.\ref{Eq_2rdm_pos} for a given value of $\eta(\vec{X})$ in Eq.\ref{eta_x_expr} completely defines the $\mathcal{I}-\eta$ space in terms of the spectrum of $^2\rho(v_k,h_m)$\\

\underline{Lower Bound(LB)}\\
The minimization of Eq.\ref{I_expr} over the spectrum of $^2\rho(v_k,h_m)$ with the constraints defined in Eq.\ref{Eq_2rdm_tr}, Eq.\ref{Eq_2rdm_pos} and Eq.\ref{eta_x_expr}(for a fixed value of $\eta(\vec{X})$ gives the following condition

\begin{align}
\lambda_1(^2\rho(v_k,h_m)) = \lambda_2(^2\rho(v_k,h_m)) = \lambda_a (say) \nonumber \\
\lambda_3(^2\rho(v_k,h_m)) = \lambda_4(^2\rho(v_k,h_m)) = \lambda_b (say) \label{LB_condn}
\end{align}

Substituting the above conditions in Eq.\ref{Eq_2rdm_tr} and in Eq.\ref{I_expr} leads to the following

\begin{align}
\lambda_b &= \frac{1}{2} - \lambda_a \label{Tr_LB} \\
\eta(\vec{X}) &= 4\lambda_a - 1 \:\:\:\:\: (\because Eq.\ref{Tr_LB}) \label
{Cov_LB}
\end{align}
Substituting Eq.\ref{Cov_LB}, \ref{Tr_LB} and the conditions in Eq.\ref{LB_condn}, in Eq.\ref{I_expr} yields the lower bound $LB$ as
\begin{align}
LB =\mathcal{I}_{LB}(v_k, h_m) &= 2+ 2\lambda_a log_2 (\lambda_a) + 2\lambda_b log_2 (\lambda_b) \nonumber \\
&= 2+ 2\lambda_a log_2 (\lambda_a) + 2(\frac{1}{2}-\lambda_a) log_2 ((\frac{1}{2}-\lambda_a) \nonumber \\
&= 2+ 2\lambda_a log_2 (\lambda_a) + (1-2\lambda_a) log_2 ((\frac{1}{2}-\lambda_a) \nonumber \\
&= 2+ (\frac{\eta(\vec{X})+1}{2}) log_2 (\frac{\eta(\vec{X})+1}{4}) \nonumber \\&+ (\frac{1-\eta(\vec{X})}{2}) log_2(\frac{1-\eta(\vec{X})}{4})   \:\:\:\: \because Eq.\ref{Cov_LB}
\end{align}
Note that $LB$ is symmetric about $\eta(\vec{X})=0$ and remains unchanged on substitution of $\eta(\vec{X}) \to -\eta(\vec{X})$. Also, $LB$ is mathematically only defined if $-1 \le \eta(\vec{X}) \le 1$ which is true for covariances of Pauli operators in the learner $G$ (a direct by-product of Cauchy-Schwartz inequality and idempotency of Pauli operators which bounds their respective variances to within 1). \\

\underline{Upper Bound(UB)}\\
The maximization of Eq.\ref{I_expr} over the spectrum of $^2\rho(v_k,h_m)$ with the constraints defined in Eq.\ref{Eq_2rdm_tr}, Eq.\ref{Eq_2rdm_pos} and Eq.\ref{eta_x_expr}(for a fixed value of $\eta(\vec{X})$ gives the following condition

\begin{align}
\lambda_1(^2\rho(v_k,h_m)) = \lambda_2(^2\rho(v_k,h_m)) = 0 \nonumber \\
\lambda_3(^2\rho(v_k,h_m)) = \lambda_a (say) \nonumber \\
\lambda_4(^2\rho(v_k,h_m)) = \lambda_b (say) \label{UB_condn}
\end{align}

Now using Eq.\ref{UB_condn} in Eq.\ref{Eq_2rdm_tr} and Eq.\ref{eta_x_expr} we get 
\begin{align}
\lambda_b &= 1 - \lambda_a \label{Tr_UB} \\
\eta(\vec{X}) &= -1 + (2\lambda_a -1)^2 
\nonumber \\
&= 4\lambda_a(4\lambda_a - 1) \:\:\:\:\: (\because\:\:\: Eq.\ref{Tr_UB})\nonumber \\
\lambda_a &= \frac{1\pm \sqrt{1+(-1)^\gamma \eta(\vec{X})}}{2}
 \label{Cov_UB}
\end{align}
wherein in the last equality to maintain positive semi-definiteness of $\lambda_a$ (hence for $\lambda_3$) in both the roots, a factor of $(-1)^\gamma$ was used along with the condition $-1 \le \eta(\vec{X}) \le 1$. Note that $\gamma = 0$ when $\eta(\vec{X}) < 0$ and $\gamma =1$ when $\eta(\vec{X}) \ge 0$. Now substituting Eq.\ref{UB_condn}, Eq.\ref{Tr_UB} and  Eq.\ref{Cov_UB} in Eq.\ref{I_expr} we obtain UB as 
\begin{align}
UB &=\mathcal{I}_{UB}(v_k, h_m) \nonumber \\&=  -\lambda_a log_2 (\lambda_a) -\lambda_b log_2 (\lambda_b) \nonumber \\
&= -\lambda_a log_2 (\lambda_a) -(1-\lambda_a) log_2 (1-\lambda_a) 
\:\:\:\:\: (\because\:\:\: Eq.\ref{Tr_UB})\nonumber \\
& = \big(\frac{1}{2} + \frac{\sqrt{1+(-1)^\gamma\eta(\vec{X})}}{2} \big) log_2 \big(\frac{1}{2} + \frac{\sqrt{1+(-1)^\gamma\eta(\vec{X})}}{2} \big) \nonumber \\
& - \big(\frac{1}{2} - \frac{\sqrt{1+(-1)^\gamma\eta(\vec{X})}}{2} \big) log_2 \big( \frac{1}{2} - \frac{\sqrt{1+(-1)^\gamma\eta(\vec{X})}}{2} \big) \\&\:\:\:\:\: (\because\:\:\: Eq.\ref{Cov_UB}) \label{UB_expr} 
\end{align}
where substitution of either root from Eq.\ref{Cov_UB} would lead to the same $UB$ due to symmetry

\newpage

\newpage
\section{Saturation of Lower Bound (LB) in $\mathcal{I}(v_k,h_m)$ and $\eta(\vec{X})$ space in eigenpair learning of network $G$ } \label{Sat_LB_Section}

In this section we discuss the several systems we have used as a driver for our task of learning eigenpairs and training the network $G=(V,E)$. We use for demonstration a wide variety of systems wherein the ground state is non-negative due to Perron-Frobenius theorem as discussed in the main manuscript. Each of these model is endowed with a hamiltonian $H(\lambda_1, \lambda_2, \lambda_3...\lambda_n)$ with several generic controllable parameters $\{\lambda\}_{i=1}^n$. Tuning these controllable parameters allows one to access ground states with various phase properties for different sizes of the system. We show that for each cases for different sizes, both the assertions in the manuscript - (a) saturation of lower bound (LB) in the $\mathcal{I}(v_k,h_m)$ and $\eta(\vec{X})$ space by the representation chosen by the trained/learned network, (b) sliding of the $(|\eta(\vec{X})|, \mathcal{I}(v_k,h_m))$ points on the LB as the controllable parameters $\{\lambda\}_{i=1}^n$ are tuned across the phase boundaries. The last point evidentiates how spin correlation among the sub-units or spins of the actual driver system gets translated or mimicked on the spin correlation between the visible and hidden sub-units of the learner network $G$. It must be emphasized this associates a functionally quantifiable importance of the neurons of the hidden layer which are oblivious to the spins of the driver. It is only the spins of the visible neurons whose state in the basis of its acceptable configurations is trained to represent the physical ground state of the driver.

We start with the familiar TFIM and c-TFIM which has been explicitly discussed in the main manuscript. We have shown in Fig.\ref{Fig:MI_O_TFIM}(c-f) that for $N=4, N=20$ for both TFIM and c-TFIM the lower bound (LB) in the $\mathcal{I}(v_k,h_m)$ and $\eta(\vec{X})$ space is saturated and in Fig.\ref{Fig:MI_O_TFIM}(g-j) we show that mean $\mathcal{I}(v_k,h_m)$ and $|\eta(\vec{X})|$ values slides along the LB as the controllable parameter $g$ in $H(g)$ is changed from a ferromagnet to a disordered phase for sizes $N=4,6,8,10,12,14,16,18,20,24$. Herein we show that the same assertions hold for all intermediate sizes as well for both the models TFIM and c-TFIM with $n=p=N$ for the network $G$.

\begin{figure}[!htb]
\centering \includegraphics[width=0.8\textwidth]{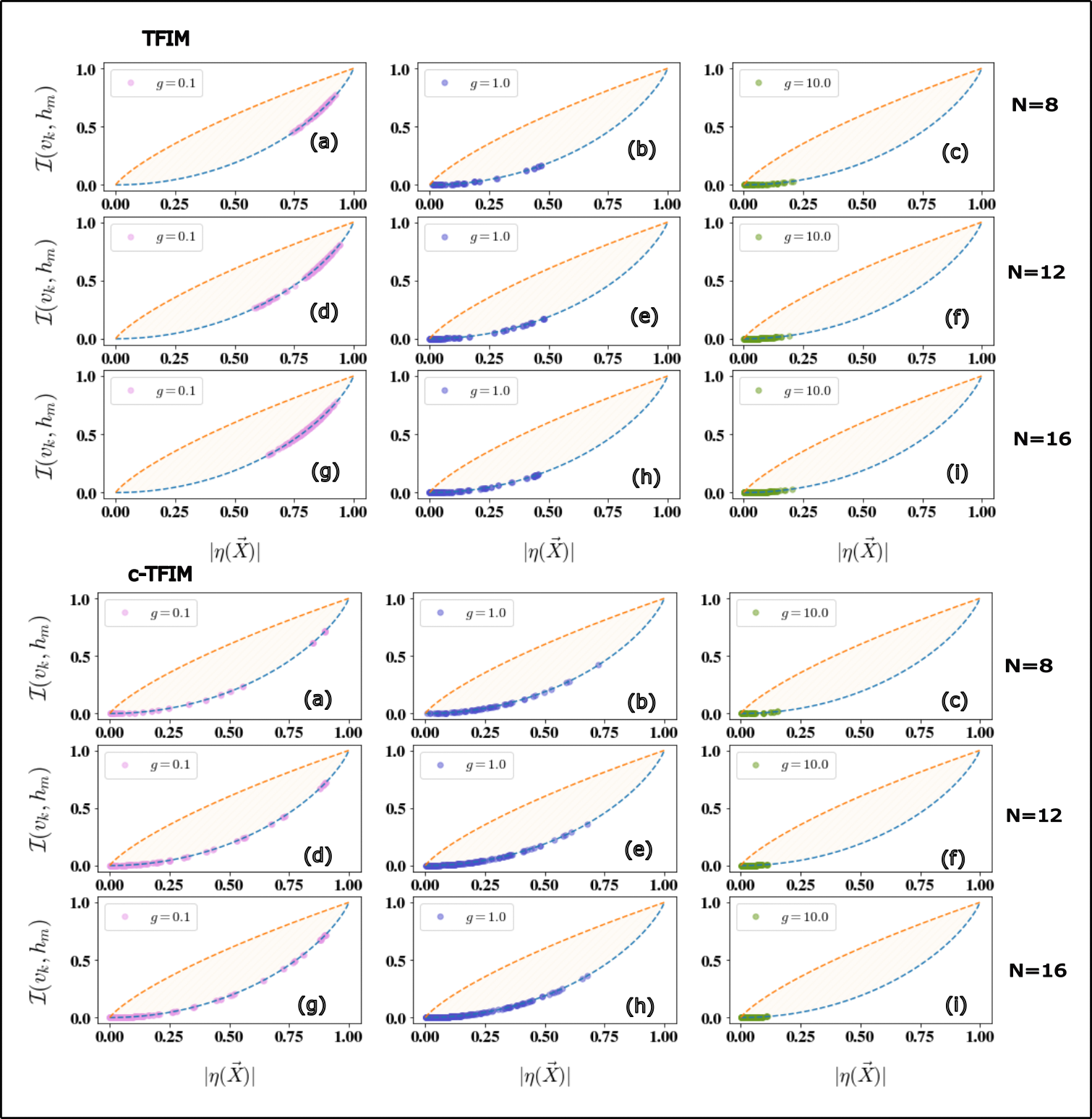}
\caption{\color{black} (a) The saturation of the lower bound for $g=0.1$ (b) Same as in (a) for $g=1.0$ (c) for $g=10.0$ for $N=8$. The plots also show clustering of  $(|\eta(\vec{X})|, \mathcal{I}(v_k,h_m))$ near upper regime for $g=0.1$, middle and lower regime in $g=1.0$ and close to the origin in $g=10.0$ (d-f) Same as in (a-c) but for $N=12$ (g-h) Same as in (a-c) but for $N=16$. For the c-TFIM panel we have similar plots as in TFIM (a-i). Note that the spread of the points in c-TFIM is larger than in TFIM, feature consistent with the discussion in the manuscript which says many equivalent representations the network chooses for volume-law connectivity}
\label{Fig:TFIM_cTFIM_all_N_LB_sat}
\end{figure}

\newpage
We next discuss the Sherrington-Kirkpatrick (SK) model with transverse magnetic field. The form of the Hamiltonian is 

\begin{align}
H &= -B\sum_{i_d}^N \sigma^x(i_d) - \sum_{{i_d}{j_d}} J_{{i_d}{j_d}} \sigma^z(i_d) \sigma^z(j_d) 
\end{align}

where unlike in TFIM and c-TFIM , each of the the coupling matrix elements $J_{{i_d}{j_d}}$ are different and is randomly sampled from a normal distribution i.e. $\mathcal{N}(0, 1)$. Note that this can mean a model with extreme inhomogeneous couplings too such that interaction strength along the length of the sites do not change monotonically unlike the connectivity pattern used in $c-TFIM$ and $TFIM$ which . We draw 5 samples from that distribution and for each train the model to show the representation chosen by the network for the ground state. In each case we see the $(|\eta(\vec{X})|, \mathcal{I}(v_k,h_m))$ pair saturates the lower bound. Note that there is no analogous definition of $g$ in this case as has been defined for TFIM and c-TFIM as the couplings are inhomogeneous. We use $N=n=p=10$ and $B=1$ for all calculations.

\begin{figure}[!htb]
\centering \includegraphics[width=0.5\textwidth]{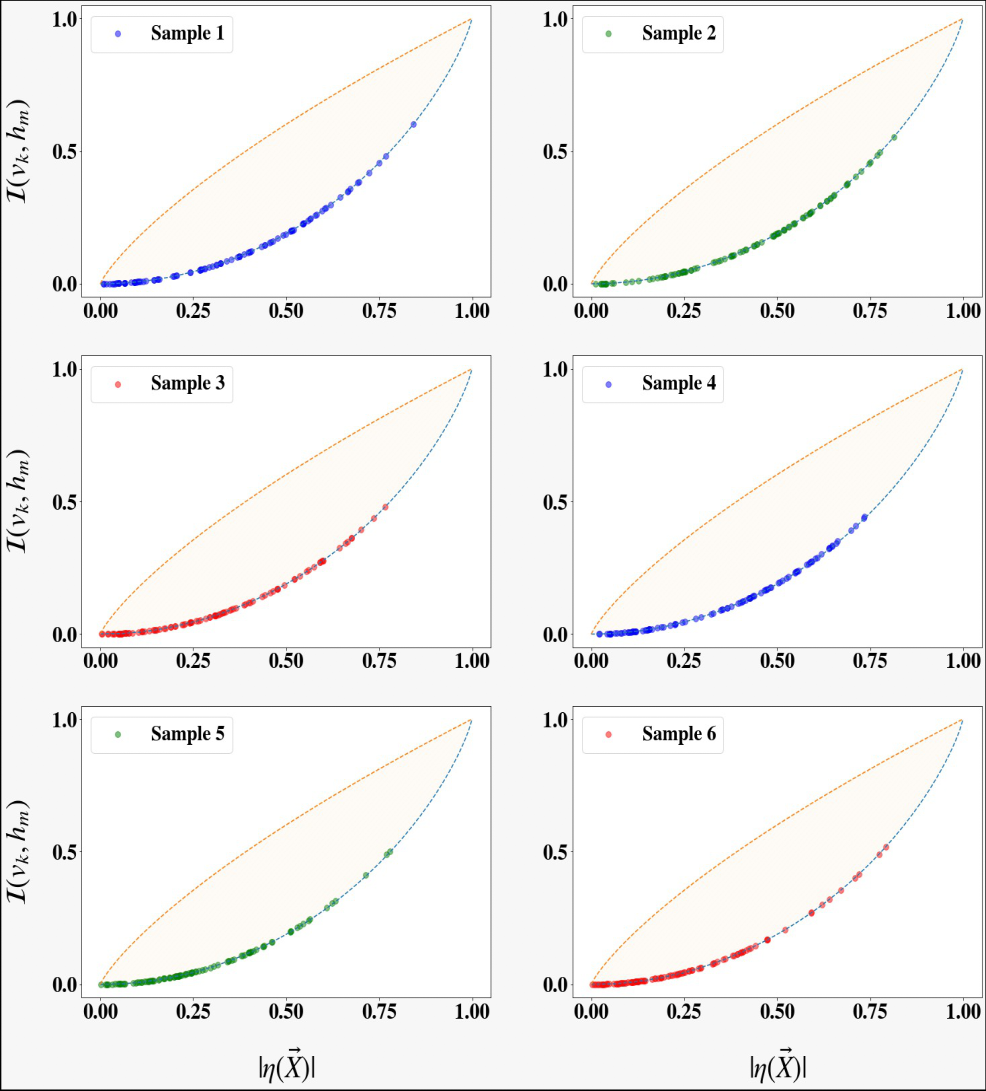}
\caption{\color{black} The saturation of the lower bound in the $\mathcal{I}(v_k,h_m)$ and $\eta(\vec{X})$ space for all five samples studied of the SK model. The sample indices are displayed in the corresponding legend }
\label{Fig:SK_all_N10_LB_sat}
\end{figure}

\newpage
We now turn our attention to another spin model that unlike the previous set of models also have another set of interaction terms involving $\sigma^y(i_d)\sigma^y(j_d)$. Moreover the interactions are even anisotropic. The Hamiltonian of the system is 

\begin{align}
H &= -B\sum_{i_d}^N \sigma^x(i_d) - \sum_{\langle {i_d}{j_d} \rangle} J(1+\gamma) \sigma^z(i_d) \sigma^z(j_d) -
\sum_{\langle {i_d}{j_d} \rangle} J(1-\gamma) \sigma^y(i_d) \sigma^y(j_d) 
\label{ZY_Ham}
\end{align}

The system possesses ground state with non-negative coefficients too as a consequence of Perron-Frobenius Theorem. We demonstrate even for this system with anisotropic YZ type interaction profile the representation chosen by the trained state of the learner ground state saturates the lower bound in the $\mathcal{I}(v_k,h_m)$ and $\eta(\vec{X})$ space. We choose $n=p=N=10$ spins and vary $(g,\gamma)$ in Eq.\ref{ZY_Ham} as shown in Fig.\ref {Fig:ZY_all_N10_LB_sat}. The parameter $g=B/J$ is defined as in the case of TFIM and c-TFIM.

\begin{figure}[!htb]
\centering \includegraphics[width=0.8\textwidth]{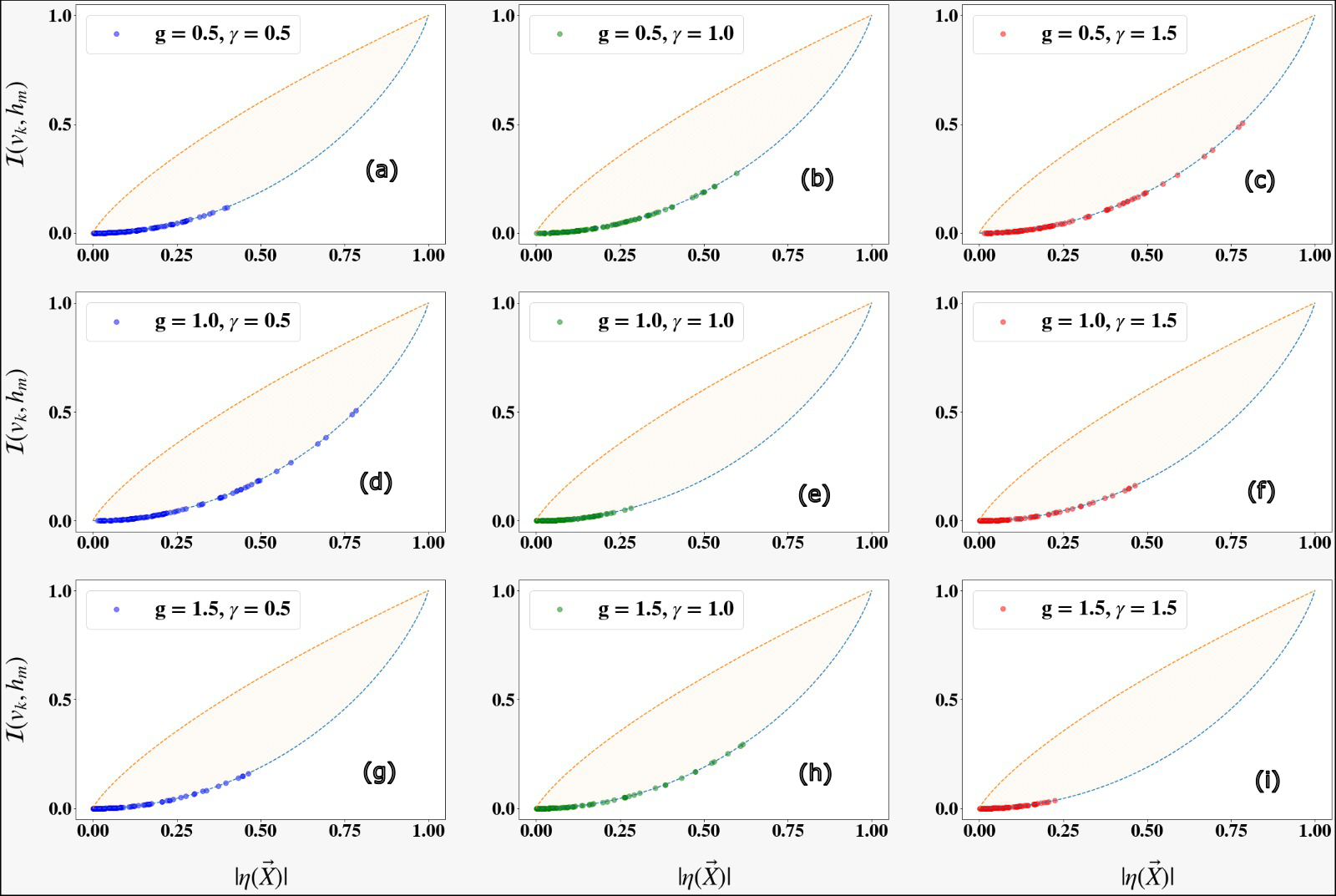}
\caption{\color{black} The saturation of the lower bound in the $\mathcal{I}(v_k,h_m)$ and $\eta(\vec{X})$ space for the trained learner for the ground state of Eq.\ref{ZY_Ham} corresponding to (a) ($g,\gamma$) = (0.5,0.5) (b) ($g,\gamma$) = (0.5,1.0), (c) ($g,\gamma$) = (0.5,1.5), (d) ($g,\gamma$) = (1.0,0.5), (e) ($g,\gamma$) = (1.0,1.0), (f) ($g,\gamma$) = (1.0,1.5), (g) ($g,\gamma$) = (1.5,0.5), (h) ($g,\gamma$) = (1.5,1.0), (i) ($g,\gamma$) = (1.5,1.5)  }
\label{Fig:ZY_all_N10_LB_sat}
\end{figure}

\newpage
\section{Effect of Hidden Node density on Training of Network $G$}\label{Hid_node_density_Section}

In this section we provide a systematic study of the variation in hidden node density for the network $G$ while training the latter network for obtaining the ground state of TFIM model (see Eq.\ref{gen_Ham},\ref{eq:TFIM_Ham} in main manuscript).
The hidden node density is defined as the ratio of the number of hidden neurons $p$ vs the number of visible $n$ neurons used in the network $G=(V,E)$ i.e. $\alpha = \frac{p}{n}$. We use for demonstration the said TFIM model as the driver with $N=10$ spins which corresponds to a Hilbert space of dimension $2^{20}$. We vary the $g=[0.2, 1.0, 5.0]$ parameter of the driver to study the ground states in various phases (see main manuscript for details). The network $G$ is trained with $n=N=10$ spins in the visible layer and $p= \alpha n$ spins in the hidden layer wherein $\alpha$ $\in$  $[1,2,3]$. The results are displayed in the Fig.\ref{Fig:Alpha_variation}. In Fig.\ref{Fig:Alpha_variation}(a-c) shows the accuracy threshold reached compared to exact diagonalization when the network is trained using different $\alpha$ for various values of $g$. We see in all cases the acquired energy error in the trained network is $10^{-2}-10^{-4}$ which registers a relative error percentage of $\le 0.1 \%$ in the worst case with no appreciable dependance on $\alpha$ within the range studied. 

\begin{figure}[!htb]
\centering \includegraphics[width=0.8\textwidth]{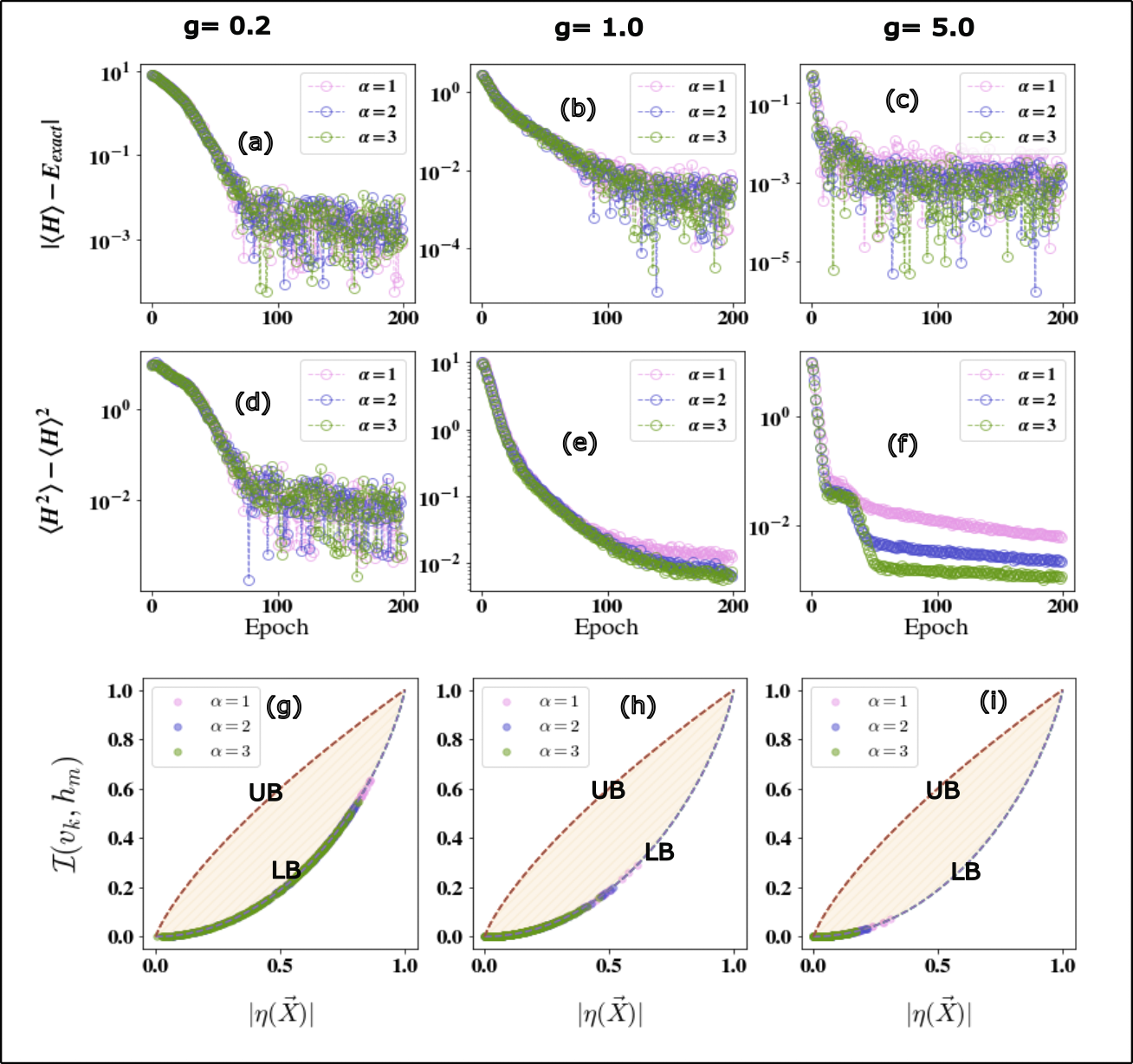}
\caption{\color{black} The energy error (with respect to exact diagonalization) acquired for training TFIM model with $n=N=10$, $p=\alpha n$ and $\alpha = [1,2,3]$ for (a) $g=0.2$ (b) $g=1$ (critical point) (c) $g=5$. The resultant energy variance of the trained state with $\alpha = [1,2,3]$ for (a) $g=0.2$ (b) $g=1$ (critical point) (c) $g=5$. The representation of the trained state in $\mathcal{I}(v_k,h_m)$ and $\eta(\vec{X})$ space for (a) $g=0.2$ (b) $g=1$ (critical point) (c) $g=5$ showing saturation of lower bound in all cases and progressive mirroring of the correlation induced between the visible and hidden spins due to the spin correlation within the driver model.}
\label{Fig:Alpha_variation}
\end{figure}

We use $\alpha=1$ for all results in the main manuscript and in the remaining portion of Appendix unless otherwise specified. In Fig.\ref{Fig:Alpha_variation}(d-f) we see that the final trained state is an eigenstate of the Hamiltonian with a very low energy variance of $\le 10^{-2}$. In Fig.\ref{Fig:Alpha_variation}(g-h) we plot the representations chosen by the trained network in the 
$\mathcal{I}(v_k,h_m)$ and $\eta(\vec{X})$ space discussed in main manuscript. We see both our assertions are individually valid i.e. remain true even when $\alpha$ is varied. Firstly we see the trained state for all values of $\alpha$ saturates the lower bound and secondly the mean density of points shifts towards the origin in the $\mathcal{I}(v_k,h_m)$ and $\eta(\vec{X})$ space signalling that the trained state of the learner $G$ is devoid of spin correlation between the hidden and the visible units mirroring the same behavior between the spins of the driver when $g \to \infty$ even though the hidden units are not directly related to the driver spins.

\newpage
\section{Standard Deviation associated with the Mean $\mathcal{I}(v_k,h_m)$ and Mean $\eta(\vec{X})$ for TFIM and c-TFIM} \label{Sample_SE_TFIM_cTFIM}

In this section we display the standard deviation (($\sqrt{Var(X)}$) where $X$ can be the mean values of variates (like $\mathcal{I}(v_k,h_m)$ etc) obtained from the mean $\mathcal{I}(v_k,h_m)$ and $\eta(\vec{X})$ values displayed in Fig.\ref{Fig:MI_eta_spin_transfer}(a-d). $Var(\cdot)$ is the variance of the respective variate. We see from the standard deviation plot below that the quantity for c-TFIM is nearly 1.5-2.0 times higher than that of TFIM for a given $g$ and a given size $N$ especially in the low $g$ limit ($g \rightarrow 0_+$)

\begin{figure}[!ht]
\centering \includegraphics[width=0.8\textwidth]{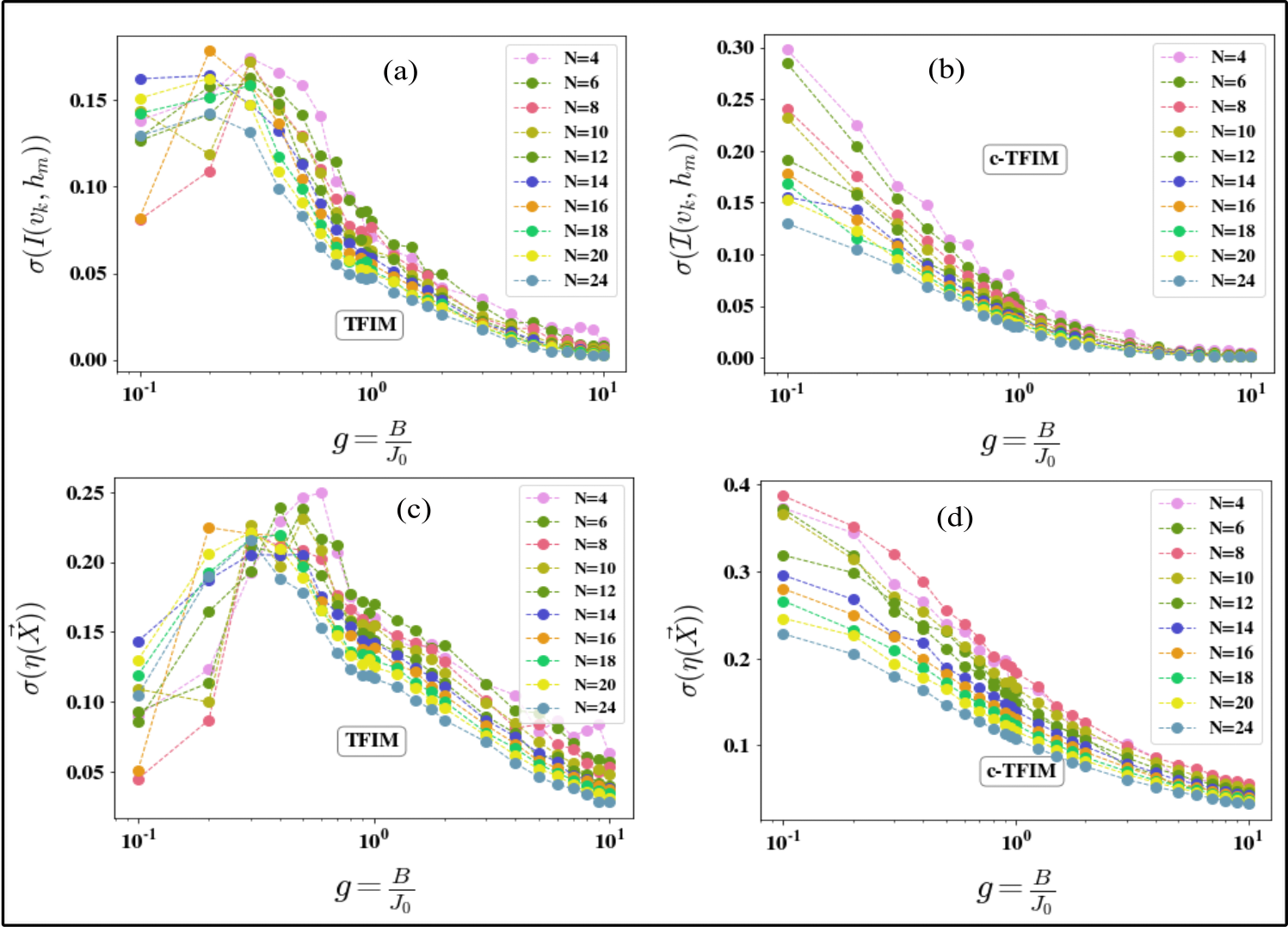}
\caption{\color{black} (a) The standard deviation associated with the averaging process of $\mathcal{I}(v_k,h_m)$ (where the mean is obtained by averaging over all $mn$ $(k,m)$ pairs of visible and hidden neurons and different initializations. See Section \ref{Result_Disc}) for TFIM for all sizes of the driver (c) The standard deviation associated with the averaging process of the mean $\eta(\vec{X})$ (where the mean is obtained by averaging over all $mn$ $(k,m)$ pairs of visible and hidden neurons and different initializations. See Section \ref{Result_Disc}) for TFIM for all sizes of the driver (b-d) Same as in (a-c) but for c-TFIM. It is clear from the standard deviation plots that the variance of the $\mathcal{I}(v_k,h_m)$ and $\eta(\vec{X})$ is much larger by almost 1.5-2.0 times (especially in the low $g$ limit) in c-TFIM (see (b-d)) than in TFIM (see (a-c)) for all values of $N$. Since the variance is computed over all $(k,m)$ pairs and over many initialization, this indicates many compatible configurations/representations for a given $N$ that the trained network can `learn' and display for the volume-law entangled c-TFIM than the area-law entangled TFIM. In the higher $g$ limit unanimity sets in both models as both the drivers display an unentangled ground state devoid of spin correlation. (See Section \ref{Result_Disc}) }
\label{Fig:SE_Dev}
\end{figure}


\newpage
\section{Von-Neumann Entropy of TFIM and c-TFIM}
\label{VNE_plots}
The Von Neumann Entropy of the ground state for (a) TFIM (Fig. \ref{Fig:Entropy_TFIM}) and (b) c-TFIM (Fig. \ref{Fig:Entropy_cTFIM}) models across the central cut (passing through the mid-point of the chain) for varying system sizes (N) is depicted. The ground state of the TFIM model obeys the Area-Law (and hence is constant in 1D) while that of c-TFIM follows the Volume-Law scaling of entanglement entropy and increases proportionally to the increasing size of the chain.  

\begin{figure}[!ht]
\centering \includegraphics[width=0.45\textwidth]{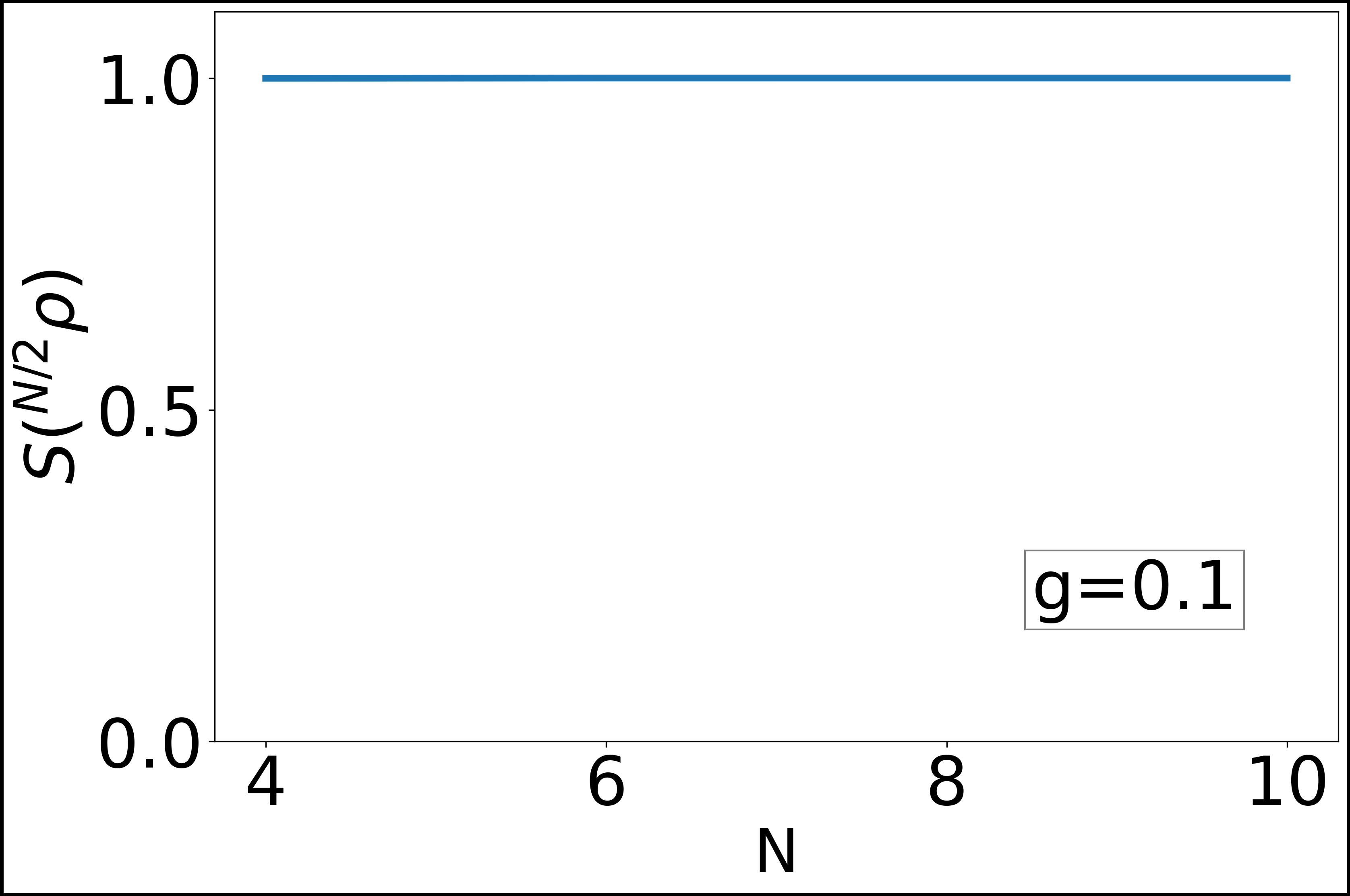}
\caption{\color{black} TFIM }
\label{Fig:Entropy_TFIM}

\centering \includegraphics[width=0.45\textwidth]{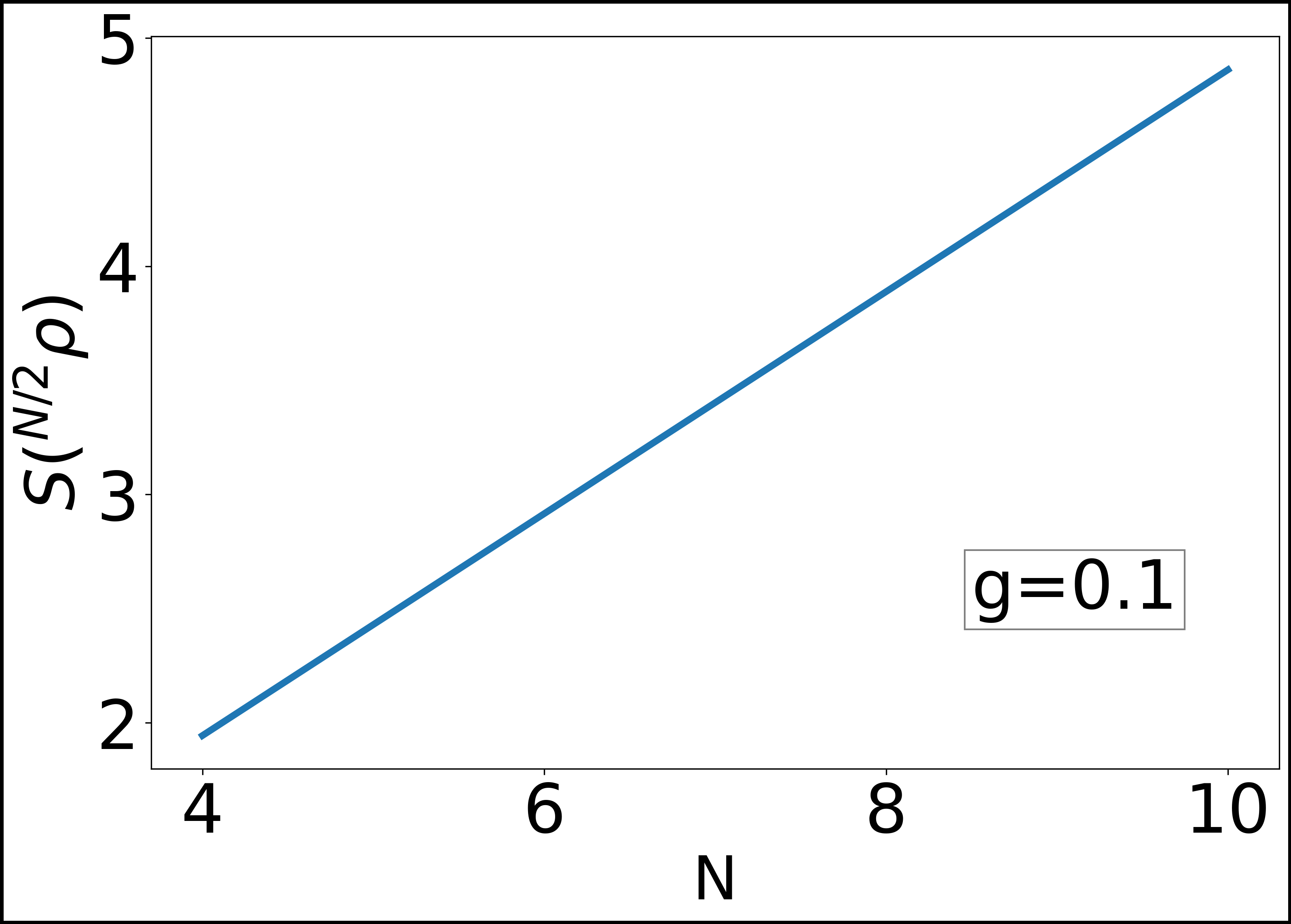}
\caption{\color{black} c-TFIM }
\label{Fig:Entropy_cTFIM}
\end{figure}

\newpage
\section{Fisher Information of TFIM and c-TFIM}
\label{Fischer_plots}
The variation of the largest eigenvalue of the Fisher Information Matrix with g for (a) TFIM and (b) c-TFIM models. The color gradation depicts the increasing value of g from red to blue. Each point on the Fig. \ref{Fig:Fisher_TFIM} and \ref{Fig:Fisher_cTFIM} is obtained by averaging over several best-converged points after training many randomly initialized networks.

\begin{figure}[!htb]
\centering \includegraphics[width=0.45\textwidth]{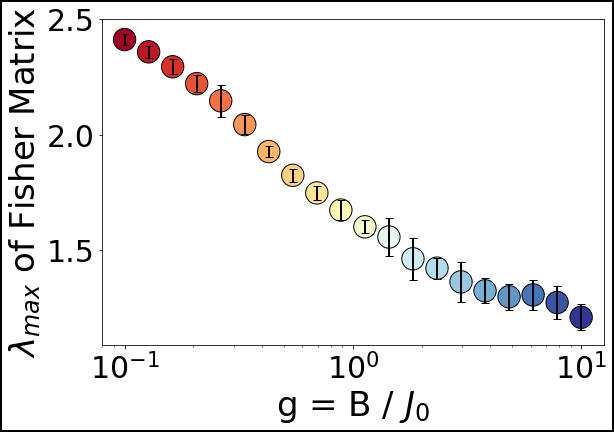}
\caption{\color{black} TFIM }
\label{Fig:Fisher_TFIM}

\centering \includegraphics[width=0.45\textwidth]{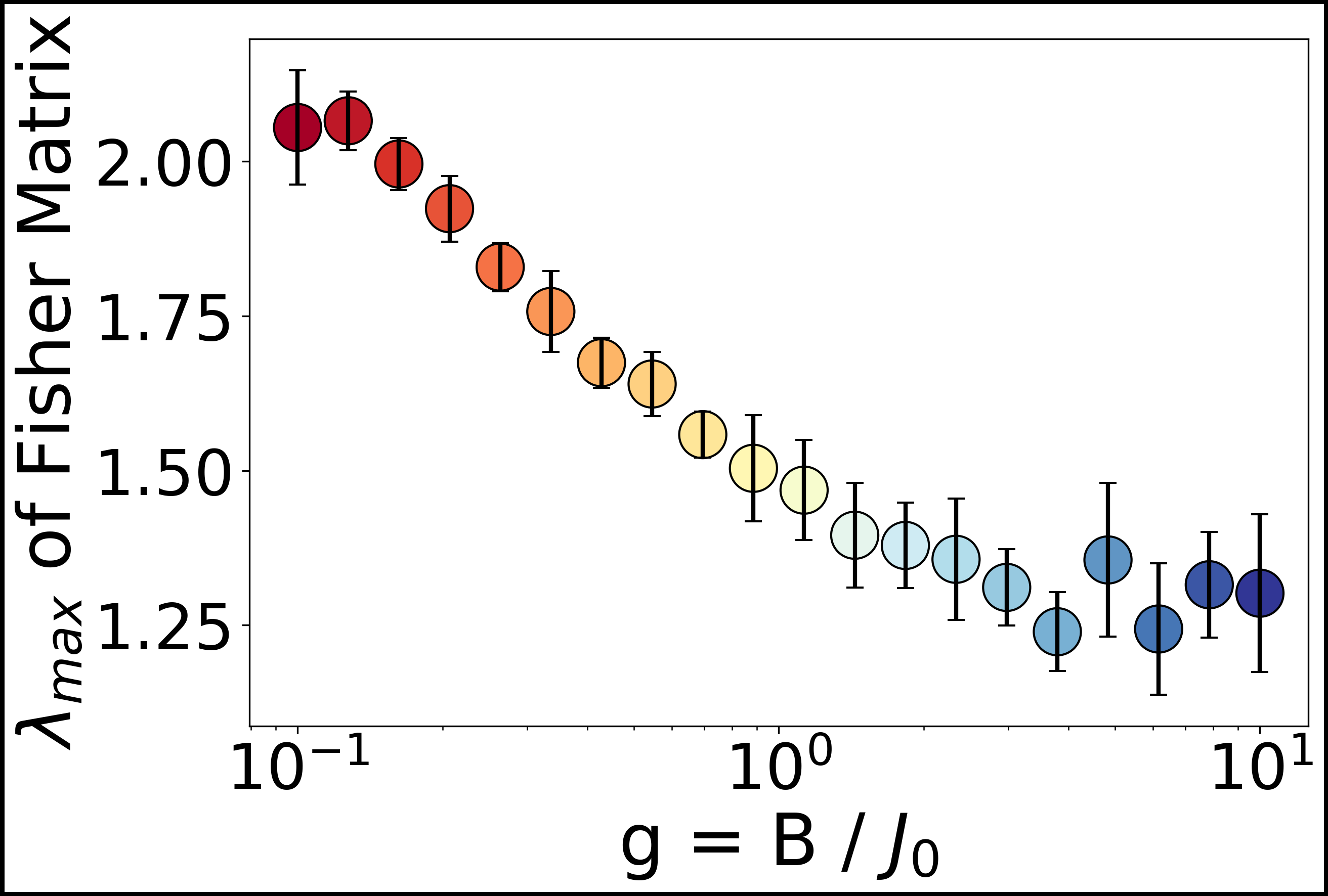}
\caption{\color{black} c-TFIM }
\label{Fig:Fisher_cTFIM}
\end{figure}

\end{document}